\documentclass[aos, preprint]{imsart}

\RequirePackage{amsthm,amsmath,amsfonts,amssymb}
\RequirePackage[numbers]{natbib}
\RequirePackage[colorlinks,citecolor=blue,urlcolor=blue]{hyperref}
\RequirePackage{graphicx}
\usepackage{subcaption}
\usepackage{bm}
\usepackage{mathtools, stmaryrd}
\usepackage{algorithm}

\startlocaldefs
\theoremstyle{plain}

\newtheorem{theorem}{Theorem}[section]
\newtheorem{lemma}[theorem]{Lemma}
\theoremstyle{remark}
\newtheorem{definition}[theorem]{Definition}

\newtheorem{example}{Example}[section]

\newcommand{\field}[1]{\mathbb{#1}}
\newcommand{\dd}{\mathrm{d}}
\newcommand{\hd}{\mathrm{h}}
\newcommand{\fm}{\mathfrak{m}}
\newcommand{\zb}{\bm{z}}
\newcommand{\cf}{\mathsf{c}}

\newcommand{\op}{\mathrm{op}}
\newcommand{\sfy}{\mathsf{y}}
\newcommand{\xb}{\mathbf{x}}
\newcommand{\wb}{\bm{w}}
\newcommand{\fb}{\mathbf{f}}
\newcommand{\el}{\mathsf{l}}
\newcommand{\gb}{\bm{g}}

\newcommand{\Ub}{\mathbf{U}}
\newcommand{\Wb}{\mathbf{W}}

\newcommand{\phib}{\bm{\phi}}
\newcommand{\OO}{\mathrm{O}}
\newcommand{\oo}{\mathrm{o}}
\newcommand{\Mt}{\mathtt{M}}

\newcommand{\p}{\field{P}}

\newcommand{\E}{\field{E}}

\numberwithin{equation}{section}

\def\argmin{\mathop{\mbox{argmin}}}
\theoremstyle{Conjecture} \theoremstyle{example}
\theoremstyle{remark} \theoremstyle{lemma}
\theoremstyle{definition} \theoremstyle{corol}
\theoremstyle{proposition} \theoremstyle{condition}
\newtheorem{assumption}{Assumption}[section]
\newtheorem{remark}{Remark}[section]

\newtheorem{proposition}{Proposition}[section]

\def\bW{{\bf W}}
\DeclarePairedDelimiterX{\Iintv}[1]{\llbracket}{\rrbracket}{\iintvargs{#1}}
\NewDocumentCommand{\iintvargs}{>{\SplitArgument{1}{,}}m}
{\iintvargsaux#1} %
\NewDocumentCommand{\iintvargsaux}{mm} {#1\mkern1.5mu,\mkern1.5mu#2}

\DeclarePairedDelimiter\floor{\lfloor}{\rfloor}

\endlocaldefs

\begin{document}

\begin{frontmatter}
{\title{Simultaneous Sieve Estimation and Inference for Time-Varying Nonlinear Time Series Regression}
}
\runtitle{Simultaneous Sieve Inference}

\begin{aug}
\author[A]{\fnms{Xiucai}~\snm{Ding}\ead[label=e1]{xcading@ucdavis.edu}},
\author[B]{\fnms{Zhou}~\snm{Zhou}\ead[label=e2]{zhou.zhou@utoronto.ca}}
\address[A]{Department of Statistics, University of California, Davis \printead[presep={,\ }]{e1}}

\address[B]{Department of Statistical Sciences, University of Toronto\printead[presep={,\ }]{e2}}
\end{aug}

\begin{abstract}
In this paper, we investigate time-varying nonlinear time series regression for a broad class of locally stationary time series. First, we propose sieve nonparametric estimators for the time-varying regression functions that achieve {uniform consistency}. Second, we develop a unified simultaneous inferential theory to conduct both structural and exact form tests on these functions. Additionally, we introduce a multiplier bootstrap procedure for practical implementation. Our methodology and theory require only mild assumptions on the regression functions, allow for unbounded domain support, and effectively address the issue of identifiability for practical interpretation. Technically, we establish sieve approximation theory for 2-D functions in unbounded domains, {prove two Gaussian approximation results} for affine forms of high-dimensional locally stationary time series, and calculate critical values for the maxima of the Gaussian random field arising from locally stationary time series, which may be of independent interest. Numerical simulations and two data analyses support our results, and we have developed an $\mathtt{R}$ package, $\mathtt{SIMle}$, to facilitate implementation.
\end{abstract}

\begin{keyword}[class=MSC]
\kwd[Primary ]{62M10}
\kwd{62G08}
\kwd[; secondary ]{62G05, 62G10}
\end{keyword}

\begin{keyword}
\kwd{nonstationary time series}
\kwd{time-inhomogeneous nonlinear regression}
\kwd{Sieve method}
\kwd{simultaneous inference}
\kwd{Gaussian approximation}
\kwd{multiplier bootstrap}
\end{keyword}

\end{frontmatter}

\section{Introduction} \label{sec1}
\def\theequation{1.\arabic{equation}}
\setcounter{equation}{0}
\counterwithin{equation}{section}

Understanding the complex functional form of a relationship  between different variables is one of the main aims in modern data sciences. However, achieving this understanding presents multifaceted challenges in the presence of massive data sets. First, variables often interact in nonlinear, intricate ways that might not conform to conventional models and the underlying theory does not narrow down a specific functional or parametric form for the relationship.  Second, the observations usually span long time horizons. Consequently, the relationship is unlikely to remain stable so that the non-stationary relationships between variables evolve, fluctuate, or exhibit different patterns over distinct time periods.

To address these issues, {flexible and time-varying regression models are needed}. In this paper, we   
consider the general time-varying nonlinear time series regression model that 
{\vspace*{-5pt}  
\begin{eqnarray}\label{eq:model}
Y_{i,n}=m_0(t_i)+\sum_{j=1}^r m_j(t_i,X_{j,i,n})+\epsilon_{i,n}, \quad i=1,2,\cdots,n, 
\end{eqnarray}
where $r>0$ is some given fixed integer. Here $\{X_{j,i,n}\}$ are the covariates which can be locally stationary and depend on each other,  $t_i=i/n,$ and 
$\{\epsilon_{i,n}\}$ is a general locally stationary time series  whose covariance depend on both time and the covariates (see Assumption \ref{assum_models}) and satisfies that {$\E(\epsilon_{i,n}|\{X_{j,i,n}\}_{1 \leq j \leq r})=0.$} Moreover, $m_0$ is the time-varying trend, and $m_j, 1 \leq j \leq r,$ are some smooth functions that map the time and covariates to the conditional mean.


To ensure the identifiability of the components $m_j(\cdot,\cdot)$ when $r>1$, we follow the tradition from additive models with i.i.d. inputs, as in \cite{fan2005nonparametric}, and without loss of generality, assume that 
\begin{equation} \label{eq_identiassum}
\mathbb{E}\left(m_j(t_i, X_{j,i,n})\right)=0, \ 1 \leq j \leq r, \ 1 \leq i \leq n. 
\end{equation} 
The model (\ref{eq:model}) provides a more flexible and general framework and a wide range of models for time-varying locally stationary time series regression can fit into it. In fact, it has been used in the literature, particularly when $r=1$, where identifiability is not an issue, for instance, see \cite{KARMAKAR2021,MR4244697, MV,MV1,ZWW}. }
%
We point out that (\ref{eq:model}) can be regarded as a discretized version of the Langevin equation \cite{coffey2012langevin}\vspace*{-5pt}  
\begin{equation*}
\mathrm{d} Y_t= \sum_{j=1}^r  m_j(t, X_{j,t}) \mathrm{d} t+\sigma(t, X_{1,t}, \cdots, X_{r,t})\mathrm{d} M(t),   
\end{equation*}
where $M(t)$ is some martingale. In what follows, we summarize related results from the literature in Section \ref{sec_subrelatedresults} and provide an overview of our findings, highlighting the novel contributions, in Section \ref{sec_overview}.

\subsection{Some related results}\label{sec_subrelatedresults}
In this subsection, we summarize some related results. To the best of our awareness, few results have been established under the general model setup (\ref{eq:model}). {Most of the existing literature focuses on pointwise results based on kernel estimation, typically assuming $r=1$, which avoids identifiability issues, allowing $m_0$ to be combined with $m_1$.}
In \cite{MV}, under classical strong mixing conditions and using locally stationary processes, the author proposed a Nadaraya-Watson (NW) type estimator utilizing a product kernel. Additionally, the author established convergence rates for this estimation on any compact set and demonstrated the pointwise normality of the estimators. These results and ideas were later applied to study structural changes in \cite{MV1}. More recently, in \cite{ZWW}, the authors advanced kernel estimation theory under a more general locally stationary setting using the physical representation. They also applied their findings to model selection problems.

On the other hand, we note that (\ref{eq:model}) has also been studied in the literature under certain structural assumptions. The existing research in this area can be roughly divided into three categories.
\begin{itemize}
\item Time-independent {additive models}, i.e., $m_j(t_i, X_{j,i,n})=m_j(X_{j,i,n}),$ for all $1 \leq j \leq r.$ {In this setting, the model has been extensively studied under the assumption that the samples are either i.i.d. or from a stationary time series.} Especially, the convergence rate and  pointwise  normality of kernel based estimators have been provided in the literature; see \cite{fan2005nonparametric,fan2008nonlinear,hansen2008uniform,horowitz2009semiparametric,kristensen2009uniform, RePEc:pup:pbooks:8355,  linton_wang_2016,10.1214/009053607000000488, 10.1214/009053607000000596,10.1214/07-AOS533}  among others,  and {the  convergence of sieve least square estimators and their pointwise asymptotic normality has also been investigated under various assumptions}; see  \cite{MR3343791,CXH,Chen_2013, CHEN2015447, MR3306927} among others.
\item Time-varying {coefficient models}, i.e.,  $m_j(t_i, X_{j,i,n})=f_j(t_i)X_{j,i,n},$ for all $1 \leq j \leq r.$ In this setting, the kernel estimators and simultaneous inference have been studied for the time-varying coefficients $f_j(\cdot), 1 \leq j \leq r$, see \cite{MR1804172, 10.2307/4140562,park2015varying, MR3160574, MR3310530} among others. Additionally, we mention a recent work \cite{KARMAKAR2021} focusing on conditional heteroscedastic (CH) models, where the authors conducted statistical inference for the time-varying parameters by analyzing the negative conditional Gaussian likelihoods with kernel estimations. Even though CH models are not immediate forms of time-varying linear regression, according to the discussion in \cite[Section C.2.5]{DZ2}, a straightforward transformation can convert them into a time-varying linear regression model.

\item Time-varying nonlinear regression with multiplicative separable structure, i.e., $m_j(t_i, X_{j,i,n})$ $=f_j(t_i) g_j(X_{j,i,n}),$ for all $1 \leq j \leq r.$ In this setting, the kernel estimators and {pointwise} asymptotic normality have been established in \cite{CSW, PEI2018286}, the spline estimators and their asymptotic normality have been studied in \cite{10.2307/43305634} for functional data and in \cite{HHY} for locally stationary time series, among others.    
\end{itemize}

%



In summary, most of the relevant literature focuses on providing kernel-based estimators for the smooth functions in (\ref{eq:model}). Although kernel estimation is a useful approach, it is known to suffer from boundary issues and {lack of automatic adaptability to the smoothness structures of the underlying regression functions}. Moreover, existing works have only established convergence rates on a compact domain, which may limit their applicability when the covariates have unbounded supports. Finally, the literature has primarily focused on proving pointwise {distributional results} for general smooth functions, which may not be useful for conducting {functional or simultaneous inference on the regression coefficient functions.}

Motivated by the above challenges, in the current paper,  we study (\ref{eq:model}) in great generality.  First,  we provide  uniformly consistent nonparamteric sieve estimators for the functions $m_j'$s. {We also address the identifiability issue for the components $m_j'$s when $r>1.$} Second, we develop simultaneous inferential theory for these functions. Our results and methodologies only require mild assumptions on $m_j$'s and we also allow the supports of these functions to be unbounded. We give an overview of our results in Section \ref{sec_overview}.  

\subsection{Overview of our results and novelties}\label{sec_overview}
In this subsection, we give a heuristic overview of our main results. Throughout the paper, for the locally stationary time series, we use the general form of physical representation following \cite{DZ, MR2172215, ZWW, ZW} (c.f. Assumption \ref{assum_models}) which covers a wide range of commonly used locally stationary time series. 
The aim of this paper is to provide a systematic and unified estimation and simultaneous inferential theory for the smooth {coefficient} functions in (\ref{eq:model}) without imposing additional structural assumptions. {To the best of our knowledge, this is the first work in the literature to perform simultaneous nonparametric inference for the general time-varying nonlinear regression (\ref{eq:model}).}   

For the estimation part, we propose the sieve nonparamatric  estimators for the smooth functions. However, unlike in the standard applications of sieve expansions where the smooth functions are supported on a bounded domain \cite{MR3343791, CXH, CHEN2015447, DZ, DZ2,MR3306927}, the covariates in (\ref{eq:model}) are usually supported in $\mathbb{R}$ and hence the sieve methods and the classic approximation theory cannot be applied directly. To address this issue, we use the mapped sieve basis functions \cite{MR2486453,unboundeddomain} which can monotonically and smoothly map the unbounded domain to a compact domain. Consequently, we can construct a hierarchical sieve basis functions as in (\ref{eq_basisconstruction}) and establish the uniform approximation theory for the smooth functions on unbounded domain; see Proposition \ref{thm_approximation} for more details. 
{To address the issue of identifiability when $r > 1$, in the actual implementation, we propose a two-step sieve estimation approach. In the first step, some pilot estimators (c.f. (\ref{eq_firststeptruncationinter}) and (\ref{eq_firststeptruncation})) are constructed using the basis expansions, which are basis-dependent and may not satisfy condition (\ref{eq_identiassum}). Consequently, these estimators can introduce a time-and basis-dependent bias. In the second step, we use the method of sieves again to estimate these bias and correct our pilot estimators;  see Section \ref{sec_removenonzeromeanassumption} for further details. The estimation procedure is computationally cheap and  the proposed estimators are theoretically consistent. Especially, under mild regularity conditions, for short-range dependent locally stationary time series, our estimators achieve uniform consistency under mild regularity assumptions; see Theorem \ref{thm_consistency} and Remark \ref{rmk_minimaxoptimal} for more details. } 


Once the estimation theory is established, we can develop the inferential theory. We provide the simultaneous confidence regions (SCRs) based on our sieve estimators (c.f. (\ref{eq_scrdefinition})). Then we apply the SCRs to conduct structural testings on the smooth functions. For example, we can test whether the functions are time-invariant (c.f. Example \ref{exam_stationarytest}) or have multiplicative separability structure (c.f. Example \ref{exam_seperabletest}). The key technical inputs are to establish  Gaussian approximation results for high-dimensional locally stationary time series of the affine form. We prove such results in Theorems \ref{thm_gaussianapproximationcase} and \ref{thm_uniformconvergencegaussian} of our supplement \cite{suppl} and they can be of independent interest. Especially, when the temporal relation of the locally stationary time series decays fast enough, the approximate rate matches that of \cite{MR3571252}. 
Moreover, to analyze the critical values for the SCRs, we establish the asymptotic distribution of the maximum deviation of the sieve estimators over the joint state-time domain using the device of  volume of tubes  \cite{MR1056331,KS, MR1207215}. For practical implementation, we propose a multiplier bootstrap procedure as in \cite{10.1214/13-AOS1161, MR3174655} which is both theoretically sound and empirically accurate and powerful; see Theorem \ref{thm_boostrapping}. Numerical simulations and real data analysis are provided to support our results and methodologies. An $\mathtt{R}$ package $\mathtt{SMIle}$ is developed to ease the implementation.   

Finally, we point out that even though we focus on the object of the physical form of locally stationary time series, our methodologies and results can be applied to the locally stationary time series considered in \cite{DRW,nason2000wavelet,MV} or even the general non-stationary time series in \cite{DZ2}, after some minor modifications. Before concluding this subsection, we summarize the contributions of the current paper as follows.
\begin{enumerate}
\item[(1).] We propose the sieve estimators for the time-varying nonlinear time series regression (\ref{eq:model}) using mapped sieve basis functions. Our methodology only requires mild assumptions on the smooth functions and we allow the functions to be supported on unbounded domains. Moreover, our estimation procedure effectively addresses the issue of identifiability and {our estimator achieves the uniform consistency}. 
\item[(2).] {To the best of our knowledge, we are the first to} establish a comprehensive  and unified simultaneous inferential theory for the smooth functions in  (\ref{eq:model})  by constructing their SCRs across both $x$ and $t$. Our theory can be used to conduct both structural and exact form tests on the smooth functions.
\item[(3).] We provide a multiplier bootstrap procedure (c.f. Algorithm \ref{alg:boostrapping}) to implement our methodologies in practice. The bootstrap procedure can asymptotically mimic the distributions of the statistics proposed in the inferential theory and provide accurate SCRs. Moreover,  our bootstrap  method is robust and {adaptive to the structures of the locally stationary time series and the functions $m_j$'s}.      
\item[(4).] We establish {two Gaussian approximation} results for affine forms of high dimensional locally stationary time series. {Furthermore, we establish the asymptotic distribution of the maxima of a Gaussian random field, with randomness arising from a locally stationary time series. This allows us to study the critical values of the SCRs.}
\end{enumerate}

The paper is organized as follows. In Section \ref{sec_modelassumption}, we introduce the locally stationary time series and some technical assumptions.  In Section \ref{sec_estimation}, we propose the sieve estimators for the smooth functions and prove their theoretical consistency.  In Section \ref{sec_inference}, we develop the simultaneous inferential theory based on our sieve estimators and consider several important hypothesis testing problems. For practical applications, we propose a multiplier  bootstrap strategy. In Section \ref{sec_numerical}, we conduct extensive numerical simulations to support the usefulness of our results. A real data analysis is provided in Section \ref{sec_realdata}. An online supplement \cite{suppl} provides technical details and additional results. Specifically, Section \ref{sec_sieves} of \cite{suppl} lists commonly used sieve basis functions, while Section \ref{sec_detailedols} provides more details on the sieve estimation and Section \ref{sec_parameterchoice} discusses choices of tuning parameters. Additional discussions are included in Section \ref{appendix_additionalremark}, and further numerical results are presented in Section \ref{addtional_numerical}. Finally, technical proofs are deferred to Section \ref{sec_techinicalproof}, with auxiliary lemmas collected in Section \ref{sec_auxililarylemma}.


\vspace{4pt}

\noindent {\bf Conventions.} For any random variable $Z \in \mathbb{R},$ we denote its $q$-norm by $\| Z\|_q=\left( \mathbb{E}|Z|^q \right)^{1/q}. $ For simplicity, we  write $\| Z\| \equiv \|Z\|_2.$ Moreover, for any deterministic vector $\bm{x}=(x_1,\cdots, x_d)^\top$ $\in \mathbb{R}^d,$ we denote its Euclidean norm as $|\bm{x}|=\sqrt{\sum_{i=1}^d x_i^2}.$
For two sequences of real values $\{a_n\}$ and $\{b_n\},$ we write $a_n=\OO(b_n)$ if $|a_n| \leq C |b_n|$ for some constant $C>0$ and $a_n=\oo(b_n)$ if $|a_n| \leq \epsilon_n |b_n|$ for some positive sequence $\{\epsilon_n\}$ that $\epsilon_n \downarrow 0$ as $n \rightarrow \infty.$ Moreover, if $a_n=\OO(b_n)$ and $b_n=\OO(a_n),$ we write $a_n \asymp b_n.$ For a sequence of random variables $\{X_n\}$ and a positive sequence of $\{C_n\},$ we use the notation $X_n=\OO_{\mathbb{P}}(C_n)$ to say that $X_n/C_n$ is stochastically bounded, and  $X_n=\oo_{\mathbb{P}}(C_n)$ to say that $X_n/C_n$ converges to $0$ in probability. Throughout the paper, we omit the subscript $n$ without causing further confusion.

\section{The model and assumptions}\label{sec_modelassumption}
In this section, we introduce the model, some related notations and assumptions. {For notional simplicity, till the end of the paper, we simply write $Y_i \equiv Y_{i,n}, X_{j,i} \equiv X_{j,i,n}$ and $\epsilon_i \equiv \epsilon_{i,n}.$} In the current paper, we consider the family of locally stationary time series for ${Y_i, X_{j,i}, \epsilon_i}$ as introduced  in \cite{DRW, ZW} using the physical representation. We now state the basic setup of our model in Assumption \ref{assum_models}.


\begin{assumption}[Model setup] \label{assum_models} For the regression (\ref{eq:model}), throughout the paper, 
we assume that all $\{X_{j,i}\}, \{\epsilon_i\}$ are locally stationary time series such that for all $1 \leq j \leq r,$ \vspace*{-5pt}  
\begin{equation}\label{eq_setting2}
X_{j,i}=G_j(t_i, \mathcal{F}_i), \ \epsilon_i=D(t_i, \mathcal{F}_i), \ t_i=\frac{i}{n}. 
\end{equation}
Here $\mathcal{F}_i=(\cdots, \eta_{i-1},\eta_i),$  where we assume that    $\{\eta_i\}$ are i.i.d. random elements with {$\E(\epsilon_{i}|\{X_{j,i}\}_{1 \leq j \leq r})=0.$}

Moreover, we assume that $G_j, 1 \leq j \leq r$, and $D: [0,1] \times \mathbb{R}^{\infty} \rightarrow \mathbb{R}$ are measurable functions such that for any fixed $t \in [0,1],$ $ G_j(t,\cdot), 1 \leq j \leq r$ and $D(t,\cdot)$ are property defined random variables. Furthermore, we assume all these functions are stochastic Lipschitz in $t$ such that for some $q>2,$ some constant $C>0$ and any $s,t \in [0,1]$ \vspace*{-5pt}  
\begin{equation}\label{eq_slc}
\max \left\{ \max_{1 \leq j \leq r } \left\| G_j(s, \mathcal{F}_0)-G_j(t, \mathcal{F}_0) \right\|_q, \left\| D(s, \mathcal{F}_0)-D(t, \mathcal{F}_0) \right\|_q  \right\}\leq C|s-t|.   
\end{equation} 
\end{assumption} 

\begin{remark}\label{rmk_modelsetting}
Several remarks are in order. First, (\ref{eq_setting2}) requires that all the time series  admit  physical representations and the data generating mechanisms change slowly over time.  This is a mild assumption in the sense that many commonly used locally stationary time series, linear or non-linear, can be written into such a form. For illustrations, we refer the readers to Examples \ref{exam_linear} and \ref{exam_nonlinear} in our supplement. We also point out that, combining (\ref{eq_setting2}) with (\ref{eq:model}), we see that $Y_i$ also admits a  physical representation. {We also note that equation (\ref{eq_setting2}) implies that our model encompasses the case where $\epsilon_i = H(t_i, X_{1,i}, \dots, X_{r,i}, \eta_i)$.} Second, (\ref{eq_slc}) is used to ensure the local stationarity of the time series which is a commonly used assumption in the study of locally stationary time series. Third, (\ref{eq_setting2}) is a flexible framework and has been frequently used in other time series regression models, for example \cite{10.1214/12-AOS1010,ZWW,MR2758526}. Together with our regression (\ref{eq:model}), we extend the classes of time varying nonlinear regression models. Fourth, we point out that if one sets  $X_{j,i}=Y_{i-j},$ i.e., $Y_i=G(t_i, \mathcal{F}_i)$ for some function $G$ and $X_{j,i}=G(t_{i-j}, \mathcal{F}_{i-j}),$ (\ref{eq:model}) with (\ref{eq_setting2}) {can be used to efficiently approximate} the locally stationary nonlinear AR models as in \cite{MV} and the time-varying ARCH and GARCH models in \cite{KARMAKAR2021}. Finally, to simplify the discussion, we focus our analysis on locally stationary time series as described in (\ref{eq_setting2}). However, similar to the analysis in Section 2 of \cite{DZ2}, our discussion—particularly the estimation part—can be easily extended to general nonstationary time series. Specifically, (\ref{eq_setting2}) can be generalized to $X_{j,i}=G_{j,i}(\mathcal{F}_i), \ \epsilon_i=D_i(\mathcal{F}_i). $ Since this extension is not the primary focus of the current paper, we will explore it in future works.

\end{remark}

In what follows, 
we employ the \emph{physical dependence measure} \cite{DRW,MR2172215} to quantify the temporal dependence of the time series.  For simplicity, we state the definition using $\{\epsilon_i\}$.

\begin{definition}[Physical dependence measure]\label{defn_physcialdependence} Let $\{\eta_i'\}$ be an i.i.d. copy of $\{\eta_i\}.$ Assume that for some $q>2,$ $\| \epsilon_i\|_q<\infty.$  Then for $k \geq 0,$ we define the physical dependence measure of $\{\epsilon_i\}$ as\vspace*{-5pt}  
\begin{equation*}
\delta(k,q)=\sup_t \| D(t, \mathcal{F}_0)-D(t, \mathcal{F}_{0,k}) \|_q,  
\end{equation*}
where $\mathcal{F}_{0,k}=(\mathcal{F}_{-k-1}, \eta_{-k}', \eta_{-k+1}, \cdots, \eta_0).$ For convenience, we denote $\delta(k,q)=0$ when $k<0.$ Similarly, we can define the physical dependence measures for $\{X_{j,i}\}, 1 \leq j \leq r.$  
\end{definition}

The physical dependence measure provides a convenient tool to study the temporal dependence of time series. It quantifies the magnitude of change in the system's output when the input of the system $k$ steps ahead is replaced by an i.i.d. copy. Moreover, the strong temporal dependence of the locally stationary time series can be controlled in terms of $\delta(k,q)$ and the concentration inequalities (c.f. Lemmas \ref{lem_concentration} and \ref{lem_mdependent}  of our supplement \cite{suppl}) can be established based on them. For more details on physical dependence measure, we refer the readers to \cite{MR2485027,MR3114713, MR2172215,WUAOP, MR2827528}. In addition, for the purpose of illustration, in Examples \ref{exam_linear} and \ref{exam_nonlinear} of our supplement, we explain how the physical dependence measure can be calculated easily for the commonly used locally stationary time series.  

Armed with Definition \ref{defn_physcialdependence}, we impose the following assumption to  ensure that the temporal dependence decays fast enough so that the time series has short-range dependence. 
\begin{assumption}\label{assum_physical}
Suppose that there exists some constant $\tau>1$ and some constant $C>0$ such that \vspace*{-5pt}  
\begin{equation}\label{eq_physcialrequirementone}
\max_{0 \leq j \leq r} \delta_j(k,q) \leq C k^{-\tau}, \ \text{for all} \ k \geq 1,
\end{equation}
where $\delta_j(k,q), 1 \leq j \leq r,$ are the physical dependence measures of $\{X_{j,i}\},  1 \leq j \leq r,$ and $\delta_{0}(k,q)$ is that of $\{\epsilon_i\}.$ 
\end{assumption}

Assumption \ref{assum_physical} is readily satisfied by commonly used locally stationary time series models. In Section \ref{appendix_additionalremark1} of our supplement, we present several classes of locally stationary time series utilizing physical representation and explain how these assumptions are easily met. 

\section{Sieve estimation for time-varying nonlinear regression}\label{sec_estimation} 
In this section, we propose a nonparametric  approach to estimating the nonlinear functions $m_j(t,x), \ 1 \leq j \leq r$ {and the time-varying intercept $m_0(t)$} based on the method of sieves \cite{CXH}. {To ensure identifiability under assumption (\ref{eq_identiassum}), our proposed method involves two steps. In the first step (Section \ref{sec_generaldiscussion}), we construct pilot estimators using the expansion of sieves. However, these estimators may not satisfy assumption (\ref{eq_identiassum}). In the second step (Section \ref{sec_removenonzeromeanassumption}), we apply a correction again based on the method of sieves to ensure compliance with assumption (\ref{eq_identiassum}). The uniform consistency of our estimators is established in Section \ref{sec_theorecticalpropertyestimator}.}



\subsection{Pilot estimators based on sieve least squares}\label{sec_generaldiscussion}
In this subsection, we use the sieve least square estimation approach to {estimate $m_0(t)$ and $m_j(t,x),  1 \leq j \leq r$.}  Our methodology is nonparametric and utilizes    the sieve basis expansion. Before stating our results, we first pause to provide a heuristic overview of the method. The details will be offered in Sections \ref{sec_2dsieves} and \ref{sec_olssieveestimation}.

 Since we can only observe one realization of the time series, it is natural to impose some smoothness condition on $m_0(t)$ and $m_j(t,x)$ (c.f. Assumption \ref{assum_smoothnessasumption}). First, under these conditions, {$m_0(t)$ can be well approximated by {\vspace*{-5pt}  
 \begin{equation}\label{eq_intercepttruncation}
 m_{0,c}(t)=\sum_{\ell=1}^{c_0} \beta_{0,\ell} \phi_\ell(t),
 \end{equation}}
 and $m_j(t,x)$ can be well approximated by 
}  $m_{j,c,d}(t,x)$ (c.f. Proposition \ref{thm_approximation}) denoted as {\vspace*{-5pt}  
\begin{equation}\label{eq_firststeptruncation1}
m_{j, c,d}(t,x)=\sum_{\ell_1=1}^{c_j} \sum_{\ell_2=1}^{d_j} \beta_{j, \ell_1, \ell_2} b_{\ell_1, \ell_2}(t,x).
\end{equation} } 
Here $\{\phi_\ell(t)\}$ are commonly used sieve basis functions in $[0,1]$ and $\{b_{\ell_1, \ell_2}(t,x)\}$ are the sieve basis functions on $[0,1] \times \mathbb{R},$ and $\{\beta_{0,\ell}\}$ and $\{\beta_{j, \ell_1, \ell_2}\}$ are the coefficients to be estimated. Here {$c_j \equiv c_j(n), \ d_j \equiv d_j(n)$} depend on the smoothness of $m_j, 0 \leq j \leq r.$  Second, due to the unboundedness of domain of $x,$ $\{b_{\ell_1, \ell_2}(t,x)\}$ needed to be properly constructed. For detailed discussion,
we refer the readers to Section \ref{sec_2dsieves}.

 Finally, in view of (\ref{eq_intercepttruncation}) and (\ref{eq_firststeptruncation1}), given a set of sieve basis functions, it suffices to estimate the coefficients. We will use the ordinary least square (OLS) to achieve this goal and obtain the estimators $\{\widehat{\beta}_{0,\ell}\}$ and $\{\widehat{\beta}_{j, \ell_1, \ell_2}\}$. This aspect will be discussed in Section \ref{sec_olssieveestimation}. Based on the above calculation, we can obtain our pilot estimators below \vspace*{-5pt}  
 \begin{equation}\label{eq_firststeptruncationinter}
\widehat{m}^*_{0, c}(t,x)=\sum_{\ell=1}^c  \widehat{\beta}_{0, \ell} \phi_{\ell}(t), 
\end{equation} 
and \vspace*{-8pt}  
\begin{equation}\label{eq_firststeptruncation}
\widehat{m}^*_{j, c,d}(t,x)=\sum_{\ell_1=1}^c \sum_{\ell_2=1}^d \widehat{\beta}_{j, \ell_1, \ell_2} b_{\ell_1, \ell_2}(t,x). 
\end{equation}  

{The pilot estimators (\ref{eq_firststeptruncationinter}) and (\ref{eq_firststeptruncation}) are constructed using the basis expansions (\ref{eq_intercepttruncation}) and (\ref{eq_firststeptruncation1}), which are basis-dependent and may not satisfy condition (\ref{eq_identiassum}). Consequently, these estimators can introduce a time- and basis-dependent biases. This issue will be addressed in Section \ref{sec_removenonzeromeanassumption} through a correction procedure.} 
 
\subsubsection{Mapped hierarchical sieves for 2-D smooth functions on unbounded domain}\label{sec_2dsieves}
In this subsection, we discuss how to use sieve basis functions to approximate a smooth 2-D function. Note that for $m_j(t,x), 1 \leq j \leq r,$ when $x$ is defined on a compact domain in $\mathbb{R}$, the results have been established, for example, see \cite[Section 2.3.1]{CXH} for a comprehensive review. However, in many real applications, the domain of $x$ is unbounded. Therefore, we need to modify the commonly used sieves to accommodate for practical applications. 

In the literature, there exist three different approaches to deal with the unbounded domain. The first  method is to apply a finite interval method, such as Chebyshev polynomials, to an interval $x \in [-L,L]$ where $L$ is large but
finite. This method is also known as domain truncation. The second way is to use a basis that is intrinsic to the unbounded domain such as Hermite functions. The third approach is to map the infinite interval into a finite domain through a change of coordinate and then apply a finite interval method. We refer the readers to \cite{MR1874071,MR2486453,CXH,unboundeddomain} for a review.

In the current paper, we employ the third method that we will apply some suitable mappings to map the unbounded domain to a compact one. {The reasons are twofold. First, the first and second methods typically result in slower approximate rates, requiring more basis functions to achieve a desired level of accuracy. Second, in most statistical applications, the functions have rapidly decaying tails, which ensures the appropriateness of the mapping method.} The mapping idea is popular in numerical PDEs for handling the boundary value problems. For a review of the mapping strategy, we refer the readers to \cite{MR2486453,unboundeddomain}. In what follows, without loss of generality, we assume that the domain for $x$ is either $\mathbb{R}$ or $\mathbb{R}_+.$ The mappings will map the domain to a compact interval, say, $[-1,1].$ We state the definitions of the mappings as follows.  
\begin{definition}[Mappings]\label{defn_mappings}
For some positive scaling factor $s>0,$ consider a family of mappings of the form: \vspace*{-3pt}  
\begin{equation*}
x=g(y;s), \ s>0, \ y \in I:=[-1,1], \ x \in \Lambda:=\mathbb{R}_+ \ \text{or} \ \mathbb{R},
\end{equation*}
such that  \vspace*{-5pt}  
\begin{align*}
& \frac{\dd x}{\dd y}=g'(y;s)>0, \ y \in I; \\
& g(-1; s)=
\begin{cases}
0, & \Lambda=\mathbb{R}_+ \\
-\infty, & \Lambda=\mathbb{R}
\end{cases}, \ 
g(1; s)=\infty.
\end{align*}
Since the above mapping is invertible, we denote the inverse mapping by 
\begin{equation}\label{eq_usxdefinition}
y=g^{-1}(x;s)=u(x;s), \ x \in \Lambda, y \in I, s>0.
\end{equation}
\end{definition}
Based on the above definition, it is easy to see that $\frac{1}{2}u(x;s)+\frac{1}{2}$ will map $\mathbb{R}$ or $\mathbb{R}_+$ to $[0,1].$ {Moreover, as recommended in \cite{unboundeddomain}, it is advisable to use $s=1$ for the mappings in practice.} In Section \ref{appendix_mapping} of the supplement, we provide several examples of commonly used mappings.

Throughout the paper, without loss of generality, we assume that $m_j(t,x), 1 \leq j \leq r,$ takes value on the whole real line $\mathbb{R}$ for $x.$ Similar arguments apply when $x \in \mathbb{R}^+.$ Using the mappings in Definition \ref{defn_mappings}, due to  monotonicity,  we have linked $m_j(t,x): [0,1] \times \mathbb{R} \rightarrow \mathbb{R} $ to 
\begin{equation}\label{eq_transformedmjty}
\widetilde{m}_j(t, y):=m_j(t, g(2y-1;s)): [0,1] \times [0,1] \rightarrow \mathbb{R}. 
\end{equation} 
Based on the above arguments, we can construct a sequence of mapped sieve basis functions following \cite{unboundeddomain}. Let $\{\phi_i(\cdot)\}$ be an orthonormal base of the smooth functions defined on $[0,1].$ Recall (\ref{eq_usxdefinition}). Denote the mapped version of $\{\phi_i\}$ as $\{\widetilde{\phi}_i\}$ such that for $x \in \mathbb{R}$ \vspace*{-5pt}  
\begin{equation}\label{eq_defnmappedbasis}
\widetilde{\phi}_i(x)=\phi_i \circ \sfy(x), \ \sfy(x):=\frac{u(x;s)+1}{2}. 
\end{equation} 
Note that $\{\widetilde{\phi_i}\}$ is a sequence of orthogonal basis of the functional space defined on $\mathbb{R}$ \cite{unboundeddomain}. Denote $\{\varphi_i\}$ as the orthonormal basis functions based on $\{\widetilde{\phi_i}\}.$ Armed with $\{\phi_i\}$ and $\{\varphi_i\},$ we can construct the hierarchical sieve basis functions following \cite{CXH} as \vspace*{-2pt}  
\begin{equation}\label{eq_basisconstruction}
\{\phi_i(t)\} \otimes \{\varphi_j(x)\}. 
\end{equation} 
For examples of the commonly used sieve and mapped sieve basis functions and their properties, we refer the readers to  Section \ref{sec_sieves} of our supplement \cite{suppl}. We also refer the reader to Figure \ref{fig_basis} of our supplement for an illustration of the basis and the corresponding mapped basis functions. In our $\mathtt{R}$ package $\mathtt{SIMle},$ one can use $\mathtt{bs.gene}$ to generate many commonly used basis functions and  $\mathtt{bs.gene.trans}$ for the mapped basis functions.

Then we proceed to state the approximation theory using (\ref{eq_firststeptruncation1}), where $\{b_{\ell_1, \ell_2}(t,x)\}$ is the collection of  sieve basis functions as in (\ref{eq_basisconstruction}).  








\begin{assumption}\label{assum_smoothnessasumption}
{For some constant $\mathsf{m}_0,$ denote $\mathtt{C}^{\mathsf{m}_0}([0,1])$ as the function space on $[0,1]$ of continuous functions that have continuous first $\mathsf{m}_0$ derivatives.  We assume that the intercept $m_0(t) \in \mathtt{C}^{\mathsf{m}_0}([0,1]).$}
Moreover, for some sequence of constants $\mathsf{m}_{kj}, k=1,2, 1 \leq j \leq r,$ we assume that for $\widetilde{m}_j(t,y), 1 \leq j \leq r$ in (\ref{eq_transformedmjty})
\begin{equation}\label{eq_importantregulaityassumption}
\frac{\partial\widetilde{m}_j}{\partial t} \in \mathtt{C}^{\mathsf{m}_{1j}}([0,1]),  \ \text{and} \  \frac{\partial \widetilde{m}_j}{\partial y} \in \mathtt{C}^{\mathsf{m}_{2j}}([0,1]),  
\end{equation}
and for all $\alpha_j=\alpha_{j1}+\alpha_{j2}$ with $0 \leq \alpha_{jk} \leq \mathsf{m}_{kj}, k=1,2,$ $\partial^{\alpha_j} \widetilde{m}_j/(\partial t^{\alpha_{j1}} \partial y^{\alpha_{j2}})$ is always uniformly bounded and continuous in both $t \in [0,1]$ and $y \in [0,1].$ 
%
\end{assumption}

\begin{remark}
We provide a few remarks on Assumption \ref{assum_smoothnessasumption}. First, the first condition in (\ref{eq_importantregulaityassumption}) requires that $\widetilde{m}_j(t,\cdot)$ is a smooth function with respect to $t$ on the compact interval $[0,1].$  In view of the definition (\ref{eq_transformedmjty}), it suffices that the original function $f_j(t):=m_j(t,\cdot)$ is also a smooth function with respect to $t$ of order $\mathsf{m}_{1j}.$ Second,  the second condition in (\ref{eq_importantregulaityassumption}) requires that after $\mathbb{R}$ being mapped to $[0,1]$ using the mappings satisfying Definition \ref{defn_mappings}, $\widetilde{m}_j(\cdot,y)$ is a smooth function with respect to $y$ on $[0,1].$ In view of the mappings in Definition \ref{defn_mappings}, we see that it will be satisfied if the original function $h_j(x):=m_j(\cdot, x)$ belongs to the generalized Schwartz space of order $\mathsf{m}_{2j},$ i.e., $h_j(x) \in \mathtt{C}^{\mathsf{m}_{2j}}(\mathbb{R}),$ and for all $0 \leq \alpha,\beta \leq \mathsf{m}_{2j},$ there exists some positive constant $C_{\alpha,\beta}$ so that $|x|^{\alpha} |\partial^\beta h_j(x)|\leq C_{\alpha,\beta};$ see Section \ref{sec_Schwartzfunction} of our supplement for more details. For an illustration, we can see the decay properties of the mapped basis functions as in the right panel of Figure \ref{fig_basis} of the supplement.

\end{remark}

The main result of this subsection can be summarized as follows. 

\begin{proposition}\label{thm_approximation}
Suppose Assumption \ref{assum_smoothnessasumption} holds. Furthermore, let ${\phi_i(t)}$ be commonly used sieve basis functions and
let ${\varphi_j(x)}$ be the corresponding mapped sieve basis functions. 
{For $m_{0,c}(t)$ defined in (\ref{eq_intercepttruncation}) and   $m_{j,c,d}(t,x)$ defined in (\ref{eq_firststeptruncation1}) with $b_{\ell_1, \ell_2}(t,x)$ constructed according to (\ref{eq_basisconstruction}), we have that}\vspace*{-5pt} 
\begin{equation*}
\sup_{t \in [0,1]} \left| m_0(t)-m_{0, c}(t) \right|={\OO\left( c_0^{-\mathsf{m}_0} \right)},
\end{equation*}
\vspace*{-5pt} and for $1 \leq j \leq r$ \vspace*{-5pt} 
\begin{equation*}
\sup_{t \in [0,1]} \sup_{x \in \mathbb{R}}\left| m_j(t,x)-m_{j, c,d}(t,x) \right|={\OO\left( c_j^{-\mathsf{m}_{1j}}+d_j^{-\mathsf{m}_{2j}} \right)}. 
\end{equation*}
\end{proposition}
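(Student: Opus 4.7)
The proof splits naturally into the 1-D claim for $m_0$ and the 2-D claim for the $m_j$'s. For $m_0$, the plan is to invoke a standard Jackson-type inequality directly: since $m_0 \in \mathtt{C}^{\mathsf{m}_0}([0,1])$ and $\{\phi_\ell\}$ is one of the commonly used sieve bases (Legendre/Chebyshev polynomials, trigonometric series, or B-splines, all catalogued in Section \ref{sec_sieves} of the supplement), classical approximation theory supplies coefficients $\{\beta_{0,\ell}\}_{\ell=1}^{c_0}$ realizing the rate $\sup_{t}|m_0(t)-m_{0,c}(t)|=\OO(c_0^{-\mathsf{m}_0})$. I would simply quote those per-basis estimates.

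For the nonlinear components, the plan is to convert the unbounded-domain approximation to a compact-domain one via the mapping $y=\sfy(x)$. Monotonicity of $g(\cdot;s)$ gives
\begin{equation*}
\sup_{t\in[0,1]}\sup_{x\in\R}|m_j(t,x)-m_{j,c,d}(t,x)|=\sup_{t\in[0,1]}\sup_{y\in[0,1]}|\widetilde m_j(t,y)-\widetilde m_{j,c,d}(t,y)|,
\end{equation*}
where $\widetilde m_{j,c,d}$ is the corresponding tensor polynomial in the untransformed basis $\{\phi_{\ell_1}(t)\phi_{\ell_2}(y)\}$. Since $\varphi_{\ell_2}=\widetilde\phi_{\ell_2}=\phi_{\ell_2}\circ\sfy$ up to the orthogonalization in (\ref{eq_defnmappedbasis}), the linear span of the mapped tensor basis in $(t,x)$ coincides with that of the untransformed tensor basis in $(t,y)$, so any approximation of $\widetilde m_j$ in the latter space transfers directly to an approximation of $m_j$ in the former.

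The task therefore reduces to a standard anisotropic-smoothness tensor-product Jackson inequality on $[0,1]^2$. Under Assumption \ref{assum_smoothnessasumption}, $\widetilde m_j$ has all mixed partials up to order $(\mathsf{m}_{1j},\mathsf{m}_{2j})$ bounded and continuous. I would carry out the slice-by-slice argument: first, for each fixed $t$, 1-D Jackson in $y$ yields coefficients $\alpha_{\ell_2}(t)$ with
\begin{equation*}
\sup_{y\in[0,1]}\Bigl|\widetilde m_j(t,y)-\sum_{\ell_2=1}^{d_j}\alpha_{\ell_2}(t)\phi_{\ell_2}(y)\Bigr|=\OO(d_j^{-\mathsf{m}_{2j}})
\end{equation*}
uniformly in $t$; second, because the joint smoothness lets one commute differentiation in $t$ with the $y$-projection defining $\alpha_{\ell_2}(\cdot)$, each $\alpha_{\ell_2}$ inherits $\mathsf{m}_{1j}$ continuous $t$-derivatives with bounds that are summable in $\ell_2$ (Parseval in $y$), so a second 1-D Jackson in $t$ furnishes $\beta_{j,\ell_1,\ell_2}$ producing the bound $\OO(c_j^{-\mathsf{m}_{1j}})$ after summing over $\ell_2$. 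Adding the two contributions gives the claimed $\OO(c_j^{-\mathsf{m}_{1j}}+d_j^{-\mathsf{m}_{2j}})$.

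The main obstacle I expect is the bookkeeping in the second stage: naively combining the $\ell_2$-indexed $t$-approximations would pick up an extra factor of $d_j$. The resolution is to exploit the decay of $\alpha_{\ell_2}(t)$ in $\ell_2$ that follows from the $y$-smoothness of $\widetilde m_j$ via repeated integration by parts in the defining projection integral; this is really the content of an anisotropic Jackson inequality and is the only step where the hierarchical construction (\ref{eq_basisconstruction}) and the Schwartz-type decay produced by the mapping (Definition \ref{defn_mappings}) actually play a role. Everything else is bookkeeping and appeal to the tabulated 1-D rates.
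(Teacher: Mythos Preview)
Your proposal is correct and follows essentially the same route as the paper: quote a 1-D Jackson inequality for $m_0$, use the monotone mapping to transfer the unbounded-domain problem for $m_j$ to the compact square $[0,1]^2$ (exactly as in the paper's Lemma \ref{lem_deterministicapproximation2}), and then invoke an anisotropic tensor-product Jackson bound there. The paper handles the last step by citing external references (Section 5.3 of \cite{MR1262128} for the $L_2$ tensor argument, Section 6.2 of \cite{MR1176949} for the sup-norm upgrade) rather than spelling out the slice-by-slice bookkeeping you describe, but the content is the same.
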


\begin{remark}
Proposition \ref{thm_approximation} establishes the approximation results for a 2-D function on an unbounded domain using the mapped sieve basis functions. Similar results have been established for the compact domain by directly using the basis functions $\{\phi_i(t)\} \otimes \{\phi_j(x)\}$ instead of (\ref{eq_basisconstruction}); see \cite[Section 2.3.1]{CXH} for a comprehensive summary.
\end{remark}
%
%

\subsubsection{Pilot estimation for the coefficients}\label{sec_olssieveestimation}
{According to the approximation result in Proposition \ref{thm_approximation} and the definitions of $m_{0,c}(t)$ in (\ref{eq_intercepttruncation}) and $m_{j,c,d}(t,x)$ in (\ref{eq_firststeptruncation1}), we find that in order to estimate $m_j(t,x),$ it suffices to estimate the coefficients $\{\beta_{0,\ell}\}$ and $\{\beta_{j,\ell_1, \ell_2}\}$ and then construct the pilot estimator as in (\ref{eq_firststeptruncationinter}) and  
(\ref{eq_firststeptruncation}).} 

In what follows, we will estimate these coefficients simultaneously using one OLS. Under some mild conditions, inserting the results of Proposition \ref{thm_approximation} into (\ref{eq:model}) will result in { \vspace*{-5pt} 
\begin{equation*}
\small Y_i=m_{0,c}(t_i)+\sum_{j=1}^r m_{j,c,d}(t_i,X_{j,i})+\epsilon_i+\mathrm{o}_{\mathbb{P}}(1). 
\end{equation*}
Since $m_{j,c,d}, 1 \leq j \leq r,$ may not satisfy (\ref{eq_identiassum}), we rewrite the above equation as follows \vspace*{-5pt} 
\begin{equation*}
Y_i=\left( m_{0,c}(t_i)+\sum_{j=1}^r \chi_j(t_i) \right)+\sum_{j=1}^r \left(m_{j,c,d}(t_i,X_{j,i})-\chi_j(t_i)\right)+\epsilon_i+\mathrm{o}_{\mathbb{P}}(1),
\end{equation*}
where for $1 \leq j \leq r,$ 
\begin{equation}\label{eq_meanchisquare}
\chi_j(t_i) \equiv \chi_{j,c,d}(t_i):=\mathbb{E}(m_{j,c,d}(t_i,X_{j,i})). 
\end{equation} 
To ensure identifiability, it is necessary to estimate both $m_{j,c,d}(t,x)$ and its mean $\chi_j(t)$. In this section, we will construct a pilot estimator for $m_{j,c,d}(t,x)$ as in (\ref{eq_firststeptruncation})  without addressing the identifiability issue, and then account for it by estimating $\chi_j(t)$ in the next section. Note  that {for the basis functions  $\{b_{\ell_1, \ell_2}(t,x)\}$  constructed according to (\ref{eq_basisconstruction}) and using (\ref{eq_intercepttruncation}) and (\ref{eq_firststeptruncation1}), we can write }\vspace*{-5pt}  
\begin{align}\label{eq_linearregression}
Y_i&=m_{0,c}(t_i)+\sum_{j=1}^r m_{j,c,d}(t_i,X_{j,i})+\epsilon_i+\mathrm{o}_{\mathbb{P}}(1) \nonumber \\
& {=\sum_{\ell=1}^{c_0} \beta_{0,\ell} \phi_\ell(t_i)+ \sum_{j=1}^r \sum_{\ell_1=1}^{c_j} \sum_{\ell_2=\mathsf{g}}^{d_j} \beta_{j, \ell_1, \ell_2} b_{\ell_1, \ell_2}(t_i,X_{j,i})+\epsilon_i+\mathrm{o}_{\mathbb{P}}(1)}, 
\end{align} 
where {for the first mapped orthonormal basis fucntion $\varphi_1(x)$ as in (\ref{eq_basisconstruction}),} $\mathsf{g}=1$ if $\varphi_1(x) \not\equiv \text{constant} $ and otherwise $\mathsf{g}=2,$ to avoid the possible issue of multicollinearity. For simplicity, in what follows, we assume $\mathsf{g} = 1$, with straightforward modifications applicable when $\mathsf{g} = 2$. The above equation is essentially a linear regression so we can now apply OLS to estimate the unknown coefficients.

Throughout the paper, we will use the following short-hand notation \vspace*{-5pt}  
\begin{equation}\label{eq_defnp}
p:=\sum_{j=1}^r c_j d_j, \ \text{and} \ \mathsf{p}:=p+c_0. 
\end{equation}
Let the vector $\bm{\beta}_1=(\beta_1, \cdots, \beta_{p})^\top \in \mathbb{R}^{p}$ collect all these coefficients $\{\beta_{j, \ell_1, \ell_2}\}$ in the order of the indices $\Iintv{1,r} \times \Iintv{1,c_j} \times \Iintv{1,d_j}.$ Moreover, let $\bm{\beta}_0=(\beta_{0,1}, \cdots, \beta_{0,c})^\top \in \mathbb{R}^{c_0}$ and $\bm{\beta}=(\bm{\beta}_0^\top, \bm{\beta}_1^\top)^\top \in \mathbb{R}^{\mathsf{p}}.$ The OLS estimator for $\bm{\beta}$ can be written as 
\begin{equation}\label{eq_betaolsform}
\widehat{\bm{\beta}}=(W^\top W)^{-1} W^\top \bm{Y},
\end{equation}  
where $\bm{Y}=(Y_i)_{1 \leq i \leq n} \in \mathbb{R}^{n}$ and $W$ is the design matrix whose detailed construction based on (\ref{eq_linearregression}) can be found in 
Section \ref{sec_detailedols} of our supplement.{ We can then partition $\widehat{\bm{\beta}}=(\widehat{\bm{\beta}}_0^\top,\widehat{\bm{\beta}}_1^\top)^\top$ in the same way as in the definition of $\bm{\beta}.$ Our proposed pilot estimator for $m_{0,c}(t)$ (c.f.  (\ref{eq_firststeptruncationinter})) can be written as { \vspace*{-5pt}  
\begin{equation}\label{eq_interceptestimation}
\widehat{m}^*_{0,c}(t)=\widehat{\bm{\beta}}_0^\top \bm{\phi}_0(t),
\end{equation} }
where $\bm{\phi}_0(t) \in \mathbb{R}^{c_0}$ is the collection of the basis functions $\{\phi_\ell(t)\}_{1 \leq \ell \leq c_0}.$ 

In addition, for $1 \leq j \leq r,$ denote the diagonal matrix $\mathsf{I}_j \in \mathbb{R}^{p \times p}$ such that $\mathsf{I}_j=\oplus_{k=1}^ r \delta_k(j) \mathbf{I}_{c_jd_j},$ where $\oplus$ is the direct sum for matrices, $\delta_k(j)=1$ when $k=j$ and $0$ otherwise, and $\mathbf{I}_{c_jd_j}$ is the $c_jd_j \times c_jd_j$ identity matrix. Furthermore,  let $\bm{b}_j \in \mathbb{R}^{c_jd_j}$ be the collection of the basis functions $\{b_{i,j}(t,x)\}_{\{1 \leq i \leq c_j, 1 \leq j \leq d_j \}}$ and $\bm{b}=(\bm{b}_1^\top, \cdots, \bm{b}_r^\top)^\top \in \mathbb{R}^p.$  Then our proposed pilot estimator for $m_{j}(t,x)$ (c.f. (\ref{eq_firststeptruncation})) can be written as {\vspace*{-5pt}  
\begin{equation}\label{eq_proposedestimator}
\widehat{m}^*_{j,c,d}(t,x)=(\widehat{\bm{\beta}}_1 \mathsf{I}_j)^\top \bm{b}.  
\end{equation}}
}  
\begin{remark}
{In the literature (see Chapter 8.5.2 of the monograph \cite{fan2008nonlinear} or \cite{bookchap}), the estimation procedure within the additive model framework employs the backfitting algorithm, which iteratively estimates the functions one at a time. In our current paper, since $r$ is bounded and the unknown functions are assumed to be smooth, the total number of unknown parameters in $\bm{\beta}$, denoted as $\mathsf{p}$ in (\ref{eq_defnp}), typically diverges at a much slower rate compared to the sample size $n$. Consequently, all of them can be estimated using a single OLS regression.} 
\end{remark}

\subsection{Bias correction: towards satisfying the assumption (\ref{eq_identiassum})}\label{sec_removenonzeromeanassumption} The previous section provides the pilot estimators  (\ref{eq_firststeptruncationinter}) and (\ref{eq_firststeptruncation}), which are basis-dependent and may not satisfy condition (\ref{eq_identiassum}). Recall (\ref{eq_firststeptruncation1}) and (\ref{eq_meanchisquare}). To ensure (\ref{eq_identiassum}) and correct the bias part, the time-varying mean function $\chi_j(t), \ 1 \leq j \leq r$, should  be estimated and removed from its pilot estimators.  We will address this issue via a bias correction procedure in this section.
Under some regularity conditions, we can write that\vspace*{-5pt}  
\begin{equation}\label{eq_trueintegralexpansion}
\chi_j(t_i)=\sum_{\ell_1=1}^{c_j} \sum_{\ell_2=1}^{d_j} \beta_{j, \ell_1, \ell_2} \phi_{\ell_1}(t_i) \mathbb{E} \varphi_{\ell_2}(X_{j,i}).  
\end{equation} 

Since the coefficients $\{\beta_{j, \ell_1, \ell_2}\}$ has been estimated in (\ref{eq_betaolsform}), it suffices to estimate $\mathbb{E} \varphi_{\ell_2}(X_{j,i}).$ For notional convenience, we now introduce a sequence of functions $\vartheta_{j, \ell_2}(t_i), \ 1 \leq j \leq r, \ 1 \leq \ell_2 \leq d_j,$ defined as  $\vartheta_{j,\ell_2}(t_i)=\mathbb{E} \varphi_{\ell_2}(X_{j,i}).$
Moreover, under Assumption \ref{assum_models}, we have that for all $t \in [0,1]$ \vspace*{-5pt}  
\begin{equation}\label{eq_definitionvartheta}
\vartheta_{j,\ell_2}(t)=\mathbb{E} \varphi_{\ell_2}(G_j(t,\cdot)).
\end{equation}
Note that $\varphi_{\ell_2}(x)$ is a mapped basis function. If it is smooth and $G_j(t, \cdot)$ satisfies certain regularity conditions, then $\vartheta_{j,\ell_2}$ inherits the smoothness. Consequently, we can use the method of sieves again to estimate them. We summarize these assumptions as follows. 
\begin{assumption}\label{assum_baiscorrectionassump}
For $1 \leq j \leq r$ and $1 \leq \ell_2 \leq d_j,$ suppose there exist a sequence of constants constant $\{\mathsf{n}_{j, \ell_2}\},$ so that the sequence of functions $\{\vartheta_{j, \ell_2}(t)\}$ defined in (\ref{eq_definitionvartheta}) are $\mathtt{C}^{\mathsf{n}_{j, \ell_2}}([0,1]).$
\end{assumption}

Under the above assumptions, it is sufficient to estimate the sequence of functions $\{\vartheta_{j,\ell_2}(t)\}$ using the method of sieves again, as discussed in Section \ref{sec_generaldiscussion}. More concretely, by denoting $\epsilon_{j,\ell_2,i}=\varphi_{\ell_2}(X_{j,i})-\vartheta_{j,\ell_2}(t_i),$ we have the following regression equations
\begin{equation}\label{eq_equationsetupsetup}
\varphi_{\ell_2}(X_{j,i})=\vartheta_{j,\ell_2}(t_i)+\epsilon_{j,\ell_2,i}, \ i=1,2,\cdots, n.
\end{equation}
Similar to (\ref{eq_intercepttruncation}), for some constant $c_{j,\ell_2},$ $\vartheta_{j,\ell_2}(t)$ can be approximated by { \vspace*{-5pt}  
\begin{equation}\label{eq_definemorenotations}
\vartheta_{j,\ell_2,c}(t)=\sum_{k=1}^{c_{j,\ell_2}} \gamma_k \phi_k(t), \ \text{where} \ \gamma_k \equiv \gamma_k(j,\ell_2),  
\end{equation}
where we recall that $\{\phi_k(t)\}$ are the basis functions defined on $[0,1]$.} Together with (\ref{eq_equationsetupsetup}), we can estimate the coefficients $\{\gamma_k\}$ in the same ways as in Section \ref{sec_olssieveestimation}. As a consequence, we can obtain the estimators for $\{\vartheta_{j,\ell_2}(t)\},$ denoted as $\{\widehat{\vartheta}_{j,\ell_2}(t)\}.$

With the above preparation, we can estimate $\chi_j(t_i)$ using
\begin{equation}\label{eq_estimationchi}
\widehat{\chi}_j(t)=\sum_{\ell_1, \ell_2} \widehat{\beta}_{j,\ell_1, \ell_2} \phi_{\ell_1}(t) \widehat{\vartheta}_{j,\ell_2}(t).
\end{equation} 
Applying the above corrections to the pilot estimators (\ref{eq_interceptestimation}) and (\ref{eq_proposedestimator}),  we can finally obtain our estimators for  $m_0(t)$ and $m_j(t,x)$  using {\vspace*{-5pt}  
\begin{equation}\label{eq_generalestimateone}
\widehat{m}_0(t)=\widehat{m}^*_{0,c}(t)+\sum_{j=1}^r \widehat{\chi}_j(t),
\end{equation} }
and for $1 \leq j \leq r$ {\vspace*{-5pt}  
\begin{equation}\label{eq_generalestimatetwo}
\widehat{m}_j(t,x)=\widehat{m}^*_{j,c,d}(t,x)-\widehat{\chi}_j(t).
\end{equation}}


\subsection{Theoretical properties of the proposed estimators}\label{sec_theorecticalpropertyestimator}
In this subsection, we prove the uniform consistency for the proposed estimators (\ref{eq_generalestimateone}) and (\ref{eq_generalestimatetwo}). We first prepare some notations and assumptions. {For $1 \leq j \leq r$ and $1 \leq i \leq n,$ we define  $\bm{w}_j(i)=(w_{ji,k})_{\{1 \leq k \leq d_j\}} \in \mathbb{R}^{d_j},$ such that \vspace*{-5pt}  
\begin{equation}\label{eq_desigmatrixform}
w_{ji,k}=\varphi_{k} (X_{j,i}). 
\end{equation}
Moreover, we denote $\bm{u}_j(i)=(u_{ji,k})_{\{1 \leq k \leq d_j\}} \in \mathbb{R}^{d_j},$ where  
\begin{equation}\label{eq_ddd}
u_{ji,k}=w_{ji,k} \epsilon_i.
\end{equation}}

According to Lemma \ref{lem_locallystationaryform} of the supplement, we have that for $1 \leq j \leq r,$ there exist measurable functions $\mathbf{W}_j(t, \cdot)=(W_{j1}(t,\cdot), \cdots, W_{jd_j}(t, \cdot)), \ \mathbf{U}_j(t, \cdot)=(U_{j1}(t,\cdot), \cdots, U_{jd_j}(t,\cdot))$ satisfying the stochastic Lipschitz continuity as in (\ref{eq_slc}) such that $\{\bm{u}_j(i)\}$ is mean zero and \vspace*{-5pt}  
\begin{equation}\label{eq_locallystationaryform}
\bm{w}_j(i)=\Wb_j(t_i, \mathcal{F}_i), \ \bm{u}_j(i)=\Ub_j(t_i, \mathcal{F}_i), \ t_i=\frac{i}{n}.
\end{equation} 

%
%

{Recall (\ref{eq_defnp}). With the above notations, we define the block-wise stochastic process $\widetilde{\bW}(t, \cdot) \in \mathbb{R}^p:=(\widetilde{\bW}^\top_1(t, \cdot),\cdots, \widetilde{\bW}^\top_r(t, \cdot))^\top,$ where $\widetilde{\bW}_j(t,\cdot) \in \mathbb{R}^{c_j d_j}, 1 \leq j \leq r,$ is defined as follows {\vspace*{-5pt}  
\begin{equation*}
\widetilde{\bW}_j(t, \cdot):=\bW_j(t, \cdot) \otimes \phib_j(t),
\end{equation*} 
where $\phib_j(t)=(\phi_1(t), \cdots, \phi_{c_j}(t))^\top \in \mathbb{R}^{c_j}.$} Similarly, we define $\widetilde{\Ub}(t, \cdot) \in \mathbb{R}^p:=(\widetilde{\Ub}^\top_1(t, \cdot),\cdots, \widetilde{\Ub}^\top_r(t, \cdot))^\top,$ where $\widetilde{\Ub}_j(t,\cdot) \in \mathbb{R}^{c_j d_j}, 1 \leq j \leq r,$ is defined as follows{\vspace*{-5pt}  
\begin{equation}\label{eq_realU}
\widetilde{\Ub}_j(t, \cdot):=\Ub_j(t, \cdot) \otimes \phib_j(t). 
\end{equation}}

We now define the long-run covariance matrices of $\widetilde{\bW}(t, \cdot)$ and $\widetilde{\Ub}(t, \cdot)$ as follows \vspace*{-5pt}  
\begin{equation}\label{eq_longrunwitht}
\begin{gathered}
\Pi(t):=\sum_{\mathsf{s}=-\infty}^{+\infty} \operatorname{Cov} \left( \widetilde{\Wb}(t, \mathcal{F}_0), \widetilde{\Wb}(t, \mathcal{F}_\mathsf{s}) \right),  \\ \Omega(t):=\sum_{\mathsf{s}=-\infty}^{+\infty} \operatorname{Cov} \left( \widetilde{\Ub}(t, \mathcal{F}_0), \widetilde{\Ub}(t, \mathcal{F}_\mathsf{s}) \right). 
\end{gathered}
\end{equation}
 Using the above notations, we further denote the integrated long-run covariance matrices $\Pi \in \mathbb{R}^{p \times p}$ and $\Omega \in \mathbb{R}^{p \times p}$ as {\vspace*{-5pt}  
\begin{equation}\label{eq_longruncovariancematrix}
\Pi:=\int_0^1 \Pi(t)  \dd t, \ \Omega:=\int_0^1 \Omega(t) \dd t,
\end{equation} }
Moreover, for each $1 \leq j \leq r$ and $\{\vartheta_{j,\ell_2}(t)\}$ defined in (\ref{eq_definitionvartheta}), we denote $\bm{\vartheta}_j(t)=(\vartheta_{j,1}, \cdots, \vartheta_{j,d_j})^\top \in \mathbb{R}^{d_j}$  and 
\begin{equation}\label{eq_defnbmmj}
\bm{f}_j(t)=\phib_j(t) \otimes \bm{\vartheta}_j(t)  \in \mathbb{R}^{c_j d_j}.
\end{equation}
Let $\bm{f}(t)=(\bm{f}^\top_1(t), \cdots, \bm{f}^\top_r(t))^\top \in \mathbb{R}^{p},$ and the $c_0 \times p$ matrix $\Pi_d(t)=\phib_0(t) \otimes \bm{f}^\top(t) \in \mathbb{R}^{c_0 \times p},$ and $\Pi_d:=\int_0^1 \Pi_d(t) \mathrm{d}t.$ Recall (\ref{eq_setting2}). We denote \vspace*{-5pt}  
$$\Omega_0(t):=\sum_{\mathsf{s}=-\infty}^{+\infty} \operatorname{Cov} \left( D(t, \mathcal{F}_0) \otimes \phib_0(t), D(t, \mathcal{F}_\mathsf{s}) \otimes \phib_0(t)  \right),$$ and $\Omega_0:=\int_0^1  \Omega_0(t) \mathrm{d}t \in \mathbb{R}^{c_0 \times c_0}.$ Finally, we denote
\begin{equation}\label{eq_Pibar}
\overline{\Pi}:=
\begin{pmatrix}
\mathbf{I}_{c_0} & \Pi_d \\
\Pi_d^* & \Pi
\end{pmatrix}, \ \   \overline{\Omega}:=
\begin{pmatrix}
\Omega_0 & \mathbf{0} \\
\mathbf{0} & \Omega
\end{pmatrix} \in \mathbb{R}^{\mathsf{p} \times \mathsf{p}}.
\end{equation}
}

In the current paper, we will need the following regularity assumption on $\overline{\Pi}$ and $\overline{\Omega}.$ It is frequently used in the statistics literature to guarantee that both $\overline{\Pi}$ and $\overline{\Omega}$ are invertible, for instance, see \cite{MR3476606,MR3161455,DZ,MR2719856}. 
\begin{assumption}\label{assum_updc} For $\overline{\Pi}$ and $\overline{\Omega}$ defined in (\ref{eq_Pibar}), we assume that there {exists a small universal constant $0<\kappa \leq 1$ such that {for sufficiently large $n$}
\begin{equation*}
 \kappa \leq  \min\{\lambda_\mathsf{p}(\overline{\Pi}), \lambda_\mathsf{p}(\overline{\Omega}) \} \leq \max\{\lambda_1(\overline{\Pi}), \lambda_1(\overline{\Omega}) \} \leq \kappa^{-1},
\end{equation*} 
where $\lambda_\mathsf{p}(\cdot)$ is the smallest eigenvalue of the given matrix and $\lambda_1(\cdot)$ is the largest eigenvalue of the given matrix. }
\end{assumption}

Then we state the results regarding the consistency of our proposed estimator (\ref{eq_proposedestimator}). We define that for $0 \leq j \leq r$ 
\begin{equation}\label{eq_defnxic}
\xi_j:=\sup_{1 \leq i \leq c_j} \sup_{t \in [0,1]} |\phi_i(t)|, \ \gamma_j:=\sup_t |\phib_j(t)|, \ \iota_j:=\sup_x |\mathbf{v}_j(x)|,
\end{equation}
where $\mathbf{v}_j(x):=(\varphi_{\ell}(x))^\top \in \mathbb{R}^{d_j},$ and we recall that $| \phib_j(t) | \equiv | \phib_j(t) |_2$ is the $\ell_2$ norm of $\phib_j(t)=(\phi_1(t), \cdots, \phi_{c_j}(t))^\top \in \mathbb{R}^{c_j},$ and  $| \bm{b} | \equiv | \bm{b} |_2$ is the $\ell_2$ norm of $\bm{b} \in \mathbb{R}^p$ which is the collection of the basis functions. Denote 
\begin{equation}\label{eq_xbasisbound}
\varsigma_j:=\sup_{1 \leq \ell \leq d_j} \sup_{x \in \mathbb{R}} \left( |\varphi_\ell(x)|+|\varphi_\ell'(x)| \right).  
\end{equation}
Based on the above notations, we further denote
\begin{equation}\label{eq_defnxic}
\xi:=\max_{0 \leq j \leq r} \xi_j,\ \varsigma:= \max_{1 \leq j \leq r} \varsigma_j, \ \iota:=\max_{1 \leq j \leq r} \iota_j, \ \gamma=\sup_{0 \leq j \leq r} \gamma_j,  \ \zeta=\sup_{t,x}| \bm{b} |. 
\end{equation}

\begin{theorem}\label{thm_consistency}
Suppose Assumptions \ref{assum_models}-- \ref{assum_updc} and the assumption of (\ref{eq_identiassum}) hold. Moreover, we assume {\vspace*{-5pt}  
\begin{equation}\label{eq_parameterassumption}
\mathsf{p}\left( \frac{\xi^2 \varsigma^2}{\sqrt{n}}+\frac{\xi^2 \varsigma n^{\frac{2}{\tau+1}}}{n}\right)=\oo(1).
\end{equation} }
For $1 \leq j \leq r,$ denote \vspace*{-5pt}  
\begin{equation*}
\mathfrak{F}_j:=\sum_{\ell=1}^{d_j} \left( c_{j,\ell}^{-\mathsf{n}_{j \ell}}+\gamma_\ell \xi_\ell \varsigma_\ell \sqrt{\frac{c_{j, \ell}}{n}} \right)^2.
\end{equation*}
Then we have that for all $1 \leq j \leq r$ and the estimators in (\ref{eq_generalestimateone}) and (\ref{eq_generalestimatetwo}) {\vspace*{-5pt}  
\begin{align}\label{eq_rate}
\sup_{t \in [0,1],x \in \mathbb{R}}\left | m_j(t,x)-\widehat{m}_{j}(t,x) \right|&=\OO_{\mathbb{P}}\Big(\xi \varsigma( \gamma  \iota+\zeta) \sqrt{\frac{\mathsf{p}}{n}}+c_j^{-\mathsf{m}_{1j}}+d_j^{-\mathsf{m}_{2j}}+\gamma \sqrt{\mathfrak{F}_j} \Big), 
\end{align}}
and for the time-varying intercept {\vspace*{-5pt}  
\begin{align}\label{eq_rate2}
\sup_{t \in [0,1]}\left | m_0(t)-\widehat{m}_{0}(t) \right |&=\OO_{\mathbb{P}}\left(\gamma \xi \varsigma \iota \sqrt{\frac{\mathsf{p}}{n}}+c_0^{-\mathsf{m}_0}+ \gamma \sqrt{\sum_{j=1}^r \mathfrak{F}_j } \right). 
\end{align} }
\end{theorem}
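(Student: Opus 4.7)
The plan is to decompose the total error into a deterministic sieve approximation piece, a pilot OLS estimation piece, and a second-step bias-correction piece, and handle each separately. Since $\mathbb{E}[m_j(t,X_{j,i})]=0$ by \eqref{eq_identiassum} while $m_{j,c,d}$ need not be centered, Jensen's inequality plus Proposition \ref{thm_approximation} yield $\sup_{t}|\chi_j(t)|=\OO(c_j^{-\mathsf{m}_{1j}}+d_j^{-\mathsf{m}_{2j}})$, which already sits inside the stated rate. Writing
\begin{equation*}
m_j(t,x)-\widehat{m}_j(t,x)=\bigl[m_j(t,x)-m_{j,c,d}(t,x)+\chi_j(t)\bigr]-\bigl[\widehat{m}^*_{j,c,d}(t,x)-m_{j,c,d}(t,x)\bigr]+\bigl[\widehat{\chi}_j(t)-\chi_j(t)\bigr],
\end{equation*}
the first bracket is controlled by Proposition \ref{thm_approximation}, and an analogous decomposition applies to $m_0-\widehat{m}_0$ with the factor $\zeta$ replaced by $\gamma$ since $\widehat{m}^*_{0,c}$ uses only the $c_0$-dimensional basis $\phib_0(t)$.

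For the pilot OLS error, write $\bm{Y}=W\bm{\beta}^*+\bm{r}+\bm{\epsilon}$ with $\bm{r}$ the sieve truncation residual, so that
\begin{equation*}
\widehat{\bm{\beta}}-\bm{\beta}^*=\bigl(\tfrac{1}{n}W^\top W\bigr)^{-1}\bigl(\tfrac{1}{n}W^\top\bm{\epsilon}+\tfrac{1}{n}W^\top\bm{r}\bigr).
\end{equation*}
The two critical ingredients are: (a) the operator-norm deviation $\lVert\tfrac{1}{n}W^\top W-\overline{\Pi}\rVert_{op}=\oo_{\mathbb{P}}(1)$, which combined with Assumption \ref{assum_updc} forces $\lVert(\tfrac{1}{n}W^\top W)^{-1}\rVert_{op}=\OO_{\mathbb{P}}(1)$; and (b) the vector bound $\lvert\tfrac{1}{n}W^\top\bm{\epsilon}\rvert=\OO_{\mathbb{P}}(\xi\varsigma\sqrt{\mathsf{p}/n})$, together with a lower-order contribution from $W^\top\bm{r}/n$. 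Combining yields $\lvert\widehat{\bm{\beta}}-\bm{\beta}^*\rvert=\OO_{\mathbb{P}}(\xi\varsigma\sqrt{\mathsf{p}/n})$, and multiplying by $\zeta=\sup_{t,x}|\bm{b}(t,x)|$ (respectively by $\gamma$) produces the leading $\zeta\xi\varsigma\sqrt{\mathsf{p}/n}$ piece in \eqref{eq_rate} and $\gamma\xi\varsigma\sqrt{\mathsf{p}/n}$ piece in \eqref{eq_rate2}.

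For the bias-correction error, decompose
\begin{equation*}
\widehat{\chi}_j(t)-\chi_j(t)=\sum_{\ell_1,\ell_2}(\widehat{\beta}_{j,\ell_1,\ell_2}-\beta^*_{j,\ell_1,\ell_2})\phi_{\ell_1}(t)\vartheta_{j,\ell_2}(t)+\sum_{\ell_1,\ell_2}\beta^*_{j,\ell_1,\ell_2}\phi_{\ell_1}(t)(\widehat{\vartheta}_{j,\ell_2}(t)-\vartheta_{j,\ell_2}(t))+(\text{cross}).
\end{equation*}
The first sum is $\OO_{\mathbb{P}}(\gamma\iota\cdot\lvert\widehat{\bm{\beta}}_1-\bm{\beta}^*_1\rvert)$ by Cauchy--Schwarz, explaining the $\gamma\iota$ factor in $(\gamma\iota+\zeta)\sqrt{\mathsf{p}/n}$ of \eqref{eq_rate}. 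Each $\widehat{\vartheta}_{j,\ell_2}$ in the second sum is a 1-D sieve estimator from the univariate regression \eqref{eq_equationsetupsetup} on $c_{j,\ell_2}$ basis functions, so a scalar replay of the above OLS argument gives $\sup_t|\widehat{\vartheta}_{j,\ell_2}(t)-\vartheta_{j,\ell_2}(t)|=\OO_{\mathbb{P}}(c_{j,\ell_2}^{-\mathsf{n}_{j\ell_2}}+\gamma_{\ell_2}\xi_{\ell_2}\varsigma_{\ell_2}\sqrt{c_{j,\ell_2}/n})$. A second Cauchy--Schwarz, with the coefficient-norm bound $\sum_{\ell_1,\ell_2}|\beta^*_{j,\ell_1,\ell_2}|^2=\OO(1)$ implied by Assumption \ref{assum_smoothnessasumption} and standard sieve estimates, then generates the $\gamma\sqrt{\mathfrak{F}_j}$ term; the cross term is higher order under \eqref{eq_parameterassumption}.

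The main obstacle lies in the concentration estimates (a) and (b) above. With $\mathsf{p}$ diverging, one needs sharp operator- and vector-norm deviation bounds for partial sums of the affine forms $\widetilde{\bW}(t_i,\mathcal{F}_i)\widetilde{\bW}(t_i,\mathcal{F}_i)^\top$ and $\widetilde{\Ub}(t_i,\mathcal{F}_i)$, driven by a $\mathsf{p}$-dimensional locally stationary process whose physical dependence decays only polynomially as $k^{-\tau}$ (Assumption \ref{assum_physical}). My strategy is $m$-dependent truncation at scale $m_*\asymp n^{1/(\tau+1)}$, controlling the post-truncation tail by Assumption \ref{assum_physical} and invoking a Bernstein-type tail inequality on the truncated $m_*$-dependent sum, followed by a union bound over the entries (which introduces the $\xi^2\varsigma^2$ and $\xi^2\varsigma$ prefactors). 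The interplay of $\mathsf{p}$, $\xi$, $\varsigma$ and the scale $n^{2/(\tau+1)}/n$ is precisely what Assumption \eqref{eq_parameterassumption} balances, forcing (a) to be $\oo_{\mathbb{P}}(1)$ and delivering the claimed rate in (b); the remainder of the argument is standard sieve/OLS bookkeeping.
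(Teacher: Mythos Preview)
Your proposal follows the same overall architecture as the paper's proof: identical three-term decomposition, identical reduction of the OLS piece to (a) Gram-matrix concentration and (b) score-vector size, and the same two-term split of $\widehat\chi_j-\chi_j$ (with the cross term absorbed). The divergence is purely in how you propose to establish (a) and (b). The paper does \emph{not} use $m$-dependent truncation or Bernstein inequalities at this stage; it works entry by entry, applying the $L^q$ partial-sum bound for physically dependent sequences (Lemma~\ref{lem_concentration}(1)) to obtain the stochastic rate $\xi^2\varsigma^2/\sqrt{n}$, and then Lemma~\ref{lem_concentration}(2) to pass from the instantaneous second moment to the integrated long-run covariance $\overline\Pi$, which is where the $\xi^2\varsigma\,n^{2/(\tau+1)-1}$ term actually enters. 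Entry-wise bounds are then aggregated to the operator norm via Gershgorin's circle theorem (Lemma~\ref{lem_circle}). Your $m$-dependent-plus-Bernstein route would also succeed, and with $m_*\asymp n^{1/(\tau+1)}$ recovers the same numerology, but it is heavier than needed here (the paper reserves $m$-dependence and truncation for the later Gaussian-approximation argument, Theorem~\ref{thm_gaussianapproximationcase}); moreover, a Bernstein-type bound needs bounded or sub-exponential summands, so for $n^{-1}W^\top\bm\epsilon$ you would still need an extra truncation of $\epsilon_i$ (which only has $q$th moment), whereas the paper's $L^q$ route handles the polynomial-moment case directly.

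One incidental remark: your inclusion of $n^{-1}W^\top\bm r$ in the normal-equation error is more careful than the paper's presentation, which simply writes $\widehat{\bm\beta}-\bm\beta=(W^\top W)^{-1}W^\top\bm\epsilon$ and implicitly absorbs the sieve-truncation residual into the $o_{\mathbb P}(1)$ of~\eqref{eq_linearregression}.
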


{
\begin{remark}\label{rmk_minimaxoptimal}
Theorem \ref{thm_consistency} implies that our proposed sieve least square estimators (\ref{eq_generalestimateone}) and (\ref{eq_generalestimatetwo}) are consistent under mild  conditions. We focus on the discussion for (\ref{eq_generalestimatetwo}). First, the error rate on the right-hand side of (\ref{eq_rate}) contains three parts. {The first part $\xi \varsigma (\gamma \iota+\zeta) \sqrt{\mathsf{p}/n}$ quantifies the error between   (\ref{eq_firststeptruncation1}) and its estimator (\ref{eq_firststeptruncation}),  and the dominant error part between $\chi_j(t)$ and its estimator $\widehat{\chi}_j(t)$. The second part $c_j^{-\mathsf{m}_{1j}}+d_j^{-\mathsf{m}_{2j}}$ quantifies the deterministic error using (\ref{eq_firststeptruncation1}) to approximate the function $m_j(t,x).$} The last part is the possibly minor error part using  (\ref{eq_estimationchi}) to estimate  $\chi_j(t).$

Second, $\xi (\xi_j), \varsigma (\varsigma_j), \iota (\iota_j)$ and $\zeta$ can be calculated for specific sieve basis functions and so does the convergence rate in (\ref{eq_rate}). For example, when $\{\phi_i(t)\}$ are chosen as the Fourier basis functions and $\{\varphi_i(t)\}$ as the mapped Fourier basis functions, then  $\xi_j=\OO(1), \iota_j, \varsigma_j=\OO(d_j)$ and $\zeta=\OO(\sqrt{p}). $ Additionally, if we assume that $\mathsf{m}_{1j}=\mathsf{m}_{2j}=\infty$ such that $m_j(t,x)$ is infinitely differentiable, by choosing $c_j=d_j=\OO(\log n),$ the rate on  the right-hand side of (\ref{eq_rate}) reads as $\OO(n^{-1/2} \log^3 n )$ which matches the optimal uniform (i.e. sup-norm) convergence rate as obtained in \cite{CHEN2015447,DZ, MR673642}. For the magnitudes of $\xi, \varsigma, \iota$ and $\zeta$ for more general sieves, we refer the readers to Section \ref{sec_sieves} of our supplement \cite{suppl}.

Finally, the condition (\ref{eq_parameterassumption}) ensures that $n^{-1} W^\top W$ in the OLS estimator  $\widehat{\bm{\beta}}$ in (\ref{eq_betaolsform}) will converge to $\overline{\Pi}$ in (\ref{eq_Pibar}), which guarantees the regularity behavior of $\widehat{\bm{\beta}}$; see (\ref{eq_consistencyconvergency}) for more details. In fact (\ref{eq_parameterassumption}) can be easily   satisfied. For example, when $\xi=\OO(1), \varsigma=\OO(\log n)$ and $\mathsf{p}=\log^2n,$ we only require $\tau>1$ for the physical dependence measure as in (\ref{eq_physcialrequirementone}).   
\end{remark}
}

\section{Simultaneous inference for the nonlinear regression}\label{sec_inference}
In this section, we conduct simultaneous inference for our model (\ref{eq:model}).  To be concise, we only focus on the nonlinear functions $m_j(t,x), 1 \leq j \leq r,$ based on our proposed sieve least square estimator (\ref{eq_generalestimatetwo}). Similar arguments apply to the time-varying intercept; see Remark \ref{rem_meaninference} of our supplement for more details. In what follows, we assume the error part in Theorem \ref{thm_consistency} satisfies the following mild assumption.
\begin{assumption}\label{assum_debiasassumption} Till the end of the paper, we assume that for some constant  $\varepsilon>\frac{1}{2}$ \vspace*{-5pt}  
\begin{equation}\label{eq_assumptionerrorreduce}
c_0^{-\mathsf{m}_0}+\sum_{j=1}^r \left( c_j^{-\mathsf{m}_{1j}}+d_j^{-\mathsf{m}_{2j}}+\sum_{\ell=1}^{d_j} c_{j, \ell}^{-\mathsf{n}_{j, \ell}} \right)=\OO(n^{-\varepsilon}).  
\end{equation}
Moreover, we assume that minor error parts in (\ref{eq_rate}) and (\ref{eq_rate2}) (i.e., the last terms on their right-hand sides) are much smaller than the dominate parts (i.e., the first terms on their right-hand side).  
\end{assumption}

\begin{remark}
{First,  (\ref{eq_assumptionerrorreduce}) is the commonly used under-smoothing condition which ensures the deterministic approximation error, or the bias part, in Proposition \ref{thm_approximation} is negligible, allowing the focus to be on the analysis of $m_{j,c,d},$ for $1 \leq j \leq r.$ } This approach is commonly used in the literature on the applications of the sieve method to time series; see, for example, \cite{CXH, CHEN2015447, DZ, DZ2}. Moreover, (\ref{eq_assumptionerrorreduce}) is a mild assumption when reasonably large values of $c_j$ and $d_j$ are chosen. For example, if $\mathsf{m}_0=\mathsf{m}_{1j}=\mathsf{m}_{2j}=\infty,$ it can be easily satisfied by choosing $c_j, d_j \asymp \log n.$ In general,  it will be satisfied if we choose 
\begin{equation}\label{eq_cdchoice}
c_0 \asymp n^{C/\mathsf{m}_0}, \ c_j \asymp n^{C/\mathsf{m}_{j1}}, \ d_j \asymp n^{C/\mathsf{m}_{j2}},
\end{equation}
for some constant $C>\frac{1}{2}.$ This shows that we only need a relatively small amount of basis functions once $\mathsf{m}_0, \mathsf{m}_{jk}, k=1,2,$ are reasonably large. 

{Second, the second assumption ensures that the estimation error of each auxiliary mean shift in (\ref{eq_definitionvartheta}) is negligible compared to the estimator of the entire function $m_j(t,x)$. Theoretically, this is a natural assumption since, compared to the 1-D mean shift functions $\vartheta_{j, \ell_2}(t)$, the function $m_j(t,x)$ is a 2-D function, which inherently results in an asymptotically dominating estimation error. Practically, it can be readily addressed by employing relatively smooth basis functions. We illustrate this using (\ref{eq_rate}) as an example. For instance, if we choose orthogonal polynomials as the basis functions, it is straightforward to see that all $\mathsf{n}_{j, \ell_2} = \infty$ in Assumption \ref{assum_baiscorrectionassump}. This implies that one only needs $c_{j, \ell} = \mathrm{O}(\log n)$ to ensure a negligible error. Consequently, the last term can be bounded by $\mathrm{O}(\gamma (\log n)^{3/2} \varsigma \sqrt{d_j/n})$, which is much smaller than the first term. }

Finally, we note that under Assumption \ref{assum_debiasassumption}, as demonstrated in our technical proof, a useful expression can be derived to quantify the error term for $\widehat{m}_{j} - m_j$. Specifically, when $r=1$, by recalling (\ref{eq_defnbmmj}) for $\bm{f}$ and (\ref{eq_proposedestimator}) for $\bm{b}$, we have
\begin{equation*}
\widehat{m}_{1} - m_1 = (\bm{b}^\top - \bm{f}^\top)(\bm{\beta}_1 - \widehat{\bm{\beta}}_1)(1 + \mathrm{o}_{\mathbb{P}}(1)).
\end{equation*}
The adjustment of $\bm{f}$ from the basis functions $\bm{b}$ ensures that our proposed estimator satisfies the identifiability assumption.
\end{remark} 
\subsection{Problems setup and simultaneous confidence region}\label{sec_problemsetup}
For $1 \leq j \leq r$ and {a coverage probability $1-\alpha,$} our goal in this section is to construct a $(1-\alpha)$ simultaneous confidence region (SCR) of $m_j(t,x)$ based on the estimator (\ref{eq_generalestimatetwo}), denoted as $\{\Upsilon_{\alpha}(t,x), 0 \leq t \leq 1, x \in \mathbb{R}\}.$  More concretely, the SCR {aims to have the following property:} 
\begin{equation}\label{eq_scrdefinition}
\lim_{n \rightarrow \infty}\mathbb{P}\left\{ m_j(t,x) \in \Upsilon_\alpha(t,x), \ \forall \ 0 \leq t \leq 1, \ \forall \ x \in \mathbb{R} \right\}=1-\alpha.
\end{equation}

Before introducing various concerned hypothesis testing problems, we first explain how to construct the SCR using the point estimator $\widehat{m}_{j}(t,x)$  in (\ref{eq_generalestimatetwo}). In light of Assumption \ref{assum_debiasassumption} and Proposition \ref{thm_approximation}, for each $1 \leq j \leq r$ and fixed $0 \leq t \leq 1$ and $x \in \mathbb{R},$ we write
\begin{equation}\label{eq_definitiont1}
\mathsf{T}_{j}(t,x)=\widehat{m}_{j}(t,x)-m_j(t,x).
\end{equation}
We now introduce some important quantities. Till the end of the paper, for notional simplicity,  by recalling (\ref{eq_defnbmmj}) for $\bm{f}$ and (\ref{eq_proposedestimator}) for $\bm{b}$, we denote 
\begin{equation}\label{eq_verctorconstruction}
\bm{r} \equiv \bm{r}(t,x):=\bm{b}-\bm{f} \in \mathbb{R}^p,  \ \text{and} \ \ \overline{\bm{r}}=(\mathbf{0}^\top, \bm{r}^\top)^\top \in \mathbb{R}^{\mathsf{p}}. 
\end{equation} 

Recall (\ref{eq_Pibar}) and $\mathsf{I}_j$ in (\ref{eq_proposedestimator}).  Denote $\overline{\mathsf{I}}_j=(\bm{0}^\top, \mathsf{I}_j^\top)^\top \in \mathbb{R}^{\mathsf{p}}$ and 
\begin{equation}\label{eq_defhtx}
h_j(t,x)=\sqrt{\bm{l}_j^\top \overline{\Omega} \bm{l}_j}, \ 1 \leq j \leq r,
\end{equation}
where $\bm{l}_j\equiv \bm{l}_j(t,x):=\overline{\Pi}^{-1} \overline{\bm{r}} \overline{\mathsf{I}}_j.$ {We refer to Remark \ref{rmk_hj(tx)} of our supplement for more discussions on the properties of $h_j(t,x).$ As noted in Remark \ref{rem_meanzerodiscussions} of our supplement, for each fixed pair $(t,x)$, $\sqrt{n}\mathsf{T}_j(t,x)$ can be well approximated by a quadratic form of a high-dimensional locally stationary time series. Furthermore, as shown in Theorem \ref{thm_asymptoticdistribution} below, it is asymptotically Gaussian with mean zero and variance $h_j(t,x)^2$.} Given $\alpha,$ we can utilize
\begin{equation}\label{eq_scrdefinition}
\mathcal{R}_j^{\pm}(t,x):=\widehat m_{j}(t,x)\pm \cf_{\alpha}\frac{h_j(t,x)}{\sqrt{n}},\quad t\in[0,1], x\in\mathbb{R},
\end{equation}
where $\cf_{\alpha} \equiv \cf_{\alpha}(j,n)$ is the critical value  that 
\begin{equation}\label{eq_calpha}
\lim_{n \rightarrow \infty} \p(\sup_{t,x}\left| \frac{\sqrt{n}\mathsf{T}_{j}}{h_j(t,x)}\right|\le \cf_{\alpha})= 1-\alpha. 
\end{equation}
Consequently, for each $1 \leq j \leq r,$ the SCR can be constructed as 
\begin{equation}\label{eq_SCRformaldefinition}
\left\{ \left[\mathcal{R}_j^-(t,x), \ \mathcal{R}_j^+ (t,x)\right], 0 \leq t \leq 1, x \in \mathbb{R} \right\}.
\end{equation}
%
Armed with the SCRs, we proceed to demonstrate a few applications on how to use them to infer the regression functions $m_j(t,x).$ For the first application as in Example \ref{exam_exactform}, we are interested in testing whether the underlying function $m_j(t,x)$ is identical to some pre-given function $\mathfrak{m}_{j,0}(t,x).$ Under the null hypothesis, the pre-given function $\mathfrak{m}_{j,0}(t,x)$ should be embedded into the SCR in (\ref{eq_SCRformaldefinition}). We summarize this in the following example. 

\begin{example}[Testing exact form]\label{exam_exactform} We consider the hypothesis that 
\begin{equation}\label{eq_nullhypotheis}
\mathbf{H}_0: \ m_j(t,x) \equiv \mathfrak{m}_{j,0}(t,x),
\end{equation} 
where $\mathfrak{m}_{j,0}(t,x)$ is some pre-given function.  For the test, we can calculate the SCR as in (\ref{eq_SCRformaldefinition}). If $\mathfrak{m}_{j,0}(t,x)$ can be embedded into the SCR, we shall accept $\mathbf{H}_0$ in (\ref{eq_nullhypotheis}). Otherwise, we need to reject it at level $\alpha$. 
\end{example}

For the second application, we are interested in testing some special structural assumptions of $m_j(t,x)$. In this paper, we present two important examples as in Examples \ref{exam_stationarytest} and \ref{exam_seperabletest} below. One is to test whether $m_j(t,x)$ is independent of the time (c.f. (\ref{eq_hotestinghomogeneity})) and the other is to test whether $m_j(t,x)$ has a multiplicative (i.e., separable) structure in $t$ and $x$ (c.f. (\ref{eq_hotestseparability})). In these applications, under the specific structural assumptions, $m_{j}(t,x)$ can be estimated with faster convergence rates than those using the general approach (\ref{eq_proposedestimator}). For example, in Example \ref{exam_stationarytest}, under the assumption that   $m_j$ is stationary (i.e., independent of time), instead of using $\OO(c_jd_j)$ hierarchical basis functions as in (\ref{eq_basisconstruction}), we only need $\OO(d_j)$ mapped sieve basis functions so that we will have a better rate compared to using all the basis functions as illustrated in (\ref{eq_rate}). {Similarly, assuming that $m_j(t,x)$ is separable in $t$ and $x$, as in Example \ref{exam_seperabletest}, we can estimate the functions solely in $t$ and $x$ separately. This allows us to use only $\OO(c_j + d_j)$ basis functions.}

Based on the above observations, if the functions $m_{j}(t,x)$ under the null hypothesis  can be estimated with faster convergence rates than those estimated using the general approach (\ref{eq_proposedestimator}), it suffices to check whether the estimated  functions under the null hypothesis (with faster convergence rates) can be embedded into the SCR constructed  generally using (\ref{eq_proposedestimator}). We summarize these in the following examples. 
 
\begin{example}[Testing time-homogeneity]\label{exam_stationarytest} We consider the structural assumption  that 
\begin{equation}\label{eq_hotestinghomogeneity}
\mathbf{H}_0: \ m_j(t,x) \equiv m_j(x), \ \ \forall  t \in [0,1],
\end{equation}
for some function $m_j(\cdot).$  Under the null assumption of (\ref{eq_hotestinghomogeneity}), we can estimate the function only using the basis $\{\varphi_j(x)\}$ instead of (\ref{eq_basisconstruction}). Under the assumption of (\ref{eq_assumptionerrorreduce}), by an argument similar to Theorem \ref{thm_consistency}, we can obtain a convergence rate $\iota \xi \zeta (d_j/n)^{1/2}$ which is faster than the rate in (\ref{eq_rate}). 
\end{example}

\begin{example}[Testing separability]\label{exam_seperabletest} We consider the structural assumption that 
\begin{equation}\label{eq_hotestseparability}
\mathbf{H}_0: m_j(t,x)=C_j f_j(t)g_j(x), \ \forall \ t \in [0,1] \ \text{and} \ x \in \mathbb{R}, 
\end{equation}
for some constant $C_j$ and some functions $f_j(\cdot)$ and $g_j(\cdot).$ Without loss of generality,  we can assume that $\int f_j(t) \mathrm{d} t=1$ and  $\int g_j(x) \mathrm{d} x=1,$ so that $C_j=\int \int m_j(t,x) \mathrm{d}x \mathrm{d}t.$ Moreover, we shall have that  $g_j(x)=C_j^{-1}\int m_j(t,x) \mathrm{d} t, \ f_j(t)=C_j^{-1}\int m_j(t,x) \mathrm{d} x.$    {Therefore, one can obtain the estimates of $f_j(t)$ and $g_j(x)$ by integrating the estimator (\ref{eq_generalestimatetwo}). Under the null assumption of (\ref{eq_hotestseparability}), by a discussion similar to Theorem \ref{thm_consistency},}
 we can obtain a convergence rate $\varsigma \gamma(c_j/n)^{1/2}+\varsigma \zeta (d_j/n)^{1/2}$ which is faster than the rate in (\ref{eq_rate}).  
\end{example} 

For illustrations of how to use SCR to conduct the tests presented in Examples \ref{exam_exactform}, \ref{exam_stationarytest}, and \ref{exam_seperabletest}, we refer readers to Figures \ref{fig_exact}, \ref{fig_homogene}, and \ref{fig_sepa} in our supplement \cite{suppl}.


\subsection{Asymptotic theory}
In this subsection, we provide the asymptotic results on the SCR in (\ref{eq_scrdefinition}). To see the validity of (\ref{eq_scrdefinition}), we need to establish the asymptotic normality for the point estimator $\widehat{m}_{j}(t,x)$ and {understand the asymptotic properties of the critical value $\cf_{\alpha}$}. These will be established in Theorems \ref{thm_asymptoticdistribution} and \ref{thm_uniformdistribution}, respectively.  

We start preparing some notations.  
Recall (\ref{eq_xbasisbound}) and (\ref{eq_defnxic}). Recall (\ref{eq_defnp}). Denote $\Theta$ as { 
\begin{align}\label{eq_controlparameter}
\Theta:=\frac{\mathsf{p}^{7/4}}{\sqrt{n}} \hd^3 \fm^2 +\left[ \sqrt{\mathsf{p} n} (\zeta+\gamma \iota) \xi^2 \varsigma \hd^{-(q-1)}\right]^{2/3}&+\left[\sqrt{\mathsf{p}} (\zeta+\gamma \iota) \left( \mathsf{p} \xi^2 \varsigma^2 \fm^{-\tau+1}+ \mathsf{p}n \xi^4 \varsigma^2 \hd^{-(q-2)} \right)  \right]^{2/3} \nonumber \\
&+\left[\mathsf{p}^{3/2} \xi (\zeta+\gamma \iota) \varsigma \fm^{-\tau+1} \right]^{2/3},
\end{align}} 
where $\fm \equiv \fm(n)$ and $\hd \equiv \hd(n)$ are large, diverging values introduced only for technical purposes. Specifically, $\fm$ is used in the $\fm$-dependent approximation, while $\hd$ is employed to properly truncate the time series. Furthermore, one can choose the best $\fm$ and $\hd$ according to (\ref{eq_controlparameter}) so that $\Theta$ can be minimized.} As can be seen in Theorem \ref{thm_gaussianapproximationcase} of our supplement \cite{suppl}, $\Theta$ is used to control the error rates of the point-wise Gaussian approximation. Those parameters are only needed in the theoretical investigations and are not
needed in practical implementation of our methodology. 
As will be seen later, if $\Theta=\oo(1),$ after being properly scaled, $\mathsf{T}_{j}$ in (\ref{eq_definitiont1}) will be asymptotically standard Gaussian; see Theorem \ref{thm_asymptoticdistribution} for a more precise statement and Remark \ref{rem_remaftertheorem41} for more discussions on the parameter $\Theta.$ 
 
\begin{theorem}\label{thm_asymptoticdistribution}
Suppose Assumptions \ref{assum_models}--\ref{assum_debiasassumption} as well as the assumptions of Theorem \ref{thm_consistency} hold.  For $\Theta$ in (\ref{eq_controlparameter}), we assume that 
\begin{equation}\label{eq_onebound}
\Theta=\oo(1). 
\end{equation}
Moreover, we assume that the basis functions are selected such that $|\bm{r}|$, as defined in (\ref{eq_verctorconstruction}), is uniformly bounded from below for all $t \in [0,1]$ and $x \in \mathbb{R}$. {Then we have that when $n$ is sufficiently large, for all $(t,x) \in [0,1] \times \mathbb{R}$ and $\mathsf{T}_j(t,x)$ in (\ref{eq_definitiont1})   
\begin{equation}\label{eq_thmonepartone}
\frac{\sqrt{n} \mathsf{T}_{j}(t,x)}{h_j(t,x)} \simeq \mathcal{N}(0,1),
\end{equation}
where $h_j(t,x)$ is defined in (\ref{eq_defhtx}) and $\simeq$ means convergence in distribution.}    
\end{theorem}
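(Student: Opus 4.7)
The plan is to express $\sqrt{n}\,\mathsf{T}_{j}(t,x)$ as an affine (linear) statistic of the locally stationary innovation process, identify its asymptotic variance as $h_j(t,x)^{2}$, and then invoke the Gaussian approximation result for such affine forms (Theorem~\ref{thm_gaussianapproximationcase} of the supplement), whose error rate is precisely $\Theta$ in~(\ref{eq_controlparameter}).

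First, I would decompose $\mathsf{T}_{j}$. Writing $\widehat m_j(t,x)=(\widehat{\bm{\beta}}_1\mathsf{I}_j)^\top\bm{b}-\widehat\chi_j(t)$ and using Proposition~\ref{thm_approximation} together with the under-smoothing condition~(\ref{eq_assumptionerrorreduce}), the deterministic approximation bias from replacing $m_j-\chi_j$ by its sieve truncation $m_{j,c,d}-\chi_j$ is $o(n^{-1/2})$. The linearization recorded in the remark preceding the theorem then yields
\begin{equation*}
\sqrt{n}\,\mathsf{T}_{j}(t,x)\;=\;\overline{\bm{r}}^\top\,\overline{\mathsf{I}}_j\,\sqrt{n}(\widehat{\bm{\beta}}-\bm{\beta})\,(1+o_{\mathbb P}(1))+o_{\mathbb P}(1),
\end{equation*}
with $\overline{\bm{r}}$ as in~(\ref{eq_verctorconstruction}); the fluctuation of the plug-in $\widehat\chi_j$ is absorbed because the $1$-D sieve estimates $\widehat\vartheta_{j,\ell_2}$ are, by the second part of Assumption~\ref{assum_debiasassumption}, of strictly smaller order than the leading $2$-D sieve error.

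Next I linearize the OLS estimator. Starting from $\widehat{\bm{\beta}}=(W^\top W)^{-1}W^\top\bm{Y}$ and the representation~(\ref{eq_linearregression}), one obtains
\begin{equation*}
\sqrt{n}(\widehat{\bm{\beta}}-\bm{\beta}) \;=\; \bigl(n^{-1}W^\top W\bigr)^{-1}\,\frac{1}{\sqrt{n}}\,W^\top\bm{\epsilon} \;+\; o_{\mathbb P}(1),
\end{equation*}
where the bias coming from truncation is again $o(n^{-1/2})$ by~(\ref{eq_assumptionerrorreduce}). Using the concentration inequalities of Lemma~\ref{lem_concentration} of the supplement, together with Assumption~\ref{assum_updc} and the parameter bound~(\ref{eq_parameterassumption}), I can show $n^{-1}W^\top W\to\overline{\Pi}$ in operator norm at a rate faster than $\Theta$, which permits replacing $(n^{-1}W^\top W)^{-1}$ by $\overline{\Pi}^{-1}$. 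Recognising that $n^{-1/2}W^\top\bm{\epsilon}$ is, after stacking the blocks $\phib_j(t_i)\otimes\bm{w}_j(i)\epsilon_i$, precisely $n^{-1/2}\sum_{i=1}^{n}\widetilde{\Ub}(t_i,\mathcal{F}_i)$ for the process in~(\ref{eq_realU}), I arrive at
\begin{equation*}
\sqrt{n}\,\mathsf{T}_{j}(t,x)\;=\;\bm{l}_j^\top\,\frac{1}{\sqrt{n}}\sum_{i=1}^{n}\widetilde{\Ub}(t_i,\mathcal{F}_i)\,(1+o_{\mathbb P}(1))+o_{\mathbb P}(1),
\end{equation*}
where $\bm{l}_j=\overline{\Pi}^{-1}\overline{\bm{r}}\,\overline{\mathsf{I}}_j$ as in~(\ref{eq_defhtx}).

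A Riemann-sum computation against the long-run covariance~(\ref{eq_longruncovariancematrix}) gives $\mathrm{Var}\bigl(\bm{l}_j^\top n^{-1/2}\sum_i\widetilde{\Ub}(t_i,\mathcal{F}_i)\bigr)=\bm{l}_j^\top\overline{\Omega}\bm{l}_j+o(1)=h_j(t,x)^2+o(1)$. The principal obstacle is the Gaussianity of this affine form of a high-dimensional locally stationary series; that is precisely the content of Theorem~\ref{thm_gaussianapproximationcase} of the supplement, whose Kolmogorov-distance error is $\mathrm{O}(\Theta)$, hence $o(1)$ under~(\ref{eq_onebound}). Since the assumed uniform lower bound on $|\bm{r}|$ combined with Assumption~\ref{assum_updc} keeps $h_j(t,x)$ bounded away from zero, dividing through by $h_j(t,x)$ yields~(\ref{eq_thmonepartone}). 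The two most delicate bookkeeping points are controlling $(n^{-1}W^\top W)^{-1}-\overline{\Pi}^{-1}$ in operator norm at a rate that does not inflate the constant $\Theta$, and verifying that the bias-correction $\widehat\chi_j$ does not contribute to the leading variance; both are handled by Assumptions~\ref{assum_updc} and~\ref{assum_debiasassumption} together with~(\ref{eq_parameterassumption}).
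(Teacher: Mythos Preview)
Your proposal is correct and follows essentially the same route as the paper: linearize $\sqrt{n}\,\mathsf{T}_j$ into the affine form $\bm{l}_j^\top\bm{z}$ with $\bm{z}=n^{-1/2}W^\top\bm{\epsilon}$ via the OLS expansion and (\ref{eq_consistencyconvergency}), then invoke the point-wise Gaussian approximation Theorem~\ref{thm_gaussianapproximationcase} (error $\mathrm{O}(\Theta)$) and identify the limiting variance through Lemma~\ref{lem_covarianceofz}, which shows $\operatorname{Cov}(\bm{z})\to\overline{\Omega}$ so that $\operatorname{Var}(\bm{l}_j^\top\bm{z})=h_j(t,x)^2+o(1)$. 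The paper packages the decomposition slightly more compactly in~(\ref{eq_fundementalexpression}) and Remark~\ref{rem_meanzerodiscussions}, but the logical skeleton and the key technical inputs are the same as yours.
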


Theorem \ref{thm_asymptoticdistribution} establishes the {point-wise} asymptotic normality of $\mathsf{T}_j$ under the assumption (\ref{eq_onebound}) which can be easily satisfied. In particular, combining with the discussion in Remark \ref{rmk_minimaxoptimal}, it is not hard to see that when $\tau>0$ is large enough so that the temporal relation decays fast enough, the basis functions are chosen so that $\xi, \varsigma=\mathrm{O}(1)$ and $\zeta=\mathrm{O}(1),$ in order to guarantee that $\Theta=\mathrm{o}(1),$ we can allow $\mathsf{p}$ to be as large as $\mathrm{O}(n^{2/7-\delta})$ for some sufficiently small constant $\delta>0.$  This matches with the 
dimension setting for high dimensional convex Gaussian approximation \cite{MR3571252}. Moreover, for many commonly used sieve basis functions—such as Fourier basis functions, orthogonal polynomials, and wavelet basis functions—the quantity $|\bm{r}|$ typically diverges with $\mathsf{p}$, allowing it to be uniformly bounded from below. For further details, see \cite{CHEN2015447}.

{In addition to the pointwise convergence results, we also establish uniform convergence results for the statistics. In particular, to construct the SCRs in (\ref{eq_SCRformaldefinition}), it is essential to understand the asymptotic behavior of $c_\alpha$ in (\ref{eq_calpha}).} Recall (\ref{eq_xbasisbound}), (\ref{eq_defnxic}) and $\mathfrak{m}$ and $\mathrm{h}$ in (\ref{eq_controlparameter}). Denote the control parameter $\Theta^*$ as
\begin{align}\label{eq_thetastar}
\Theta^*:=&n^{-1/2} \mathsf{p}^{7/4} \mathrm{h}^3  \mathfrak{m}^2+\sqrt{\mathsf{p}n} \mathrm{h}^{-2} \varsigma \xi \mathfrak{m}^{-\tau-1/2}+n \mathsf{p}^{-1/2} \xi^2 \varsigma \mathfrak{m}^{-3/2} \mathrm{h}^{-q-1} \nonumber  \\
&+n^{1/2} \mathsf{p}^{-1/2} \mathrm{h}^{-2} \mathfrak{m}^{-3/2} \varsigma^2 \left( \mathsf{p} \xi^2 \fm^{-\tau+1}+ \mathsf{p}n \xi^4 \hd^{-(q-2)} \right).  
\end{align}
{As can be seen in Theorem \ref{thm_uniformconvergencegaussian} of our supplement \cite{suppl}, $\Theta^*$ is used to control the error rates of the uniform Gaussian approximation.}

Recall (\ref{eq_betaolsform}), (\ref{eq_proposedestimator}) and (\ref{eq_longruncovariancematrix}). We further denote 
\begin{equation}\label{eq_ljdefinition}
l_j(t,x):=n^{-1}\overline{\bm{r}}^\top \overline{\mathsf{I}}_j  \overline{\Pi}^{-1}.
\end{equation} 
For $x \in \mathbb{R},$ using the mappings in Definition \ref{defn_mappings}, we write
\begin{equation}\label{eq_ll}
\widetilde{l}_j(t,\widetilde{x}):=l_j(t,g(2\widetilde{x}-1;s)) \equiv l_j(t,x), \ \widetilde{x} \in [0,1], 
\end{equation}  
and $\widetilde{\bm{r}} \equiv \widetilde{\bm{r}}(t, \widetilde{x})=\bm{r}(t, g(2\widetilde{x}-1);s) \equiv \bm{r}(t,x). $
Recall $\bm{\epsilon}=(\epsilon_i)_{1 \leq i \leq n} \in \mathbb{R}^{n}.$ Denote 
\begin{equation}\label{eq_defntdefn}
T_j(t, \widetilde{x}):=\frac{\sqrt{n} \widetilde{l}_j(t,\widetilde{x}) \sqrt{\operatorname{Cov}(W^\top \bm{\epsilon})}}{\widetilde{h}_j(t, \widetilde{x})}, \  \widetilde{h}_j(t,\widetilde{x}):=h_j(t,g(2\widetilde{x}-1);s). 
\end{equation} 
Armed with the above notations, let the manifold $\mathcal{M}_j, 1 \leq j \leq r,$ be defined as follows
\begin{equation}\label{eq_manifolddefinition}
\mathcal{M}_j=:\left\{ T_j(t, \widetilde{x}): (t,\widetilde{x}) \in [0,1] \times [0,1]  \right\}.
\end{equation}

\begin{theorem}\label{thm_uniformdistribution} {Suppose the assumptions of Theorem \ref{thm_asymptoticdistribution} hold with (\ref{eq_onebound}) replaced by $\Theta^*=\oo(1).$} Then critical value $\cf_{\alpha}$ in (\ref{eq_calpha}) satisfies the following expansion
\begin{align}\label{eq_expansionformula}
\alpha=\frac{\cf_{\alpha} \kappa_0}{\sqrt{2} \pi^{3/2}} \exp\left(-\frac{\cf_{\alpha}^2}{2} \right)+\frac{\eta_0}{2 \pi} \exp\left(-\frac{\cf_{\alpha}^2}{2} \right)&+2(1-\Phi(\cf_{\alpha})) +\oo\left( \exp\left(-\frac{\cf^2_{\alpha}}{2} \right)\right),
\end{align}
where $\kappa_0 \equiv \kappa_0(n), \eta_0 \equiv \eta_0(n)$ are the area and the length of the boundary of $\mathcal{M}_j$ in (\ref{eq_manifolddefinition}) and $\Phi(\cdot)$ is the cumulative distribution function of a standard Gaussian random variable. Moreover, suppose $c_j,d_j,$ are chosen according to (\ref{eq_cdchoice}), and there exist some constants $\alpha_1, \alpha_2 \geq 0$ so that
\begin{equation}\label{eq_assumptionbasisderivative}
\int \frac{\| \nabla_t \widetilde{\bm{r}} \|_2}{\| \widetilde{\bm{r}} \|_2} \mathrm{d}t \dd \widetilde{x} \asymp n^{\alpha_1}, \ \int \frac{\| \nabla_{\widetilde{x}} \widetilde{\bm{r}} \|_2}{\| \widetilde{\bm{r}} \|_2} \mathrm{d}t \dd \widetilde{x} \asymp n^{\alpha_2}, \ 
\end{equation}
then we have 
\begin{equation}\label{eq_quantitiesbound}
\cf_{\alpha} \asymp  \log^{1/2} n.
\end{equation}
\end{theorem}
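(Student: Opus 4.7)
The plan is to reduce the supremum of $\sqrt{n}\mathsf{T}_j(t,x)/h_j(t,x)$ to the supremum of a smooth centered Gaussian field with unit variance indexed by the compact domain $[0,1]^2$, and then apply the volume-of-tubes formula of \cite{MR1056331,KS,MR1207215} to obtain the expansion \eqref{eq_expansionformula}. The rate \eqref{eq_quantitiesbound} will then follow by bounding the geometric quantities $\kappa_0$ and $\eta_0$ via \eqref{eq_assumptionbasisderivative} and inverting the expansion asymptotically.

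First, I would invoke the uniform Gaussian approximation (Theorem \ref{thm_uniformconvergencegaussian} in \cite{suppl}), which under $\Theta^\ast=\oo(1)$ allows us to replace the studentized statistic $\sqrt{n}\mathsf{T}_j/h_j$ by the Gaussian field $T_j(t,\widetilde{x})$ defined in \eqref{eq_defntdefn}, with negligible error uniformly in $(t,\widetilde{x})\in[0,1]^2$; the mapping $\widetilde{x}=(u(x;s)+1)/2$ collapses the unbounded $x$-domain to $[0,1]$, so the relevant supremum is taken over a compact square. By construction $T_j$ is a mean-zero Gaussian field with $\mathrm{Var}(T_j(t,\widetilde{x}))\equiv 1$, so $T_j$ can be regarded as a random index map into the unit sphere of an ambient Hilbert space, giving the manifold $\mathcal{M}_j$ of \eqref{eq_manifolddefinition}.

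Second, I would apply the tube formula: for a smooth unit-variance Gaussian field $T$ indexed by a compact $2$-dimensional manifold $\mathcal{M}$, Sun's theorem yields
\begin{equation*}
\p\bigl(\sup_{(t,\widetilde{x})}|T(t,\widetilde{x})|>c\bigr)=\frac{c\,\kappa_0}{\sqrt{2}\,\pi^{3/2}}e^{-c^2/2}+\frac{\eta_0}{2\pi}e^{-c^2/2}+2(1-\Phi(c))+\oo(e^{-c^2/2}),
\end{equation*}
where $\kappa_0$ is the $2$-volume of $\mathcal{M}_j$ and $\eta_0$ the $1$-volume of its boundary, both computed with respect to the metric induced from the unit sphere. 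Setting this probability equal to $\alpha$ yields \eqref{eq_expansionformula}. The main obstacle here is verifying the regularity conditions of the tube formula—namely that $T_j$ has $C^2$ sample paths and that the induced metric on the parameter domain is non-degenerate; this requires that $|\bm r|$ is uniformly bounded from below (as assumed in Theorem \ref{thm_asymptoticdistribution}) and that the covariance of $T_j$ is sufficiently smooth in $(t,\widetilde{x})$, which follows from the smoothness of the basis functions and of $\overline{\Pi},\overline{\Omega}$.

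Third, for the order \eqref{eq_quantitiesbound} I would compute the induced metric explicitly. Since $T_j$ is obtained by projecting a unit vector $\widetilde{\bm r}/\|\widetilde{\bm r}\|$ (up to the constants $\overline{\Pi}^{-1}\overline{\mathsf{I}}_j$ and the long-run-variance normalization) against a Gaussian vector, the tangent vectors in the sphere satisfy $\|\partial_t T_j\|_{L^2}\asymp \|\nabla_t(\widetilde{\bm r}/\|\widetilde{\bm r}\|)\|_2$, which in turn is controlled by $\|\nabla_t\widetilde{\bm r}\|_2/\|\widetilde{\bm r}\|_2$ modulo lower-order terms from the radial component. Consequently
\begin{equation*}
\kappa_0=\int_{[0,1]^2}\sqrt{\det G(t,\widetilde{x})}\,\dd t\,\dd\widetilde{x}\lesssim \int_{[0,1]^2}\frac{\|\nabla_t\widetilde{\bm r}\|_2\,\|\nabla_{\widetilde{x}}\widetilde{\bm r}\|_2}{\|\widetilde{\bm r}\|_2^2}\,\dd t\,\dd\widetilde{x}\asymp n^{\alpha_1+\alpha_2},
\end{equation*}
by \eqref{eq_assumptionbasisderivative} and Cauchy--Schwarz, and similarly $\eta_0\lesssim n^{\max(\alpha_1,\alpha_2)}$. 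Plugging these polynomial bounds into \eqref{eq_expansionformula} and inverting, the dominant balance is $e^{-\cf_\alpha^2/2}\asymp n^{-(\alpha_1+\alpha_2)}$ up to a polynomial factor in $\cf_\alpha$, which immediately gives $\cf_\alpha\asymp \log^{1/2}n$. I expect the hardest technical step to be the careful computation of the geometric quantities $\kappa_0,\eta_0$ in terms of the sieve basis, since one must disentangle the contribution of the radial projection (which removes one direction) from the intrinsic tangent vectors of $\mathcal{M}_j$, and certify the requisite curvature/smoothness conditions needed to make Sun's expansion rigorously applicable to our non-stationary sieve-based field.
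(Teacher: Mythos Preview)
Your proposal follows essentially the same route as the paper: reduce to the Gaussian case via the uniform approximation (Theorem~\ref{thm_uniformconvergencegaussian}), apply Sun's volume-of-tubes formula to the unit-variance field $T_j$ on the compact square $[0,1]^2$ to obtain \eqref{eq_expansionformula}, and then control the geometric quantities $\kappa_0,\eta_0$ via \eqref{eq_assumptionbasisderivative} to invert for $\cf_\alpha$. The paper makes the first step a bit more explicit by writing $\widehat{m}_{c,d}(t,x)=\widetilde{l}(t,\widetilde{x})^\top W^\top\bm{Y}+\oo_{\mathbb{P}}(1)$ and then invoking Lemma~\ref{lem_volumeoftube} directly in the form \eqref{eq_alphatheortrep}, but this is the same mechanism you describe.

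There is one genuine gap in your sketch of \eqref{eq_quantitiesbound}. You only argue an \emph{upper} bound $\kappa_0\lesssim n^{\alpha_1+\alpha_2}$, which yields $\cf_\alpha\lesssim\log^{1/2}n$; the matching lower bound $\cf_\alpha\gtrsim\log^{1/2}n$ requires a polynomial \emph{lower} bound on $\kappa_0$, which your Cauchy--Schwarz step does not provide. The paper handles this by writing $\kappa_0=\int\!\!\int\sqrt{\det(AA^\top)}\,\dd t\,\dd\widetilde{x}$ with $A^\top=(\partial_t T,\partial_{\widetilde{x}}T)$, showing the diagonal entries satisfy $\mathcal{E}_1\asymp p\|\mathbf{b}_1\|_2^2/\|\widetilde{\bm{r}}\|_2^2$ and $\mathcal{E}_3\asymp p\|\mathbf{b}_2\|_2^2/\|\widetilde{\bm{r}}\|_2^2$, and then bounding $\lambda_{\min}(AA^\top)$ from below via Gershgorin's circle theorem (Lemma~\ref{lem_circle}) to obtain $\kappa_0\asymp p(n^{\alpha_1}+n^{\alpha_2})$. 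Note also that the paper's rate for $\kappa_0$ differs from yours (additive in $n^{\alpha_1},n^{\alpha_2}$ with an extra factor $p$, rather than multiplicative), though both are polynomial in $n$ and hence both yield the same $\log^{1/2}n$ order for $\cf_\alpha$.
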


\begin{remark}\label{rem_remaftertheorem41}
Two remarks are in order. First, Theorem \ref{thm_uniformdistribution} provides an accurate representation for the critical value $\cf_{\alpha}$ as in (\ref{eq_expansionformula}). Especially, under the condition of (\ref{eq_assumptionbasisderivative}), we obtain the exact rate of $\cf_\alpha$ in (\ref{eq_quantitiesbound}). Inserting this back into (\ref{eq_expansionformula}), we can see that the error term can be made sufficiently small so that solving (\ref{eq_expansionformula}) will yield an accurate estimation for $\cf_{\alpha}.$ 
We note that the commonly used sieve basis functions can easily satisfy (\ref{eq_assumptionbasisderivative}). For further details, please refer to Assumption 4 in \cite{CHEN2015447}.

Second, based on Theorems \ref{thm_asymptoticdistribution} and \ref{thm_uniformdistribution}, we observe that the approach of using SCRs for various tests, as illustrated in Examples \ref{exam_exactform}--\ref{exam_seperabletest}, can achieve asymptotic power of one, provided the alternative deviates from the null hypothesis at a rate greater than the order of the SCR width, i.e., $\mathrm{O}(n^{-1/2} \cf_{\alpha} h_j(t,x))$. From the definition in (\ref{eq_defhtx}), it is evident that {$h_j(t,x) \asymp \sqrt{c_j d_j}$.} Thus, the proposed tests will achieve asymptotic power of one if the width of the SCRs is much larger than {$\sqrt{c_j d_j \log n/n}$}.

%
%
\end{remark}


%

\subsection{Practical implementation: a multiplier bootstrap procedure}\label{sec_practicalimplementation}

As we can see from Theorem \ref{thm_asymptoticdistribution}, it is difficult to use {(\ref{eq_thmonepartone}) and (\ref{eq_expansionformula})} directly as many of the quantities are unknown and hard to be estimated. For example, in Theorem \ref{thm_asymptoticdistribution}, the variance part $h_j(t,x)$ in (\ref{eq_defhtx}) involves the long-run covariance matrix of $\{\bm{u}_i\},$ i.e., $\Omega(t)$ defined in (\ref{eq_longrunwitht}). For another instance, in  Theorem \ref{thm_uniformdistribution}, the direct use of the results require prior knowledge of the area $\kappa_0$ and the length $\eta_0$ of the boundary of the manifold which also relies on detailed knowledge of the covariance structures of $\{\bm{u}_i\}$.

 To address this issue, in this section, we propose a practical method as in \cite{DZ2,MR3174655} which utilizes high-dimensional multiplier bootstrap statistics to mimic the distributions of $\{\bm{u}_i\}$. We now introduce some notations before presenting the ideas. {We note that since the identifiability of $m_j$'s is not required for estimating the residuals, we can use}\vspace*{-5pt}  
\begin{equation}\label{eq_defnresidual}
\widehat{\epsilon}_i=Y_i-\widehat{m}^*_{0,c}(t_i)-\sum_{j=1}^r \widehat{m}^*_{j,c,d}(t_i, X_{j,i}),
\end{equation}   
where $\widehat{m}^*_{0,c}(\cdot)$ is defined in (\ref{eq_interceptestimation}), and $\widehat{m}^*_{j,c,d}(\cdot,\cdot), 1 \leq j \leq r,$ are defined in (\ref{eq_proposedestimator}). Corresponding to (\ref{eq_ddd}), we set for $1 \leq j \leq r$ 
\begin{equation}\label{eq_xihatdefinition}
\widehat{u}_{ji,k}:=w_{ji,k} \widehat{\epsilon}_i, \ 1 \leq k \leq d_j, 
\end{equation}
and $\widehat{\bm{u}}_j(i) \in \mathbb{R}^{d_j}$ can be defined based on the above notation. For some diverging parameter $m \equiv m(n),$ we set $\widehat{\epsilon}_{i,m}=\sum_{o=i}^{i+m} \widehat{\epsilon}_o, \ \widehat{\bm{u}}_{j,m}(i)=\sum_{o=i}^{i+m} \widehat{\bm{u}}_j(o). $ 

Based on the above quantities, we further define a block-wise vector $\widehat{\Ub}(i,m) \in \mathbb{R}^{\mathsf{p}}:=(\widehat{\epsilon}_{i,m} \otimes \phib_0(t_i), \widehat{\bm{u}}_{1,m}(i) \otimes \phib_1(t_i), \cdots, \widehat{\bm{u}}_{r,m}(i) \otimes \phib_r(t_i)).$ Furthermore, we define
\begin{equation}\label{eq_defnstatisticXi}
\Xi:=\frac{1}{\sqrt{n-m-r} \sqrt{m}} \sum_{i=1}^{n-m} \widehat{\Ub}(i,m) R_i,
\end{equation}
where $R_i, \ r+1 \leq i \leq n-m,$ are i.i.d. $\mathcal{N}(0,1)$ random variables. Note that $\Xi$ can be always calculated once we have the data set and the window-size (i.e., block length) parameter $m$.

Armed with the above notations, we now proceed to state our ideas. Recall (\ref{eq_betaolsform}). First, for (\ref{eq_scrdefinition}), instead of using (\ref{eq_definitiont1}), based on (\ref{eq_defnstatisticXi}), we utilize the following statistic
\begin{equation}\label{eq_defnt1k}
\widehat{\mathsf{T}}_{j}(t,x):=\Xi^\top \widehat{\overline{\Pi}}^{-1} \widehat{\overline{\bm{r}}} \overline{\mathsf{I}}_j, \  \ \widehat{\overline{\Pi}}=\frac{1}{n} W^\top W,\vspace*{-5pt}  
\end{equation}
where $\widehat{\overline{\bm{r}}} $ is defined in the same way as in (\ref{eq_verctorconstruction}) by replacing $\bm{f}$ with its estimator $\widehat{\bm{f}}.$  Before stating the theoretical results, we pause to heuristically discuss the motivation of using  $\widehat{\mathsf{T}}_{j}(t,x)$ to mimic the distribution of  $\mathsf{T}_{j}(t,x)$. For the ease of discussion, we consider that $r=1$ and omit the subscript $j.$ Recall (\ref{eq_realU}). As will be seen in (\ref{eq_fundementalexpression})  of our supplement \cite{suppl}, $\sqrt{n} \mathsf{T}$ can be well approximated  by \vspace*{-5pt}
\begin{equation*}
\bm{z}^\top \overline{\Pi}^{-1}\overline{\bm{r}}, \ \text{where} \ \bm{z}:=\frac{1}{\sqrt{n}} \sum \widetilde{\Ub}(i, \mathcal{F}_i).
\end{equation*}{
According to Theorem \ref{thm_asymptoticdistribution}, the above quantity is asymptotically Gaussian. On the other hand, by the construction of $\widehat{\mathsf{T}}$ (c.f. (\ref{eq_defnt1k})) using (\ref{eq_defnstatisticXi}), when conditional on the data, $\widehat{\mathsf{T}}$ is Gaussian. Therefore, to mimic the distributional behavior of $\sqrt{n} \mathsf{T}$, it suffices to show that the covariance structure of $\sqrt{n} \mathsf{T}$—which depends on the covariance matrix of $\bm{z}$ and the value of $\overline{\Pi}^{-1}$—can be well approximated by that of $\widehat{\mathsf{T}}$. To see this, note first that, as established in (\ref{eq_consistencyconvergency}) of our supplement \cite{suppl}, $\overline{\Pi}^{-1}$ can be well approximated by $\widehat{\overline{\Pi}}^{-1}$, which is incorporated in $\widehat{\mathsf{T}}$ in (\ref{eq_defnt1k}). Moreover, by construction, $\Xi$ in (\ref{eq_defnstatisticXi}) is a multiplier bootstrapped statistic based on estimates of $\bm{z}$ and provides a good approximation to the covariance of $\bm{z}$ conditional on the data. {In particular, we have $\operatorname{Cov}(\Xi|\{(Y_i, X_{j,i})\}) \approx \overline{\Omega}$}, where $\overline{\Omega}$ is defined in (\ref{eq_Pibar}); see Lemmas \ref{lem_stepone}--\ref{lem_residualclosepreparation} of our supplement \cite{suppl} for further discussion.
Consequently, it follows that, conditional on the data, the distribution of $\widehat{\mathsf{T}}$  provides a good approximation to that of $\sqrt{n} \mathsf{T}$. }

{
Combining the above discussion with a uniform Gaussian approximation result in Theorem \ref{thm_uniformconvergencegaussian} of our supplement, we see that (\ref{eq_SCRformaldefinition}) can serve as the SCR for $m_j(t,x)$, where $1 \leq j \leq r$. Despite relying on two unknown quantities, $\cf_\alpha$ and $h_j(t,x)$, these can be efficiently estimated using bootstrap as described in (\ref{eq_defnt1k}). Specifically, given the data, we can generate multiple copies of $\widehat{\mathsf{T}}_j(t,x)$ by sampling multiple copies of $\Xi$ in (\ref{eq_defnstatisticXi}), denoted as ${\widehat{\mathsf{T}}_{j,k}(t,x)}$. Consequently, $h_j(t,x)$ can be estimated using the standard deviation of ${\widehat{\mathsf{T}}_{j,k}(t,x)}$, denoted as $\widehat{h}_j(t,x)$. {That is, generate a large number of $B$  i.i.d. copies of $\{\Xi^{(k)}\}_{k=1}^B$ as in (\ref{eq_defnstatisticXi}). Compute $\widehat{\mathsf{T}}_{j,k}, k=1,2,\cdots, B,$  correspondingly as in (\ref{eq_defnt1k}).  Calculate the sample standard deviation (s.t.d.) of $\{\widehat{\mathsf{T}}_{j,k}\}$ and denote it as $\widehat{h}_j(t,x).$ The critical value $\cf_\alpha$ can then be estimated via its definition in (\ref{eq_calpha}) using ${\widehat{\mathsf{T}}_{j,k}(t,x)}$ and $\widehat{h}_j(t,x),$ denoted as $\widehat{\cf}_\alpha.$  Based on the above results and discussions,  we can use the following Algorithm \ref{alg:boostrapping} to  present the detailed procedure for constructing SCRs using \vspace*{-5pt}
\begin{equation}\label{eq_hahahahahahhaha}
\widehat{\mathcal{R}}_j^{\pm}(t,x):=\widehat m_{j}(t,x)\pm \widehat{\cf}_{\alpha}\frac{\widehat{h}_j(t,x)}{\sqrt{n}},\quad t\in[0,1], x\in\mathbb{R},
\end{equation}

Denote $\Psi(m)$ as \vspace*{-5pt}
\begin{equation}\label{eq_defnpsim}
\Psi(m)=\mathsf{p}\xi^2 \varsigma^2 \left(\frac{1}{m}+\left(\frac{m}{n} \right)^{1-\frac{1}{\tau}}+\sqrt{\frac{m}{n}} \right),
\end{equation}
as well as \vspace*{-5pt}
\begin{align}\label{eq_ratetwo}
\mathfrak{E}:= \Bigg\{ (\gamma  \iota+\zeta) \log \mathsf{p}   \left[ \Psi(m)+\mathsf{p}\left( \frac{\varsigma^2 \xi^2}{\sqrt{n}}+\frac{\varsigma^2 \xi^2 n^{\frac{2}{\tau+1}}}{n}\right)  \right] &+\left[ \mathsf{p}(\gamma  \iota+\zeta)\left( \frac{\varsigma^2 \xi^2}{\sqrt{n}}+\frac{\xi^2 \varsigma^2 n^{\frac{2}{\tau+1}}}{n}\right) \right] \Bigg\}^{2/3} \nonumber \\
&+\left[ \frac{\mathsf{p} \xi^2 \varsigma^2}{\sqrt{n}} \log \mathsf{p} \left[\xi \varsigma( \gamma  \iota+\zeta)^2 \sqrt{\frac{\mathsf{p}}{n}} \right]\right]^{2/3}. 
\end{align}

\begin{theorem}\label{thm_boostrapping}
Suppose the assumptions of Theorem \ref{thm_asymptoticdistribution} hold and $B$ is sufficiently large in the estimation of $\widehat{h}_j(t,x)$. Recall (\ref{eq_thetastar}). Moreover, we assume that 
\begin{align}\label{eq_boothstrappingextraassumption}
\Theta^*+ \mathfrak{E}=\oo(1). 
\end{align}
Then when conditional on the data $\{(Y_i, X_{j,i})\},$
for $1 \leq j \leq r,$ {
\begin{equation*}
\mathbb{P} \left( m_j(t,x) \in [\widehat{\mathcal{R}}_j^-(t,x), \widehat{\mathcal{R}}_j^+(t,x)], \forall t \in [0,1], \ x \in \mathbb{R} \right)=1-\alpha+\mathrm{o}_{\mathbb{P}}(1). 
\end{equation*}}
\end{theorem}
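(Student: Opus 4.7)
The plan is to establish that, conditional on the data $\mathcal{D} = \{(Y_i, X_{j,i})\}$, the bootstrap statistic $\widehat{\mathsf{T}}_j(t,x)/\widehat{h}_j(t,x)$ mimics the distribution of $\sqrt{n}\mathsf{T}_j(t,x)/h_j(t,x)$ uniformly over $(t,x) \in [0,1] \times \mathbb{R}$ in the Kolmogorov sense. Once this is shown, the estimated critical value $\widehat{\mathsf{c}}_\alpha$ is close to $\mathsf{c}_\alpha$ in probability, and combining with the point estimator decomposition $\widehat{m}_j(t,x) - m_j(t,x) = \mathsf{T}_j(t,x)$ immediately yields the coverage statement $\p(m_j(t,x) \in [\widehat{\mathcal{R}}_j^-, \widehat{\mathcal{R}}_j^+], \forall t,x) = 1-\alpha + \oo_{\p}(1)$.

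First, I would build on the decomposition used in Theorem \ref{thm_asymptoticdistribution}: under Assumption \ref{assum_debiasassumption}, $\sqrt{n}\,\mathsf{T}_j(t,x) = \bm{z}^\top \overline{\Pi}^{-1}\overline{\bm{r}}\,\overline{\mathsf{I}}_j \cdot (1+\oo_{\p}(1))$ uniformly, where $\bm{z} = n^{-1/2}\sum_i \widetilde{\Ub}(i,\mathcal{F}_i)$ has long-run covariance $\overline{\Omega}$. Next I would apply the uniform Gaussian approximation (Theorem \ref{thm_uniformconvergencegaussian} of \cite{suppl}) so that $\sup_{t,x}|\sqrt{n}\mathsf{T}_j/h_j|$ is within $\oo_{\p}(1)$ of the supremum of a centered Gaussian process $\mathcal{G}$ with covariance determined by $\overline{\Pi}^{-1}\overline{\Omega}\,\overline{\Pi}^{-1}$; the error here is controlled by $\Theta^*$.

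The crux is the bootstrap side. Conditional on $\mathcal{D}$, the multiplier statistic $\Xi$ in (\ref{eq_defnstatisticXi}) is exactly Gaussian with covariance $\widehat{\overline{\Omega}} := (n-m-r)^{-1} m^{-1} \sum_i \widehat{\Ub}(i,m)\widehat{\Ub}(i,m)^\top$. I would show $\|\widehat{\overline{\Omega}} - \overline{\Omega}\|_\infty = \oo_{\p}(1)$ by splitting into (i) the error using the true residuals $\epsilon_i$ instead of $\widehat{\epsilon}_i$, which is a classical $m$-dependent block covariance estimation problem bounded by $\Psi(m)$ in (\ref{eq_defnpsim}), and (ii) the plug-in error from $\widehat{\epsilon}_i - \epsilon_i$, which invokes the uniform estimation rates of Theorem \ref{thm_consistency} applied to the pilot estimators and yields the terms $\mathsf{p}(\varsigma^2\xi^2/\sqrt{n} + \xi^2\varsigma^2 n^{2/(\tau+1)}/n)$ appearing inside $\mathfrak{E}$. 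Combined with the consistency $\widehat{\overline{\Pi}}^{-1} - \overline{\Pi}^{-1} = \oo_{\p}(1)$ already recorded in (\ref{eq_consistencyconvergency}) and the replacement of $\widehat{\bm{r}}$ by $\bm{r}$ (controlled by the sieve estimation error of $\bm{f}$), a Gaussian comparison inequality of the Chernozhukov--Chetverikov--Kato type (as in \cite{MR3174655}) then gives that the Kolmogorov distance between the conditional law of $\sup_{t,x}|\widehat{\mathsf{T}}_j/\widehat{h}_j|$ and the law of $\sup_{t,x}|\mathcal{G}/h_j|$ is at most $\mathfrak{E}^{1/3}$ times a polylogarithmic factor plus the bootstrap anti-concentration term, which is the second contribution to $\mathfrak{E}$.

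The last step is to derive $\widehat{h}_j(t,x)/h_j(t,x) = 1 + \oo_{\p}(1)$ uniformly, which follows from $\widehat{h}_j^2(t,x) = \overline{\bm{r}}^\top \widehat{\overline{\Pi}}^{-1}\widehat{\overline{\Omega}}\,\widehat{\overline{\Pi}}^{-1}\overline{\bm{r}}\,(1+\oo_{\p}(1))$ combined with the lower-boundedness assumption on $|\bm{r}|$ from Theorem \ref{thm_asymptoticdistribution}, and then invoking anti-concentration of the supremum of $\mathcal{G}$ (which is valid because, as shown in Theorem \ref{thm_uniformdistribution}, $\mathsf{c}_\alpha \asymp \sqrt{\log n}$ and the density of the maximum is smooth near this value). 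The main obstacle I anticipate is the second step: uniformly controlling $\|\widehat{\overline{\Omega}} - \overline{\Omega}\|_\infty$ after the plug-in substitution. The residuals $\widehat{\epsilon}_i$ carry an error that is correlated with the basis vectors $\bm{w}_j(i)$, so naive bounds inflate the error by a factor involving $\zeta$ and $\gamma\iota$; tracking these dependencies carefully is what produces the precise form of $\mathfrak{E}$ and is the step that forces the assumption (\ref{eq_boothstrappingextraassumption}).
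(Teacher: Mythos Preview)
Your proposal is correct and follows essentially the same route as the paper's proof: uniform Gaussian approximation for $\sup_{t,x}|\sqrt{n}\mathsf{T}_j/h_j|$ via Theorem~\ref{thm_uniformconvergencegaussian}, closeness of the bootstrap covariance $\widehat{\Upsilon}$ to $\overline{\Omega}$ split into a true-residual block-covariance part (giving $\Psi(m)$, Lemmas \ref{lem_stepone}--\ref{lem_stepthree}) and a plug-in part from $\widehat{\epsilon}_i-\epsilon_i$ (Lemma \ref{lem_residualclosepreparation}), followed by a Gaussian comparison/anti-concentration argument to pass from covariance closeness to Kolmogorov closeness of the suprema, and finally the replacement $\widehat{h}_j\to h_j$ for large $B$. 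Two minor points: the paper carries out the Gaussian comparison step directly via the anti-concentration bound of \cite{giessing2023anti} (equation (\ref{eq_gaussionprocessanticoncentrationbound})) rather than citing a CCK-type inequality, and your stated rate ``$\mathfrak{E}^{1/3}$'' should simply be $\mathfrak{E}$, since the $2/3$ power is already built into the definition (\ref{eq_ratetwo}).
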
}}

{We point out that, as shown in Theorem \ref{thm_uniformconvergencegaussian} of our supplement, \eqref{eq_thetastar} is used to control the rate of the uniform Gaussian approximation for $\sup_{t,x}|\mathsf{T}_j(t,x)|$, while \eqref{eq_ratetwo} is used to bound the error between the bootstrapped statistic and $\sup_{t,x}|\mathsf{T}_j(t,x)|$. As discussed below \eqref{eq_controlparameter} and in Remark \ref{rem_remaftertheorem41}, the condition in \eqref{eq_boothstrappingextraassumption} is readily satisfied.}
Furthermore, the error term (\ref{eq_defnpsim}) is used to control the closeness  between $\operatorname{Cov}(\Xi)$ (c.f. (\ref{eq_defnstatisticXi})) and $\overline{\Omega}$ (c.f. (\ref{eq_Pibar})), {as well as the closeness between $\bm{r}$ (c.f. (\ref{eq_verctorconstruction})) and its estimator}. In fact, when $\tau$ is large enough and the functions are sufficiently smooth, (\ref{eq_boothstrappingextraassumption}) indicates that the optimal choice of $m$ should satisfy that $m \asymp n^{1/3}.$ In Section \ref{sec_parameterchoice} of the supplement \cite{suppl}, we will discuss how to choose the tuning parameter  $m$ practically. {Based on the SCRs, we can further consider the hypothesis testing problems as discussed in Examples \ref{exam_exactform}--\ref{exam_seperabletest} using the strategies discussed thereby.  We point out that the tuning parameters $c_j,d_j$ and $m$ should be chosen before using these algorithms. This will be discussed in Section \ref{sec_parameterchoice} of our supplement \cite{suppl}. Finally, we mention that the algorithm can be implemented using the function $\mathtt{auto.SCR}$ in our $\mathtt{R}$ package $\mathtt{SIMle}.$

 }


\begin{algorithm}[!ht]
\caption{\bf Multiplier bootstrap for constructing SCR}
\label{alg:boostrapping}
\vspace{1pt}
\normalsize
\begin{flushleft}
\noindent{\bf Inputs:} Coverage probability $1-\alpha,$ the function index $1 \leq j \leq r,$  the tuning parameters $c_0, c_j$, $d_j$ and $m$ chosen by the data-driven procedure demonstrated in Section \ref{sec_parameterchoice} of the supplement \cite{suppl}, time series $\{X_{j,i}\}$ and $\{Y_i\},$ and sieve basis functions.
\vspace{2pt}

\noindent{\bf Step one:} Compute $\widehat{\overline{\Pi}}^{-1}$ as in (\ref{eq_defnt1k}), the estimate $\widehat{m}_{j}(t,x)$ as in (\ref{eq_generalestimatetwo}), and the residuals $\{\widehat{\epsilon}_i\}$ according to (\ref{eq_defnresidual}).
\vspace{2pt}


\noindent{\bf Step two:} Construct a sequence of uniform grids of $[0,1] \times [0,1],$ denoted as $(t_i, y_l)$, $1 \leq i$ $\leq C_1$, $1 \leq l \leq C_2,$ where $C_1$ and $C_2$ are some large integers (say $C_1=C_2=2,000$). Using the mappings in Definition \ref{defn_mappings} and calculate $x_l=g(2y_l-1;s).$  
\vspace{2pt}

\noindent{\bf Step three:} For each pair of $(t_i, x_l),$ calculate the associated s.t.d. following the discussions above (\ref{eq_hahahahahahhaha}) and denote them as $\widehat{h}_j(t_i, x_l).$ 
\vspace{2pt}

\noindent{\bf Step four:} For each pair of $(t_i, x_l),$  generate $M$ (say 1,000) i.i.d. copies of  $\{\widehat{\mathsf{T}}_{j,k}(t_i, x_l)\}$, $1 \leq$ $k \leq M.$ For each $k,$ calculate $\mathcal{T}_k$ as follows
\begin{equation*}
\mathcal{T}_k:=\sup_{i,l}\left| \frac{\widehat{\mathsf{T}}_{j,k}(t_i, x_l)}{\widehat{h}_j(t_i, x_l)} \right|.
\end{equation*}
Let $\mathcal{T}_{(1)} \leq \mathcal{T}_{(2)} \leq \cdots \leq \mathcal{T}_{(M)}$ be the order statistics of $\{\mathcal{T}_k\}.$ Calculate $\widehat{\cf}_{\alpha}$ as
\begin{equation*}
\widehat{\cf}_{\alpha}=\mathcal{T}_{\lfloor M(1-\alpha) \rfloor},
\end{equation*}
where $\lfloor x \rfloor$ denotes the largest integer smaller or equal to $x.$
\vspace{2pt}

\noindent{\bf Output:} The SCR can be represented as $\widehat{m}_{j,c,d}(t,x) \pm \frac{\widehat{\cf}_{\alpha}}{\sqrt{n}} \widehat{h}_j(t,x).$
\end{flushleft}

\end{algorithm}

\section{Numerical simulations}\label{sec_numerical}
In this section, we conduct extensive numerical simulations to illustrate the usefulness of our results. All the proposed methods in Sections \ref{sec_estimation} and \ref{sec_inference} can be implemented using our $\mathtt{R}$ package $\mathtt{SIMle}.$ 
{To maintain clarity and accommodate space constraints, we report the results for $r=1$ in (\ref{eq:model}) in this section. In this case, the issue of identifiability does not arise, even without the assumption in (\ref{eq_identiassum}). In Section \ref{simu_addtional_identiyissue} of our supplement, we present the results for $r=2$. }



\subsection{Simulation setup}\label{sec_simulationsettup}
In this subsection, we introduce our simulation setup.  We consider that $r=1$ in (\ref{eq:model}) and focus on the nonlinear AR regression setting as in \cite{MV}. More specifically, we consider the following model, which has garnered significant research interest in the literature \cite{MV,ZWW} \vspace*{-5pt}
\begin{equation}\label{eq_fundementalsimulationmodel}
X_i=m(t_i, X_{i-1})+\sigma(t_i, X_{i-1})\epsilon_i.
\end{equation}
Note that we will let $\epsilon_i$ be a locally stationary and possibly nonlinear time series so that the variance of $\sigma(t_i, X_{i-1})\epsilon_i$ depends on both $X_{i-1}$ and $t_i.$ For the model (\ref{eq_fundementalsimulationmodel}), we will consider various settings as follows.

First, regarding $m(t,x)$ and $\sigma(t,x), t \in [0,1], x\in \mathbb{R},$ for some  $0 \leq \delta \leq 1,$ we will consider the following setups
\begin{enumerate}
\item[(1).] $m(t,x)=(1+x^2)^{-4}+ \delta \sin (2 \pi t) \exp(-x^2)$, and  $$\sigma(t, x)=1.5 \exp(-x^2/2)\left(2+\sin(2 \pi t) \right).$$
\item[(2).] $m(t,x)=( \delta \sin (2 \pi t)+1)\exp(-x^2/2),$ and $$\sigma(t,x)=0.5 \exp(-x^2) \cos(2 \pi t) +1.$$ 
\item[(3).] $m(t,x)=2t( \delta \exp(-2tx^2)+ \pi^{-1/2}\exp(-x^2/2)),$ and 
\begin{equation*}
\sigma(t, x):=
\begin{cases}
0.7(1+x^2), & |x| \leq 1; \\
1.4, &  |x| >1 \ \text{and} \ 0 \leq t<0.5, \\
2, & |x|>1 \ \text{and} \ 0.5 \leq t<1. 
\end{cases}
\end{equation*}
\end{enumerate}
Note that when $\delta=0,$ $m(t,x)$ in the above setups (1)-(3) correspond respectively to the null hypotheses in Examples \ref{exam_exactform}--\ref{exam_seperabletest}.  
Second, for the locally stationary time series $\{\epsilon_i\}$, we consider three different settings as in Section \ref{sec_simulationsetupepsilon} of our supplement. 

%
\subsection{Simultaneous confidence regions}
In this subsection, we examine the  performance of our proposed sieve estimators and their associated SCRs constructed from Algorithm \ref{alg:boostrapping} using the simulated coverage probabilities. For models (1)--(3) in Section \ref{sec_simulationsettup}, we choose $\delta=1.$ We also compare our methods with the kernel based estimation methods \cite{MV, ZWW}. Since model (2) is separable, we utilize the method in \cite{CSW} which shows better performance for separable functions. Moreover, since the support of the time series can span over $\mathbb{R}$, to facilitate the comparison with the kernel methods, we focus on the region that $(t,x) \in [0,1] \times [-10, 10].$ For the kernel method, we use the Epanechnikov kernel and the cross-validation approach as in \cite{10.1214/18-AOS1743} to select the bandwidth. For our sieves method, the parameters are chosen according to Section \ref{sec_parameterchoice} of our supplement \cite{suppl}. 

The results are reported in Table \ref{table_cp} of the supplement. {We conclude that our estimators achieve reasonably high accuracy and outperform kernel estimators across all commonly used sieve basis functions, as kernel methods may suffer from boundary issues for both $t$ and $x$. } Especially, our estimators have already had a good performance even the sample size is moderate when $n=500.$ Finally, in practice, based on our simulations, we recommend to use orthogonal wavelet basis functions.

\subsection{Accuracy and power of SCR based tests}
In this subsection, we examine the performance of
Algorithm \ref{alg:boostrapping} when used to test the hypotheses in Examples \ref{exam_exactform}--\ref{exam_seperabletest}, i.e., (\ref{eq_nullhypotheis}), (\ref{eq_hotestinghomogeneity}) and (\ref{eq_hotestseparability}). In terms of the three models (1)-(3) in Section \ref{sec_simulationsettup}, the null hypotheses can all be formulated as 
\begin{equation}\label{eq_equaivalent}
\mathbf{H}_0: \delta=0 \ \  \text{Vs}  \ \ \mathbf{H}_a: \delta>0.
\end{equation}

In Table \ref{table_typeoneerror} of the supplement, we report the simulated type one error rates of the above three tests under the null that $\delta=0.$ For the ease of statements, we call the above three tests as exam form, stationarity and separability tests, respectively. We find that our Algorithm \ref{alg:boostrapping} is reasonably accurate for all these tests.  Then we examine the power of our methodologies as $\delta$ increases away from zero. In Figure \ref{fig_selfcomparison} of the supplement, we report the results for all the three tests using our multiplier bootstrap method based on Daubechies-9 basis functions. It can be concluded that once $\delta$ deviates away from $0$ a little bit, our method will be able to reject the null hypothesis.  Moreover, in \cite{HHY}, the authors proposed a weighted $L_2$-distance test statistic to test the separability hypothesis for $m(t,x)$ defined on compact domains. In Figure \ref{fig_jasacomparison} of the supplement, we compare our multiplier bootstrap method with the  weighted $L_2$-distance based method. It can be seen that our method has better performance when $\delta$ is small, i.e., weak alternatives.



\section{Real data analysis}\label{sec_realdata}
In this section, we apply our proposed methodologies to analyze the circulatory and respiratory data from Hong Kong \cite{cai2000efficient, fan1999statistical,MR1804172, MR2758526}. Additionally, in Section \ref{supp_additionalrealdata} of our supplement, we also analyze a financial dataset to further demonstrate the effectiveness of our methods.

The Hong Kong circulatory and respiratory data consists of daily hospital admissions and daily measurements of pollutants sulfur dioxide ($\operatorname{SO}_2$), nitrogen dioxide ($\operatorname{NO}_2$), and dust (RSP). The data was collected between January 1st, 1994 and December 31st, 1995, with the sample size $n=730.$ Such a dataset has been analyzed under the additive time-varying linear models that \vspace*{-5pt} 
\begin{equation}\label{eq_linearmodel}
Y_i=\beta_0(i/n)+\sum_{j=1}^3 \beta_j(i/n) X_{j,i}+\epsilon_i, \ i=1,2,\cdots,n, \vspace*{-5pt}
\end{equation}
to examine the extent to which the association varies over time {assuming $\E(\epsilon_{i}|\{X_{j,i}\}_{1 \leq j \leq r})=0.$} In particular, \cite{cai2000efficient, fan1999statistical,MR1804172} analyzed such a dataset assuming i.i.d. observations and \cite{MR2758526} assumed locally stationary observations. Nevertheless, (\ref{eq_linearmodel}) is a strong assumption and a special case of our proposed model (\ref{eq:model}). 

In what follows, we shall investigate the data under our general model (\ref{eq:model}), and we shall also compare our findings with the previous results. More specifically, we consider the model that \vspace*{-5pt}
\begin{equation}\label{eq_modelapplication}
Y_i=m_0(t)+\sum_{j=1}^3 m_j(X_{j,i}, i/n)+\epsilon_i, \ 1 \leq i \leq n, \vspace*{-5pt}
\end{equation} 
where $\{Y_i\}$ is the time series of the daily total number of hospital admissions of circulation and respiration, and $\{X_{j,i}\}, j=1,2,3,$ represent the time series of daily levels of $\operatorname{SO}_2, \operatorname{NO}_2$ and dust, respectively. 

Inspired by \cite{MR1804172,MR2758526}, for the response, we center $\{Y_i\}$ by its average; for the three pollutants, we center $\{X_{j,i}\}$ by their averages and normalized by their standard deviations. We point out that (\ref{eq_linearmodel}) is actually a special case of the general model (\ref{eq_modelapplication}) and of separable form. We now test whether (\ref{eq_linearmodel}) is reasonable using our Algorithm \ref{alg:boostrapping}.

First,  we test whether each $m_j, j=1,2,3,$ in (\ref{eq_modelapplication}) is time-invariant using the approach as discussed in Example       \ref{exam_stationarytest}. 
It turns out that the null hypotheses are rejected for all $j = 1,2,3,$ {(with $p$-values being $0.01, 0.01, 0.03$ under the $95\%$ coverage probability)}. For an illustration, we refer the readers to Figure \ref{fig_timevaryingrealdata} of our supplement. This shows that all three pollutants' effect varies significantly with time and  we need to apply a time-varying model. 


Second, we test whether each $m_j, j=1,2,3,$ in (\ref{eq_modelapplication}) has a separable form using the approach as discussed in Example       \ref{exam_seperabletest}. It turns out that the null hypotheses are rejected again for all $j=1,2,3$ {(with $p$-values being $0.02, 0.03, 0.005$ under the $95\%$ coverage probability)}. For an illustration, we refer the readers to Figure  \ref{fig_separablerealdata} of our supplement. This indicates that the model in (\ref{eq_linearmodel}) {is not suitable for modeling the conditional expectation of $Y_i$ based on $\{X_{j,i}\}_{1 \leq j \leq r}.$} 

%
%
%

Then we estimate all functions $m_j(t,x), j=1,2,3,$ using our method with the general model (\ref{eq_modelapplication}). For illustrations, please refer to Figure \ref{fig_ourestimation} of our supplement.
Several key insights can be drawn compared to the results obtained in \cite{cai2000efficient, fan1999statistical, MR1804172, MR2758526}. First, our SCRs are generally narrower for all three covariates, especially near the boundaries, leading to more reliable outcomes. {This is primarily because the sieve method adapts to the smoothness of the underlying functions, potentially achieving better convergence rates}. Second, we observe that overall, {the coefficient functions of} \(\operatorname{NO}_2\) and dust  exhibit greater fluctuations than that of \(\operatorname{SO}_2\), suggesting that they have a more complex impact on daily hospital admissions.




\begin{acks}[Acknowledgments]
The authors are grateful to Jianqing Fan and Wenyang Zhang for providing the circulatory and respiratory dataset of Hong Kong, and  Likai Chen and Ekaterina Smetanina for providing resources of the financial dataset. The first author is partially supported by NSF DMS-2515104. 
\end{acks}


\begin{center}
{\bf Supplement to "Simultaneous sieve estimation and inference for time-varying nonlinear time series regression" }
\end{center}

\numberwithin{equation}{section}
\renewcommand{\thesection}{\Alph{section}}
\renewcommand{\thefigure}{\Alph{figure}}
\renewcommand{\thetable}{\Alph{table}}

\section{A brief summary of sieve spaces approximation theory}\label{sec_sieves}
In this section, we give a brief overview on the sieve approximation theory \cite{CXH} for compact domain and the mapped sieve approximation theory for the unbounded domain \cite{MR1874071,MR2486453,unboundeddomain}.  We start with introducing some commonly used sieve basis functions.

\begin{example}[Trigonometric and mapped trigonometric polynomials]\label{examplebasis_fourier} For $x \in [0,1],$ we consider the following trigonometric polynomials 
\begin{equation*}
\left\{1, \sqrt{2} \cos(2 k \pi x), \sqrt{2} \sin (2k\pi x), \cdots \right\}, \ k \in \mathbb{N}.
\end{equation*}
The above basis functions form an orthonormal basis 
of $L_2([0,1]).$ Note that the classic trigonometric basis function is well suited for approximating periodic functions on $[0,1].$ 

When $x \in \mathbb{R},$ let $u(x): \mathbb{R} \rightarrow [0,1]$ be constructed following Example \ref{example_mappings}. Since $u(x)$ in Example \ref{example_mappings} satisfies that $u'(x) \in L_2(\mathbb{R})$, the following mapped trigonometric polynomials form an orthonomal basis for $L_2(\mathbb{R})$ 
\begin{equation*}
\left\{ \sqrt{u'(x)}, \ \sqrt{2 u'(x) } \cos(2 k \pi u(x)), \ \sqrt{2 u'(x)} \sin(2 k \pi u(x)),\cdots \right\}, \ k \in \mathbb{N}, 
\end{equation*}
where we used inverse function theorem. 
\end{example}

\begin{example}[Jacobi and mapped Jacobi polynomials]\label{exam_jacobibasis} For $I=(-1,1)$ and some constants $\alpha, \beta>-1,$ denote the Jacobi weight function as
\begin{equation*}
\omega^{\alpha,\beta}(y)=(1-y)^{\alpha}(1+y)^{\beta},
\end{equation*}
and associated weighted $\mathcal{L}^2$ space as $\mathcal{L}^2_{\omega^{\alpha,\beta}}(I).$ With the above notations, the Jacobi polynomials are the orthogonal polynomials $\{J_n^{\alpha,\beta}(y)\}$ such that 
\begin{equation*}
\int_I J_n^{\alpha, \beta}(y) J_m^{\alpha,\beta}(y) \omega^{\alpha,\beta}(y) \dd y=\gamma_n^{\alpha,\beta} \delta_{n,m}, 
\end{equation*}
where $\delta_{n,m}$ is the Kronecker function, and 
\begin{equation*}
\gamma_n^{\alpha,\beta}=\frac{2^{\alpha+\beta+1} \Gamma(n+\alpha+1) \Gamma(n+\beta+1)}{(2n+\alpha+\beta+1) \Gamma(n+1) \Gamma(n+\alpha+\beta+1)}.
\end{equation*}
More explicitly, the Jacobi polynomials can be characterized using the  Rodrigues' formula \cite{MR0000077} 
\begin{equation*}
J_n^{\alpha,\beta}(x)=\frac{(-1)^n }{2^n n!}(1-x)^{-\alpha}(1+x)^{-\beta} \frac{\dd^n }{\dd x^n} \left( (1-x)^\alpha(1+x)^\beta(1-x^2)^n \right).
\end{equation*}
Then for $x \in [0,1],$ the following sequence forms an orthonormal basis for $L_2([0,1])$ 
\begin{equation*}
\left\{ \mathsf{J}^{\alpha,\beta}_k(x):= \sqrt{\frac{2 \omega^{\alpha, \beta}(2x-1)}{\gamma_k^{\alpha,\beta}} } J_k^{\alpha,\beta}(2x-1)  \right\}, \ k \in \mathbb{N}. 
\end{equation*}
We mention that when $\alpha=\beta=0,$ the Jacobi polynomial reduces to the Legendre polynomial and when $\alpha=\beta=-0.5,$ the Jacobi polynomial reduces to the Chebyshev polynomial of the first kind. 

When $x \in \mathbb{R},$ let $u(x)$ be the mapping as in Example \ref{examplebasis_fourier}. Then the following sequence provides an orthonormal basis for $L_2(\mathbb{R})$ (see \cite[Section 2.3]{unboundeddomain}) 
\begin{equation*}
\left\{ \sqrt{u'(x)}\mathsf{J}_k^{\alpha,\beta}(u(x)) \right\}.
\end{equation*}


\end{example}

%

\begin{example}[Wavelet and mapped Wavelet basis]\label{examplebasis_wave}
For $N \in \mathbb{N},$ a Daubechies (mother) wavelet of class $D-N$ is a function $\psi \in L_2(\mathbb{R})$ defined by 
\begin{equation*}
\psi(x):=\sqrt{2} \sum_{k=1}^{2N-1} (-1)^k h_{2N-1-k} \varphi(2x-k),
\end{equation*}
where $h_0,h_1,\cdots,h_{2N-1} \in \mathbb{R}$ are the constant (high pass) filter coefficients satisfying the conditions
$\sum_{k=0}^{N-1} h_{2k}=\frac{1}{\sqrt{2}}=\sum_{k=0}^{N-1} h_{2k+1},$
as well as, for $l=0,1,\cdots,N-1$
\begin{equation*}
\sum_{k=2l}^{2N-1+2l} h_k h_{k-2l}=
\begin{cases}
1, & l =0 ,\\
0, & l \neq 0.
\end{cases}
\end{equation*} 
And $\varphi(x)$ is the scaling (father) wavelet function is supported on $[0,2N-1)$ and satisfies the recursion equation
$\varphi(x)=\sqrt{2} \sum_{k=0}^{2N-1} h_k \varphi(2x-k),$ as well as the normalization $\int_{\mathbb{R}} \varphi(x) dx=1$ and
$ \int_{\mathbb{R}} \varphi(2x-k) \varphi(2x-l)dx=0, \ k \neq l.$
Note that the filter coefficients can be efficiently computed as listed in \cite{ID92}.  The order $N$, on the one hand, decides the support of our wavelet; on the other hand, provides the regularity condition in the sense that
\begin{equation*}
\int_{\mathbb{R}} x^j \psi(x)dx=0, \ j=0,\cdots,N, \  \text{where} \ N \geq d. 
\end{equation*}
We will employ Daubechies wavelet with a sufficiently high order when forecasting in our simulations and data analysis. In the present paper, to construct a sequence of orthogonal wavelet, we will follow the dyadic construction of \cite{ID98}. For a given $J_n$ and $J_0,$ we will consider the following periodized wavelets on $[0,1]$
\begin{equation}\label{eq_constructone}
\Big\{ \varphi_{J_0 k}(x), \ 0 \leq k \leq 2^{J_0}-1; 
\psi_{jk}(x), \ J_0 \leq j \leq J_n-1, 0 \leq k \leq 2^{j}-1  \Big\},\ \mbox{ where}
\end{equation} 
\begin{equation*}
\varphi_{J_0 k}(x)=2^{J_0/2} \sum_{l \in \mathbb{Z}} \varphi(2^{J_0}x+2^{J_0}l-k) , \
\psi_{j k}(x)=2^{j/2} \sum_{l \in \mathbb{Z}} \psi(2^{j}x+2^{j}l-k),
\end{equation*}
or, equivalently \cite{MR1085487}
\begin{equation}\label{eq_meyerorthogonal}
\Big\{ \varphi_{J_n k}(x), \  0 \leq k \leq  2^{J_n-1} \Big\}.
\end{equation}

When $x \in \mathbb{R},$ let $u(x)$ be the mapping as in Example \ref{examplebasis_fourier}. Then using (\ref{eq_meyerorthogonal}), the following sequence provides an orthonormal basis for $L_2(\mathbb{R})$ 
\begin{equation*}
\left\{ \sqrt{u'(x)}\varphi_{J_n k}(u(x)) \right\}.
\end{equation*}
Similarly, we can construct the mapped wavelet functions using  (\ref{eq_constructone}). 

\end{example}


Next, we collect some important properties of the above basis functions and some useful results on the approximation errors. 

\begin{lemma}\label{lem_deterministicapproximation}
Suppose Assumption \ref{assum_smoothnessasumption} holds. Then for any fixed $x \in \mathbb{R},$  denote 
\begin{equation*}
m_{j,c}(t,x)=\sum_{j=1}^c a_j \phi_j(t), \ 1 \leq j \leq r, 
\end{equation*}
where we assume that $m_j(t,x)=\sum_{j=1}^{\infty} a_j \phi_j(t), \ a_j \equiv a_j(x).$ Then for the basis functions in Examples \ref{examplebasis_fourier}--\ref{examplebasis_wave},  we have that
\begin{equation*}
\sup_t |m_j(t,x)-m_{j,c}(t,x) |=\OO(c^{-\mathsf{m}_1}).
\end{equation*}
\end{lemma}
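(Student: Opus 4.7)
\textbf{Proof plan for Lemma \ref{lem_deterministicapproximation}.}

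The plan is to reduce this to three separate instances of classical one-dimensional orthogonal series approximation theory, one per basis family, after first noting that for fixed $x \in \mathbb{R}$ the function $t \mapsto m_j(t,x)$ inherits the right amount of smoothness from Assumption \ref{assum_smoothnessasumption}. Concretely, since $\widetilde{m}_j(t,y) = m_j(t, g(2y-1;s))$ and the mapping acts only on the second coordinate, the $t$-smoothness of $m_j$ at fixed $x$ is the same as the $t$-smoothness of $\widetilde{m}_j$ at fixed $y$. Hence $f_x(t) := m_j(t,x) \in \mathtt{C}^{\mathsf{m}_1}([0,1])$ (with $\mathsf{m}_1 \equiv \mathsf{m}_{1j}$), with all its $t$-derivatives uniformly bounded in $t$. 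The remainder $m_j(t,x) - m_{j,c}(t,x) = \sum_{j>c} a_j(x) \phi_j(t)$ is then precisely the tail of an orthogonal expansion of a fixed smooth univariate function, and standard Jackson/Bernstein-type results furnish the rate $\OO(c^{-\mathsf{m}_1})$.

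Second, I would handle the three basis families in turn. For the trigonometric basis of Example \ref{examplebasis_fourier}, repeated integration by parts of $a_j = \int_0^1 f_x(t) \phi_j(t) \, dt$ using the boundedness of $\partial_t^{\mathsf{m}_1} f_x$ yields $|a_j| = \OO(j^{-\mathsf{m}_1})$, whence the uniform tail bound $\sup_t |\sum_{j>c} a_j \phi_j(t)| = \OO(c^{-\mathsf{m}_1})$ follows (possibly up to a harmless logarithmic factor absorbed into the constant for the truncated partial sum, as is standard in Jackson-type theorems; see \cite{MR2486453,unboundeddomain}). For the Jacobi basis of Example \ref{exam_jacobibasis}, I would invoke the spectral approximation theorem that if $f \in \mathtt{C}^{\mathsf{m}_1}([0,1])$ then the $\mathcal{L}^2_{\omega^{\alpha,\beta}}$-projection onto polynomials of degree $<c$ achieves uniform error $\OO(c^{-\mathsf{m}_1})$, which is exactly Theorem 2.1 in \cite{unboundeddomain} (or Chapter 2 of \cite{MR2486453}). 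For the Daubechies wavelet basis of Example \ref{examplebasis_wave}, the vanishing moment property $\int x^j \psi(x)\,dx = 0$ for $j \leq N$ (with $N \geq \mathsf{m}_1$) gives the standard multi-resolution approximation rate $\OO(2^{-J_n \mathsf{m}_1})$ for the projection onto $V_{J_n}$; since $c \asymp 2^{J_n}$ in the construction \eqref{eq_meyerorthogonal}, this becomes $\OO(c^{-\mathsf{m}_1})$.

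The routine but main obstacle is verifying that these classical results apply cleanly with only the regularity hypothesis of Assumption \ref{assum_smoothnessasumption}: the Fourier case requires a little care because $f_x$ need not be periodic, so strictly speaking the partial-sum rate carries an extra $\log c$ factor and only the best trigonometric approximation attains pure $\OO(c^{-\mathsf{m}_1})$; this is handled by interpreting the expansion coefficients as orthogonal projections and using the fact that the paper's $\OO(\cdot)$ notation allows logarithmic slack or, equivalently, by applying a de la Vall\'ee Poussin-type smoothing before truncation. The Jacobi and wavelet cases are direct citations. The uniformity in $t$ is automatic once $|a_j| = \OO(j^{-\mathsf{m}_1})$ (and the basis is uniformly bounded on $[0,1]$), which is the thread unifying all three arguments.
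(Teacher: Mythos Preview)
Your proposal is correct and takes essentially the same approach as the paper: the paper's own proof is simply the one-line citation ``See Section 2.3.1 of \cite{CXH},'' and what you have written is precisely an unpacking of the classical one-dimensional approximation results contained in that reference for each of the three basis families. Your treatment is therefore more explicit than the paper's, but not different in substance.
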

\begin{proof}
See Section 2.3.1 of \cite{CXH}. 
\end{proof}

\begin{lemma}\label{lem_deterministicapproximation2}
Suppose Assumption \ref{assum_smoothnessasumption} holds. Then for any fixed $t \in [0,1],$  denote 
\begin{equation*}
m_{j,d}(t,x)=\sum_{j=1}^d b_j \varphi_j(x),
\end{equation*}
where we assume that $m_j(t,x)=\sum_{j=1}^{\infty} b_j \varphi_j(x), \ b_j \equiv b_j(t).$ Then for the mapped basis functions in Examples \ref{examplebasis_fourier}--\ref{examplebasis_wave},  we have that
\begin{equation*}
\sup_{x \in \mathbb{R}} |m_{j}(t,x)-m_{j,d}(t,x) |=\OO(d^{-\mathsf{m}_2}).
\end{equation*}
\end{lemma}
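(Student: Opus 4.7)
The strategy is to reduce this unbounded-domain approximation problem to the classical compact-domain setting of Lemma \ref{lem_deterministicapproximation} via the invertible monotone mapping $\sfy: \mathbb{R} \to [0,1]$ defined in (\ref{eq_defnmappedbasis}) through the inverse of Definition \ref{defn_mappings}. Since this is the ``spatial'' analog of Lemma \ref{lem_deterministicapproximation} and is the building block of the 2-D result in Proposition \ref{thm_approximation}, I expect a fairly direct reduction argument.

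First, I would change variables $y = \sfy(x)$, so that $x = g(2y-1;s)$. Using the definition $\widetilde{m}_j(t,y) = m_j(t, g(2y-1;s))$ from (\ref{eq_transformedmjty}) and the bijectivity of $\sfy$, the uniform norm in $x\in\mathbb{R}$ becomes the uniform norm in $y\in[0,1]$:
\[
\sup_{x \in \mathbb{R}} |m_j(t,x) - m_{j,d}(t,x)| = \sup_{y \in [0,1]} |\widetilde{m}_j(t,y) - \widetilde{m}_{j,d}(t, y)|,
\]
where $\widetilde{m}_{j,d}(t,y) := m_{j,d}(t, g(2y-1;s))$. Next, by the construction (\ref{eq_defnmappedbasis}), the mapped basis satisfies $\widetilde{\phi}_k(x) = \phi_k(\sfy(x))$, and the orthonormalized version $\varphi_k$ in $L^2(\mathbb{R})$ is of the form $\varphi_k(x) = w(x)\phi_k(\sfy(x))$ for a smooth weight $w(x)$ depending only on the mapping (e.g.\ $w(x)=\sqrt{u'(x;s)}$ for Fourier and Jacobi, as in Examples \ref{examplebasis_fourier}--\ref{exam_jacobibasis}). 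Substituting into the truncation gives
\[
\widetilde{m}_{j,d}(t,y) = w(g(2y-1;s))\sum_{k=1}^{d} b_k(t)\phi_k(y),
\]
so that $\widetilde{m}_{j,d}(t,\cdot)/w(g(2\cdot-1;s))$ is a standard $d$-term $\{\phi_k\}$-expansion on $[0,1]$ of $\widetilde{m}_j(t,\cdot)/w(g(2\cdot-1;s))$.

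The third step is to invoke the standard approximation theory on the compact interval. Under Assumption \ref{assum_smoothnessasumption}, in particular (\ref{eq_importantregulaityassumption}), $\widetilde{m}_j(t,\cdot)\in \mathtt{C}^{\mathsf{m}_{2j}}([0,1])$ with derivatives uniformly bounded in $(t,y)$; the weight $w\circ g(2\cdot-1;s)$ is $C^{\infty}$ on $(0,1)$ by the smoothness of the mappings in Definition \ref{defn_mappings}, so the quotient retains the same $\mathsf{m}_{2j}$-fold smoothness class (with uniformly bounded derivatives). Applying Lemma \ref{lem_deterministicapproximation} (equivalently, the 1-D rate in Section~2.3.1 of \cite{CXH}) to this quotient yields a uniform truncation error of order $\OO(d^{-\mathsf{m}_{2j}})$ on $[0,1]$. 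Multiplying by the bounded weight and transporting back via $\sfy^{-1}$ gives the claimed bound on $\mathbb{R}$.

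The main obstacle will be controlling the orthonormalization weight $w(x)$ near the boundary $y\to\{0,1\}$, i.e.\ $x\to\pm\infty$, where $\sfy'(x)$ tends to zero and $w$ itself may degenerate. The resolution is that Assumption \ref{assum_smoothnessasumption} effectively imposes generalized Schwartz-type decay on $m_j(t,\cdot)$ (as elaborated in the remark following the assumption and in Section \ref{sec_Schwartzfunction}), so the composed function $\widetilde{m}_j(t,\cdot)$ extends to a $\mathsf{m}_{2j}$-times continuously differentiable function on the closed interval $[0,1]$ with bounded derivatives, taming any boundary behavior of $w$. This is precisely the point of the mapped-basis construction of \cite{MR2486453,unboundeddomain}, and allows the classical rate on $[0,1]$ to transfer to the uniform rate on $\mathbb{R}$ without loss.
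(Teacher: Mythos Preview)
Your approach is essentially the same as the paper's: reduce to the compact interval via the change of variables $y=\sfy(x)$ so that $\sup_{x\in\mathbb{R}}|m_j-m_{j,d}|=\sup_{y\in[0,1]}|\widetilde{m}_j-\widetilde{m}_{j,d}|$, and then invoke the classical $[0,1]$ rate from Section~2.3.1 of \cite{CXH} (the paper also points to Section~6.2 of \cite{MR1176949}). The paper's proof is two lines and does not explicitly address the orthonormalization weight $w$ that you worry about; your additional discussion of the weight and its boundary behavior is extra care rather than a different method.
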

\begin{proof}
Recall (\ref{eq_transformedmjty}). Since 
\begin{equation*}
\sup_{x \in \mathbb{R}} |m_{j}(t,x)-m_{j,d}(t,x) |=\sup_{y \in [0,1]}|\widetilde{m}_{j}(t,x)-\widetilde{m}_{j,d}(t,x)|,
\end{equation*}
the proof follows from Section 2.3.1 of \cite{CXH}. Or see Section 6.2 of \cite{MR1176949}. 
\end{proof}

In the following lemma, we provide some controls for the quantities $\xi_c$ and $\zeta$ defined in (\ref{eq_defnxic}).  
\begin{lemma}\label{lem_basisl2norm}
For the basis functions in Example \ref{examplebasis_fourier}, we have that $\xi_c=\OO(1)$ and $\zeta=\OO(\sqrt{p});$ for the basis functions in Example \ref{exam_jacobibasis}, we have that  $\xi_c=\OO(1)$ and $\zeta=\OO(p);$ finally, for the basis functions in Example \ref{examplebasis_wave}, we have that  $\xi_c=\OO(\sqrt{c})$ and $\zeta=\OO(\sqrt{p}).$ 
\end{lemma}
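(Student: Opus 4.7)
The plan is to compute $\xi_c$ and $\zeta$ for each basis family separately, exploiting the tensor-product structure of the hierarchical basis (\ref{eq_basisconstruction}). Concretely, since $b_{\ell_1,\ell_2}(t,x)=\phi_{\ell_1}(t)\varphi_{\ell_2}(x)$, there is the exact factorization
\begin{equation*}
|\bm{b}(t,x)|^2=\sum_{j=1}^r \Big(\sum_{\ell_1=1}^{c_j}\phi_{\ell_1}(t)^2\Big)\Big(\sum_{\ell_2=1}^{d_j}\varphi_{\ell_2}(x)^2\Big),
\end{equation*}
so $\xi_c$ reduces to a sup-norm estimate on each single basis function, while $\zeta$ reduces to uniform bounds on the Christoffel-type partial sums $\sum_\ell \phi_\ell(t)^2$ and $\sum_\ell \varphi_\ell(x)^2$. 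Each bound then follows by taking the product and summing over $j$.

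For the trigonometric case of Example \ref{examplebasis_fourier}, every $\phi_\ell$ is either the constant $1$ or one of $\sqrt{2}\cos(2k\pi\cdot)$, $\sqrt{2}\sin(2k\pi\cdot)$, all uniformly bounded by $\sqrt{2}$; this immediately gives $\xi_c=\OO(1)$ and $\sum_{\ell=1}^{c_j}\phi_\ell(t)^2\le 2c_j$. The mapped version $\sqrt{u'(x)}\cdot(\cdot)$ inherits boundedness because the mappings $u$ in Definition \ref{defn_mappings} have $\sup_x u'(x)<\infty$, so $\sum_{\ell=1}^{d_j}\varphi_\ell(x)^2=\OO(d_j)$, and the factorization delivers $\zeta=\OO(\sqrt{p})$. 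The same strategy handles the Daubechies wavelet case (Example \ref{examplebasis_wave}): from the scaling identity each periodized $\varphi_{J_n,k}$ at resolution $J_n$ satisfies $\sup_t|\varphi_{J_n,k}(t)|=\OO(2^{J_n/2})=\OO(\sqrt{c})$, yielding $\xi_c=\OO(\sqrt{c})$; however, the compact support of the scaling function means that for each fixed $t$ only $\OO(1)$ of the $\varphi_{J_n,k}(t)$ are nonzero, so $\sum_{\ell=1}^{c_j}\phi_\ell(t)^2=\OO(2^{J_n})\cdot\OO(1)=\OO(c_j)$. The mapped wavelets on $\mathbb{R}$ retain this localization after the change of coordinates, giving $\sum_{\ell=1}^{d_j}\varphi_\ell(x)^2=\OO(d_j)$ and hence $\zeta=\OO(\sqrt{p})$.

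For the Jacobi basis of Example \ref{exam_jacobibasis}, the classical sup-norm bounds for the orthonormalized Jacobi polynomials (Szegő's treatise, cf.\ \cite{MR0000077}) give $\sup_t|\mathsf{J}_k^{\alpha,\beta}(t)|=\OO(1)$ under the standard Chebyshev-type parameter range, so $\xi_c=\OO(1)$ and $\sum_{\ell_1}\phi_{\ell_1}(t)^2=\OO(c_j)$. For the mapped family $\sqrt{u'(x)}\mathsf{J}_k^{\alpha,\beta}(u(x))$, the boundary behavior of the Jacobi weight together with the decay of $u'$ at infinity allows individual basis functions to have sup-norm of order $\OO(\sqrt{d_j})$, so the crude pointwise bound gives $\sum_{\ell_2=1}^{d_j}\varphi_{\ell_2}(x)^2=\OO(d_j^2)$. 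Inserting this into the factorization yields $\zeta^2=\OO(\sum_j c_j d_j^2)=\OO(p\cdot\max_j d_j)=\OO(p^2)$, hence $\zeta=\OO(p)$. The main obstacle in this last case is justifying the Christoffel-type bound for the mapped Jacobi kernel: sharp uniform control near $|u(x)|\uparrow 1$ requires carefully tracking the Darboux asymptotics of Jacobi polynomials against the change of variables, but the coarser pointwise argument sketched here is all that is needed to establish the stated order $\OO(p)$.
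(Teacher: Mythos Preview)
The paper's own proof of this lemma is a bare citation (``See \cite{DZ} or Section 3 of \cite{MR3343791}''), so your self-contained sketch already supplies far more than the paper does. Your tensor-product factorization $|\bm{b}(t,x)|^2=\sum_j\big(\sum_{\ell_1}\phi_{\ell_1}(t)^2\big)\big(\sum_{\ell_2}\varphi_{\ell_2}(x)^2\big)$ is exactly the right reduction, and your treatment of the trigonometric and Daubechies cases is correct and is the standard argument one finds in the cited references.

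There is, however, an internal inconsistency in your Jacobi argument. You first assert $\sup_t|\mathsf{J}_k^{\alpha,\beta}(t)|=\OO(1)$, hence $\xi_c=\OO(1)$ and $\sum_{\ell_1}\phi_{\ell_1}(t)^2=\OO(c_j)$. But the mapped basis is $\varphi_\ell(x)=\sqrt{u'(x)}\,\mathsf{J}_\ell^{\alpha,\beta}(u(x))$, and since $\sup_x u'(x)<\infty$ for the mappings of Definition~\ref{defn_mappings}, boundedness of $\mathsf{J}_\ell$ on $[0,1]$ would force $\sup_x|\varphi_\ell(x)|=\OO(1)$ as well. You cannot then turn around and claim the mapped functions have sup-norm $\OO(\sqrt{d_j})$; the mapping and the $\sqrt{u'}$ factor do not manufacture growth out of a bounded family. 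With your premises as stated you would actually get $\zeta=\OO(\sqrt{p})$, contradicting the lemma.

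The resolution is that the $\OO(1)$ sup-norm claim for orthonormalized Jacobi polynomials (with the $\sqrt{\omega^{\alpha,\beta}}$ factor folded in) generally fails: for Legendre, for instance, the $k$-th orthonormal polynomial has sup-norm $\asymp\sqrt{k}$, and the Christoffel sum satisfies $\sum_{\ell=1}^{c}\phi_\ell(t)^2\asymp c^2$ near the endpoints. It is this quadratic growth of the Christoffel function---applied to both the time factor and the mapped spatial factor---that gives $|\bm{b}|^2=\OO\big(\sum_j c_j^2 d_j^2\big)\le\OO(p^2)$ and hence $\zeta=\OO(p)$. So the route to $\zeta=\OO(p)$ should go through the sharp Christoffel bound, not through a uniform $\OO(1)$ sup-norm on individual basis elements; you may also want to flag that the lemma's $\xi_c=\OO(1)$ assertion for the Jacobi family is delicate and does not hold for generic parameter choices such as Legendre.
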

\begin{proof}
See \cite{DZ} or Section 3 of  \cite{MR3343791}. 
\end{proof}

Finally, for an illustration, in Figure \ref{fig_basis}, we provide a plot of the Legendre and the mapped Legendre basis functions. In our $\mathtt{R}$ package $\mathtt{SIMle},$ one can use $\mathtt{bs.gene}$ to generate many commonly used basis functions and  $\mathtt{bs.gene.trans}$ for the mapped basis functions.  

\begin{figure}[!ht]
\hspace*{-1cm}
\begin{subfigure}{0.4\textwidth}
\includegraphics[width=6.4cm,height=5.5cm]{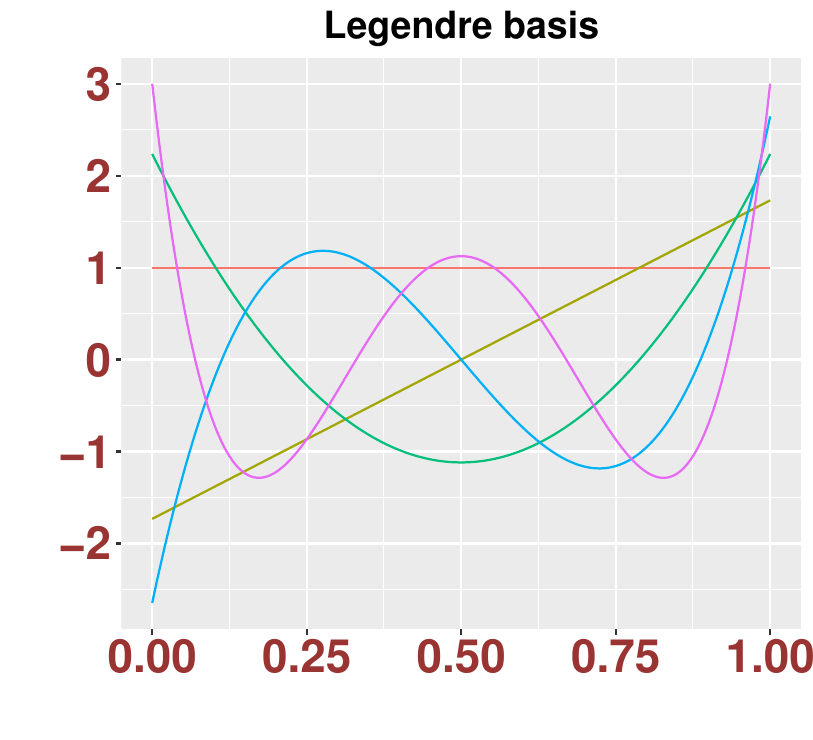}
\end{subfigure}
\hspace{0.7cm}
\begin{subfigure}{0.4\textwidth}
\includegraphics[width=6.4cm,height=5.5cm]{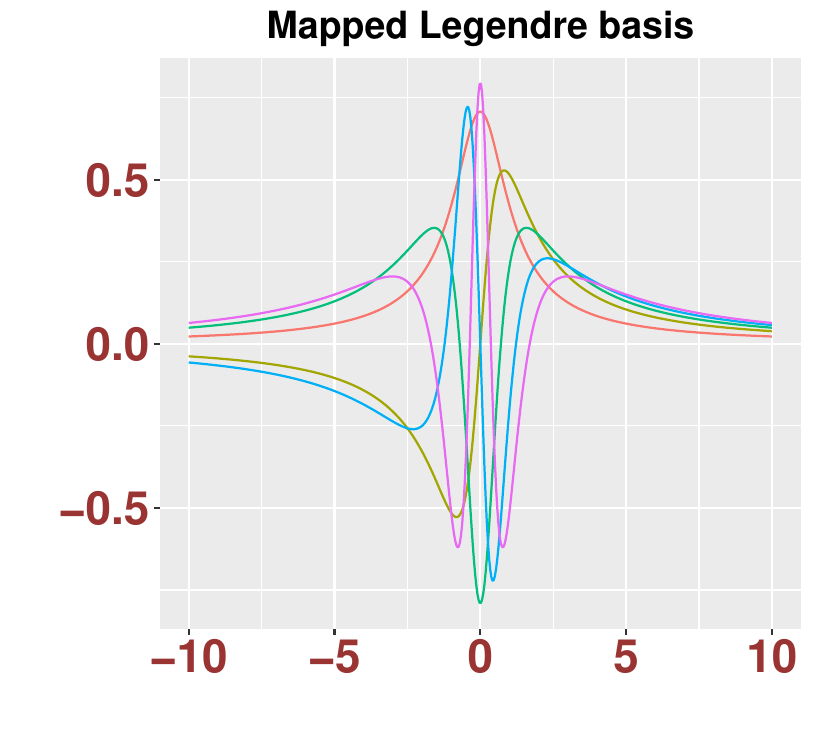}
\end{subfigure}
\vspace*{-0.4cm}
\caption{{ \footnotesize The first five normalized Legendre basis and mapped Legendre basis functions. We refer the readers to Example \ref{exam_jacobibasis} of our supplement \cite{suppl} for precise definitions. The left panel corresponds to the Legendre basis functions which can be generated using $\mathtt{bs.gene}$ from our $\mathtt{R}$ package  $\mathtt{SIMle}$. The right panel corresponds to the mapped Legendre basis using the mapping (\ref{eq_y}) with $s=1$ which can be generayed using $\mathtt{bs.gene.trans}.$ The plots can be made using the functions $\mathtt{bs.plot}$ and $\mathtt{bs.plot.trans}$ in our package.}  }
\label{fig_basis}
\end{figure}


\section{Detailed discussion for the OLS estimation for (\ref{eq_betaolsform})}\label{sec_detailedols}
Let the vector $\bm{\beta}_1=(\beta_1, \cdots, \beta_{p})^\top \in \mathbb{R}^{p}$ collect all these coefficients $\{\beta_{j, \ell_1, \ell_2}\}$ in the order of the indices $\Iintv{1,r} \times \Iintv{1,c_j} \times \Iintv{1,d_j}.$ That is, for $1 \leq k \leq p,$ and its associated $1 \leq  \ell \leq r$ that $\sum_{j=1}^\ell c_j d_j<k \leq \sum_{j=1}^{\ell+1} c_j d_j,$ we have  
\begin{equation}\label{eq_indices}
\beta_k=\beta_{l+1, \ell_1, \ell_2+1},  \ \ell_2=\floor*{(k-\sum_{j=1}^\ell c_j d_j)/c_{\ell+1}}, \ \ell_1=k-\sum_{j=1}^\ell c_j d_j-(\ell_2+1) d_{\ell+1}, 
\end{equation} 
where  $\floor*{\cdot}$ is the floor of a given real value. {Moreover, we denote $\bm{\beta}_0=(\beta_{0,1}, \cdots, \beta_{0,c})^\top \in \mathbb{R}^{c_0}$ and $\bm{\beta}=(\bm{\beta}_0^\top, \bm{\beta}_1^\top)^\top \in \mathbb{R}^{\mathsf{p}}.$ }
Recall the basis function $b_{\ell_1, \ell_2}(t,x)$ has the form of (\ref{eq_basisconstruction}). Denote $W_1 \in \mathbb{R}^{n \times p}$ as the matrix of the linear regression (\ref{eq_linearregression}) whose entry satisfies that
\begin{equation}\label{eq_designmatrix}
W_{1,ik}=\phi_{\ell_1}(t_i) \varphi_{\ell_2+1}(X_{l+1,i}),
\end{equation}
where $\ell_1$ and $\ell_2$ are defined in (\ref{eq_indices}). {In addition, denote $W_0 \in \mathbb{R}^{n \times c_0}$ whose $i$th row is $(\phi_1(t_i), \cdots, \phi_{c_0}(t_i)) \in \mathbb{R}^{c_0}.$ Then we construct the design matrix $W=(W_0^\top,W_1^\top)^\top \in \mathbb{R}^{n \times \mathsf{p}}.$} 

\section{Parameter selection}\label{sec_parameterchoice}
In this subsection, we discuss how to choose the important parameters in practice using the data driven approach as in \cite{bishop2013pattern}. For simplicity, we assume $m_0(t) \equiv 0$ and $r=1.$ In this setting, we only need to choose one pair of $(c,d).$ The general case can be handled similarly.     

We first discuss how to choose $c$ and $d$ using cross validation. For a given integer $l,$ say $l=\lfloor 3 \log_2 n \rfloor,$ we divide the time series into two parts: the training part $\{X_i\}_{i=1}^{n-l}$ and the validation part $\{X_i\}_{i=n-l+1}^n.$  With some preliminary initial pair $(c,d)$, we propose a sequence of candidate pairs  $(c_i, d_j), \ i=1,2,\cdots, u, \ j=1,2,\cdots, v,$ in an appropriate neighborhood of $(c,d)$ where $u, v$ are some given large integers. For each pair of the choices $(c_i, d_j),$ we estimate $\widehat{m}_{j,c,d}(t,x)$ as in (\ref{eq_proposedestimator}).  Then using the fitted model, we forecast the time series in the validation part of the time series.  Let $\widehat X_{n-l+1,ij}, \cdots, \widehat X_{n,ij}$ be the forecast of $X_{n-l+1},..., X_n,$ respectively using the parameter pair $(c_i, d_j)$. Then we choose the pair $(c_{i_0},d_{j_0})$ with the minimum sample MSE of forecast, i.e.,
 \begin{equation*} 
({i_0},{j_0}):= \argmin_{((i,j): 1 \leq i \leq u, 1 \leq j \leq v)} \frac{1}{l}\sum_{k=n-l+1}^n (X_k-\widehat X_{k,ij})^2.
 \end{equation*}

Then we discuss how to choose $m$ for practical implementation. In \cite{MR3174655}, the author used the minimum volatility (MV) method to choose the window size $m$ for the scalar covariance function. The MV method does not depend on the specific form of the underlying time series dependence structure and hence is robust to misspecification of
the latter structure \cite{politis1999subsampling}. The MV method utilizes the fact that the covariance structure of $\widehat{\Omega}$ becomes stable when the
block size $m$ is in an appropriate range, where $\widehat{\Omega}=E[\Phi\Phi^*|(X_1,\cdots,X_n)]=$ is defined as 
{ 
\begin{equation}\label{eq_widehatomega}
\widehat{\Omega}:=\frac{1}{(n-m-r+1)m} \sum_{i=b+1}^{n-m} \Big[ \Big(\sum_{j=i}^{i+m} \widehat{\bm{x}}_i \Big) \otimes \Big( \phib(\frac{i}{n}) \Big) \Big] \times \Big[ \Big(\sum_{j=i}^{i+m} \widehat{\bm{x}}_i \Big) \otimes \Big( \phib(\frac{i}{n}) \Big) \Big]^\top.
\end{equation}
}    Therefore, it desires to minimize the standard errors of the latter covariance structure in a suitable range of candidate $m$'s.

In detail, for a give large value $m_{n_0}$ and a neighbourhood control parameter $h_0>0,$  we can choose a sequence of window sizes $m_{-h_0+1}<\cdots<m_1< m_2<\cdots<m_{n_0}<\cdots<m_{n_0+h_0}$  and obtain $\widehat{\Omega}_{m_j}$ by replacing $m$ with $m_j$ in (\ref{eq_widehatomega}), $j=-h_0+1,2, \cdots, n_0+h_0.$ For each $m_j, j=1,2,\cdots, m_{n_0},$ we calculate the matrix norm error of $\widehat{\Omega}_{m_j}$ in the $h_0$-neighborhood, i.e., 
\begin{equation*}
\mathsf{se}(m_j):=\mathsf{se}(\{ \widehat{\Omega}_{m_{j+k}}\}_{k=-h_0}^{h_0})=\left[\frac{1}{2h_0} \sum_{k=-h_0}^{h_0} \| \overline{\widehat{\Omega}}_{m_j}-\widehat{\Omega}_{m_j+k} \|^2 \right]^{1/2},
\end{equation*}
where $\overline{\widehat{\Omega}}_{m_j}=\sum_{k=-h_0}^{h_0} \widehat{\Omega}_{m_j+k} /(2h_0+1).$
Therefore, we choose the estimate of $m$ using 
\begin{equation*}
\widehat{m}:=\argmin_{m_1 \leq m \leq m_{n_0}} \mathsf{se}(m).
\end{equation*}
Note that in \cite{MR3174655} the author used $h_0=3$ and we also use this choice in the current paper.

\section{Some additional examples and remarks}\label{appendix_additionalremark}

In this section, we provide some additional  examples and remarks. 

\subsection{Examples of locally stationary time series}\label{appendix_additionalremark1}
In this subsection, we present two classes of locally stationary time series that satisfy the conditions outlined in Section \ref{sec_modelassumption}.

\begin{example}[Locally stationary linear time series]\label{exam_linear} Let $\{\epsilon_i\}$ be i.i.d. random variables and $\{a_j(t)\}$ be continuously differentiable functions on $[0,1].$ Consider the locally stationary linear process such that 
\begin{equation*}
G(t, \mathcal{F}_i)=\sum_{k=0}^{\infty} a_k(t) \epsilon_{i-k}.
\end{equation*} 
It is easy to see that the physical dependence measure of the above linear process, denoted as $\delta(k,q)$ satisfies that
\begin{equation*}
\delta(k,q)=\OO(\sup_t |a_k(t)|). 
\end{equation*} 
Consequently, according to \cite[Example 2.4]{DZ}, (\ref{eq_slc}) and Assumption \ref{assum_physical} are satisfied if the following conditions hold 
\begin{equation*}
\sup_t|a_k(t)| \leq Ck^{-\tau}, \ \ \sum_{k=0}^{\infty} \sup_{t \in [0,1]} |a_k'(t)|^{\min\{2,q\}}<\infty. 
\end{equation*} 
\end{example}

\begin{example}[Locally stationary nonlinear time series]\label{exam_nonlinear} Let $\{\epsilon_i\}$ be i.i.d. random variables. Consider a locally stationary process as follows 
\begin{equation*}
G(t, \mathcal{F}_i)=R(t, G(t, \mathcal{F}_{i-1}), \epsilon_i),
\end{equation*}
where $R$ is some measurable function. The above expression is quite general such that many locally stationary time series can be written in the above form. For example, the threshold autoregressive model, exponential autoregressive model and bilinear autoregressive models. According to \cite[Theorem 6]{ZW}, suppose that for some $x_0,$ $\sup_t\|R(t,x_0,\epsilon_i) \|_q<\infty.$ Denote 
\begin{equation*}
\chi:=\sup_t L(t), \ \text{where} \ L(t):=\sup_{x \neq y} \frac{\|R(t,x, \epsilon_0)-R(t,y,\epsilon_0) \|_q}{|x-y|}.
\end{equation*}
Note that $\chi<1$  and $\delta(k,q)=\OO(\chi^k).$ Therefore, Assumption \ref{assum_physical} is satisfied. Moreover, by \cite[Proposition 4]{ZW}, (\ref{eq_slc}) holds if 
\begin{equation*}
\sup_{t \in [0,1]} \| M(G(t, \mathcal{F}_0)) \|_q<\infty,  \ \text{where} \ M(x):=\sup_{0\leq t<s \leq 1}\frac{\|R(t,x,\epsilon_0)-R(s,x,\epsilon_0) \|_q}{|t-s|}. 
\end{equation*}


As a concrete example, we consider the time varying threshold autoregressive (TVTAR) model denoted as 
\begin{equation*}
G(t, \mathcal{F}_i)=a(t)[G(t, \mathcal{F}_{i-1})]^++b(t)[-G(t, \mathcal{F}_{i-1})]^+.
\end{equation*}
In this case, Assumption \ref{assum_physical} and (\ref{eq_slc}) will be satisfied if 
\begin{equation*}
a(t), b(t) \in \mathsf{C}^1([0,1]), \ \text{and} \ \sup_{t \in [0,1]}\left( |a(t)|+|b(t)| \right)<1. 
\end{equation*}
\end{example}

\subsection{Examples of mappings}\label{appendix_mapping}
In this subsection, we list a few examples of the mappings as defined in Definition \ref{defn_mappings}. 

\begin{example}\label{example_mappings} In this example, we provide some mappings when $\Lambda=\mathbb{R}.$ The first mapping is the following algebraic mapping 
\begin{equation}\label{eq_y}
x=\frac{sy}{\sqrt{1-y^2}}, \ y=\frac{x}{\sqrt{x^2+s^2}}.
\end{equation}
The second mapping is the following logarithmic mapping
\begin{equation}\label{eq_y1}
x=\frac{s}{2}\ln \frac{1+y}{1-y}, \  y=\tanh(s^{-1}x)=\frac{e^{s^{-1} x}-e^{-s^{-1}x}}{e^{s^{-1}x}+e^{-s^{-1}x}}.
\end{equation}
\end{example}
\begin{example}\label{example_mappings2} In this example, we provide some mappings when $\Lambda=\mathbb{R}_+.$ The first mapping is the following algebraic mapping 
\begin{equation*}
x=\frac{s(1+y)}{1-y}, \ y=\frac{x-s}{x+s}.
\end{equation*}
The second mapping is the following logarithmic mapping 
\begin{equation*}
x=\frac{s}{2} \ln \frac{3+y}{1-y}, \ y=1-2\tanh(s^{-1}x). 
\end{equation*}
\end{example}

\subsection{The Schwartz space and its generalization}\label{sec_Schwartzfunction} In this subsection, we provide a general class of functional space so that the conditions of Assumption \ref{assum_smoothnessasumption} are satisfied. We first define the Schwartz space \cite[Chapter 8]{lang1993real}. 
\begin{definition}[Schwartz space] The Schwartz space $\mathcal{S}(\mathbb{R}^d)$ is the topological vector space of Schwartz functions $f: \mathbb{R}^d \rightarrow \mathbb{C}$ satisfying
\begin{enumerate}
\item $f \in \mathtt{C}^{\infty}(\mathbb{R}^d);$
\item For all multi-indices $\alpha, \beta,$ there exists some positive constant $C_{\alpha, \beta}<\infty$ so that 
\begin{equation*}
|x^{\alpha} \partial_x^\beta f(x)| \leq C_{\alpha,\beta}. 
\end{equation*}
\end{enumerate} 
\end{definition}

Based on the above definition, we can see that $f$ and all of its derivatives are rapidly decreasing.  Moreover, one can also check that for $f \in \mathcal{S}(\mathbb{R}^d),$ for all multi-indices $\alpha$ and any integer $N,$ there exists some positive constant $C_{N, \alpha}$ so that 
\begin{equation*}
|\partial^{\alpha}_x f(x)| \leq C_{N, \alpha}(1+|x|)^{-N}.
\end{equation*}

It is easy to check from the chain rules for higher derivatives \cite{huang2006chain} that the second condition in (\ref{eq_importantregulaityassumption}) will be satisfied with $\mathsf{m}_2=\infty$ if $m_j(\cdot,x) \in \mathcal{S}(\mathbb{R}).$ For general $\mathsf{m}_2,$ we can extend the Schwartz space as follows.  

\begin{definition}[Generalized Schwartz space] The generalized Schwartz space of order $\mathsf{m}$ $\mathcal{S}_g(\mathbb{R}^d, \mathsf{m})$ is the topological vector space of generalized Schwartz functions $f: \mathbb{R}^d \rightarrow \mathbb{C}$ satisfying
\begin{enumerate}
\item $f \in \mathtt{C}^{\mathsf{m}}(\mathbb{R}^d);$
\item For all multi-indices $\alpha, \beta$ that $|\alpha|, |\beta| \leq \mathsf{m},$ there exists some positive constant $C_{\alpha, \beta}<\infty$ so that 
\begin{equation*}
|x^{\alpha} \partial_x^\beta f(x)| \leq C_{\alpha,\beta}. 
\end{equation*}
\end{enumerate} 
\end{definition}

It is easy to check from the chain rules for higher derivatives \cite{huang2006chain} that the second condition in (\ref{eq_importantregulaityassumption}) will be satisfied  if $m_j(\cdot,x) \in \mathcal{S}_g(\mathbb{R},\mathsf{m}_2).$

\subsection{Additional remarks} In this subsection, we provide some additional remarks. 

\begin{remark} 
We point out that as discussed in Remark \ref{rmk_modelsetting}, in the context of locally stationary nonlinear AR model that $X_{j,i}=Y_{i-j},$ our construction still applies with minor changes. The modification is necessary since we only observe one time series $\{X_i\}$ in this setting instead of a pair $\{(Y_i, X_i)\}$. Therefore, we shall only consider the response $Y_i=X_{r+i}, 1 \leq i \leq n-r$ so that $\bm{Y} \in \mathbb{R}^{n-r}.$ Consequently, the design matrix will be $W \in \mathbb{R}^{(n-r) \times rcd}$ with (\ref{eq_designmatrix}) replaced by 
\begin{equation*}
W_{ik}=\phi_{\ell_1}(t_i) \varphi_{\ell_2+1}(X_{i-(l+1)}). 
\end{equation*}
\end{remark}

\begin{remark}\label{rem_meaninference}
In this remark, we briefly explain how to construct the SCR for the time-varying intercept $m_0(t).$ Similar to the discussions in Section \ref{sec_problemsetup}, we shall use the statistic
\begin{equation*}
\mathsf{T}_0(t)=\widehat{m}_{0}(t)-m_0(t). 
\end{equation*}
Let the long-run variance of $\{\epsilon_i\}$ be $\sigma(t)$ and denote $\Sigma \in \mathbb{R}^{c_0 \times c_0}$ as
\begin{equation*}
\Sigma=\int_0^1 \sigma(t) \otimes (\phib_0(t) \phib_0^\top(t)) \mathrm{d} t.
\end{equation*}
Denote 
\begin{equation*}
h_0(t)=\sqrt{\phib_0^\top(t) \Sigma \phib_0(t)}.
\end{equation*}
Similar to the results in Theorem \ref{thm_asymptoticdistribution}, we can show that $\sqrt{n} \mathsf{T}_0(t)$ will be asymptotically Gaussian with mean zero and variance $h_0(t)^2.$ Then the SCR and the associated multiplier bootstrap procedure can be constructed accordingly. Using this approach, one can perform inference on the intercept, such as determining whether it is significant or time-varying.  
\end{remark}

{
\begin{remark}\label{rmk_hj(tx)} We provide some remarks for the quantity $h_j(t,x)$ in (\ref{eq_defhtx}). First, using the lower bound from Assumption \ref{assum_updc}, we have that for some constant $C>0$ 
\begin{equation*}
h_j^2(t,x) \geq C |\bm{l}_j|^2. 
\end{equation*}
Then combining with the assumptions of Theorem \ref{thm_asymptoticdistribution}, we can see that uniformly for $(t,x),$ there exists some constant $C_1>0$
\begin{equation*}
h_j^2(t,x) \geq C_1. 
\end{equation*}

Second, using Assumption \ref{assum_updc}, we can see that uniformly for $(t,x),$ we have that 
\begin{equation*}
\frac{|\bm{l}_j|}{h_j(t,x)} \asymp 1. 
\end{equation*}

\end{remark}
}

\section{Additional simulation results and real data analysis}\label{addtional_numerical} In this section, we provide some additional simulation results and an additional financial real data analysis.
\subsection{Simulation setup for $\epsilon_i$ used in Section \ref{sec_simulationsettup}}\label{sec_simulationsetupepsilon}
For the locally stationary time series $\{\epsilon_i\}$ used in Section \ref{sec_simulationsettup}, let $a_1(i/n) \equiv 0.3, \ a_2(i/n)=0.3 \sin(2 \pi i/n),$ and $\eta_i, i=1,2,\cdots,n,$ be i.i.d. standard Gaussian random variables,  we consider the following settings
\begin{enumerate}
\item[(a).] Time-varying linear AR(2) model  
\begin{equation*}
\epsilon_i=\sum_{j=1}^2 a_j(\frac{i}{n}) \epsilon_{i-j}+\eta_i.
\end{equation*} 
\item[(b).] Self-exciting threshold auto-regressive (SETAR) model
\begin{equation*}
\epsilon_i=
\begin{cases}
a_1(\frac{i}{n}) \epsilon_{i-1}+\eta_i, & \epsilon_{i-1} \geq 0, \\
a_2(\frac{i}{n}) \epsilon_{i-1}+\eta_i, & \epsilon_{i-1}<0.
\end{cases}
\end{equation*}
\item[(c).] First order bilinear model 
\begin{equation*}
\epsilon_i=\left(a_1(\frac{i}{n}) \eta_{i-1}+a_2(\frac{i}{n}) \right)\epsilon_{i-1}+\eta_i.
\end{equation*}
\end{enumerate}

\subsection{Simulation results for Section \ref{sec_numerical}} In this section, we report the simulation results of Section \ref{sec_numerical}.

\begin{table}[!ht]
\begin{center}
\setlength\arrayrulewidth{1pt}
\renewcommand{\arraystretch}{1.5}
{\fontsize{10}{10}\selectfont 
\begin{tabular}{|c|ccc|ccc|ccc|ccc|}
\hline
& \multicolumn{6}{c|}{\large nominal level: $90\%$} & \multicolumn{6}{c|}{\large nominal level: $95\%$} \\
\hline
      & \multicolumn{3}{c|}{$n=500$}                                                                                                                       & \multicolumn{3}{c|}{$n=800$}                                                                                                                        & \multicolumn{3}{c|}{$n=500$}                                                                                                                       & \multicolumn{3}{c|}{$n=800$}                                                                                                                        \\ \hline
Model/$\epsilon_i$ & \multicolumn{1}{c|}{(a)} & \multicolumn{1}{c|}{(b)} & \multicolumn{1}{c|}{(c)} &  \multicolumn{1}{c|}{(a)} & \multicolumn{1}{c|}{(b)} & \multicolumn{1}{c|}{(c)}  & \multicolumn{1}{c|}{(a)} & \multicolumn{1}{c|}{(b)} & \multicolumn{1}{c|}{(c)} &  \multicolumn{1}{c|}{(a)} & \multicolumn{1}{c|}{(b)} & \multicolumn{1}{c|}{(c)} \\ 
\hline
     & \multicolumn{6}{c|}{Sieve estimators (Fourier basis)}                                                                                                                                                                                                                                                                                           & \multicolumn{6}{c|}{Sieve estimators (Fourier basis)}                                                                                                                                                                                                                                                                                          \\
   \hline
(1)     & 0.853             &  0.835                         &                0.862           &  0.856                          &                0.865           &  0.884                                     & 0.931 & 0.939 & 0.929 & 0.938 & 0.94 & 0.938 \\
(2)    & 0.842         &  0.848 &   0.855                    &         0.876                & 0.885                         & 0.908 & 0.929 & 0.961 & 0.935 & 0.939 & 0.958 & 0.945                \\
(3)    & 0.852  &  0.835  &   0.864                       &        0.913           &    0.874                      & 0.865 & 0.963 & 0.936 & 0.961 & 0.955 & 0.954 & 0.947   \\
\hline
      & \multicolumn{6}{c|}{Sieve estimators (Legendre basis)}                                                                                                                                                                                                                                                                                          & \multicolumn{6}{c|}{Sieve estimators (Legendre basis)}                                                                                                                                                                                                                                                                                         \\
       \hline
(1)     &   0.826            &   0.845              &         0.852                  &     0.843                      &    0.849 &                               0.857  & 0.934 & 0.962 & 0.932 & 0.948 & 0.943 & 0.953          \\
(2)     &  0.928       &  0.875   &  0.918                  &          0.921                & 0.883                          &  0.894 & 0.962 & 0.939 & 0.943 & 0.957 & 0.945 & 0.944  \\
(3)     & 0.858   &  0.876    & 0.925                        &         0.861                   & 0.857                           &       0.91   & 0.938 & 0.941 & 0.963 & 0.943 & 0.954 & 0.948                        \\
 \hline
  & \multicolumn{6}{c|}{Sieve estimators (Daubechies-9 basis)}                                                                                                                                                                                                                                                                                         & \multicolumn{6}{c|}{Sieve estimators (Daubechies-9 basis)}                                                                                                                                                                                                                                                                                         \\
       \hline
(1)     &     0.858             &   0.861               &                           0.914 & 0.883                           &   0.867 &     0.91 & 0.941 & 0.959 & 0.961 & 0.945 & 0.948 & 0.956                                   \\
(2)     & 0.867        &   0.855  &  0.859                 &       0.875                    &  0.868                         &  0.877 & 0.961 & 0.959 & 0.939 & 0.956& 0.948 & 0.943    \\
(3)     & 0.913   & 0.924   & 0.863                        &       0.89                    &    0.906                    &     0.879 & 0.937 & 0.964 & 0.958 & 0.944 & 0.956 & 0.947                              \\
 \hline
       & \multicolumn{6}{c|}{Kernel estimators}                                                                                                                                                                                                                                                                                         & \multicolumn{6}{c|}{Kernel estimators}                                                                                                                                                                                                                                                                                         \\
       \hline
(1)     &        0.756         &   0.811                &                           0.765 &            0.754           &   0.834 & 0.817                                    & 0.913 & 0.9 & 0. 886 & 0.896 & 0.922 & 0.915    \\
(2)    & 0.788      &  0.812   &  0.797                                                &        0.814                   &                       0.853 &  0.84  & 0.899 & 0.904 & 0.908 & 0.912 & 0.9 & 0.913  \\
(3)     & 0.746   &  0.798   &    0.754                      &         0.81                   &  0.807                          &           0.798   & 0.903 & 0.899 & 0.906 & 0.921 & 0.918 & 0.918                     \\
 \hline
\end{tabular}
}
\end{center}
\caption{Simulated coverage probabilities at $90\%$ and  $95\%$ nominal levels. For models (1)--(3) in Section \ref{sec_simulationsettup}, we set $\delta=1.$ The coverage probabilities are based on 5,000 simulations on the region $(t,x) \in [0,1] \times [-10, 10].$ For our method, the SCRs can be automatically constructed using the function $\mathtt{auto.SCR}$ from our $\mathtt{R}$ package $\mathtt{SIMle}.$}
\label{table_cp}
\end{table}

\begin{table}[!ht]
\begin{center}
\setlength\arrayrulewidth{1pt}
\renewcommand{\arraystretch}{1.5}
{\fontsize{10}{10}\selectfont 
\begin{tabular}{|c|ccc|ccc|}
\hline
      & \multicolumn{3}{c|}{$n=500$}                                                                                                                       & \multicolumn{3}{c|}{$n=800$}                                                                                                                        \\ \hline
Testing/$\epsilon_i$ & \multicolumn{1}{c|}{(a)} & \multicolumn{1}{c|}{(b)} & \multicolumn{1}{c|}{(c)} &  \multicolumn{1}{c|}{(a)} & \multicolumn{1}{c|}{(b)} & \multicolumn{1}{c|}{(c)}  \\ 
\hline
     & \multicolumn{6}{c|}{Fourier basis}                                                                                                                                                                                                                                                                                          \\
   \hline
   Exact form   &  0.113  &  0.103   &  0.102                        & 0.105                   &             0.093             &          0.098          \\
Stationarity     &          0.114  & 0.091 &  0.089                         &             0.11              &      0.087                     &       0.108                                  \\
Separability    &   0.118         &                        0.107   &       0.113                    & 0.105                           &     0.092                      &  0.094                 \\
\hline
      & \multicolumn{6}{c|}{Legendre basis}                                                                                                                                                                                                                                                                                         \\
       \hline
       Exact form    &  0.113  &  0.108   & 0.11                        &        0.112                    & 0.098                          &      0.105                            \\
Stationarity       & 0.112                  & 0.11                  &                   0.088        &  0.11                         &0.107    &  0.106                                          \\
Separability     & 0.09        & 0.092    & 0.111                                           &      0.108                     &  0.109                       &  0.094  \\
 \hline
  & \multicolumn{6}{c|}{Daubechies-9 basis}                                                                                                                                                                                                                                                                                         \\
       \hline
       Exact form     &  0.096 &  0.105  & 0.109                       &            0.096               &  0.099                        &       0.104    \\
Stationarity       &   0.112              & 0.087 & 0.091                          & 0.092                         &  0.093  &  0.107                                         \\
Separability      &  0.117      &   0.085  & 0.092                                                  & 0.109                          &  0.09                       &   0.092  \\
 \hline
\end{tabular}
}
\end{center}
\caption{ Simulated type one error rates under nominal level $0.1$. The results are reported based on 5,000 simulations. As mentioned earlier,  exact form refers to the test on $m(t,x)=\mathfrak{m}_0(t,x)$ for some pre-given function $\mathfrak{m}_0(t,x),$  stationarity refers to the test on $m(t,x) \equiv m(x)$ and separability refers to the test on $m(t,x)=\rho(t) g(x)$ for some smooth functions $\rho(\cdot)$ and $g(\cdot)$ All these tests can be automatically implemented respectively using the functions $\mathtt{auto.SCR, auto.homo.test}$ and $\mathtt{auto.sep.test}$ in our $\mathtt{R}$ package $\mathtt{SIMle}.$   
}
\label{table_typeoneerror}
\end{table}

\begin{figure}[!ht]
\begin{subfigure}{0.32\textwidth}
\includegraphics[width=6.4cm,height=5cm]{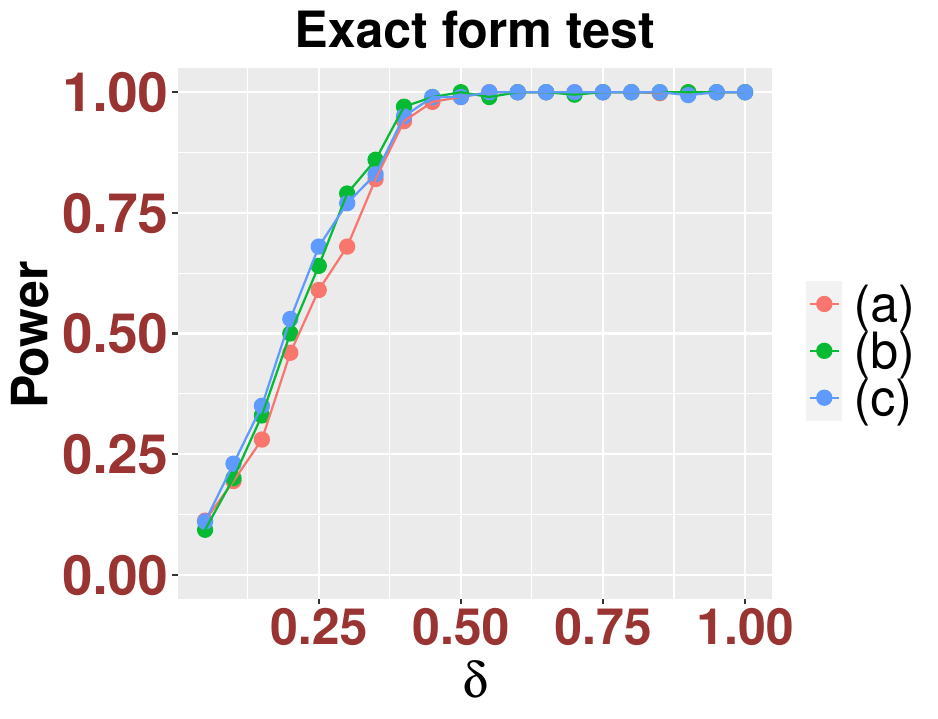}
\end{subfigure}
\begin{subfigure}{0.32\textwidth}
\includegraphics[width=6.4cm,height=5cm]{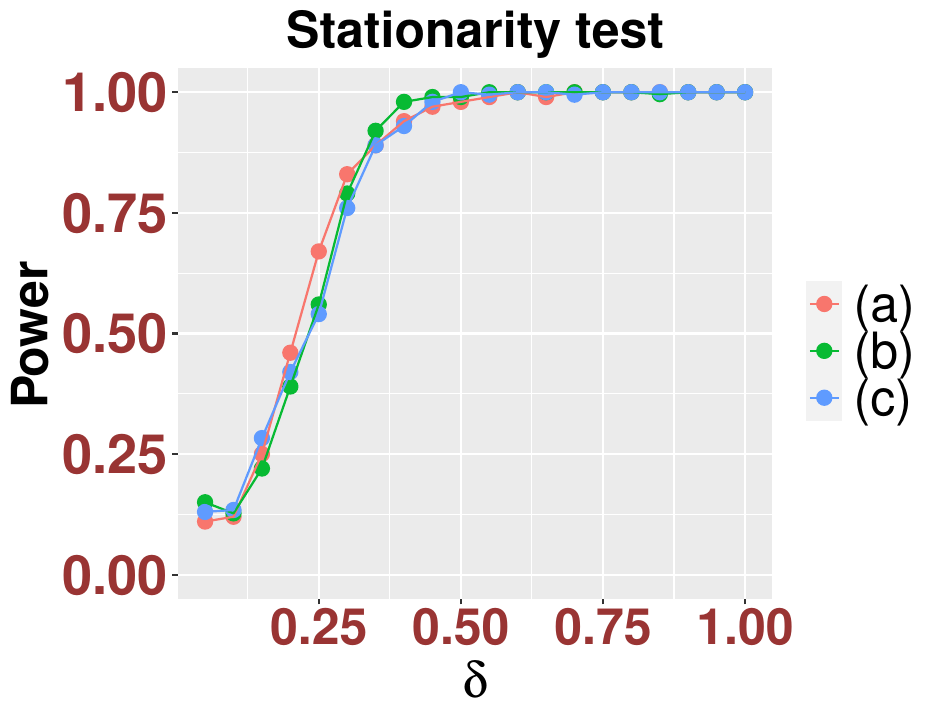}
\end{subfigure}
\begin{subfigure}{0.32\textwidth}
\includegraphics[width=6.4cm,height=5cm]{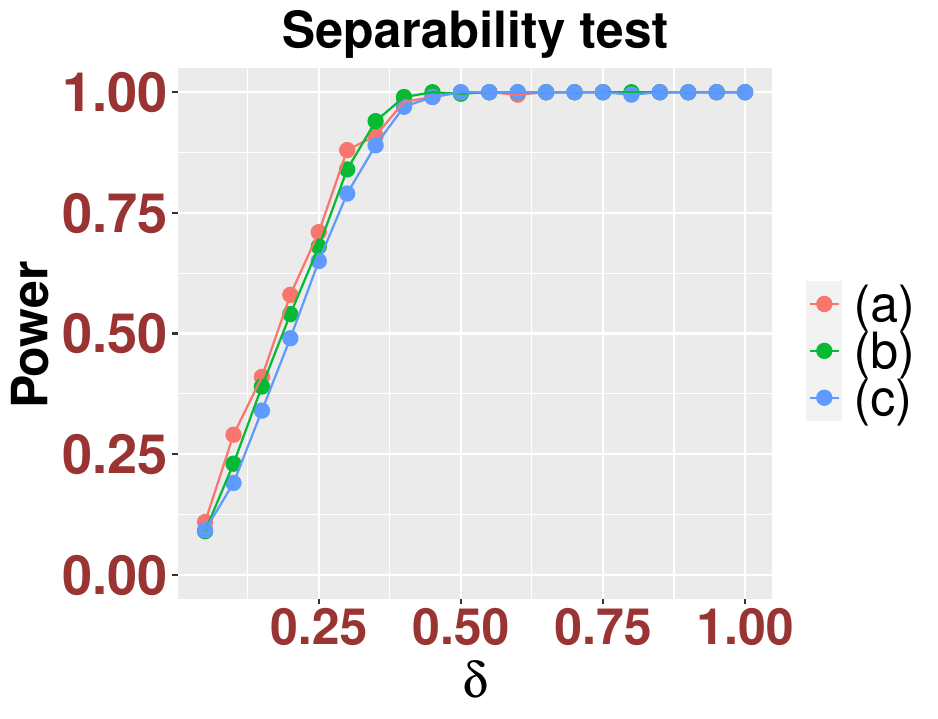}
\end{subfigure}
\vspace*{-0.4cm}
\caption{{ \footnotesize Simulated power for our proposed multiplier bootstrap method under the nominal level $0.1$. Here we used Daubechies-9 basis, $n=800$ and $(a), (b), (c)$ refer to the models for $\epsilon_i$ as in Section \ref{sec_simulationsettup}.  Our results are based on 5,000 simulations.  }  }
\label{fig_selfcomparison}
\end{figure}

\begin{figure}[!ht]
\begin{subfigure}{0.32\textwidth}
\includegraphics[width=6.4cm,height=5cm]{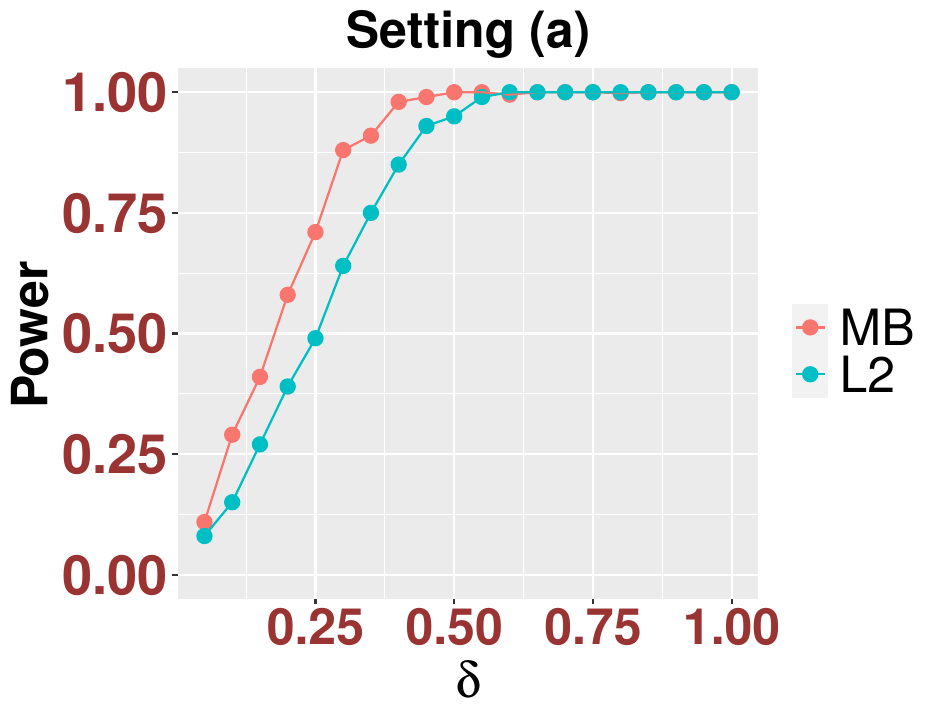}
\end{subfigure}
\begin{subfigure}{0.32\textwidth}
\includegraphics[width=6.4cm,height=5cm]{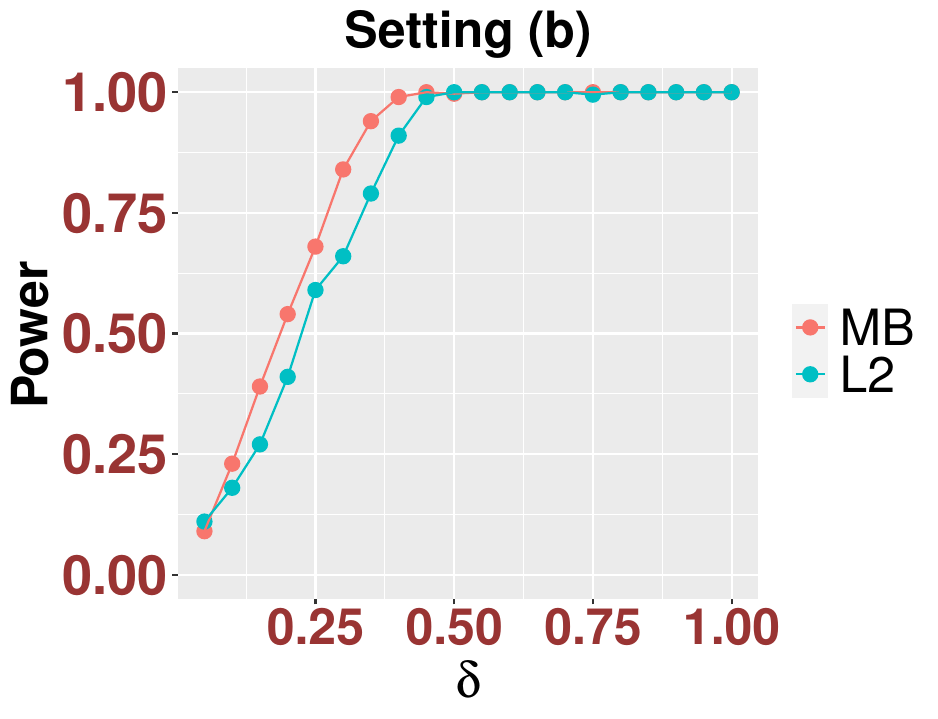}
\end{subfigure}
\begin{subfigure}{0.32\textwidth}
\includegraphics[width=6.4cm,height=5cm]{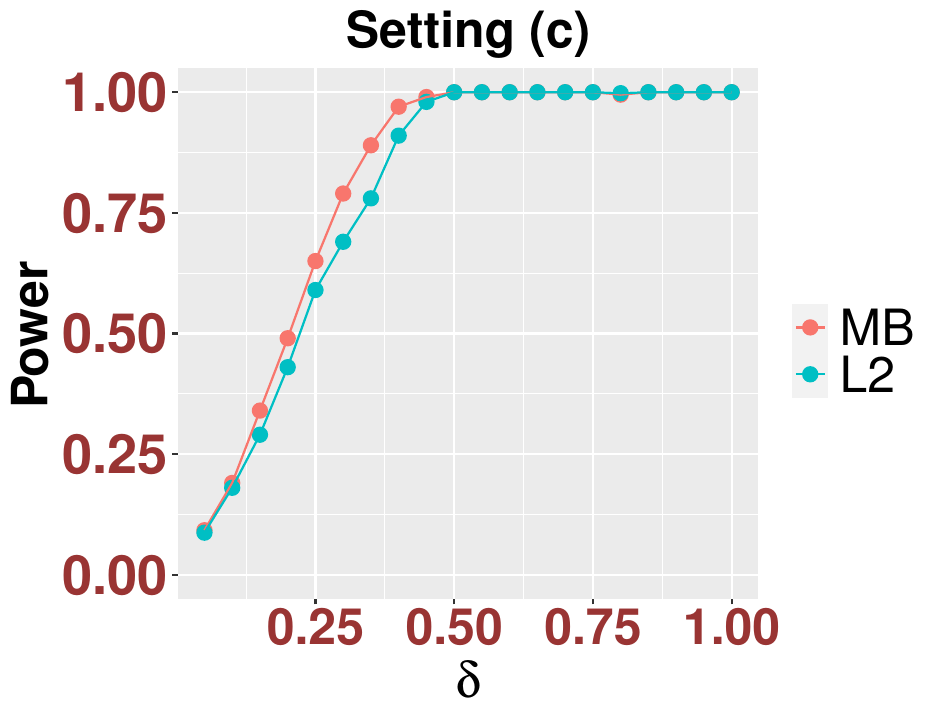}
\end{subfigure}
\vspace*{-0.4cm}
\caption{{ \footnotesize Comparison of the simulated power for our Algorithm \ref{alg:boostrapping} (MB) and the weighted $L_2$-distance based statistic (L2) in \cite{HHY}.  Our results are based on 5,000 simulations.  }  }
\label{fig_jasacomparison}
\end{figure} 

\subsection{Additional simulation results}\label{simu_addtional_identiyissue} 
In this subsection, we conduct some additional simulations when $r=2$. Especially, we consider the following nonlinear AR(2) type model that 
\begin{equation*}
X_i=m_0^*(t_i)+m_1^*(t_i,X_{i-1})+m_2^*(t_i,X_{i-2})+\sigma(t_i, X_{i-1}, X_{i-2}) \epsilon_i.
\end{equation*}  
Regarding the above functions, we consider the following setups
\begin{itemize}
\item[(I).] $m_0^*(t)=2t \cos(2 \pi t), \ m_1^*(t,x)=(2+x^2)^{-4}, \ m_2^*(t,x)=(1+\delta \sin (2 \pi t)) \exp(-x^2).$ For $\sigma(t,x),$ we use the same one as in setup (1) in Section \ref{sec_simulationsettup}. 
\item[(II).] $m_0^*(t)\equiv 2, \ m_1^*(t,x)=\cos(2 \pi t) \exp(-x^2), \ m_2^*(t,x)=(1+\delta \sin (2 \pi t)) \exp(-x^2).$ For $\sigma(t,x),$ we use the same one as in setup (2) in Section \ref{sec_simulationsettup}.
\item[(III).] $m_0^*(t)=2t, \ m_1^*(t,x)= \exp(-x^2), \ m_2^*(t,x)=2t( \delta \exp(-2tx^2)+ \pi^{-1/2}\exp(-x^2/2)).$ For $\sigma(t,x),$ we use the same one as in setup (3) in Section \ref{sec_simulationsettup}.  
\end{itemize}
Note that when $\delta=0,$ $m_2^*(t,x)$ in the above setups (1)-(3) correspond respectively to the null hypotheses in Examples \ref{exam_exactform}--\ref{exam_seperabletest}. 
Moreover, for the locally stationary time series, we use the same setup as in (a)-(c) of Section \ref{sec_simulationsetupepsilon}.

As mentioned in Section  \ref{sec_removenonzeromeanassumption}, for identifiability, we shall work with 
\begin{equation}\label{eq_reallyfocus}
X_i=m_0(t_i)+m_1(t_i,X_{i-1})+m_2(t_i,X_{i-2})+\sigma(t_i, X_{i-1}, X_{i-2}) \epsilon_i,
\end{equation}
where $m_0(t_i)=m_0^*(t_i)+\mathbb{E}m^*_1(t_i,X_{i-1})+\mathbb{E} m^*_2(t_i,X_{i-2}), \ m_1(t_i, X_{i-1})=m_1^*(t_i,X_{i-1})-\mathbb{E} m_1^*(t_i,X_{i-1}), \ m_2(t_i,X_{i-2})=m^*_2(t_i,X_{i-2})-\mathbb{E}m^*_2(t_i,X_{i-2}).$ 

Then we examine the  performance of our proposed sieve estimators for $m_2(t,x)$ in (\ref{eq_reallyfocus}) and their associated SCRs constructed from Algorithm \ref{alg:boostrapping}  using the simulated coverage probabilities.  We also compare our methods with the kernel based estimation methods \cite{MV, ZWW}. Since model (2) is separable, we utilize the method in \cite{CSW} which shows better performance for separable functions. Moreover, since the support of the time series can span over $\mathbb{R}$, to facilitate the comparison with the kernel methods, we focus on the region that $(t,x) \in [0,1] \times [-10, 10].$ For the kernel method, we use the Epanechnikov kernel and the cross-validation approach as in \cite{10.1214/18-AOS1743} to select the bandwidth. The results are reported in Table \ref{table_cpsupp}. We conclude that our estimators achieve reasonably high accuracy and outperform kernel estimators across all commonly used sieve basis functions, as kernel methods may suffer from boundary issues for both $t$ and $x$. 

\begin{table}[!ht]
\begin{center}
\setlength\arrayrulewidth{1pt}
\renewcommand{\arraystretch}{1.5}
{\fontsize{10}{10}\selectfont 
\begin{tabular}{|c|ccc|ccc|ccc|ccc|}
\hline
& \multicolumn{6}{c|}{\large nominal level: $90\%$} & \multicolumn{6}{c|}{\large nominal level: $95\%$} \\
\hline
      & \multicolumn{3}{c|}{$n=500$}                                                                                                                       & \multicolumn{3}{c|}{$n=800$}                                                                                                                        & \multicolumn{3}{c|}{$n=500$}                                                                                                                       & \multicolumn{3}{c|}{$n=800$}                                                                                                                        \\ \hline
Model/$\epsilon_i$ & \multicolumn{1}{c|}{(a)} & \multicolumn{1}{c|}{(b)} & \multicolumn{1}{c|}{(c)} &  \multicolumn{1}{c|}{(a)} & \multicolumn{1}{c|}{(b)} & \multicolumn{1}{c|}{(c)}  & \multicolumn{1}{c|}{(a)} & \multicolumn{1}{c|}{(b)} & \multicolumn{1}{c|}{(c)} &  \multicolumn{1}{c|}{(a)} & \multicolumn{1}{c|}{(b)} & \multicolumn{1}{c|}{(c)} \\ 
\hline
     & \multicolumn{6}{c|}{Sieve estimators (Fourier basis)}                                                                                                                                                                                                                                                                                           & \multicolumn{6}{c|}{Sieve estimators (Fourier basis)}                                                                                                                                                                                                                                                                                          \\
   \hline
(I)     & 0.813             &  0.868                         &                0.83           &  0.86                          &                0.854           &  0.88                                     & 0.93 & 0.942 & 0.919 & 0.948 & 0.944 & 0.92 \\
(II)    & 0.842         &  0.89 &   0.91                    &         0.88                & 0.89                         & 0.92 & 0.918 & 0.95 & 0.958 & 0.941 & 0.938 & 0.961                \\
(III)    & 0.84  &  0.838  &   0.834                       &        0.912           &    0.884                      & 0.873 & 0.951 & 0.946 & 0.964 & 0.958 & 0.944 & 0.948   \\
\hline
      & \multicolumn{6}{c|}{Sieve estimators (Legendre basis)}                                                                                                                                                                                                                                                                                          & \multicolumn{6}{c|}{Sieve estimators (Legendre basis)}                                                                                                                                                                                                                                                                                         \\
       \hline
(I)     &   0.866            &   0.855              &         0.868                  &     0.873                      &    0.889 &                               0.864  & 0.944 & 0.97 & 0.962 & 0.938 & 0.943 & 0.957          \\
(II)     &  0.885       &  0.873   &  0.907                  &          0.918                & 0.884                          &  0.912 & 0.957 & 0.933 & 0.94 & 0.961 & 0.947 & 0.95 \\
(III)     & 0.868   &  0.883    & 0.918                        &         0.881                   & 0.867                           &       0.885   & 0.938 & 0.943 & 0.968 & 0.945 & 0.952 & 0.948                        \\
 \hline
  & \multicolumn{6}{c|}{Sieve estimators (Daubechies-9 basis)}                                                                                                                                                                                                                                                                                         & \multicolumn{6}{c|}{Sieve estimators (Daubechies-9 basis)}                                                                                                                                                                                                                                                                                         \\
       \hline
(I)     &     0.865             &   0.861               &                           0.865 & 0.882                           &   0.887 &     0.907 & 0.944 & 0.961 & 0.963 & 0.947 & 0.942 & 0.957                                   \\
(II)     & 0.857        &   0.862  &  0.872                 &       0.874                    &  0.876                         &  0.878 & 0.964 & 0.949 & 0.932 & 0.951& 0.95 & 0.948    \\
(III)     & 0.885   & 0.914   & 0.893                        &       0.911                    &    0.906                    &     0.894 & 0.934 & 0.954 & 0.959 & 0.947 & 0.952 & 0.952                              \\
 \hline
       & \multicolumn{6}{c|}{Kernel estimators}                                                                                                                                                                                                                                                                                         & \multicolumn{6}{c|}{Kernel estimators}                                                                                                                                                                                                                                                                                         \\
       \hline
(I)     &        0.668        &   0.714                &                           0.754 &            0.753           &   0.811 & 0.81                                    & 0.9 & 0.912 & 0. 87 & 0.89 & 0.923 & 0.917    \\
(II)    & 0.718      &  0.712   &  0.745                                                &        0.8                   &                       0.813 &  0.82  & 0.821 & 0.884 & 0.91 & 0.862 & 0.884 & 0.865  \\
(III)     & 0.7   &  0.694   &    0.714                      &         0.812                   &  0.823                          &           0.732   & 0.84 & 0.891 & 0.911 & 0.875 & 0.865 & 0.886                     \\
 \hline
\end{tabular}
}
\end{center}
\caption{Simulated coverage probabilities at $90\%$ and  $95\%$ nominal levels. For models (I)--(III), we set $\delta=1.$ The coverage probabilities are based on 5,000 simulations on the region $(t,x) \in [0,1] \times [-10, 10].$ For our method, the SCRs can be automatically constructed using the function $\mathtt{auto.SCR}$ from our $\mathtt{R}$ package $\mathtt{SIMle}.$}
\label{table_cpsupp}
\end{table}

Then in Table \ref{table_typeoneerrorsupp}, we report the simulated type one error rates of the above three tests under the null that $\delta=0$ for $m_2(t,x)$ in (\ref{eq_reallyfocus}). For the ease of statements, we call the above three tests as exam form, stationarity and separability tests, respectively. We find that our Algorithm \ref{alg:boostrapping} is reasonably accurate for all these tests. Similar to what have been reported in Figures \ref{fig_selfcomparison} and \ref{fig_jasacomparison},  we can examine the power of our methodologies as $\delta$ increases away from zero. It can also be concluded that once $\delta$ deviates away from $0$ a little bit, our method will be able to reject the null hypothesis. Since the plots and patterns are similar to the aforementioned figures, we will omit them here to save space.
\begin{table}[!ht]
\begin{center}
\setlength\arrayrulewidth{1pt}
\renewcommand{\arraystretch}{1.5}
{\fontsize{10}{10}\selectfont 
\begin{tabular}{|c|ccc|ccc|}
\hline
      & \multicolumn{3}{c|}{$n=500$}                                                                                                                       & \multicolumn{3}{c|}{$n=800$}                                                                                                                        \\ \hline
Testing/$\epsilon_i$ & \multicolumn{1}{c|}{(a)} & \multicolumn{1}{c|}{(b)} & \multicolumn{1}{c|}{(c)} &  \multicolumn{1}{c|}{(a)} & \multicolumn{1}{c|}{(b)} & \multicolumn{1}{c|}{(c)}  \\ 
\hline
     & \multicolumn{6}{c|}{Fourier basis}                                                                                                                                                                                                                                                                                          \\
   \hline
   Exact form   &  0.094  &  0.098   &  0.105                        & 0.094                   &             0.097             &          0.104          \\
Stationarity     &          0.098  & 0.094 &  0.104                         &             0.105              &      0.094                     &       0.106                                  \\
Separability    &   0.11         &                        0.089   &       0.112                    & 0.091                           &     0.094                      &  0.114                 \\
\hline
      & \multicolumn{6}{c|}{Legendre basis}                                                                                                                                                                                                                                                                                         \\
       \hline
       Exact form    &  0.113  &  0.097   & 0.109                        &        0.09                    & 0.089                          &      0.108                            \\
Stationarity       & 0.114                  & 0.096                  &                   0.091        &  0.108                         &0.104    &  0.108                                          \\
Separability     & 0.112        & 0.091    & 0.106                                           &      0.092                     &  0.098                       &  0.093  \\
 \hline
  & \multicolumn{6}{c|}{Daubechies-9 basis}                                                                                                                                                                                                                                                                                         \\
       \hline
       Exact form     &  0.103 &  0.102  & 0.105                       &            0.106               &  0.093                        &       0.094    \\
Stationarity       &   0.105              & 0.093 & 0.094                          & 0.089                         &  0.091  &  0.097                                         \\
Separability      &  0.107      &   0.095  & 0.102                                                  & 0.098                          &  0.093                       &   0.095  \\
 \hline
\end{tabular}
}
\end{center}
\caption{ Simulated type one error rates under nominal level $0.1$. }
\label{table_typeoneerrorsupp}
\end{table}

\subsection{Some plots}\label{sec_someplotsfigures} 
First, we provide some plots in Figures \ref{fig_exact}, \ref{fig_homogene} and \ref{fig_sepa} on how the SCRs can be used to conduct the tests outlined in Examples \ref{exam_exactform}, \ref{exam_stationarytest} and \ref{exam_seperabletest}.

\begin{figure}[h]
\hspace*{-1cm}
\begin{subfigure}{0.4\textwidth}
\includegraphics[width=6.4cm,height=5.5cm]{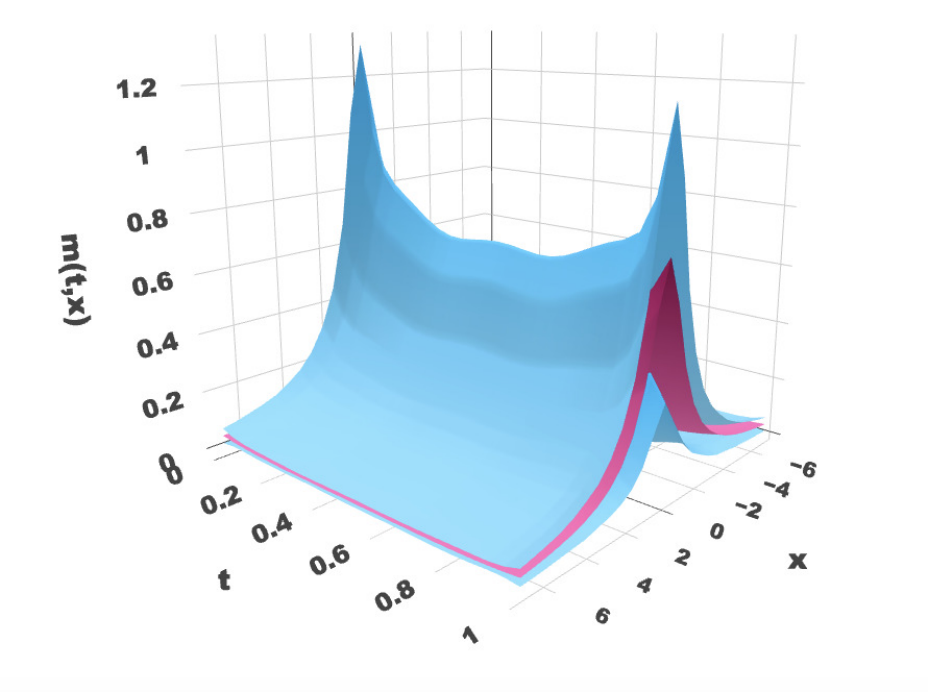}
\end{subfigure}
\hspace*{0.7cm}
\begin{subfigure}{0.4\textwidth}
\includegraphics[width=6.4cm,height=5.5cm]{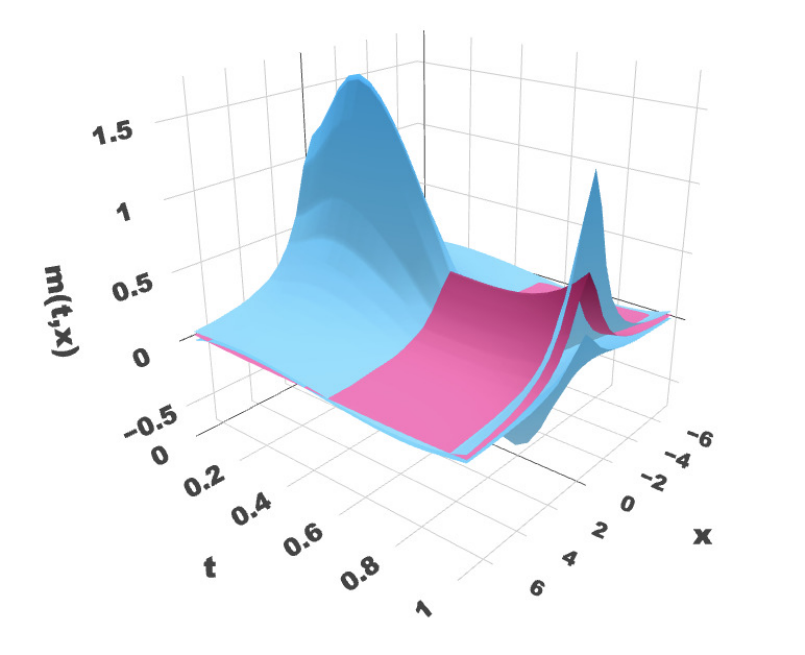}
\end{subfigure}
\vspace*{-0.4cm}
\caption{{ \footnotesize Exact form test using SCR. We use simulation setting (1) in Section \ref{sec_simulationsettup}, where $\delta=0$ corresponds to the null hypothesis (\ref{eq_nullhypotheis}) and $\delta>0$ corresponds to the alternative. The red surface is the exact form to be tested as discussed in Example \ref{exam_exactform} and the blue surfaces are the SCR using (\ref{eq_scrdefinition}) whose estimation and construction will be discussed in Section \ref{sec_practicalimplementation} and summarized in Algorithm \ref{alg:boostrapping}.  The left panel corresponds to the null setting $\delta=0$ where the pre-given function lies inside the SCRs. The right panel corresponds to the alternative setting $\delta=1$ where the pre-given function does not lie inside the SCRs so we need to reject (\ref{eq_nullhypotheis}). The implementation can be done using the function $\mathtt{auto.SCR}$ and the plots can be generated using $\mathtt{test.plot}$ using our package $\mathtt{SIMle}.$ Here $n=800$ and we used the Legendre and mapped Legendre basis functions. All the parameters will be chosen automatically via our $\mathtt{R}$ package as discussed in Section \ref{sec_parameterchoice} of our supplement \cite{suppl}. } }
\label{fig_exact}
\end{figure}


\begin{figure}[!h]
\hspace*{-1cm}
\begin{subfigure}{0.5\textwidth}
\includegraphics[width=6.4cm,height=5.5cm]{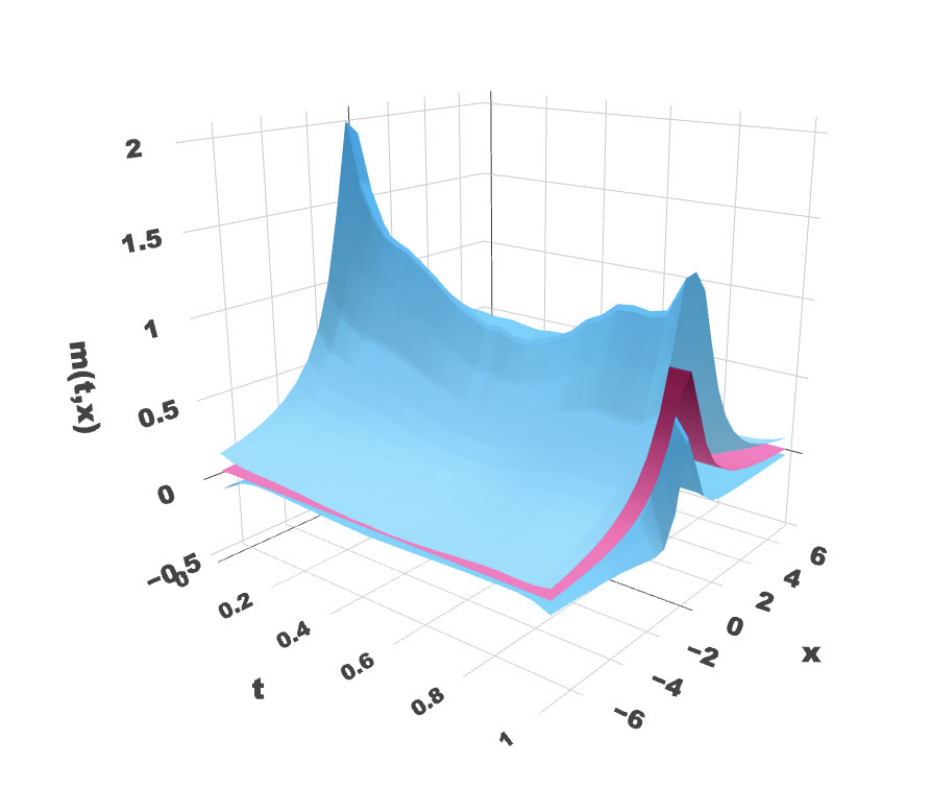}
\end{subfigure}
\hspace*{0.7cm}
\begin{subfigure}{0.5\textwidth}
\includegraphics[width=6.4cm,height=5.5cm]{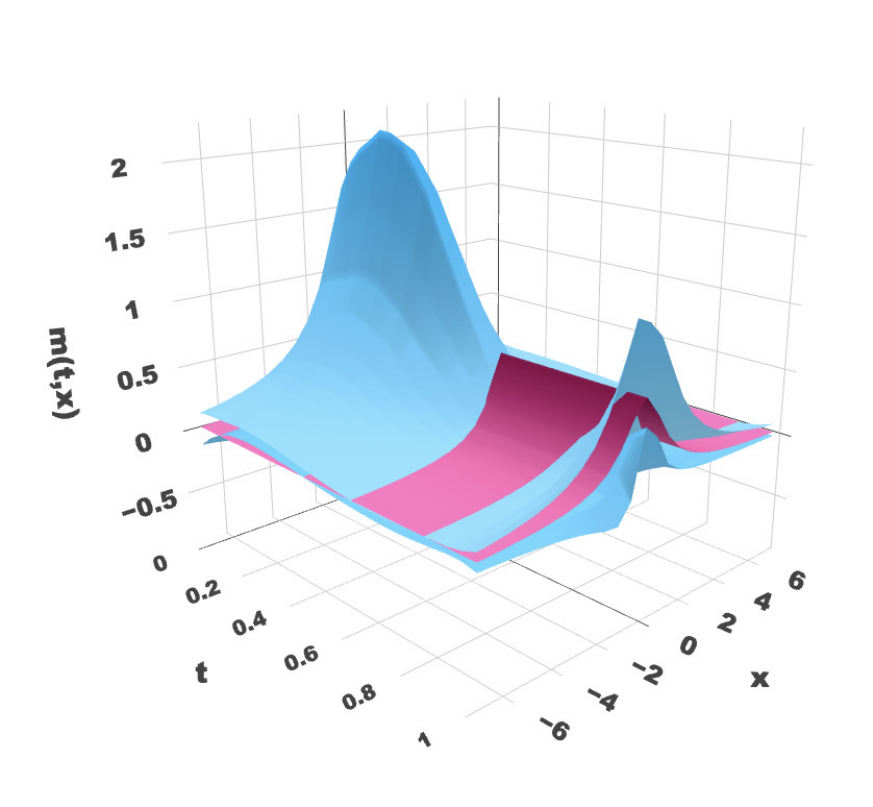}
\end{subfigure}
\caption{{ \footnotesize Time-homogeneity test using SCR. We use simulation setting (2) in Section \ref{sec_simulationsettup}, where $\delta=0$ corresponds to the null hypothesis (\ref{eq_hotestinghomogeneity}) and $\delta>0$ corresponds to the alternative. The red surface is the estimation under (\ref{eq_hotestinghomogeneity}) as discussed in Example \ref{exam_stationarytest} and the blue surfaces are the SCRs using (\ref{eq_scrdefinition}) whose estimation and construction are discussed in Section \ref{sec_practicalimplementation} and summarized in Algorithm \ref{alg:boostrapping}. The left panel corresponds to the null setting $\delta=0$ where the estimated function lies inside the SCRs since it converges faster. The right panel corresponds to the alternative setting $\delta=1$ where the estimated function does not lie inside the SCRs so we need to reject (\ref{eq_hotestinghomogeneity}). The implementation can be done using the function $\mathtt{auto.homo.test}$ and the plots can be generated using $\mathtt{test.plot}$ using our package $\mathtt{SIMle}.$ Here $n=800$ and we used the Legendre and mapped Legendre basis functions. All the parameters will be chosen automatically via our $\mathtt{R}$ as discussed in Section \ref{sec_parameterchoice}. } }
\label{fig_homogene}
\end{figure}

\begin{figure}[!h]
\hspace*{-1cm}
\begin{subfigure}{0.5\textwidth}
\includegraphics[width=6.4cm,height=5.5cm]{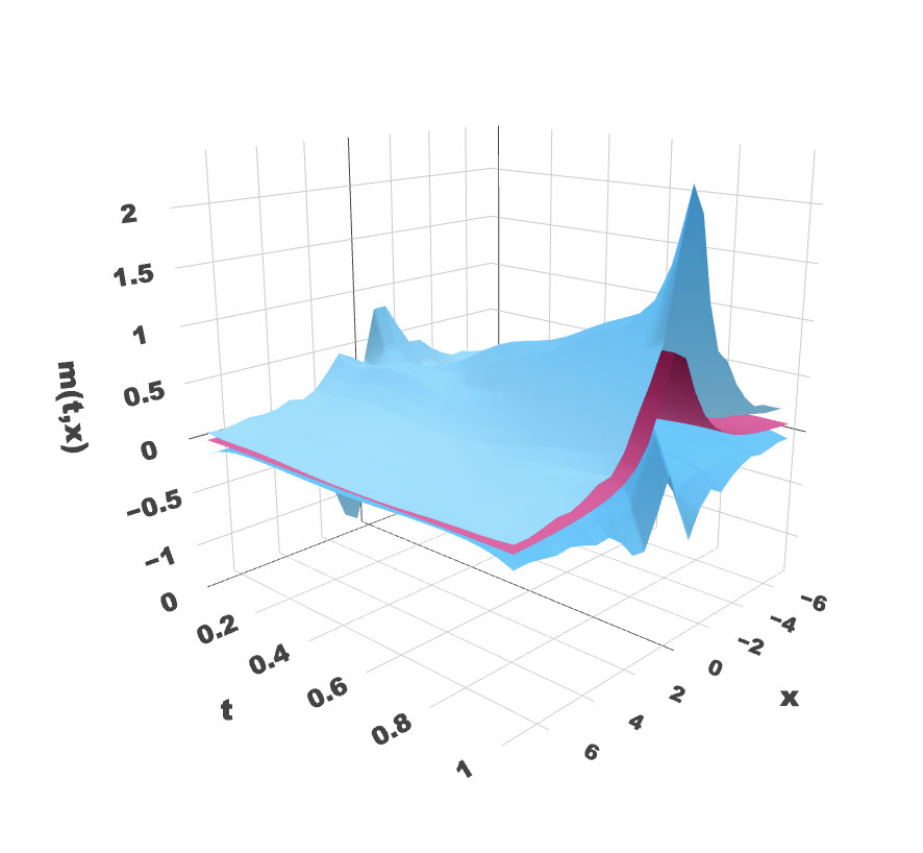}
\end{subfigure}
\hspace*{0.7cm}
\begin{subfigure}{0.5\textwidth}
\includegraphics[width=6.4cm,height=5.5cm]{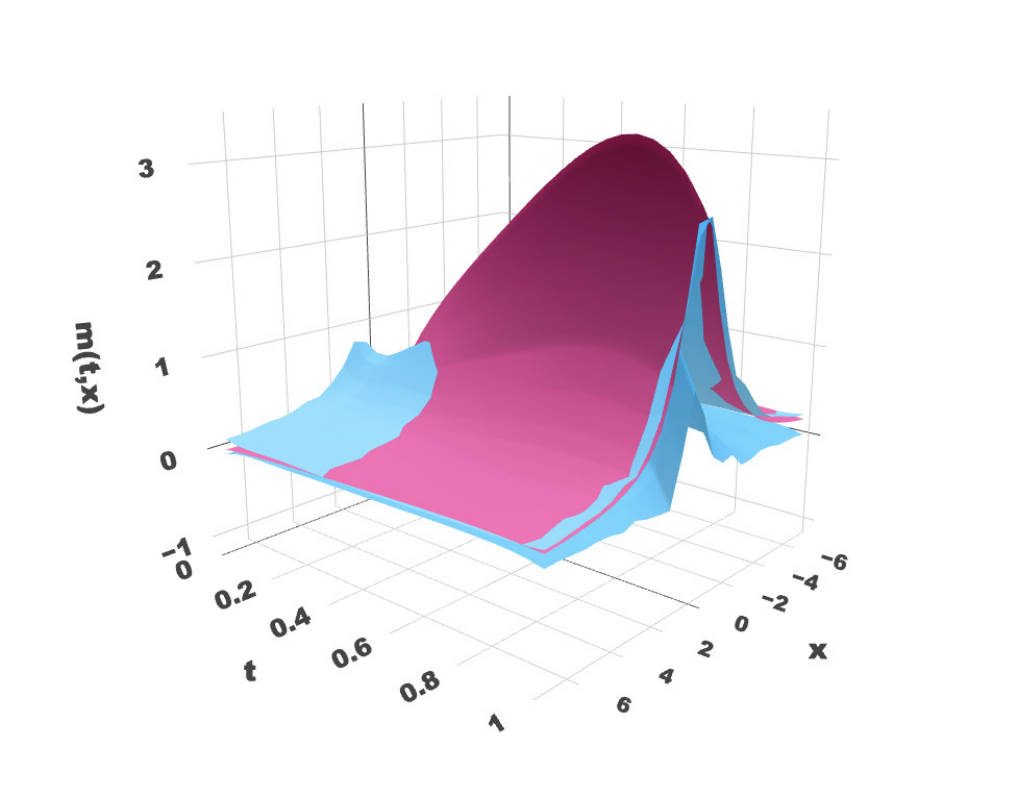}
\end{subfigure}
\caption{{ \footnotesize Separability test using SCR. We use simulation setting (3) in Section \ref{sec_simulationsettup}, where $\delta=0$ corresponds to the null hypothesis (\ref{eq_hotestseparability}) and $\delta>0$ corresponds to the alternative. The red surface is the estimation under (\ref{eq_hotestseparability}) as discussed in Example \ref{exam_seperabletest} and the blue surfaces are the SCRs using (\ref{eq_scrdefinition}) whose estimation and construction are discussed in Section \ref{sec_practicalimplementation} and summarized in Algorithm \ref{alg:boostrapping}. The left panel corresponds to the null setting $\delta=0$ where the estimated function lies inside the SCRs since it converges faster. The right panel corresponds to the alternative setting $\delta=1$ where the estimated function does not lie inside the SCRs so we need to reject (\ref{eq_hotestseparability}). The implementation can be done using the function $\mathtt{auto.sep.test}$ and the plots can be generated using $\mathtt{test.plot}$ using our package $\mathtt{SIMle}.$ Here $n=800$ and we used the Legendre and mapped Legendre basis functions. All the parameters will be chosen automatically via our $\mathtt{R}$ as discussed in Section \ref{sec_parameterchoice}.  } }
\label{fig_sepa}
\end{figure}

Second, we provide some plots (Figures \ref{fig_timevaryingrealdata}--\ref{fig_ourestimation}) to support our real data analysis in Section \ref{sec_realdata}.

\begin{figure}[!ht]
\hspace*{-0.8cm}
\begin{subfigure}{0.34\textwidth}
\includegraphics[width=6cm,height=5.4cm]{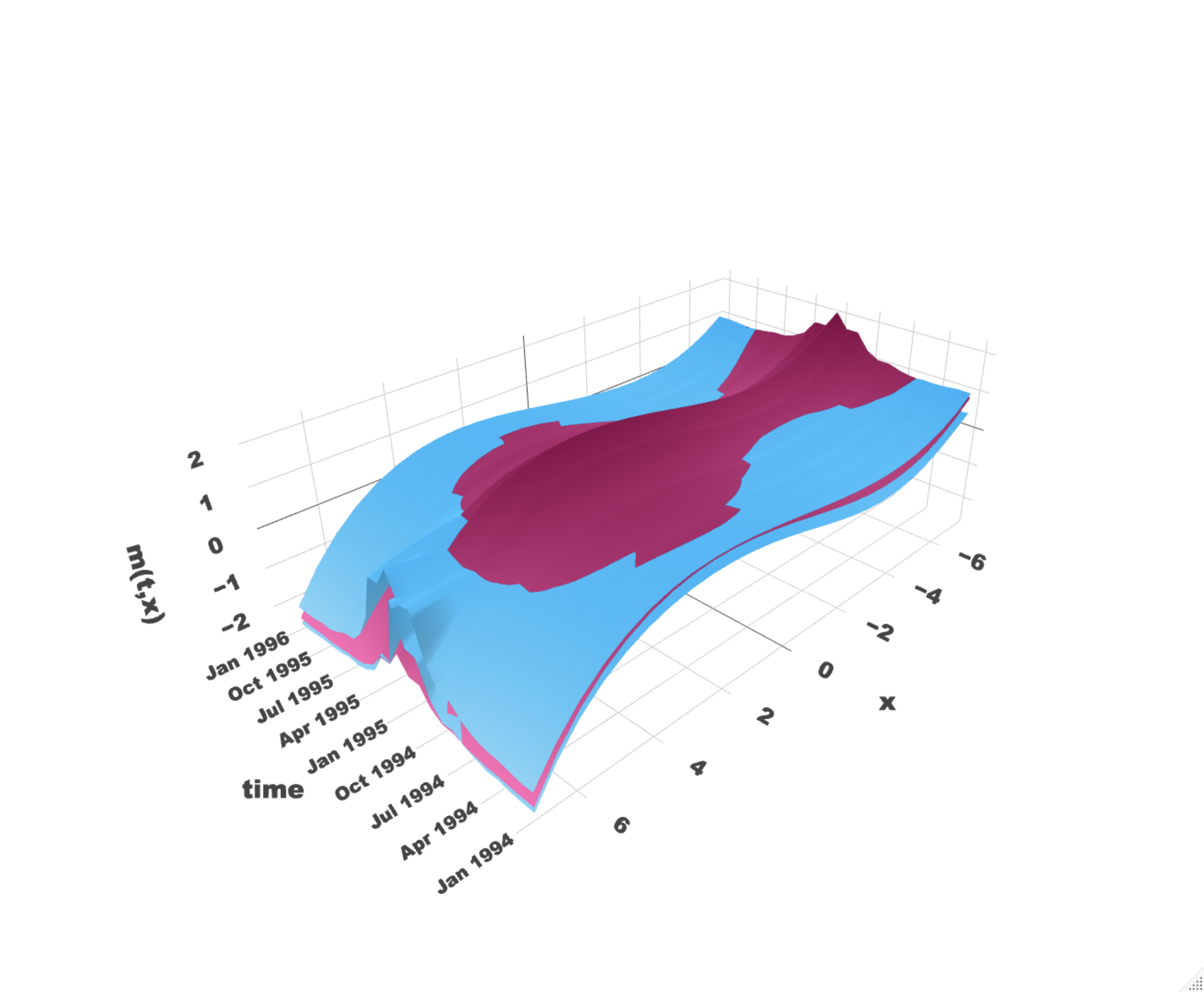}
\end{subfigure}
\begin{subfigure}{0.34\textwidth}
\includegraphics[width=6cm,height=5.4cm]{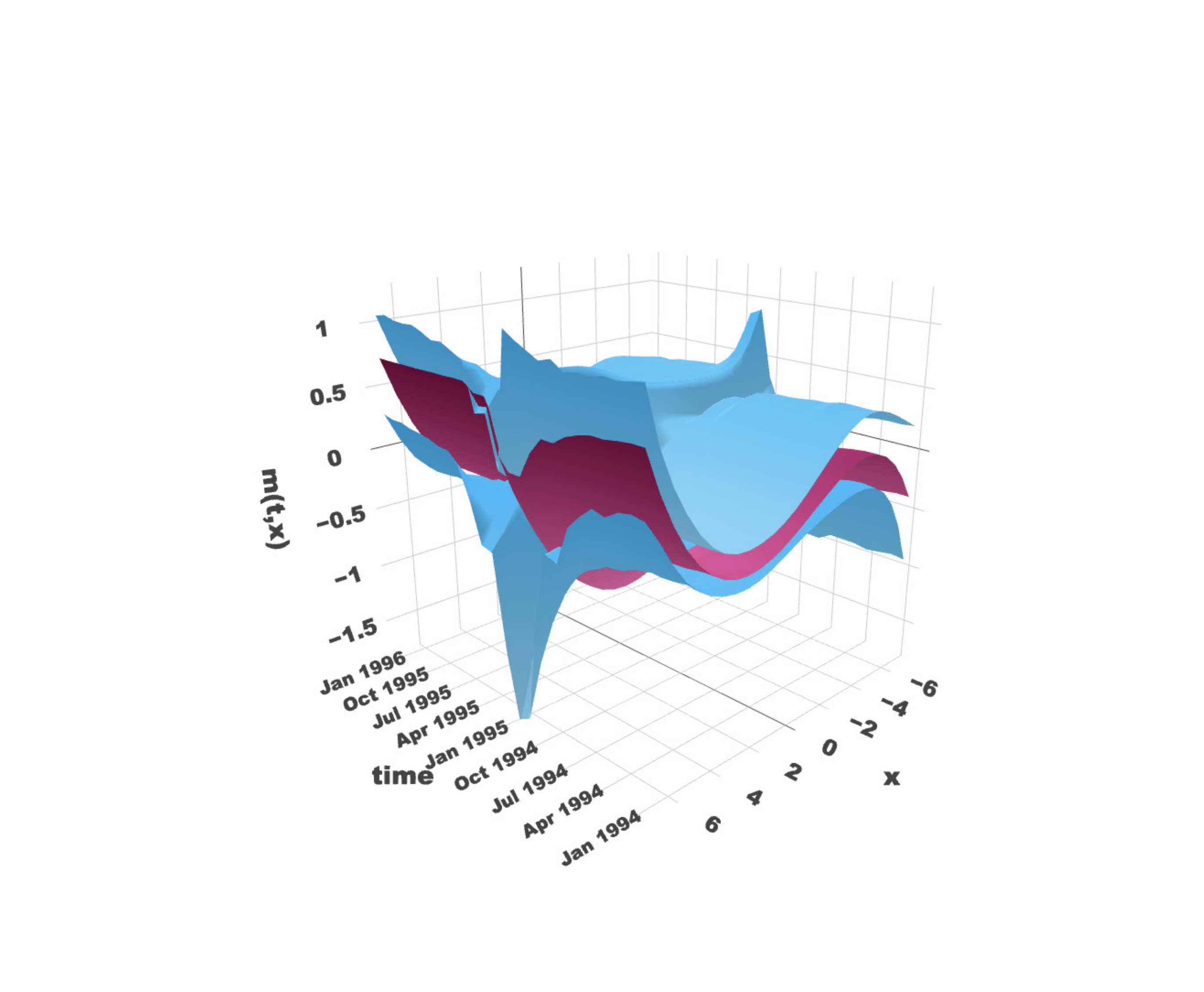}
\end{subfigure}
\begin{subfigure}{0.34\textwidth}
\includegraphics[width=6cm,height=5.4cm]{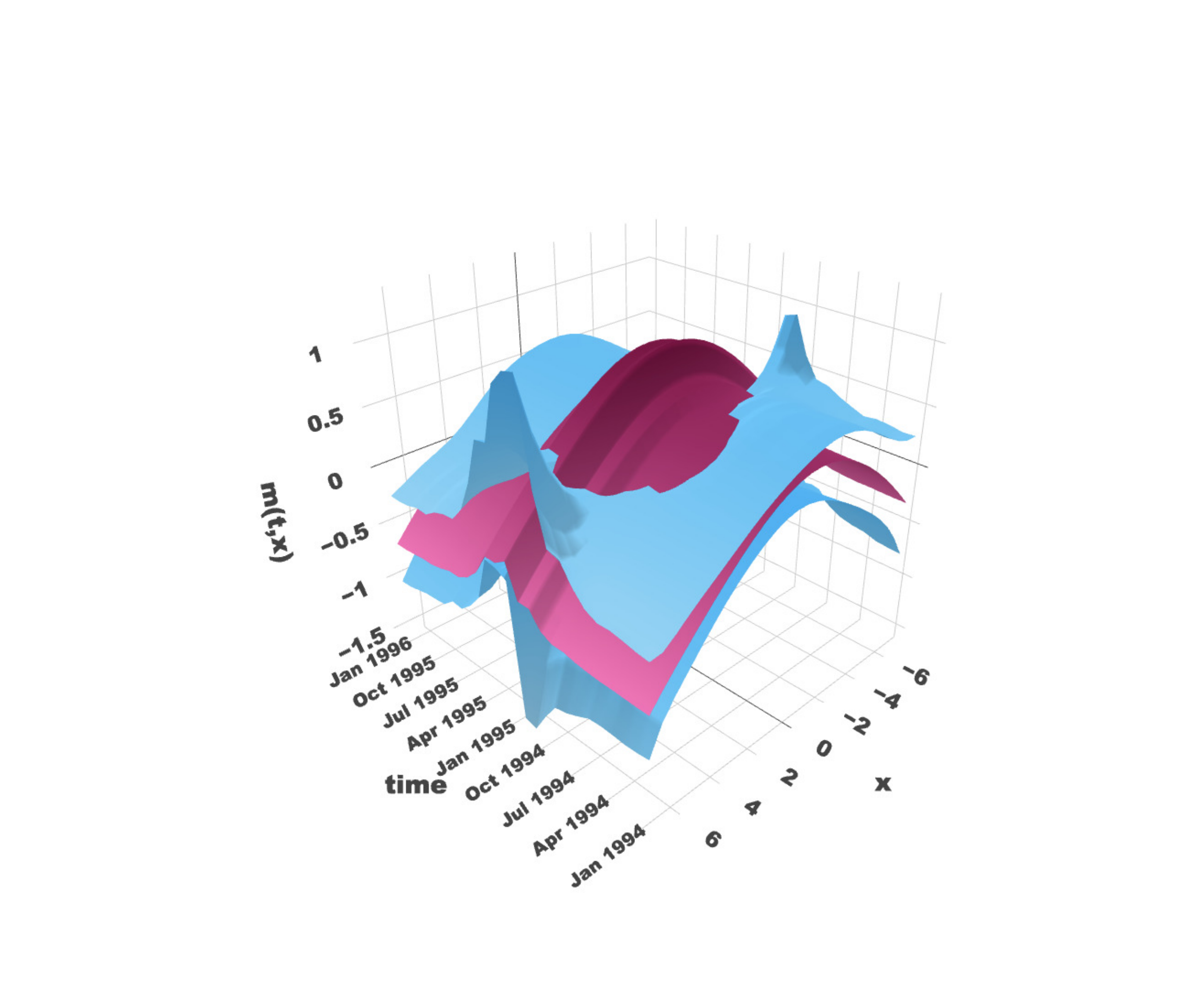}
\end{subfigure}
\vspace*{-0.4cm}
\caption{{ \footnotesize Time-homogeneity tests for all the covariates. From left to right, we have $j = 1, 2, 3$ (i.e., $ \operatorname{SO}_2 $, $ \operatorname{NO}_2 $, and dust). For each case, the estimated $m_j(t,x)$ (red surfaces) under the null hypothesis does not fit within the SCRs (blue surfaces), leading us to reject the null assumption of time-homogeneity.  
}  }
\label{fig_timevaryingrealdata}
\end{figure}

\begin{figure}[!ht]
\hspace*{-0.8cm}
\begin{subfigure}{0.34\textwidth}
\includegraphics[width=6cm,height=5.4cm]{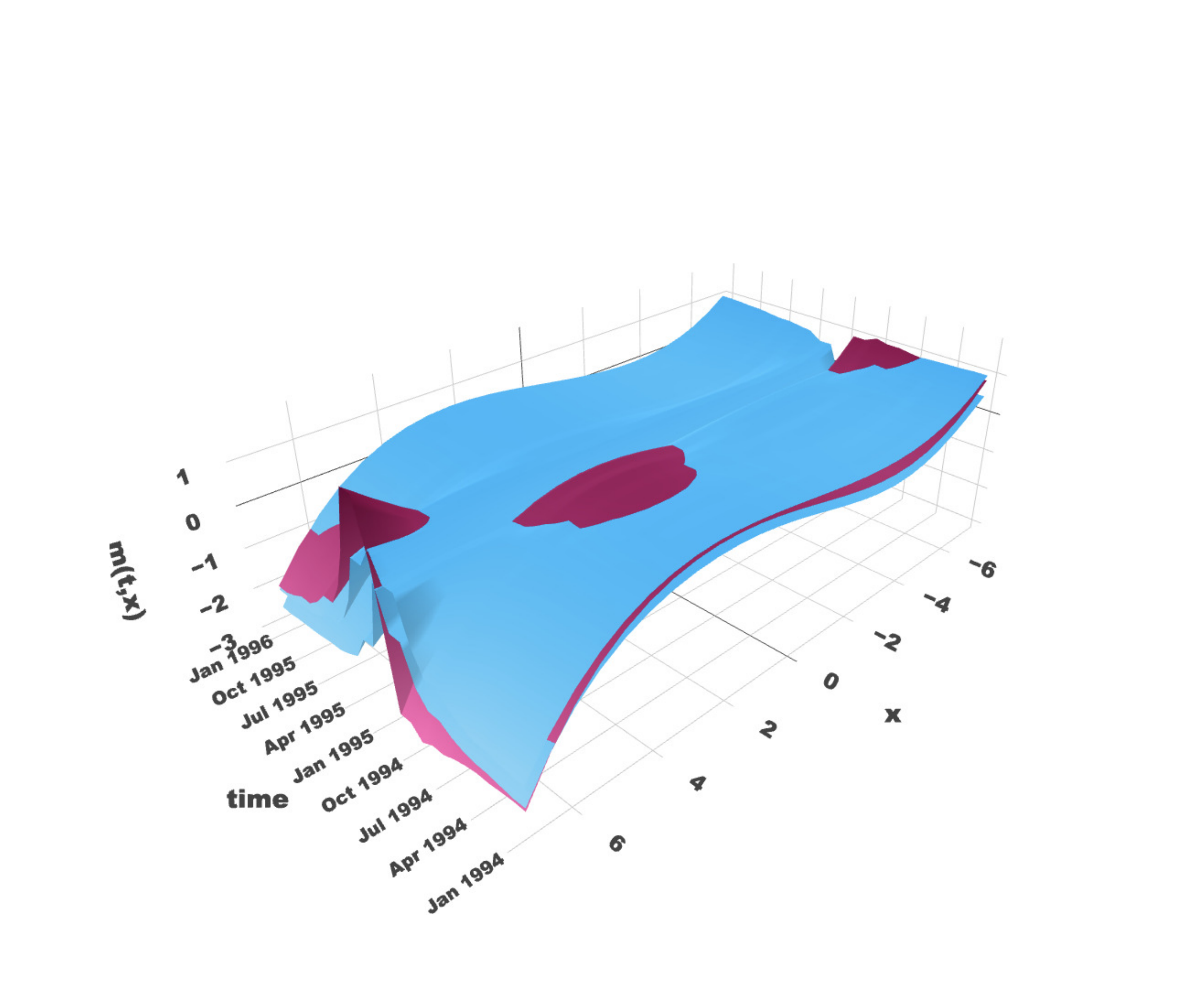}
\end{subfigure}
\begin{subfigure}{0.34\textwidth}
\includegraphics[width=6cm,height=5.4cm]{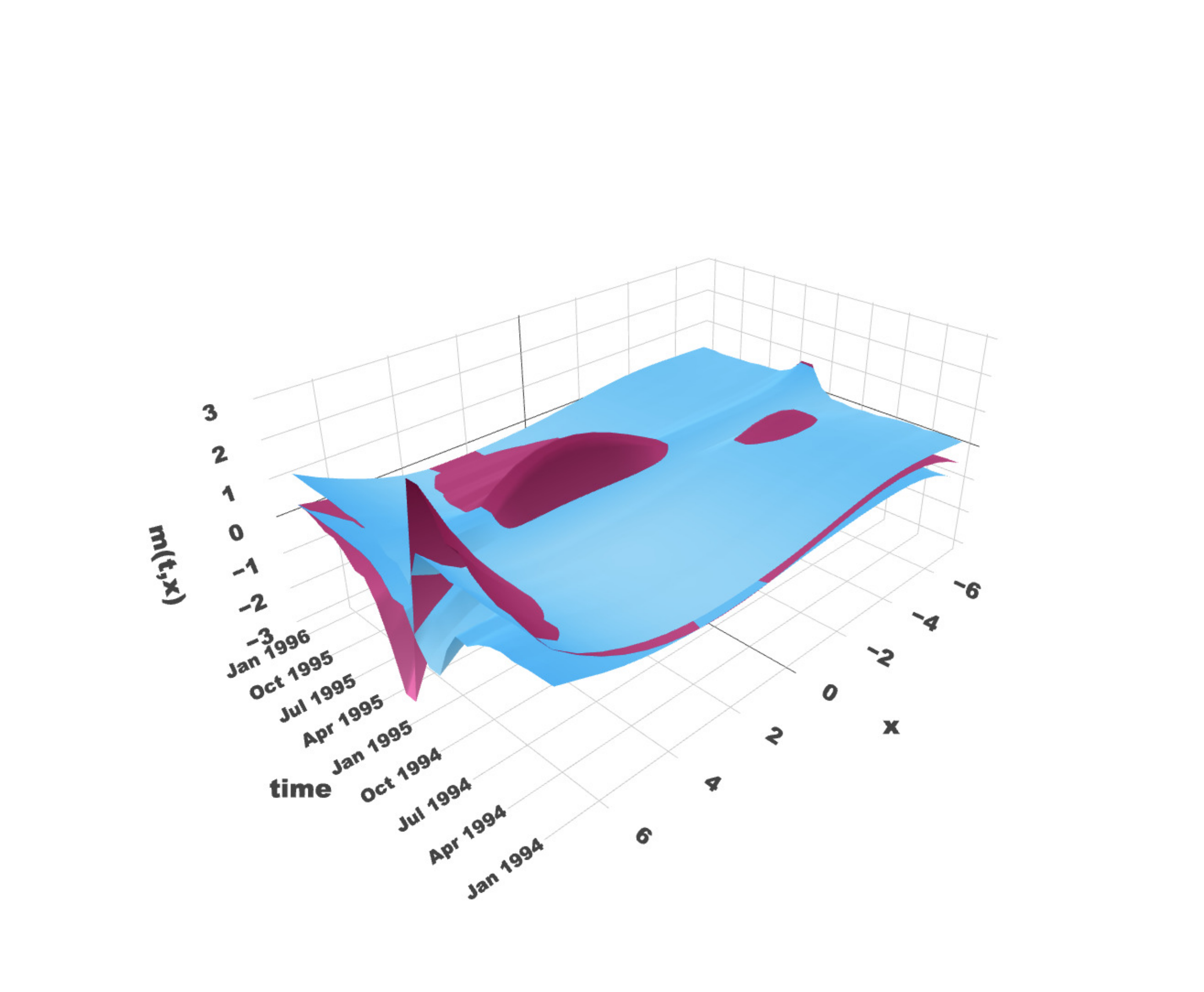}
\end{subfigure}
\begin{subfigure}{0.34\textwidth}
\includegraphics[width=6cm,height=5.4cm]{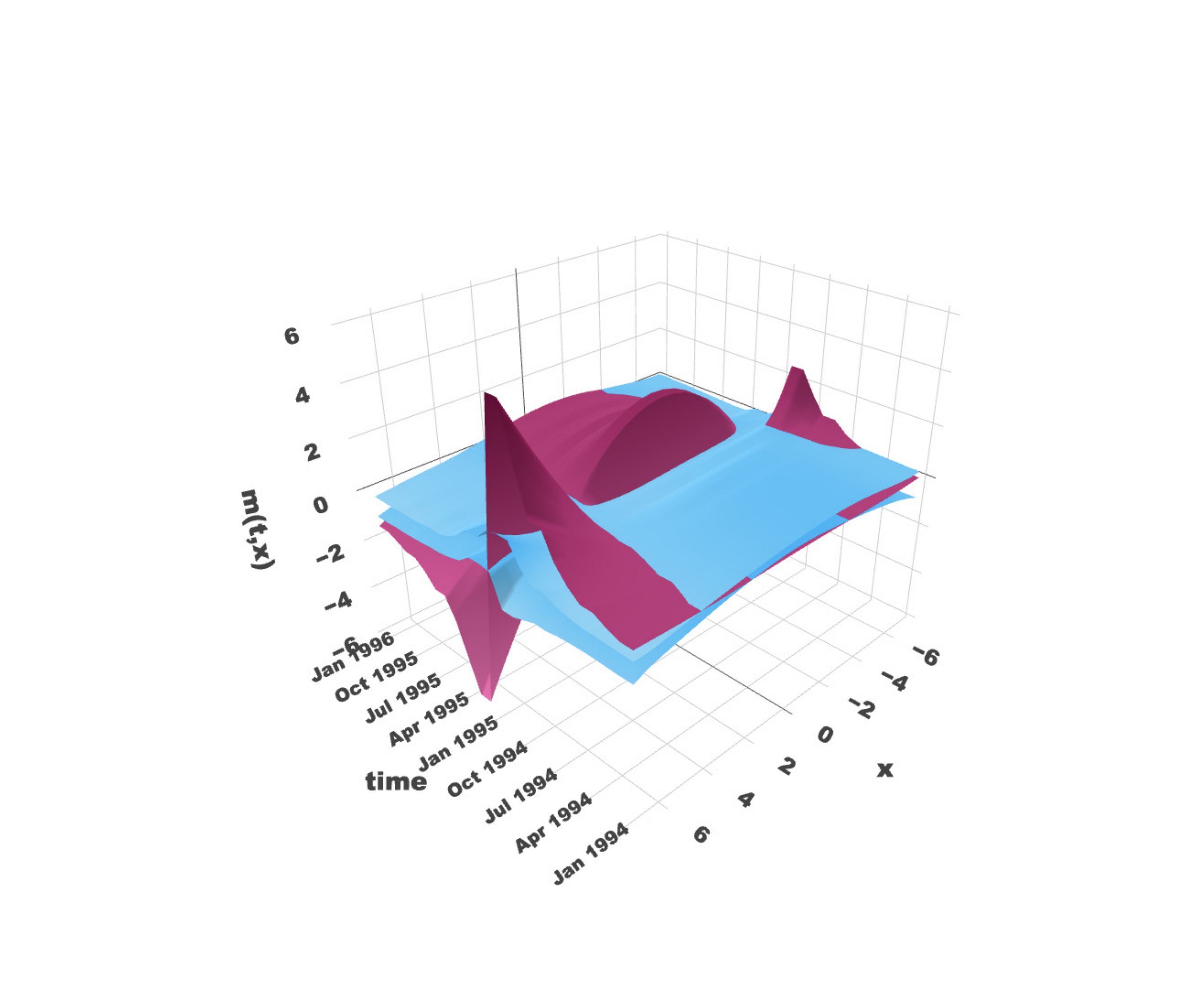}
\end{subfigure}
\vspace*{-0.4cm}
\caption{{ \footnotesize Separability tests for all the covariates. From left to right, we have $j = 1, 2, 3$ (i.e., $ \operatorname{SO}_2 $, $ \operatorname{NO}_2 $, and dust). For each case, the estimated $m_j(t,x)$ (red surfaces) under the null hypothesis does not fit within the SCRs (blue surfaces), leading us to reject the null assumption of separability.  
}  }
\label{fig_separablerealdata}
\end{figure}

\begin{figure}[!ht]
\hspace*{-0.8cm}
\begin{subfigure}{0.34\textwidth}
\includegraphics[width=6cm,height=5.4cm]{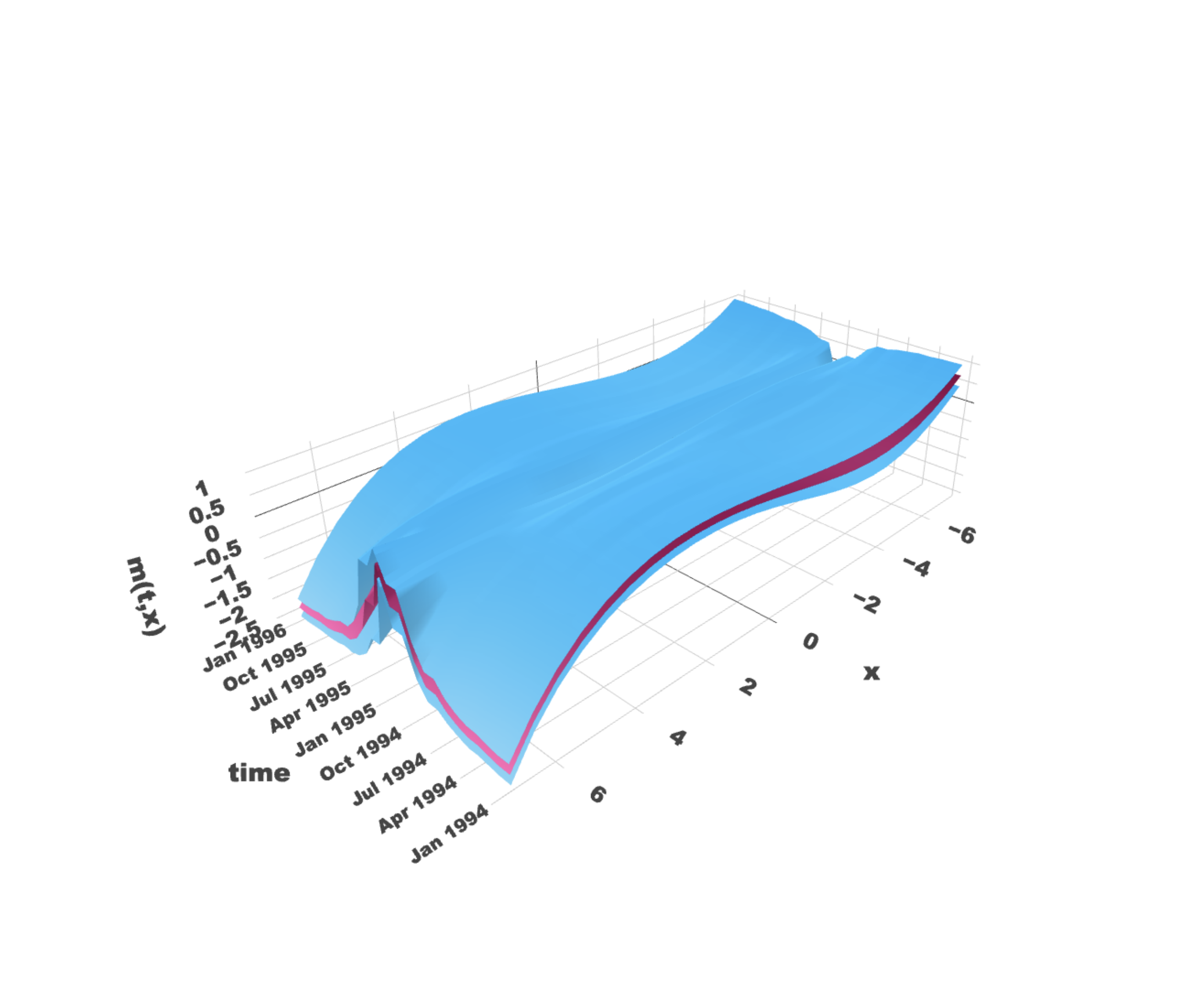}
\end{subfigure}
\begin{subfigure}{0.34\textwidth}
\includegraphics[width=6cm,height=5.4cm]{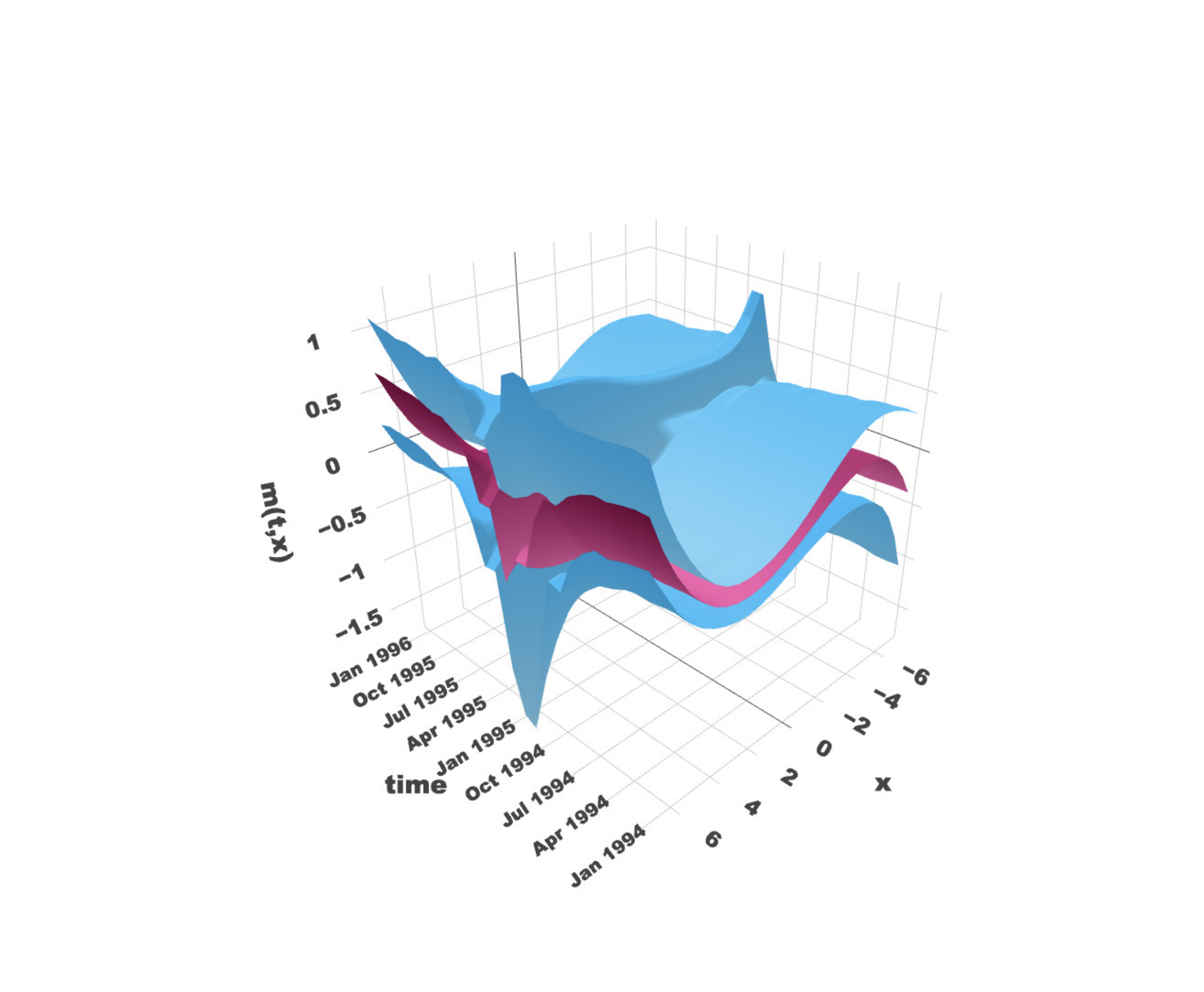}
\end{subfigure}
\begin{subfigure}{0.34\textwidth}
\includegraphics[width=6cm,height=5.4cm]{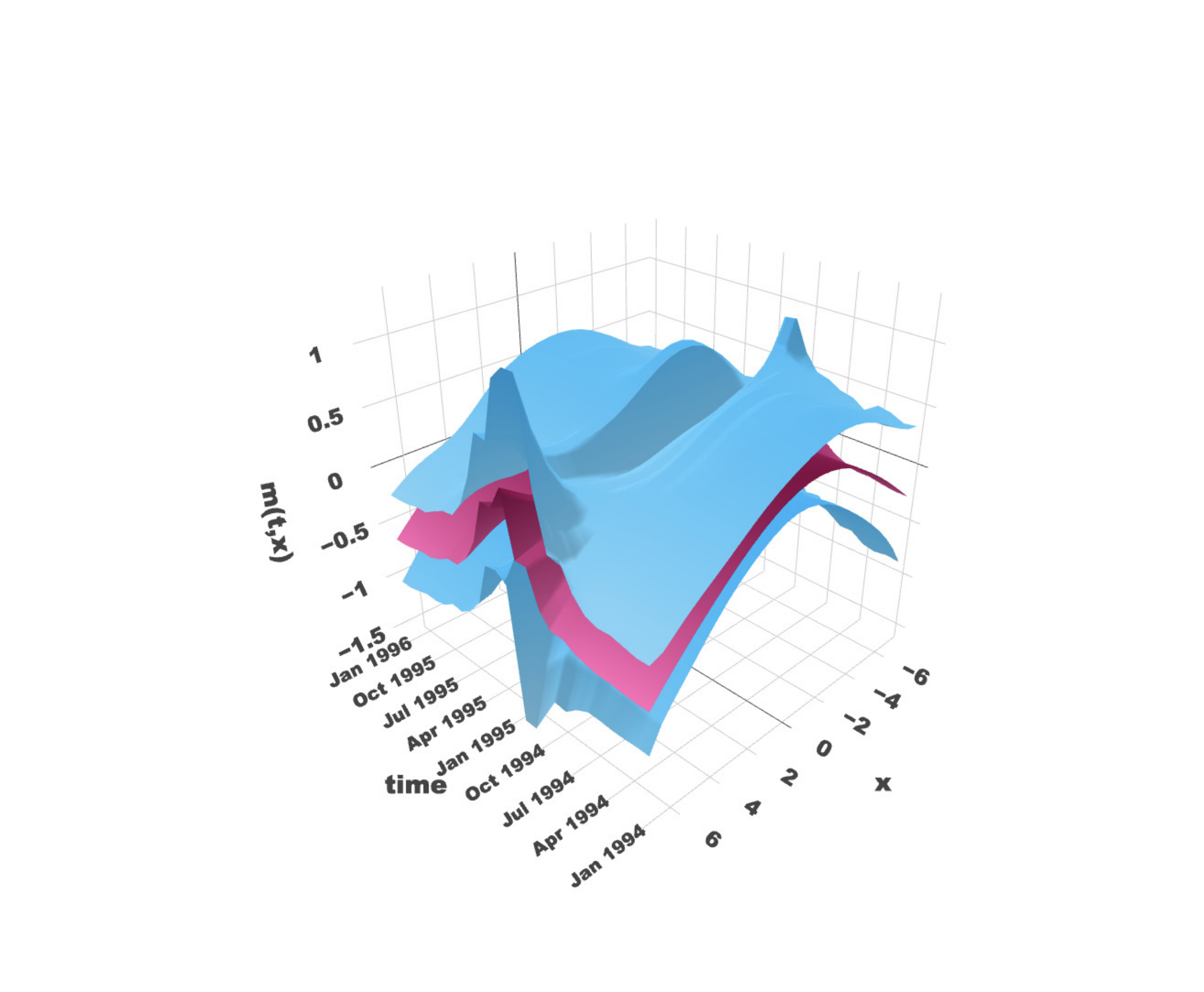}
\end{subfigure}
\vspace*{-0.4cm}
\caption{{ \footnotesize SCR estimation using our proposed model and method. } }
\label{fig_ourestimation}
\end{figure}

%

\subsection{Additional real data analysis}\label{supp_additionalrealdata}
In this subsection, we apply our method to study the shape of the monthly risk premium for S$\&$P 500 index as in \cite[Section 6.2]{CSW}. Let $\mu_t$ and $\sigma_t$ be the conditional mean and conditional volatility of the excess return of the market portfolio, respectively. Our goal is to uncover how $\mu_t$ relates to $\sigma_t.$ Especially, we want to understand the functional relation that $\mu_t=m(t, \sigma_t^2)$ for some unknown function $m.$ Such a problem has been studied extensively in the literature of financial economics. In the seminal work \cite{MR441271}, Merton modeled that $\mu_t=\gamma \sigma_t^2$ for some constant $\gamma$ representing the risk aversion of the agent. Recently, it has been argued in \cite{BOLLERSLEV2013409,DUKE, CHAIEB2021669, GHYSELS2014118, 10.1093/rfs/hhaa009} that a general nonlinear and time-varying function will facilitate the modeling and interpretation of the shape of the market risk premium. 

Very recently, in \cite{CSW}, the authors modeled the relationship between $\mu_t$ and $\sigma_t$ using $\mu_t=\rho(t) g(\sigma_t)$ for some unknown functions $\rho$ and $g.$ They estimated these functions using kernel methods. They found their estimation were consistent with real observations empirically and statistically analysis theoretically. Moreover, they provided some insights for the shape of the monthly risk premium for S$\&$P 500 Index based on their analysis. However, they did not justify why a separable structure was valid for modeling the risk premium. In what follows, we use our Algorithm \ref{alg:boostrapping} to justify the separability assumption, i.e., testing (\ref{eq_hotestseparability}). 

We follow the setting of \cite[Section 6.2]{CSW}. In the notation of our model (\ref{eq:model}), $m_0(t)=0$ and $r=1$ so that we do not have the issue of identifiability. We set $Y_t$ to be $\mu_t=r_{mt}-r_{ft}$ which is the excess return on the S$\&$P 500 Index calculated as the difference between the monthly continuously compounded cumulative return on the index minus the monthly return on 30-day Treasury Bills. For the conditional volatility   $\sigma_t,$ we used the realized volatility (RV) measure from Oxford Man Realized Library (\url{https://realized.oxford-man.ox.ac.uk/}). To be more precise, let $RV_t$ denote the daily annualized RV during the $t$th month. Then we obtain one-month-ahead RV forecast $\mathbb{E}_{t-1}(RV_t)$ using the HAR-RV model as in \cite{10.1093/jjfinec/nbp001}, which is our $X_t.$ Our data covers the period between 31 January 2001 and 31 December 2018. Under the nominal level $0.05,$ using the Daubechies-9 basis functions, we find that the $p$-value is $0.38$ so we can conclude that the separable structural assumption is reasonable for modeling the risk premium. This supports the analysis of \cite{CSW}.   

Finally, we estimate the functions  $\rho(\cdot)$ and $g(\cdot)$ using our sieve estimators. Due to separability, as discussed in Example \ref{exam_seperabletest}, we can estimate them separability instead of using the hierarchical sieve basis functions. We also provide the simultaneous confidence bands for these functions. In Figure \ref{fig_functionsplot} blow, we report the results. We find that when the RV is fixed, the risk premium changes in a nonlinear way of time which suggests the existence of a time-varying risk aversion. Moreover, for the $g(\cdot)$ function, it is more flat and seems to have a monotone pattern.  Our findings are consistent with those in \cite{CSW}; see Figure 7 and the discussion therein. 

\begin{figure}[!ht]
\hspace*{-1cm}
\begin{subfigure}{0.5\textwidth}
\includegraphics[width=6.4cm,height=5.5cm]{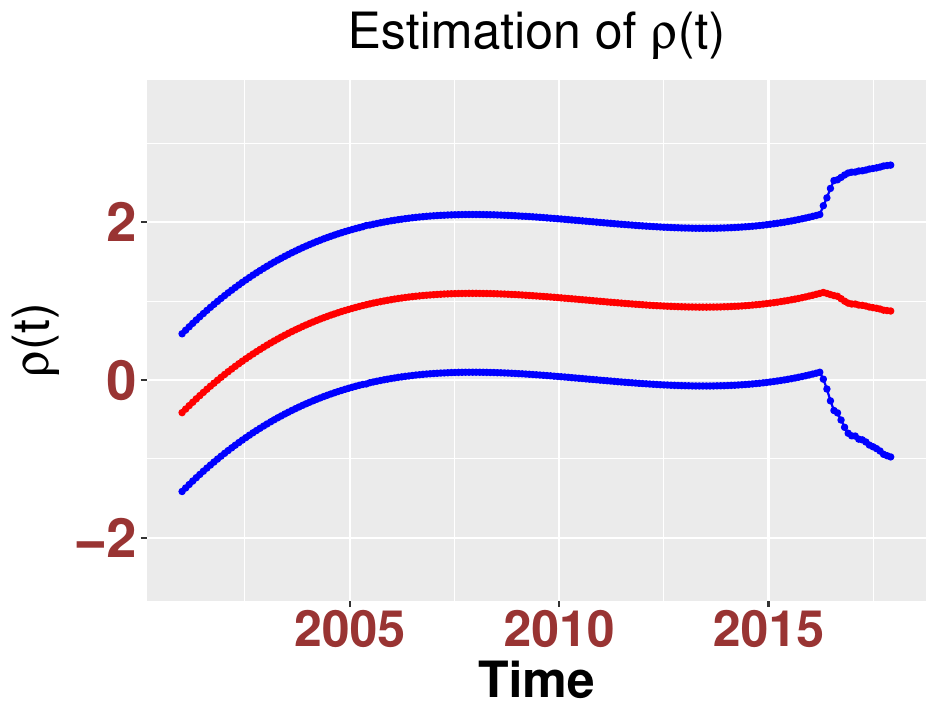}
\end{subfigure}
\hspace*{0.7cm}
\begin{subfigure}{0.5\textwidth}
\includegraphics[width=6.4cm,height=5.5cm]{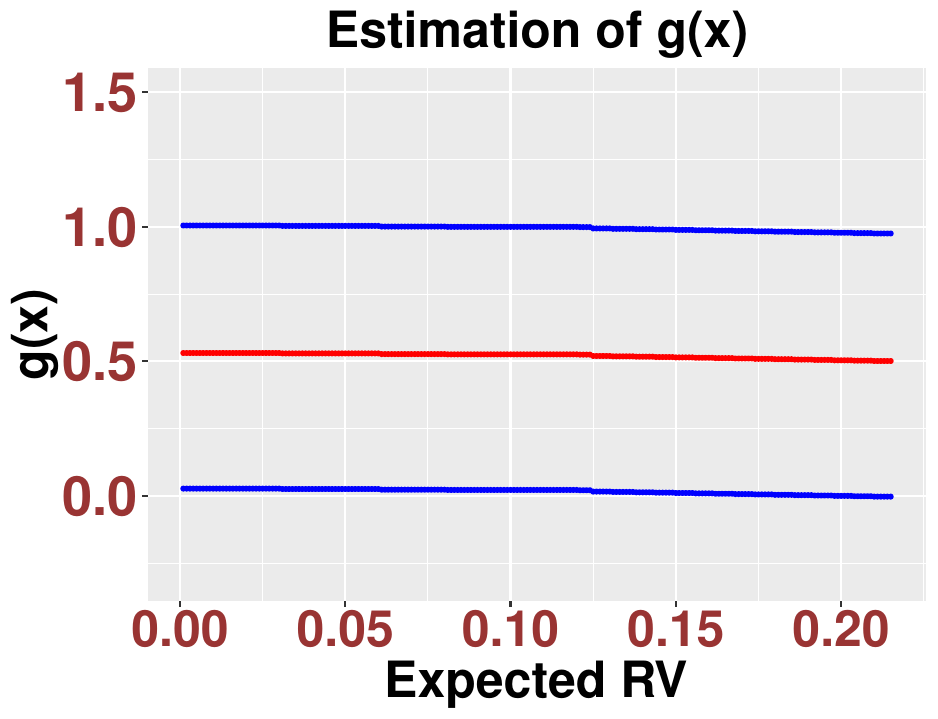}
\end{subfigure}
\caption{{ \footnotesize Estimated $\rho(t)$ and $g(x)$. The red lines are the estimated $\rho$ and $g$ and the blue lines the corresponding  simultaneous 95 $\%$ confidence bands constructed using the multiplier bootstrap. }  }
\label{fig_functionsplot}
\end{figure}

\section{Technical proofs}\label{sec_techinicalproof}

In this section, we provide the main technical proofs. 

\subsection{Uniform consistency of the proposed estimators: proof of Theorem \ref{thm_consistency}}\label{sec_consistencyproof}

In this subsection, we prove Theorem \ref{thm_consistency}. Till the end of the paper, for a positive definite $H,$ we denote $\|H \|_{\op}$ as the operator norm of $H,$ i.e., the largest eigenvalue of $H.$ 

\begin{proof}[\bf Proof of Theorem \ref{thm_consistency}] Since $r$ is finite, without loss of generality, we assume $r=1.$ For general $r,$ the proof can be modified verbatim with only additional notional complicatedness. Consequently, we can omit the subscript $j,$ so that $m \equiv m_{1}, X_i \equiv X_{1,i}, c \equiv c_1, d \equiv d_1, \chi(t) \equiv \chi_1(t)$ and $\widehat{m} \equiv \widehat{m}_{1}.$ In light of the identifiability assumption (\ref{eq_identiassum}) and the treatment above (\ref{eq_meanchisquare}),   the starting point of our proof is the following decomposition {
\begin{align}\label{eq_initialdecomposition}
|\widehat{m}(t,x)-m(t,x)| & \leq |\widehat{m}^*_{c,d}(t,x)-m_{c,d}(t,x)|+|\chi(t)-\widehat{\chi}(t)| \\
&+|m_{c,d}(t,x)-m(t,x)|.  \nonumber
\end{align} }  
Since the last term of the right-hand side of (\ref{eq_initialdecomposition}) can be bounded using Proposition \ref{thm_approximation}, it suffices to control the first two terms. We start with the first one.

Using the definitions of  (\ref{eq_firststeptruncation1}), (\ref{eq_firststeptruncation}) and (\ref{eq_proposedestimator}), by Cauchy-Schwarz inequality, we have that {
\begin{equation}\label{eq_ll1}
|\widehat{m}^*_{c,d}(t,x)-m_{c,d}(t,x)| \leq  | \widehat{\bm{\beta}}_1-\bm{\beta}_1 | \zeta,  
\end{equation}}
where $\zeta$ is defined in (\ref{eq_defnxic}).
Since $| \widehat{\bm{\beta}}_1-\bm{\beta}_1| \leq | \widehat{\bm{\beta}}-\bm{\beta} |,$ the rest of the proof leaves to control $\widehat{\bm{\beta}}-\bm{\beta}.$ Note that we denote $\bm{Y}=(Y_i)_{1 \leq i \leq n} \in \mathbb{R}^{n}$ and $\bm{\epsilon}=(\epsilon_i)_{1 \leq i \leq n} \in \mathbb{R}^{n}.$  Recall the notations around (\ref{eq_designmatrix}) and (\ref{eq_betaolsform}) and that
\begin{equation}\label{eq_betadifferenceexpression}
\widehat{\bm{\beta}}-\bm{\beta}=(W^\top W)^{-1} W^\top \bm{\epsilon}. 
\end{equation}
Therefore, we need to control $| \widehat{\bm{\beta}}-\bm{\beta} |$ which satisfies {
\begin{equation}\label{eq_betabound1}
| \widehat{\bm{\beta}}-\bm{\beta} | \leq \left| \frac{1}{\lambda_{\mathsf{p}}((n^{-1} W^\top W))} \right| | n^{-1} W^\top \bm{\epsilon} |, 
\end{equation} 
where we recall the notation (\ref{eq_defnp}). }


First, we establish the convergence results for the matrix $W^\top W. $ Recall that $W$ contains two parts $W_0$ and $W_1.$ Denote $W_{0(1)}(i)$ as the $i$th column of $W_{0(1)}.$ It suffices to analyze the terms $W_0(i)^\top W_0(j), W_0(i)^\top W_1(j)$ and $W_1(i)^\top W_1(j).$  Without loss of generality, we focus our discussion on the terms $$ W_1(1)^\top W_1(1) ,W_1(1)^\top W_2(1), W_0(1)^\top W_0(1), W_0(1)^\top W_0(2), W_0(1)^\top W_1(1). $$ 
The general cases of $i,j$ can be handled similarly.

We start with the term $W_1^\top(1) W_1(1).$ Let $W_1(1)=(\mathsf{w}_{12}, \cdots, \mathsf{w}_{1n})^\top.$ Recall (\ref{eq_designmatrix}). We have 
\begin{equation*}
\mathsf{w}_{1k}=\phi_1(t_k) \varphi_1(X_{k}), \ 1 \leq k \leq n.
\end{equation*}
Consequently, we have that
\begin{equation*}
\frac{1}{n} W_1(1)^\top W_1(1)=\frac{1}{n} \sum_{k=1}^n \phi_1(t_k)^2 \varphi_1(X_{k})^2.  
\end{equation*}
Note that by a discussion similar to Lemma \ref{lem_locallystationaryform}, we can show that $\mathfrak{h}_k:=\phi_1(t_k)^2 \varphi_1(X_{k})^2, 1 \leq k \leq n,$ is a sequence of locally stationary time series whose physical dependence measure satisfies that 
\begin{equation*}
\delta_{\mathfrak{h}}(j,q) \leq C \xi^2 \varsigma^2 j^{-\tau}, \ \text{for some constant} \ C>0,
\end{equation*}
where we again recall the notations in (\ref{eq_defnxic}).

Together with (1) of Lemma \ref{lem_concentration}, we obtain that
\begin{equation}\label{eq_firstusedimportantequation}
\left\|\frac{1}{n} W_1^\top(1) W_1(1)-\frac{1}{n}\sum_{k=1}^n \phi_1(t_k)^2 \mathbb{E} \varphi_1(X_{k})^2  \right\|_q \leq \frac{C \xi^2 \varsigma^2}{\sqrt{n}},
\end{equation}
where we used the assumption that $q>2.$ 
Moreover, by (2) of Lemma \ref{lem_concentration}, we have that 
\begin{equation}\label{eq_cccc}
\left| \frac{1}{n}\sum_{k=1}^n \phi_1(t_k)^2 \mathbb{E} \varphi_1(X_{k})^2 -\frac{1}{n} \sum_{k=1}^n \Pi_{11}(t_k) \right| \leq C \xi^2 \varsigma^2 n^{-1+\frac{2}{\tau+1}},
\end{equation}
where $\Pi_{11}(t)$ is the first entry of $\Pi(t)$ defined in (\ref{eq_longrunwitht}).
Further, by Lemma \ref{lem_intergralappoximation}, we 
have that
\begin{equation}\label{eq_ccc2}
\left|\frac{1}{n} \sum_{k=1}^n  \Pi_{11}(t_k)-\int_0^1  \Pi_{11}(t) \dd t \right|=\OO(n^{-2}). 
\end{equation}
We point out that $\int_0^1 \Pi_{11}(t)\dd t=\Pi_{11}$ which is the first entry of $\Pi$ as in (\ref{eq_longruncovariancematrix}). Combining with the above arguments, we conclude that 
\begin{equation*}
\left\| \frac{1}{n} W_1^\top(1) W_1(1)- \Pi_{11} \right\|_q \leq C \left( \frac{\varsigma^2\xi^2}{\sqrt{n}}+\frac{\varsigma^2 \xi^2 n^{\frac{2}{\tau+1}}}{n}\right).
\end{equation*}
Similarly, we can show that 
\begin{equation*}
\left\| \frac{1}{n} W_1^\top(1) W_2(1)- \Pi_{12} \right\|_q \leq C \left( \frac{\varsigma^2 \xi^2}{\sqrt{n}}+\frac{\varsigma^2 \xi^2 n^{\frac{2}{\tau+1}}}{n}\right).
\end{equation*}

Moreover, for $W_0(1)^\top W_0(1),$ we recall that $W_0(1)=(\phi_1(t_1), \cdots, \phi_1(t_n)).$ Consequently, we have that 
\begin{equation*}
\frac{1}{n} W_0(1)^\top W_0(1)=\frac{1}{n} \sum_{k=1}^n \phi_1^2(t_k).
\end{equation*}
By the orthonormality of the basis functions and Lemma \ref{lem_intergralappoximation}, we obtain that 
\begin{equation*}
\left|\frac{1}{n} W_0(1)^\top W_0(1)-1 \right|=\mathrm{O}(n^{-2}). 
\end{equation*}
Similarly, we can prove that 
\begin{equation*}
\left|\frac{1}{n} W_0(1)^\top W_0(2) \right|=\mathrm{O}(n^{-2}). 
\end{equation*}

Finally, we discuss $W_0(1)^\top W_1(1).$ By definition, we have that
\begin{equation*}
\frac{1}{n} W_0(1)^\top W_1(1)=\frac{1}{n} \sum_{k=1}^n \phi_1^2(t_k) \varphi_1(X_k). 
\end{equation*}
Then according to a discussion similar to (\ref{eq_firstusedimportantequation})--(\ref{eq_ccc2}), we readily obtain that 
\begin{equation*}
\left\| \frac{1}{n} W_0(1)^\top W_1(1)-\overline{\Pi}_{1 (c_0+1)} \right\|_q \leq C \left(\frac{ \xi^2 \varsigma}{\sqrt{n}}+\xi^2 \varsigma n^{-1+\frac{2}{\tau+1}} \right). 
\end{equation*}

Recall (\ref{eq_Pibar}). 
Together the above results with Lemma \ref{lem_circle}, we conclude that 
\begin{equation}\label{eq_consistencyconvergency}
\left\| \frac{1}{n} W^\top W-\overline{\Pi} \right\|_{\op} =\OO_{\mathbb{P}} \left( \mathsf{p}\left( \frac{\varsigma^2 \xi^2}{\sqrt{n}}+\frac{\xi^2 \varsigma^2 n^{\frac{2}{\tau+1}}}{n}\right) \right),
\end{equation}
where we used the fact that $q>2$ and recall that $\mathsf{p}$ is defined in (\ref{eq_defnp}). Together with Assumption \ref{assum_updc} and (\ref{eq_parameterassumption}), we find that {
\begin{equation}\label{eq_boundone}
\left | \frac{1}{\lambda_{\min}(n(W^\top W)^{-1})} \right |=\OO_{\mathbb{P}}(1). 
\end{equation}}

Second, we control the error term $\frac{W^\top \bm{\epsilon}}{n}.$ {Without loss of generality, we focus on the first entry of the matrix $\frac{W_1^\top \bm{\epsilon}}{n}$ as follows
\begin{equation}\label{eq_entrywiseexpansion}
\frac{\left[W_1^\top \bm{\epsilon} \right]_{1}}{n}=\frac{1}{n}\sum_{k=1}^n \phi_1(t_k) \varphi_1(X_{k}) \epsilon_{k}.
\end{equation}} 
By Lemma \ref{lem_locallystationaryform}, (1) of Lemma \ref{lem_concentration} and $q>2$, we conclude that 
\begin{equation*}
\left \|\frac{\left[W_1^\top \bm{\epsilon} \right]_{1}}{n} \right \|^2 \leq \frac{C\xi^2 \varsigma^2 }{n}.
\end{equation*} 
Consequently, for some constant $C>0,$ we have that 
\begin{equation}\label{eq_boundtwo}
\left \|\frac{W^\top \bm{\epsilon}}{n} \right \| \leq  \frac{C \xi \varsigma \sqrt{\mathsf{p}}}{\sqrt{n}}.   
\end{equation}
In summary,  by (\ref{eq_boundone}) and (\ref{eq_boundtwo}), in view of (\ref{eq_betabound1}), we obtain that {
\begin{equation}\label{eq_betabound}
|\widehat{\bm{\beta}}-\bm{\beta} |=\OO_{\mathbb{P}}(\xi \varsigma \sqrt{\frac{\mathsf{p}}{n}}). 
\end{equation} }
Together with (\ref{eq_ll1}), we have completed the proof of the control of the first part of the right-hand side of (\ref{eq_initialdecomposition}) . 

Then we proceed to control $|\chi(t)-\widehat{\chi}(t) |.$  Recall (\ref{eq_definitionvartheta}) and  $\bm{f}=\{\phi_{\ell_1}(t)\}_{\ell_1 \leq i \leq c} \otimes \{\vartheta_{\ell_2}(t)\}_{1 \leq \ell_2 \leq d}  \in \mathbb{R}^{cd}.$ Moreover, for the estimators $\{\widehat{\vartheta}_{\ell_2}(t)\},$ we denote
$\widehat{\bm{f}}=\{\phi_{\ell_1}(t)\}_{1 \leq \ell_1 \leq c} \otimes \{\widehat{\vartheta}_{\ell_2}(t)\}_{1 \leq \ell_2 \leq d}  \in \mathbb{R}^{cd}.$ By definition, we have that 
\begin{equation}\label{eq_widehatfnotation}
\chi(t)-\widehat{\chi}(t)=\bm{f}^\top (\bm{\beta}_1-\widehat{\bm{\beta}}_1)+\widehat{\bm{\beta}}_1^\top(\bm{f}-\widehat{\bm{f}}). 
\end{equation}

On the one hand, for some constant $C>0,$ using (\ref{eq_betabound}), we have that {
\begin{equation*}
| \bm{f}^\top (\bm{\beta}_1-\widehat{\bm{\beta}}_1) | \leq |\bm{f}| | \bm{\beta}-\widehat{\bm{\beta}}|=\OO_{\mathbb{P}}(|\bm{f}| \xi \varsigma \sqrt{\frac{\mathsf{p}}{n}}).  
\end{equation*}}
Recall $\phib(t)=\{\phi_{\ell_1}(t)\} \in \mathbb{R}^c.$ For $|\bm{f}|,$ using the property of Kronecker product, we have that
\begin{align}\label{eq_bbbbbb}
|\bm{f}|=|\phib| |\bm{\vartheta}(t)  | \leq \gamma  \iota,
\end{align}
where $\bm{\vartheta}(t) \in \mathbb{R}^d$ is the collection of $\{\vartheta_{\ell_2}\}$ and we recall (\ref{eq_defnxic}) again. Consequently, we have that{ 
\begin{equation*}
| \bm{f}^\top (\bm{\beta}_1-\widehat{\bm{\beta}}_1) | =\OO_{\mathbb{P}}(\xi \varsigma \gamma \iota \sqrt{\frac{\mathsf{p}}{n}}).  
\end{equation*}}

On the other hand, when $n$ is sufficiently large, together with (\ref{eq_betabound}),   we have that {
\begin{equation*}
|\widehat{\bm{\beta}}_1^\top(\bm{f}-\widehat{\bm{f}})|=\OO_{\mathbb{P}}(| \bm{\beta}_1| | \bm{f}-\widehat{\bm{f}}|).   
\end{equation*} }
Using the property of Kronecker product, we have that 
\begin{equation*}
| \bm{f}-\widehat{\bm{f}} |=|\phib| | \bm{\vartheta}-\widehat{\bm{\vartheta}} | \leq \gamma \sqrt{\sum_{\ell=1}^d \|\vartheta_\ell-\widehat{\vartheta}_\ell\|^2}. 
\end{equation*}
Using Lemma \ref{lem_locallystationaryform}, by a discussion similar to (\ref{eq_betabound}) (see \cite{DZ, DZ2} for more details), we readily have that { 
\begin{equation*}
|\vartheta_{\ell}(t)-\widehat{\vartheta}_{\ell}(t)|=\OO_{\mathbb{P}}\left(c_\ell^{-\mathsf{n}_{\ell}}+\gamma_\ell \xi \varsigma \sqrt{\frac{c_\ell}{n}}  \right). 
\end{equation*}}
This yields that {
\begin{equation}\label{eq_bbbbbb11111}
| \bm{f}-\widehat{\bm{f}} |=\OO_{\mathbb{P}}\left(\gamma \sqrt{\sum_{\ell=1}^d \left( c_\ell^{-n_\ell}+\gamma_\ell \xi \varsigma \sqrt{\frac{c_\ell}{n}} \right)^2} \right). 
\end{equation} }

Combining all the above results, one can readily obtain that {
\begin{equation}\label{eq_boundnnnnnn}
| \chi(t)-\widehat{\chi}(t) |=\mathrm{O}_{\mathbb{P}} \left( \xi \varsigma \gamma \iota \sqrt{\frac{\mathsf{p}}{n}} +\gamma \sqrt{\sum_{\ell=1}^d \left( c_\ell^{-n_\ell}+\gamma_\ell \xi \varsigma \sqrt{\frac{c_\ell}{n}} \right)^2}\right).
\end{equation}}
This concludes our proof of (\ref{eq_rate}). 

Finally, we deal with the time-varying intercept part. Similar to (\ref{eq_initialdecomposition}), we have that {
\begin{align*}
| \widehat{m}_0(t)-m_0(t) | \leq | \widehat{m}_{0,c}^*(t)-m_{0,c}(t) |+\left|  \widehat{\chi}(t)- \chi(t) \right|+|m_0(t)-m_{0,c}(t)|. 
\end{align*}}
Using (\ref{eq_betabound}) and a discussion similar to (\ref{eq_ll1}), we readily have that {
\begin{equation}\label{eq_adddadada}
 | \widehat{m}_{0,c}^*(t)-m_{0,c}(t) |=\mathrm{O}_{\mathbb{P}}\left( \gamma \xi \varsigma \sqrt{\frac{\mathsf{p}}{n}} \right). 
\end{equation} }
Together with (\ref{eq_boundnnnnnn}), we can then conclude the proof. 

\end{proof}

\subsection{{High dimensional point-wise Gaussian approximation and proof of Theorem \ref{thm_asymptoticdistribution}}}\label{sec_gassuianapproximation}
In this subsection, we prove Theorem \ref{thm_asymptoticdistribution}. The key ingredient is to establish the Gaussian approximation for the statistics $\mathsf{T}_{j}$ . The starting point is to rewrite the statistics $\mathsf{T}_{j}$ more explicitly in an affine form. We first prepare some notations. Without loss of generality, we only explain the proof for $r=1$ as  discussed in the beginning of the proof of Theorem \ref{thm_consistency} and omit the subscript $j$ in the sequel. Additionally, since our focus is on the regression functions, we assume $m_0(t) \equiv 0$. The general case can be handled in a similar manner; see Remark \ref{rem_meanzerodiscussions} below for more discussions. 

Recall (\ref{eq_widehatfnotation}). Denote 
\begin{equation*}
\widehat{\bm{r}}=\bm{b}-\widehat{\bm{f}}. 
\end{equation*} 
According to (\ref{eq_initialdecomposition}), under the assumption of (\ref{eq_parameterassumption}), Assumption \ref{assum_debiasassumption} and mean-zero assumption,
\begin{align} \label{eq_fundementalexpression}
\widehat{m}(t,x)-m(t,x)&= \bm{r}^\top (\widehat{\bm{\beta}}-\bm{\beta})+(\widehat{\bm{r}}^\top-\bm{r}^\top) \widehat{\bm{\beta}} \nonumber \\
& =\frac{1}{\sqrt{n}}\bm{r}^\top \Pi^{-1} (\frac{1}{\sqrt{n}} W^\top \bm{\epsilon})\left(1+\oo_{\mathbb{P}}(1)  \right),
\end{align}
where we recall (\ref{eq_verctorconstruction}), and we used (\ref{eq_betadifferenceexpression}) and  (\ref{eq_consistencyconvergency}). 
\begin{remark}\label{rem_meanzerodiscussions}
We note that if the mean is nonzero, (\ref{eq_fundementalexpression}) can be easily modified such that the arguments and discussions remain valid. More specifically, we have that  
\begin{align*} 
\widehat{m}(t,x)-m(t,x)&= \bm{r}^\top (\widehat{\bm{\beta}}_1-\bm{\beta}_1) \nonumber \\
& =\frac{1}{\sqrt{n}}\overline{\bm{r}}^\top \overline{\Pi}^{-1} (\frac{1}{\sqrt{n}} W^\top \bm{\epsilon})\left(1+\oo_{\mathbb{P}}(1)  \right),
\end{align*}
where $\overline{\Pi}$ is defined in (\ref{eq_Pibar}) and $\overline{\bm{r}}=(\bm{0}^\top_c \ \bm{r}^\top)^\top.$  Consequently, we can further write  
\begin{align*} 
\widehat{m}(t,x)-m(t,x)& =\frac{1}{\sqrt{n}}\overline{\bm{r}}^\top \overline{\Pi}^{-1} (\frac{1}{\sqrt{n}} W^\top \bm{\epsilon})\left(1+\oo_{\mathbb{P}}(1)  \right). 
\end{align*}
Then all the discussions below still apply. 
\end{remark}

Based on Remark \ref{rem_meanzerodiscussions}, in what follows, without loss of generality, we assume the time-varying intercept is always zero; see Remark \ref{rem_meannotzero} for more discussions on this point. Recall that the deterministic vector $\bm{l}$ and random vector $\bm{z}$ are defined as
\begin{equation}\label{eq_originaldefinition}
\bm{l}=\Pi^{-1} \bm{r}, \ \bm{z}=\frac{1}{\sqrt{n}} W^\top \bm{\epsilon}.
\end{equation}
In what follows, recall (\ref{eq_realU}), we denote $\bm{y}_i \in \mathbb{R}^p$ such that 
\begin{equation}\label{eq_decompositionkronecker}
\bm{y}_i=\widetilde{\mathbf{U}}(t_i, \mathcal{F}_i).
\end{equation}
Using (\ref{eq_designmatrix}), it is easy to see that  
\begin{equation}\label{eq_bmzgreatform}
\bm{z}=\frac{1}{\sqrt{n}}\sum_{i=1}^n \bm{y}_i.
\end{equation}
Therefore, in view of (\ref{eq_originaldefinition}),  the analysis of $\mathsf{T}$ reduces to study $\bm{l}^\top \bm{z}$ using (\ref{eq_bmzgreatform}) and (\ref{eq_decompositionkronecker}).

Based on the above discussion, we have seen that it suffices to establish the Gaussian approximation theory for the affine form $\bm{l}^\top \bm{z}.$ Before stating the Gaussian approximation results,  we first pause to record the covariance structure of $\bm{z}$ in Lemma \ref{lem_covarianceofz} below. It indicates that $\operatorname{Cov}(\bm{z})$ is close to $\Omega$ defined in (\ref{eq_longruncovariancematrix}). 
\begin{lemma}\label{lem_covarianceofz}
Let $\operatorname{Cov}(\bm{z})$ be the covariance matrix of $\bm{z}.$ Suppose the assumptions of Theorem \ref{thm_consistency} hold. Then we have that
\begin{equation*}
\|\operatorname{Cov}(\bm{z})-\Omega \|_{\op}=\OO\left(\frac{p\xi^2 \varsigma^2 n^{2/\tau}}{\sqrt{n}} +p\xi^2 \varsigma^2 n^{-1+\frac{2}{\tau+1}}\right).
\end{equation*}
\end{lemma}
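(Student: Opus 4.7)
The plan is to expand $\operatorname{Cov}(\bm{z})$ using (\ref{eq_bmzgreatform}), rearrange as a sum over lags, then approximate each lag's contribution by the corresponding summand of the target matrix $\Omega$, and finally convert the entrywise control to an operator-norm bound. Concretely, I first write
\begin{equation*}
\operatorname{Cov}(\bm{z})=\frac{1}{n}\sum_{i=1}^n\sum_{j=1}^n\operatorname{Cov}(\bm{y}_i,\bm{y}_j)=\frac{1}{n}\sum_{|\mathsf{s}|<n}\sum_{i:1\leq i,i+\mathsf{s}\leq n}\operatorname{Cov}(\bm{y}_i,\bm{y}_{i+\mathsf{s}}),
\end{equation*}
while the target $\Omega=\int_0^1\sum_{\mathsf{s}\in\mathbb{Z}}\operatorname{Cov}(\widetilde{\mathbf{U}}(t,\mathcal{F}_0),\widetilde{\mathbf{U}}(t,\mathcal{F}_\mathsf{s}))\,\dd t$ is an integrated doubly-infinite sum. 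So the total error splits into four controllable pieces: (i) the lag-truncation tail $|\mathsf{s}|>M$ for a cutoff $M$ to be chosen; (ii) a local-stationarity replacement of $\operatorname{Cov}(\bm{y}_i,\bm{y}_{i+\mathsf{s}})$ by $\operatorname{Cov}(\widetilde{\mathbf{U}}(t_i,\mathcal{F}_0),\widetilde{\mathbf{U}}(t_i,\mathcal{F}_\mathsf{s}))$; (iii) a Riemann-sum approximation of $\frac{1}{n}\sum_i(\cdot)$ by $\int_0^1(\cdot)\,\dd t$; and (iv) edge effects from the missing indices near $1$ and $n$.

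For piece (i), using Definition \ref{defn_physcialdependence} together with Assumption \ref{assum_physical}, a standard covariance inequality for physical-dependence processes yields $|\operatorname{Cov}(\widetilde{\mathbf{U}}(t,\mathcal{F}_0),\widetilde{\mathbf{U}}(t,\mathcal{F}_\mathsf{s}))|_{\max}\lesssim \xi^2\varsigma^2\,\mathsf{s}^{-\tau}$ entrywise (the $\xi\varsigma$ factors coming from the $\phi\otimes\varphi$ structure of $\widetilde{\mathbf{U}}$), so the tail $|\mathsf{s}|>M$ contributes at most $O(\xi^2\varsigma^2 M^{-\tau+1})$ per entry. For piece (ii), I invoke the stochastic Lipschitz bound (\ref{eq_slc}) applied to the coordinates of $\widetilde{\mathbf{U}}$ and a Cauchy--Schwarz step; since translating $t_{i+\mathsf{s}}\to t_i$ costs $O(|\mathsf{s}|/n)$, summing over $i$ and $|\mathsf{s}|\leq M$ gives entrywise error $O(\xi^2\varsigma^2 M^2/n)$. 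For piece (iii), Lemma \ref{lem_intergralappoximation} handles the Riemann replacement at smoothness cost $O(n^{-2})$ per fixed $\mathsf{s}$, contributing a negligible term after summing over $|\mathsf{s}|\leq M$. Piece (iv) gives a boundary correction of order $O(\xi^2\varsigma^2 M/n)$.

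Balancing the dominant errors from (i) and (ii) leads to the choice $M\asymp n^{1/(\tau+1)}$, giving a per-entry error of order $\xi^2\varsigma^2 n^{-1+2/(\tau+1)}$, which explains the second term in the claimed bound. The first term $p\xi^2\varsigma^2 n^{2/\tau}/\sqrt{n}$ will arise from a separate concentration-style control (analogous to Lemma \ref{lem_concentration}(2) applied to the truncated partial sums, with $M\asymp n^{1/\tau}$) used to handle the contribution of lags close to $M$. Finally, I pass from the entrywise bound to the operator norm via Lemma \ref{lem_circle} (a Gershgorin-type inequality), paying an unavoidable factor of $p$ since $\|A\|_{\op}\leq p\,\|A\|_{\max}$ in general.

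The main obstacle I anticipate is keeping the dependence on $p$ and on the basis-norm parameters $\xi,\varsigma$ sharp when bounding $|\operatorname{Cov}(\bm{y}_i,\bm{y}_{i+\mathsf{s}})|_{\max}$: the entries are products of four basis evaluations and two $\epsilon$'s, so I must use the physical-dependence tail jointly with moment bounds in $L^q$ (exploiting $q>2$) rather than naive Cauchy--Schwarz, in order to recover the $\mathsf{s}^{-\tau}$ decay instead of a weaker rate. Handling the tail of the lag sum carefully is what produces the clean $n^{2/\tau}/\sqrt{n}$ and $n^{-1+2/(\tau+1)}$ scalings, rather than a worse polynomial loss in $n$.
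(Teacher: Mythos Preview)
Your lag-by-lag decomposition is a legitimate route but not the one the paper takes. The paper splits $\operatorname{Var}(z_1)=\mathsf{E}_1+\mathsf{E}_2$ into the diagonal ($k_1=k_2$) and off-diagonal ($k_1\neq k_2$) contributions. For $\mathsf{E}_1$ it invokes Lemma~\ref{lem_concentration}(2) as a black box (a result from \cite{MR2827528} relating time-averaged second moments to the time-averaged long-run variance) to obtain $|\mathsf{E}_1-\Omega_{11}|=\OO(\xi^2\varsigma^2 n^{-1+2/(\tau+1)})$, which is the second term in the bound. For $\mathsf{E}_2$ it truncates at lag $n^{2/\tau}$, controls the far part by the covariance decay $|\mathbb{E}u_{k_11}u_{k_21}|\leq C\varsigma|k_1-k_2|^{-\tau}$, and bounds the near part via Lemma~\ref{lem_concentration}(1), giving $|\mathsf{E}_2|=\OO(\xi^2\varsigma^2 n^{2/\tau}/\sqrt{n})$, the first term. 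Gershgorin (Lemma~\ref{lem_circle}) supplies the factor $p$, just as you propose.

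Your pieces (i)--(iv) with $M\asymp n^{1/(\tau+1)}$ already deliver the entrywise rate $\xi^2\varsigma^2 n^{-1+2/(\tau+1)}$, and since $-1+2/(\tau+1)<-1/2+2/\tau$ for every $\tau>1$, this alone proves the lemma as stated. The ``separate concentration-style control'' you sketch for the first term is both unnecessary and misconceived: that term is an artifact of the paper's particular decomposition, in which the entire off-diagonal sum $\mathsf{E}_2$ is treated as a remainder rather than matched lag-by-lag against the cross-lag part of $\Omega$. Your direct approach simply does not generate it, and does not need to. One minor caveat: piece (ii) also requires replacing $\phi_\ell(t_{i+\mathsf{s}})$ by $\phi_\ell(t_i)$, which brings in $\sup_\ell\|\phi_\ell'\|_\infty$ rather than $\xi$; for the bases in Section~\ref{sec_sieves} this is at most a mild polynomial factor in $c$ and does not disturb the balance.
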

\begin{proof}
Denote $\bm{z}=(z_1, \cdots, z_p)^\top.$ We control the error entrywisely and focus on $[\operatorname{Cov}(\bm{z})]_{11}=\operatorname{Var}(z_1)$. Recall (\ref{eq_ddd}).  Note that 
\begin{equation}\label{eq_z1form}
z_1=\frac{1}{\sqrt{n}}\sum_{k=1}^n \phi_1(t_k) \varphi_1(X_{k}) \epsilon_{k}.
\end{equation}
Using the notation (\ref{eq_ddd}) and Lemma \ref{lem_locallystationaryform}, we conclude that 
\begin{align}\label{eq_decompositionvariance}
\operatorname{Var}(z_1) &=\frac{1}{n} \sum_{k_1, k_2=1}^n \phi_1(t_{k_1}) \phi_1(t_{k_2}) \mathbb{E} u_{k_1 1} u_{k_2 1} \nonumber \\
&=\frac{1}{n}\sum_{k=1}^n \phi_1(t_{k})^2 \mathbb{E} u^2_{k 1}+\frac{1}{n} \sum_{k_1 \neq k_2}^n \phi_1(t_{k_1}) \phi_1(t_{k_2}) \mathbb{E} u_{k_1 1} u_{k_2 1}:=\mathsf{E}_1+\mathsf{E}_2,
\end{align}

First, by a discussion similar to (\ref{eq_cccc}) and (\ref{eq_ccc2}), we  find that
\begin{equation}\label{eq_e1finalerror}
\left| \mathsf{E}_1-\Omega_{11}\right|=\OO\left(\xi^2 \varsigma^2 n^{-1+\frac{2}{\tau+1}}\right).
\end{equation}
Second, for $\mathsf{E}_2,$ by Lemma \ref{lem_locallystationaryform} and (3) of Lemma \ref{lem_concentration}, we find that for some constant $C>0$ 
\begin{equation}\label{eq_hhahahahahhaha}
|\mathbb{E} u_{k_1 1} u_{k_2 1}| \leq C \varsigma |k_1-k_2|^{-\tau}. 
\end{equation}
Consequently, we obtain that for some constant $C>0$ {
\begin{align}\label{eq_e2part}
\left| \mathsf{E}_2 \right| &= \frac{1}{n} \left(  \sum_{k_1=1}^n \phi_1(t_{k_1}) \sum_{|k_1-k_2| > n^{2/\tau}}  \phi_1(t_{k_2})  \mathbb{E} u_{k_1 1} u_{k_2 1} + \sum_{k_1=1}^n \phi_1(t_{k_1}) \sum_{|k_1-k_2| \leq n^{2/\tau}, k_1 \neq k_2}  \phi_1(t_{k_2})  \mathbb{E} u_{k_1 1} u_{k_2 1}\right) \nonumber \\
&\leq C\frac{ \varsigma \xi^2}{n}+C\left|\frac{1}{n} \sum_{k_1=1}^n \phi_1(t_{k_1}) \sum_{|k_1-k_2| \leq n^{2/\tau}, k_1 \neq k_2}  \phi_1(t_{k_2})  \mathbb{E} u_{k_1 1} u_{k_2 1} \right|,
\end{align}
where in the second step we used (\ref{eq_hhahahahahhaha}).} For each fixed $k_2,$ by a discussion similar to Lemma \ref{lem_locallystationaryform}, we see that $\{u_{k_1 1} u_{k_2 1}\}$ is a locally stationary time series whose physical dependence measure satisfies that $\delta(i,q) \leq C \varsigma^2 i^{-\tau}.$ {Recall that $\phi_1(t_{k_2})=\mathrm{O}(\xi).$ Similarly, for fixed $k_1,$ $\{\sum_{|k_1-k_2| \leq n^{2/\tau}, k_1 \neq k_2}  \phi_1(t_{k_2}) u_{k_1 1} u_{k_2 1} \}$ is also a locally stationary time series whose dependence measure $\delta'(i,q)$ can be bounded by  $C \xi n^{2/\tau} \varsigma^2 i^{-\tau}.$} As $q>2,$ applying (1) of Lemma \ref{lem_concentration}, we readily see that 
{ 
\begin{align*}
\left|\frac{1}{n} \sum_{k_1=1}^n \phi_1(t_{k_1}) \sum_{|k_1-k_2| \leq n^{2/\tau}, k_1 \neq k_2}  \phi_1(t_{k_2})  \mathbb{E} u_{k_1 1} u_{k_2 1} \right| \leq \frac{C \varsigma^2 \xi^2 n^{2/\tau}}{\sqrt{n}}.
\end{align*}
}
Together with (\ref{eq_e2part}), we arrive at
\begin{equation}\label{eq_e2finalerror}
|\mathsf{E}_2| \leq C \left( \varsigma \frac{\xi^2}{n} +\frac{ \varsigma^2 \xi^2 n^{2/\tau}}{\sqrt{n}}\right).
\end{equation} 
By (\ref{eq_decompositionvariance}), (\ref{eq_e1finalerror}) and (\ref{eq_e2finalerror}), we obtain that 
\begin{equation*}
|\operatorname{Var}(z_1)-\Omega_{11}|=\OO\left( \frac{\varsigma \xi^2}{n} +\frac{ \varsigma^2 \xi^2 n^{2/\tau}}{\sqrt{n}} +\xi^2 \varsigma^2 n^{-1+\frac{2}{\tau+1}}\right).
\end{equation*}
The general term $\operatorname{Cov}(z_i, z_j)$ can be analyzed similarly. We can therefore conclude our proof using Lemma \ref{lem_circle}. 
\end{proof} 

\begin{remark}\label{rem_meannotzero}
When the intercept is nonzero, we should construct $\bm{y}_i \in \mathbb{R}^{c_0+p}$ as follows 
\begin{equation*}
\bm{y}_i=(D(t_i,\mathcal{F}_i) \otimes \phib_0(t_i)^\top, \widetilde{\mathbf{U}}(t_i, \mathcal{F}_i)^\top)^\top. 
\end{equation*}
Then all the discussion will follow. Especially, one has $\bm{z}=(z_{01}, \cdots, z_{0c_0}, z_1, \cdots, z_p)^\top.$ Note that 
\begin{equation*}
z_{01}=\frac{1}{\sqrt{n}}\sum_{k=1}^n \phi_1(t_k) \epsilon_k.
\end{equation*}  
Based on this, we can obtain a similar similar to Lemma \ref{lem_covarianceofz} by replacing $\Omega$ with $\overline{\Omega},$ and $p$ with $\mathsf{p}$ in the error bound, as well as the fact that 
\begin{equation*}
\operatorname{Cov}(z_{01}, z_1)=\frac{1}{n} \sum_{k_1=1}^n \sum_{k_2=1}^n \phi_1(t_{k_1}) \phi_1(t_{k_2}) \operatorname{Cov}(\varphi_1(X_k),\epsilon_k)=0,
\end{equation*}  
where we used the assumption below (\ref{eq:model}) that $\mathbb{E}(\epsilon_k|X_k)=0.$
\end{remark}

Next, we state the Gaussian approximation results for the affine form of $\bm{z}$. Consider a sequence of centered Gaussian random vectors $\{\bm{g}_i\}_{i=1}^n$ in $\mathbb{R}^p$ which preserve the covariance structure of $\{\widetilde{\mathbf{U}}(t_i, \mathcal{F}_i)\}$ and denote
\begin{equation}\label{eq_wgaussiandefinition}
\bm{v}=\frac{1}{\sqrt{n}} \sum_{i=1}^n \bm{g}_i. 
\end{equation}
Recall $\bm{l}$ in (\ref{eq_originaldefinition}) which involves both $t \in [0,1]$ and $x \in \mathbb{R}$. The Gaussian approximation result, Theorem \ref{thm_gaussianapproximationcase}, provides controls on the following Kolmogorov distance {for any given fixed $t$ and $x$
\begin{equation*}
\mathcal{K}(\bm{z}, \bm{v}):=\sup_{\mathsf{x} \in \mathbb{R}} \left| \mathbb{P}( \bm{z}^\top \bm{l} \leq \mathsf{x})-\mathbb{P}(  \bm{v}^\top \bm{l} \leq \mathsf{x}) \right|,
\end{equation*}  
where we recall from (\ref{eq_originaldefinition}) that $\bm{l} \equiv \bm{l}(t,x).$
}

\begin{theorem}\label{thm_gaussianapproximationcase}
Suppose Assumptions \ref{assum_models}--\ref{assum_updc} hold. Moreover, we assume that 
for some large values $m$ and $\hd,$ (\ref{eq_onebound}) holds. Then we have that 
\begin{equation*}
\mathcal{K}(\zb, \bm{v})=\OO(\Theta). 
\end{equation*}
\end{theorem}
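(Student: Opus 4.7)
\bigskip

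\noindent\textbf{Proof proposal for Theorem \ref{thm_gaussianapproximationcase}.}

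The plan is to reduce the problem to a Gaussian approximation for a sum of bounded, $\fm$-dependent, centered random vectors (projected along $\bm{l}$), via two successive surgeries on $\zb$, and then to exploit a Lindeberg/Stein-type argument together with a smoothing (Slutsky) inequality to convert moment-type errors into Kolmogorov distance. Throughout, I will use that $\zb^\top \bm{l}=\frac{1}{\sqrt n}\sum_{i=1}^n \bm{y}_i^\top \bm{l}$ is a scalar affine functional of the vectors $\bm{y}_i=\widetilde{\Ub}(t_i,\FF_i)$, and that $|\bm{l}|=|\Pi^{-1}\bm{r}|\asymp (\gamma\iota+\zeta)$ up to the constant afforded by Assumption \ref{assum_updc}.

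First, I would truncate. Define $\bm{y}_i^{(\hd)}:=\bm{y}_i\,\mathbf{1}\{|\bm{y}_i|_\infty\le \hd\}-\mathbb{E}[\bm{y}_i\,\mathbf{1}\{|\bm{y}_i|_\infty\le \hd\}]$ (centered), and let $\zb^{(\hd)}:=n^{-1/2}\sum_i \bm{y}_i^{(\hd)}$. Markov's inequality, together with the entrywise moment bound $\|\,[\widetilde{\Ub}(t,\FF_0)]_k\,\|_q\le C\xi\varsigma$ that follows from Assumption \ref{assum_models}, yields $\|\zb-\zb^{(\hd)}\|_2\lesssim \sqrt{\mathsf{p}\,n}\,\xi^2\varsigma\,\hd^{-(q-1)}$ after projecting onto $\bm{l}$ (the factor $\sqrt{\mathsf{p}}$ coming from the dimension and the factor $(\gamma\iota+\zeta)$ from $|\bm{l}|$). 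Feeding this into a standard smoothing lemma (see, e.g., Lemma A.1 of \cite{MR3571252}) converts an $L^2$ error $\delta$ into a Kolmogorov-distance error of order $\delta^{2/3}$, matching the second term in $\Theta$.

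Second, I would perform an $\fm$-dependent approximation. Using Definition \ref{defn_physcialdependence}, replace $\bm{y}_i^{(\hd)}$ by its projection $\bm{y}_i^{(\hd,\fm)}:=\mathbb{E}[\bm{y}_i^{(\hd)}\mid \eta_{i-\fm},\ldots,\eta_i]$, which depends only on the most recent $\fm+1$ innovations and is hence $\fm$-dependent across $i$. Assumption \ref{assum_physical} gives $\|\bm{y}_i^{(\hd)}-\bm{y}_i^{(\hd,\fm)}\|_q\lesssim \varsigma\sum_{k>\fm}k^{-\tau}\lesssim \varsigma\,\fm^{-\tau+1}$ entrywise, and after summing over $n$ terms, projecting onto $\bm{l}$, and using $L^2$-to-Kolmogorov smoothing, this yields a contribution matching the fourth term in $\Theta$; the combined truncation $\times$ dependence interaction produces the third term.

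Third, for the resulting centered, $\fm$-dependent, $\hd$-bounded sum $S_n^{(\hd,\fm)}:=n^{-1/2}\sum_i (\bm{y}_i^{(\hd,\fm)})^\top\bm{l}$, I would compare it to its Gaussian analogue by a block-Lindeberg/Stein argument: partition $\{1,\ldots,n\}$ into consecutive blocks of size $\asymp \fm$, treat the block sums as nearly independent (with errors absorbed into the $\fm$-dependence step), and invoke a one-dimensional Berry--Esseen bound on the resulting near-independent block sums. Each block sum is bounded by $\fm\cdot \hd\cdot|\bm{l}|$ and has variance controlled by Lemma \ref{lem_covarianceofz}; combining the third moment $\sim\fm^3\hd^3$ per block with the number of blocks $n/\fm$ and the normalization, together with the price paid for converting $\bm{l}^\top\bm{y}_i$ into standardized form in dimension $\mathsf{p}$ (which contributes an extra $\mathsf{p}^{7/4}$ factor as in the high-dimensional bounds of \cite{MR3571252}), yields the leading term $\mathsf{p}^{7/4}\hd^3\fm^2/\sqrt n$. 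A final Gaussian-vs-Gaussian comparison step shows that the Gaussian vector with covariance $\operatorname{Cov}(\zb^{(\hd,\fm)})$ differs from $\bm{v}$ in covariance by a negligible amount (by Lemma \ref{lem_covarianceofz} and two analogous bounds for the truncated/$\fm$-dependent versions), which is then converted to Kolmogorov distance.

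The main obstacle will be in the third step: carefully accounting for the joint dependence of blocks without losing a factor of $\fm$, and tracking the optimal exponent on $\mathsf{p}$. In particular, projecting the $\mathsf{p}$-dimensional block sum onto $\bm{l}$ before applying Berry--Esseen must be done so that the variance lower bound from Assumption \ref{assum_updc} cancels the appropriate power of $|\bm{l}|$, leaving the residual $\mathsf{p}^{7/4}$ rather than a larger power. Balancing the competing errors then requires choosing $\fm$ and $\hd$ to minimize the four terms of $\Theta$ simultaneously, which is exactly why the statement leaves $\fm,\hd$ as free tuning parameters.
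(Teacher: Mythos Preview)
Your skeleton is right and matches the paper's five-step proof: an $\fm$-dependent surgery, a truncation surgery, a CLT step for the resulting bounded $\fm$-dependent sum, and a Gaussian-to-Gaussian covariance comparison (the paper performs the $\fm$-approximation before the truncation rather than after, but the order is immaterial). The $L^2$-to-Kolmogorov conversion at rate $\delta^{2/3}$ is also exactly what the paper does, via a Markov bound on $\{|\Delta|>\delta^{2/3}\}$ plus Gaussian anti-concentration on an interval of width $\delta^{2/3}$.

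The gap is in your third step. You propose to project onto $\bm{l}$ first and then run a one-dimensional block Berry--Esseen, attributing the factor $\mathsf{p}^{7/4}$ to ``the price paid for converting $\bm{l}^\top\bm{y}_i$ into standardized form in dimension $\mathsf{p}$.'' That is not where $\mathsf{p}^{7/4}$ comes from, and a genuine one-dimensional Berry--Esseen on the projected blocks would not produce it: the scalar third moment of a block is $\lesssim(\fm\sqrt{\mathsf{p}}\,\hd\,|\bm{l}|/\sqrt n)^3$ while the variance is $\asymp|\bm{l}|^2$, so the $|\bm{l}|$'s cancel and you would get a power $\mathsf{p}^{3/2}$, not $\mathsf{p}^{7/4}$. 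The paper instead stays in $\mathbb{R}^{\mathsf{p}}$ at this step: it observes that for each fixed $(t,x)$ the set $\{\bm{q}\in\mathbb{R}^{\mathsf{p}}:\bm{q}^\top\bm{l}\le\mathsf{x}\}$ is a half-space, hence convex, and applies the multivariate convex-set Gaussian approximation of Lemma~\ref{lem_mvnapp} directly to the $\fm$-dependent, $\hd$-truncated vectors $\bar{\bm{y}}_i^{\Mt}/\sqrt n$, with dependency neighborhoods $N_i,N_{ij},N_{ijk}$ of cardinality $\asymp\fm$ and $\beta=\sqrt{\mathsf{p}}\,\hd/\sqrt n$. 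The lemma's bound $d^{1/4}\,n\,\|\Sigma^{-1/2}\|^3\beta^3\,n_1(n_2+n_3/d)$ then gives $\mathsf{p}^{1/4}\cdot\mathsf{p}^{3/2}\hd^3\fm^2/\sqrt n=\mathsf{p}^{7/4}\hd^3\fm^2/\sqrt n$ on the nose. Your hybrid description (``one-dimensional Berry--Esseen \emph{together with} high-dimensional bounds of \cite{MR3571252}'') conflates two incompatible mechanisms; to recover $\Theta$ as stated you should keep the problem $\mathsf{p}$-dimensional at the CLT step and invoke Lemma~\ref{lem_mvnapp} via the half-space embedding.
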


Theorem \ref{thm_gaussianapproximationcase} establishes the {Gaussian approximation} for the affine form by controlling its Kolmogorov distance with its Gaussian counterpart. We point that (\ref{eq_onebound}) is a mild assumption and can be easily satisfied as discussed below Theorem \ref{thm_asymptoticdistribution}.  Before proving Theorem \ref{thm_gaussianapproximationcase}, we first show how it implies Theorem \ref{thm_asymptoticdistribution}.

\begin{proof}[\bf Proof of Theorem \ref{thm_asymptoticdistribution}] As before, without loss of generality, we assume $m_0(t) \equiv 0$ and focus on the case $r=1$ and omit the subscript $j$ for the statistic $\mathsf{T}$ and other related quantities. Under the assumption of (\ref{eq_onebound}), we find that Theorem \ref{thm_gaussianapproximationcase} holds for $\mathcal{K}(\bm{z}, \bm{v}).$ Together with   (\ref{eq_fundementalexpression}) and (\ref{eq_originaldefinition}), by Theorem \ref{thm_consistency} and the assumption of (\ref{eq_assumptionerrorreduce}), we find that for any fixed $t$ and $x$
\begin{equation}\label{eq_t1c}
\frac{\sqrt{n}\mathsf{T}}{\sqrt{\operatorname{Var}(\bm{v}^\top \bm{l})}} \simeq \mathcal{N}(0,1). 
\end{equation}
Since $\bm{v}$ is Gaussian, we have that
\begin{equation}\label{eq_t1cc}
\bm{l}^\top \bm{v} \sim \mathcal{N}(0, \bm{l}^\top \operatorname{Cov}(\bm{v}) \bm{l}). 
\end{equation}
By the construction of $\bm{v},$ the assumption (\ref{eq_onebound}) and  Lemma \ref{lem_covarianceofz}, we conclude that
\begin{equation}\label{eq_t1ccc}
\| \operatorname{Cov}(\bm{v})-\Omega \|_{\op}=\oo(1). 
\end{equation}
Recall (\ref{eq_defhtx}). We can conclude our proof of (\ref{eq_thmonepartone}) using (\ref{eq_t1c}), (\ref{eq_t1cc}) and (\ref{eq_t1ccc}).  
\end{proof}

Finally, we proceed to prove Theorem \ref{thm_gaussianapproximationcase}. Its proof relies on the device of $\fm$-dependent approximation, the technique of suitable truncation and Lemma \ref{lem_mvnapp} which states a Gaussian approximation result on the convex set.  We prepare  some notations in the beginning. For $\bm{y}_i, i \geq 1,$ in (\ref{eq_decompositionkronecker}) and any nonnegative integer $\fm \geq 0,$ we define the so-called $\fm$-approximation of $\bm{y}_i$ as
\begin{equation}\label{eq_dependencesequence}
\bm{y}_i^\Mt=\mathbb{E}(\widetilde{\mathbf{U}}(t_i, \mathcal{F}_i)|\sigma(\eta_{i-\fm}, \cdots, \eta_i)),  \ \fm \geq 0,
\end{equation}
where $\sigma(\eta_{i-\fm}, \cdots, \eta_i)$ is the natural sigma-algebra generated by the sequence of random variables.  
Corresponding to (\ref{eq_bmzgreatform}), we denote
\begin{equation*}
\bm{z}^\Mt=\frac{1}{\sqrt{n}}\sum_{i=1}^n \bm{y}_i^\Mt. 
\end{equation*}
Next, for a given truncation level $\hd>0,$ we define the truncated version of $\bm{y}_i^\Mt$ and $\bm{z}^\Mt$ following 
\begin{equation}\label{eq_truncation}
\overline{\mathbf{U}}^\Mt(t_i, \mathcal{F}_i)=\mathbf{U}^\Mt(t_i, \mathcal{F}_i) \mathbf{1}(\mathbf{U}^\Mt(t_i, \mathcal{F}_i) \leq \hd), \ \bar{\bm{y}}^\Mt_i=\overline{\widetilde{\mathbf{U}}}^\Mt(t_i, \mathcal{F}_i), \ \bar{\bm{z}}^\Mt=\frac{1}{\sqrt{n}}\sum_{i=1}^n \bar{\bm{y}}^\Mt_i, 
\end{equation}
where the operation of truncation is applied entrywisely and $\overline{\widetilde{\mathbf{U}}}^\Mt(t_i, \mathcal{F}_i)$ is constructed in the same way as (\ref{eq_realU}) using $\overline{\mathbf{U}}^\Mt(t_i, \mathcal{F}_i)$. Moreover, define $\{\bar{\bm{g}}^{\Mt}_i\}$ as the sequence of Gaussian random vectors which preserve the covariance structure of $\{\bar{\bm{y}}_i^\Mt\}$ whose Gaussian part is the same as $\bm{v}$ as in (\ref{eq_wgaussiandefinition}) and let
\begin{equation*}
\bar{\bm{v}}^\Mt=\frac{1}{\sqrt{n}}\sum_{i=1}^n \bar{\gb}_i^\Mt.
\end{equation*}


%
%
%

\begin{proof}[\bf Proof of Theorem \ref{thm_gaussianapproximationcase}] The starting point is the following triangle inequality 
{
\begin{align}\label{eq_decomposition}
\mathcal{K}(\zb, \bm{v})& =\sup_{ \mathsf{x} \in \mathbb{R}} \left|   \mathbb{P}(  \bm{z}^\top \bm{l} \leq \mathsf{x})-\mathbb{P}(  \bm{v}^\top \bm{l} \leq \mathsf{x})  \right| \nonumber \\
& \leq \mathcal{K}(\zb, \zb^\Mt)+\mathcal{K}(\zb^\Mt, \bar{\zb}^\Mt)+\mathcal{K}(\bar{\zb}^\Mt, \bar{\bm{v}}^\Mt)+\mathcal{K}(\bar{\bm{v}}^\Mt, \bm{v}).
\end{align}}
It suffices to control every term of the right-hand side of (\ref{eq_decomposition}). 
In what follows, we provide the detailed  arguments for the controls following a five-step strategy.

\vspace{3pt}
\noindent{\bf Step one. \ \underline{Show the closeness of $\operatorname{Cov}(\bm{z})$ and $\operatorname{Cov}(\bar{\zb}^\Mt).$}} Let $\zb=(z_1, \cdots, z_p)$, $\bar{\zb}^\Mt=(\bar{z}_1, \cdots, \bar{z}_p)$ and $\zb^\Mt=(z_1^\Mt, \cdots, z_p^\Mt).$ Note that
\begin{equation}\label{eq_tridecomposition}
\left\|\operatorname{Cov}(\bm{z})-\operatorname{Cov}(\bar{\zb}^\Mt) \right\|_{\op} \leq \left\| \operatorname{Cov}(\zb)-\operatorname{Cov}(\zb^\Mt) \right\|_{\op}+ \left\| \operatorname{Cov}(\bar{\zb}^\Mt)-\operatorname{Cov}(\zb^\Mt) \right\|_{\op},
\end{equation}
where we recall again that $\|\cdot \|_{\op}$ is the operator norm of the given positive definite matrix. 

First, we control the first term of the right-hand side of (\ref{eq_tridecomposition}). Observe that for all $1 \leq i \leq p,$
\begin{equation}\label{eq_variancedecomposition}
\operatorname{Var}(z_i)-\operatorname{Var}(z^\Mt_i)=\mathbb{E}(z_i)^2-\mathbb{E}(z^\Mt_i)^2+\mathbb{E}(z_i-z^\Mt_i) \mathbb{E}(z_i+z^\Mt_i). 
\end{equation} 
Without loss of generality, we focus on the case $i=1$ and still keep the subscript $i$ without causing any further confusion. By (\ref{eq_z1form}), Lemmas \ref{lem_locallystationaryform} and  \ref{lem_mdependent}, 
we readily obtain that for some constant $C>0,$
\begin{equation*}
\mathbb{E}(|z_i-z_i^\Mt|^q)^{2/q} \leq C \xi^2 \Theta^2_{\fm,q} \leq C \xi^2 \varsigma^2 \fm^{-2\tau+2},  
\end{equation*} 
where $\Theta_{\fm,q}=\sum_{k=\fm}^{\infty} \delta_u(k,q)$ (recall (\ref{eq_deltaxdefinition})). Consequently, as $q>2,$ by (1) of Lemma \ref{lem_collectionprobineq}, we readily obtain that for some constant $C_1>0$
\begin{equation}\label{eq_control}
\mathbb{E}|z_i-z^\Mt_i| \leq C_1 \xi \Theta_{\fm,q}, \  \mathbb{E}|z_i-z^\Mt_i|^2 \leq C_1 \xi^2 \Theta^2_{\fm,q}.
\end{equation} 
Note that (\ref{eq_variancedecomposition}) implies that 
\begin{equation}\label{eq_decompositionv}
\left| \operatorname{Var}(z_i)-\operatorname{Var}(z_i^\Mt)\right| \leq \sqrt{\mathbb{E}|z_i-z_i^\Mt|^2 \mathbb{E}|z_i+z_i^\Mt|^2}+\mathbb{E}|z_i-z_i^\Mt| \mathbb{E}|z_i+z_i^\Mt|,
\end{equation}
where we used (2) of Lemma \ref{lem_collectionprobineq}. Moreover, by Lemmas \ref{lem_locallystationaryform} and \ref{lem_concentration}, using the definition (\ref{eq_z1form}), we find that for some constant $C_2>0$
\begin{equation}\label{eq_bound}
\mathbb{E}|z_i|< C_2 \xi \varsigma, \ \mathbb{E}|z_i|^2< C_2 \xi^2 \varsigma^2.
\end{equation} 
Together with (\ref{eq_control}), we see that
\begin{equation*}
\mathbb{E}|z_i+z_i^\Mt| \leq C_2 \xi \varsigma+ C_1 \xi \Theta_{\fm,q}.
\end{equation*}
Furthermore, using (\ref{eq_bound}), we see that
\begin{align*}
 \mathbb{E}|z_i+z_i^\Mt|^2 & \leq 2 \mathbb{E}|z_i|^2+2 \mathbb{E}|z_i^\Mt|^2 \\
 & \leq 2C_2 \xi^2 \varsigma^2 +2\mathbb{E}|z_i|^2+2\mathbb{E}|z_i-z_i^\Mt| \mathbb{E} |z_i+z_i^\Mt| \\
 & \leq 4 C_2 \xi^2 \varsigma^2+2 C_1 \xi\Theta_{\fm,q}(C_2\xi \varsigma+C_1 \Theta_{\fm,q}).
\end{align*}
Together (\ref{eq_control}) and (\ref{eq_decompositionv}), we conclude that for some constant $C>0,$
\begin{equation*}
\left| \operatorname{Var}(z_i)-\operatorname{Var}(z_i^\Mt)\right| \leq C \xi^2 \varsigma \Theta_{\fm,q}. 
\end{equation*}
Similarly, we can show that for all $1 \leq i,j \leq p$
\begin{equation*}
\left| \operatorname{Cov}(z_i, z_j)-\operatorname{Cov}(z_i^\Mt, z_j^\Mt) \right| \leq C \xi^2 \varsigma \Theta_{\fm,q}. 
\end{equation*}
Consequently, by Lemma \ref{lem_circle}, we obtain that for some constant $C_1>0$
\begin{equation}\label{eq_partonecontrol}
\left\| \operatorname{Cov}(\zb)-\operatorname{Cov}(\zb^\Mt)  \right\|_{\op} \leq C p \xi^2 \varsigma \Theta_{\fm,q} \leq C_1 p \xi^2 \varsigma^2 \fm^{-\tau+1}, 
\end{equation}
where in the second inequality we used (\ref{eq_transferbound}). 

Second, we control the second term on the right-hand side of (\ref{eq_tridecomposition}). Recall (\ref{eq_truncation}). We point out that $\bar{z}^\Mt_i \neq z_i^\Mt \mathbf{1}(z_i \leq \hd)$ in general. For notational simplicity, we denote $\mathbf{U}^\Mt(t_i, \mathcal{F}_i)=(u_{i1}^\Mt, \cdots, u_{i,d}^\Mt).$ Using the construction (\ref{eq_dependencesequence}), with an argument similar to Lemma \ref{lem_locallystationaryform}, we can show that $\{\mathbf{U}^\Mt(t_i, \mathcal{F}_i)\}$ is a locally stationary time series whose physical dependence measure also satisfies $\delta(j,q) \leq C \varsigma j^{-\tau},$ for some constant $C>0$. Analogous to the discussion for the first term, we focus on the case $i=1$ and keep the subscript $i.$ Observe that
\begin{equation}\label{eq_firstorder}
|\mathbb{E}z_i^\Mt-\mathbb{E}\bar{z}_i^\Mt|=\left|\frac{1}{\sqrt{n}} \sum_{k=1}^n \phi_1(t_k) \mathbb{E}(u^\Mt_{k,i} \mathbf{1}(|u_{k,i}^\Mt|>\hd)) \right|.
\end{equation} 
Moreover, using Chebyshev's inequality (c.f. (3) of Lemma \ref{lem_collectionprobineq}), a discussion similar to (\ref{eq_bound}),  and the fact that 
\begin{equation*}
\mathbf{1}(|u_{k,i}^\Mt|>\hd) \leq \frac{|u_{k,i}^\Mt|^{q-1}}{\hd^{q-1}},
\end{equation*}
we find that for some constant $C>0,$
\begin{equation*}
|\mathbb{E}(u^\Mt_{k,i} \mathbf{1}(|u_{k,i}^\Mt|>\hd))| \leq C \xi \varsigma \hd^{-(q-1)}.  
\end{equation*}
Consequently, we see that 
\begin{equation}\label{eq_expectationcontrolone}
|\mathbb{E}z_i^\Mt-\mathbb{E}\bar{z}_i^\Mt| \leq C \sqrt{n} \xi^2 \varsigma  \hd^{-(q-1)}.
\end{equation}
Similarly, we can show that 
\begin{equation}\label{eq_secondorder}
\left| \mathbb{E}(z_1^\Mt)^2-\mathbb{E} (\bar{z}_1^\Mt)^2 \right| \leq C n \xi^4 \varsigma^2  \hd^{-(q-2)}.
\end{equation}
By (\ref{eq_firstorder}), (\ref{eq_secondorder}) and the fact $|\bar{u}_{k,i}^\Mt| \leq \hd,$ using a decomposition similar to (\ref{eq_variancedecomposition}), we conclude that for some
constant $C>0,$
\begin{equation*}
\left| \operatorname{Var}(\bar{z}^\Mt_i)-\operatorname{Var}(z_i^\Mt)\right| \leq C n \xi^4 \varsigma^2 \hd^{-(q-2)}. 
\end{equation*}
Similarly, we can show that for all $1 \leq i,j \leq p$
\begin{equation*}
\left| \operatorname{Cov}(\bar{z}^\Mt_i, \bar{z}^\Mt_j)-\operatorname{Cov}(z_i^\Mt, z_j^\Mt) \right| \leq  C n \xi^4 \varsigma^2 \hd^{-(q-2)}. 
\end{equation*}
Consequently, by Lemma \ref{lem_circle}, we obtain that
\begin{equation}\label{eq_parttwocontrol}
\left\| \operatorname{Cov}(\bar{\zb}^\Mt)-\operatorname{Cov}(\zb^\Mt)  \right\|_{\op} \leq C pn \xi^4 \varsigma^2 \hd^{-(q-2)}. 
\end{equation}

In summary, by (\ref{eq_tridecomposition}), (\ref{eq_partonecontrol}) and (\ref{eq_parttwocontrol}), we find that for some constant $C>0$
\begin{equation}\label{eq_covarianceclose}
\left\|\operatorname{Cov}(\bm{z})-\operatorname{Cov}(\bar{\zb}^\Mt) \right\|_{\op}  \leq C \varsigma^2 \left( p \xi^2 \fm^{-\tau+1}+ pn \xi^4 \hd^{-(q-2)} \right). 
\end{equation}
This completes Step one. Furthermore, since the assumption (\ref{eq_onebound}) ensures that the right-hand side of (\ref{eq_covarianceclose}) is of order $\mathrm{o}(1),$ we can conclude that the covariance matrices are close. 

\vspace{3pt}
\noindent {\bf Step two. \underline{Control the third term  $\mathcal{K}(\bar{\zb}^\Mt, \bar{\bm{v}}^\Mt).$}} In this step, we apply Lemma \ref{lem_mvnapp} to control the associated term $\bar{\bm{z}}^{\mathtt{M}}$. {Note that $\bm{l} \equiv \bm{l}(t,x)$.} For {$\mathsf{x} \in \mathbb{R},$} denote {
\begin{equation}\label{eq_defnset}
\mathsf{A}_{\mathsf{x}} \equiv \mathsf{A}_{\mathsf{x}}(t,x):=\left\{ \bm{q} \in \mathbb{R}^p: \bm{q}^\top \bm{l} \leq \mathsf{x}   \right\}. 
\end{equation} }
It is easy to see that {$\mathsf{A}_{\mathsf{x}}$} is a convex set in $\mathbb{R}^p$. Denote $\mathcal{A}$ as the collection of all the convex sets in $\mathbb{R}^p.$ Then we have that {
\begin{equation*}
\mathcal{K}(\bar{\zb}^\Mt, \bar{\bm{v}}^\Mt) \leq \sup_{\mathsf{A} \in \mathcal{A}} \left| \mathbb{P}(\bar{\zb}^\Mt \in \mathsf{A})-\mathbb{P}(\bar{\bm{v}}^\Mt \in \mathsf{A} ) \right|.
\end{equation*} }
Therefore, it suffices to bound the right-hand side of the above equation using Lemma \ref{lem_mvnapp}. 
We verify conditions of Lemma \ref{lem_mvnapp}. {Let $\mathcal{N}_{i, \fm}:=\{i-\fm, i-\fm+1, \cdots, i, i+1, \cdots, i+\fm\}$} and denote {
\begin{equation*}
N_i:=\mathcal{N}_{i,\fm}, \  N_{ij}:= \mathcal{N}_{i, \fm} \cup \mathcal{N}_{j, \fm}, \ N_{ijk}:=\mathcal{N}_{i, \fm} \cup \mathcal{N}_{j, \fm} \cup \mathcal{N}_{k, \fm}. 
\end{equation*}}
By the constructions (\ref{eq_dependencesequence}) and (\ref{eq_truncation}), using the property of conditional expectation (c.f. \cite[Example 4.1.7]{probbook}), it is easy to see that the conditions of Lemma \ref{lem_mvnapp} are satisfied such that{
\begin{equation}\label{eq_setup}
n_1=\fm+1, \ n_2=2(\fm+1), \ n_3=3(\fm+1), \ \beta=\sqrt{p} \mathrm{h}. 
\end{equation} }
By Lemma \ref{lem_mvnapp}, we have that for some constant $C>0$
 \begin{equation*}
\mathcal{K}(\bar{\zb}^\Mt, \bar{\bm{v}}^\Mt)  \leq C p^{7/4} n^{-1/2} \| \Sigma^{-1/2} \|^3 \mathrm{h}^3\fm\left(\fm+\frac{\fm}{p}\right),
\end{equation*}  
where $\Sigma=\operatorname{Cov}(\bar{\zb}^\Mt).$  Together with Assumption \ref{assum_updc}, Lemma \ref{lem_covarianceofz}, (\ref{eq_covarianceclose}) and the assumption of (\ref{eq_onebound}),  we conclude that for some constant $C>0,$ 
 \begin{equation}\label{eq_steptworesults}
\mathcal{K}(\bar{\zb}^\Mt, \bar{\bm{v}}^\Mt)  \leq C p^{7/4} n^{-1/2}  \mathrm{h}^3 \fm^2.  
\end{equation}
Moreover, since the assumption of  (\ref{eq_onebound}) ensures that $p^{7/4} n^{-1/2}  \mathrm{h}^3 \fm^2=\mathrm{o}(1),$ we conclude that $\bar{\zb}^\Mt$ is asymptotically Gaussian. This completes Step two. 

\vspace{3pt}
\noindent {\bf Step three. \underline{ Control the fourth term  $\mathcal{K}(\bar{\bm{v}}^\Mt, \bm{v}).$}}  Decompose that {
\begin{equation}\label{eq_probabilitydecomposition}
\mathcal{K}(\bar{\bm{v}}^\Mt, \bm{v})=\sup_{\mathsf{x} \in \mathbb{R}} \left| \mathbb{P}( \bm{v}^\top \bm{l} \leq \mathsf{x} )-\mathbb{P} ( \bm{v}^\top \bm{l} \leq \mathsf{x}+\mathtt{D}(\bar{\bm{v}}^\Mt, \bm{v})) \right|,
\end{equation}}
where $\mathtt{D}(\bar{\bm{v}}^\Mt, \bm{v}))$ is defined as {
\begin{equation}\label{eq_differencedefinition}
\mathtt{D}(\bar{\bm{v}}^\Mt, \bm{v})= (\bm{v}-\bar{\bm{v}}^\Mt)^\top \bm{l}. 
\end{equation}}
Let $\Sigma_0=\operatorname{Cov}(\bm{z})$ and $\Sigma=\operatorname{Cov}(\bar{\bm{z}}^\Mt).$ Moreover, let $\mathbf{f}$ be some $p$-dimensional standard Gaussian random vector.  By construction, we can further write { 
\begin{equation}\label{eq_ddddd}
\mathtt{D}(\bar{\bm{v}}^\Mt, \bm{v})=\mathbf{f}_1^\top \bm{l}, \ \mathbf{f}_1:=((\Sigma_0^{1/2}-\Sigma^{1/2})\mathbf{f}). 
\end{equation}}
Using Cauchy-Schwarz inequality, we find that for some constant $C>0$ 
\begin{equation*}
\left\| \mathtt{D}(\bar{\wb}^\Mt, \wb) \right\| \leq C \| \Sigma_0-\Sigma \|_{\op} \left(\mathbb{E} \| \mathbf{f} \|^2_2  \right)^{1/2}  | \bm{l} |,
\end{equation*}
where {$| \bm{l} |$ is the $L_2$ norm of the vector $\bm{l}.$ } By Bernstein’s concentration inequality (c.f. (4) of Lemma \ref{lem_collectionprobineq}), we find that  for some constant $C>0,$
\begin{equation}\label{oldb1}
(\mathbb{E}\| \mathbf{f} \|_2^2)^{1/2} \leq C \sqrt{p}. 
\end{equation}
{Denote {
\begin{equation}\label{oldb212121}
\mathrm{b}:=\sup_{x \in \mathbb{R}, t \in [0,1]} | \bm{l} |. 
\end{equation}}
Moreover, by the definition (\ref{eq_defnxic}), Assumption \ref{assum_updc} as well as (\ref{eq_bbbbbb}), we find that for some constant $C>0$, we have that 
 \begin{equation}\label{oldb2asadadas}
\mathrm{b}\leq C (\zeta+\gamma \iota). 
\end{equation}
}
Together with (\ref{eq_covarianceclose}), we conclude that {
\begin{equation}\label{eq_bounddbarww}
\|\mathtt{D}(\bar{\bm{v}}^\Mt, \bm{v})\| \leq   C\sqrt{p} \left( p \xi^2 \varsigma^2 \fm^{-\tau+1}+ pn \xi^4 \varsigma^2 \hd^{-(q-2)} \right)\mathrm{b}:=\mathrm{ab}.
\end{equation}}
Under the assumption of (\ref{eq_onebound}), we have that $\mathrm{ab}=\mathrm{o}(1).$ {Since $\bm{v}$ is a Gaussian random vector, we have
\begin{equation*}
 \bm{v}^\top \bm{l} \sim \mathcal{N}(0, \bm{l}^\top \operatorname{Cov}(\bm{v}) \bm{l}). 
\end{equation*}
For notional simplicity, we denote 
\begin{equation}\label{eq_simplifiednotationone}
\mathsf{v}:= \bm{v}^\top \bm{l}, \ \text{and} \ \mathrm{w}:=\sqrt{\bm{l}^\top \operatorname{Cov}(\bm{v}) \bm{l}}. 
\end{equation}}

Moreover, given some large constant $C_1>0,$ we denote the following events  {
\begin{align}\label{eq_allprobabilityevents}
\mathcal{A}_1:=\{ \bm{v}^\top \bm{l} \leq \mathsf{x}\}, \  \mathcal{A}_2:=\{ \bm{v}^\top \bm{l} \leq \mathsf{x}+\mathtt{D}(\bar{\bm{v}}^\Mt, \bm{v})\}, \ \mathcal{A}_3:=\{ |\mathtt{D}(\bar{\bm{v}}^\Mt, \bm{v})| \leq  (\mathrm{ab})^{2/3}\}, 
\end{align} 
}
We also denote $\mathcal{A}_3^c$ as the complement of $\mathcal{A}_3.$ We can then write { 
\begin{align}\label{eq_decompositionhh}
|\mathbb{P}(\bm{v}^\top \bm{l} \leq \mathsf{x} ) & -\mathbb{P} ( \bm{v}^\top \bm{l} \leq \mathsf{x}+\mathtt{D}(\bar{\bm{v}}^\Mt, \bm{v}))|=|\left[\mathbb{E}(\mathbf{1}(\mathcal{A}_1)\mathbf{1}(\mathcal{A}_3))-\mathbb{E}(\mathbf{1}(\mathcal{A}_2)\mathbf{1}(\mathcal{A}_3)) \right] \nonumber \\
&+\left[\mathbb{E}(\mathbf{1}(\mathcal{A}_1)\mathbf{1}(\mathcal{A}^c_3))-\mathbb{E}(\mathbf{1}(\mathcal{A}_2)\mathbf{1}(\mathcal{A}^c_3)) \right]|:=|\mathsf{P}_1+\mathsf{P}_2| \leq |\mathsf{P}_1|+|\mathsf{P}_2|. 
\end{align}}
For $\mathsf{P}_2,$ by Markov inequality, combining with (\ref{eq_bounddbarww}),  we have that {
\begin{equation}\label{eq_p2control}
|\mathsf{P}_2| \leq 2 \mathbb{P}(\mathcal{A}_3^c)=\mathrm{O}((\mathrm{ab})^{2/3})=\mathrm{O} \left( \left[\sqrt{p} (\zeta+\gamma \iota) \left( p \xi^2 \varsigma^2 \fm^{-\tau+1}+ pn \xi^4 \varsigma^2 \hd^{-(q-2)} \right)  \right]^{2/3} \right),   
\end{equation} 
where we used (\ref{oldb2asadadas}).

For $\mathsf{P}_1,$ by definition, we have that 
\begin{align}\label{eq_totallawofprobabilityexpansion}
|\mathsf{P}_1|=|\mathbb{E}(\mathbf{1}(\mathcal{A}_1^c)\mathbf{1}(\mathcal{A}_2) \mathbf{1}(\mathcal{A}_3))|.
\end{align}
Clearly, if $\mathtt{D}(\bar{\bm{v}}^\Mt, \bm{v}) \leq 0,$ $\mathsf{P}_1=0.$ It suffices to consider that $\mathtt{D}(\bar{\bm{v}}^\Mt, \bm{v})>0.$ In this setting, using the convention (\ref{eq_simplifiednotationone}),  we denote 
\begin{equation}\label{eq_oooooo1212121}
\mathcal{A}_{123,o}:=\{\mathsf{x}< \bm{v}^\top \bm{l} \leq \mathsf{x}+(\mathrm{ab})^{2/3}\},
\end{equation}
and
\begin{equation}\label{eq_a123}
\mathcal{A}_{123}:=\{\mathsf{x}-(\mathrm{ab})^{2/3}< \bm{v}^\top \bm{l} \leq \mathsf{x}+(\mathrm{ab})^{2/3}\} \equiv \left\{\frac{\mathsf{x}-(\mathrm{ab})^{2/3}}{\mathrm{w}}< \frac{\mathsf{v}}{\mathrm{w}} \leq \frac{\mathsf{x}+(\mathrm{ab})^{2/3}}{\mathrm{w}}\right\}. 
\end{equation}
Moreover, we construct that 
\begin{equation*}
\mathsf{S}_{\mathsf{x}}:=\left\{\frac{\mathsf{v}}{\mathrm{w}} \leq \mathsf{x}\right\}. 
\end{equation*}
Clearly, we have that $\mathcal{A}_1^c \cap \mathcal{A}_2 \cap \mathcal{A}_3 \subset \mathcal{A}_{123,o} \subset \mathcal{A}_{123} \backslash \mathsf{S}_{\mathsf{x}}.$ Consequently, the rest of the proof leaves to control 
the probability $\mathbb{P}(\mathcal{A}_{123} \backslash \mathsf{S}_{\mathsf{x}})$ using (5) of Lemma \ref{lem_intergralappoximation} {for the standard Gaussian random variable $\frac{\mathsf{v}}{\mathrm{w}}$}.
%
Then we can use (5) of Lemma \ref{lem_intergralappoximation} to get 
\begin{equation}\label{eq_probabilitya123control}
\mathbb{P}(\mathcal{A}_{123} \backslash \mathsf{S}_{\mathsf{x}} ) = \mathbb{P}( \mathcal{A}_{123} \backslash \mathsf{S}_{\mathsf{x}} )=\mathrm{O}((\mathrm{ab})^{2/3}/\mathrm{w}).
\end{equation}
Together with Assumption \ref{assum_updc} and the fact that $|\bm{l}|$ is bounded from below, 
}
combining the above discussions,  we conclude that for some constant $C>0$ {
\begin{equation}\label{eq_stepthreeconclusion}
\mathcal{K}(\bar{\bm{v}}^\Mt, \bm{v}) \leq C \left[\sqrt{p} (\zeta+\gamma \iota) \left( p \xi^2 \varsigma^2 \fm^{-\tau+1}+ pn \xi^4 \varsigma^2 \hd^{-(q-2)} \right)  \right]^{2/3}. 
\end{equation} }
This completes the proof of Step three.

\vspace{3pt}
\noindent {\bf Step four. \underline{Control the second term  $\mathcal{K}(\zb^\Mt, \bar{\zb}^\Mt).$}} In this step, we apply a discussion similar to Step three except that we utilize the asymptotic normality of $\bar{\zb}^\Mt$ which has been established in Step two under the assumption of (\ref{eq_onebound}). {We elaborate this in more details as follows.} Analogously to (\ref{eq_probabilitydecomposition}), we decompose that {
\begin{equation}\label{eq_dd1}
\mathcal{K}(\bar{\zb}^\Mt, \bm{z}^\Mt)=\sup_{\mathsf{x} \in \mathbb{R}} \left| \mathbb{P}( (\bar{\zb}^\Mt)^\top \bm{l} \leq \mathsf{x} )-\mathbb{P} ((\bar{\zb}^\Mt)^\top \bm{l} \leq \mathsf{x}+\mathtt{D}(\bar{\zb}^\Mt, \zb^\Mt)) \right|,
\end{equation} }
{where $\mathtt{D}(\bar{\zb}^\Mt, \zb^\Mt)$ is defined similarly as in (\ref{eq_differencedefinition}) {using $\bar{\zb}^\Mt$ and $\bm{z}^\Mt$}. Recall (\ref{eq_firstorder}). By a discussion similar to (\ref{eq_expectationcontrolone}), we have that for some constant $C>0,$
\begin{equation*}
\mathbb{E} | \bar{z}_i^\Mt-z^\Mt_i | \leq C \sqrt{n} \xi^2 \varsigma \hd^{-(q-1)}.
\end{equation*}
Similarly, we have that 
{
\begin{equation}\label{eq_needboundtwo}
(\mathbb{E} \| \bar{\zb}^\Mt-\zb^\Mt \|_2^2)^{1/2} \leq C \sqrt{p n} \xi^2 \varsigma \hd^{-(q-1)}:=\mathrm{a}_1. 
\end{equation} }
Together with (\ref{oldb212121}), using Cauchy-Schwarz inequality, 
we have that {
\begin{equation}\label{eq_dd2}
\|\mathtt{D}(\bar{\zb}^\Mt, \zb^\Mt)\| \leq \mathrm{a}_1 \mathrm{b}. 
\end{equation}
Based on the above quantities, we can define some probability events $\mathcal{B}_{1}, \mathcal{B}_2, \mathcal{B}_3$ analogous to (\ref{eq_allprobabilityevents}) using $\bar{\zb}^\Mt, \mathtt{D}(\bar{\zb}^\Mt, \zb^\Mt)$ and $\mathrm{a}_1.$ Then by a discussion similar to (\ref{eq_decompositionhh}), we can bound (\ref{eq_dd1}) using the modified counterparts of $\mathsf{P}_1$ and $\mathsf{P}_2.$ Similar to (\ref{eq_p2control}), we can bound $\mathsf{P}_2=\mathrm{O}(\mathrm{a}_1 \mathrm{b}).$ For $\mathsf{P}_1,$ similar to (\ref{eq_a123}), it suffices to bound the probability of the following set
\begin{equation*}
\mathcal{B}_{123,o}:=\{\mathsf{x}<(\bar{\zb}^\Mt)^\top \bm{l} \leq \mathsf{x}+(\mathrm{a_1 b})^{2/3}\},
\end{equation*}
and 
\begin{equation*}
\mathcal{A}'_{123,o}:=\{\mathsf{x}< \bm{v}^\top \bm{l} \leq \mathsf{x}+(\mathrm{a_1 b})^{2/3}\}.
\end{equation*}
According to the results of step two, we have that 
\begin{equation*}
\mathbb{P}(\mathcal{B}_{123,o})=\mathbb{P}(\mathcal{A}'_{123,o})+\mathrm{O}\left(p^{7/4} n^{-1/2}  \mathrm{h}^3 \fm^2 \right). 
\end{equation*} 
Finally, similar to the discussion of (\ref{eq_probabilitya123control}), we have that 
\begin{equation*}
\mathbb{P}(\mathcal{A}'_{123,o}) \leq \mathbb{P}(\mathcal{A}'_{123} \backslash \mathsf{S}_{\mathsf{x}}) =\mathrm{O}\left(\left[ \sqrt{p n} (\zeta+\gamma \iota) \xi^2 \varsigma \hd^{-(q-1)}\right]^{2/3}\right).
\end{equation*}

Combining all the above discussions with (\ref{oldb2asadadas}), we can obtain that for some constant $C>0$
\begin{align}\label{eq_stepfourconclusion}
\mathcal{K}(\bar{\zb}^\Mt, \bm{z}^\Mt)  \leq Cp^{7/4} n^{-1/2}  \mathrm{h}^3 \fm^2   + C \left[ \sqrt{p n} (\zeta+\gamma \iota) \xi^2 \varsigma \hd^{-(q-1)}\right]^{2/3}. 
\end{align}}
This completes the proof of Step four. 

\vspace{3pt}
\noindent{\bf Step five. \underline{Control the first term  $\mathcal{K}(\zb, \zb^\Mt).$}} In the last step, we apply a discussion similar to Step four utilizing the {Gaussian approximation results} of $\zb^\Mt$ as established in Step four under the assumption of (\ref{eq_onebound}). By (\ref{eq_control}), we have that for some constant $C>0$
\begin{equation}\label{eq_boundneedone}
(\mathbb{E} \| \zb-\zb^\Mt \|^2_2)^{1/2} \leq C \sqrt{p} \sum_{i=1}^p \Theta_{\fm,q} \leq C p^{3/2} \xi \varsigma \fm^{-\tau+1},
\end{equation}
where in the second inequality we again used (\ref{eq_transferbound}). {Similar to the discussion between (\ref{eq_dd1}) and (\ref{eq_stepfourconclusion}),} we conclude that for some constant $C>0$ {
\begin{align}\label{eq_stepfiveconclusion}
\mathcal{K}(\zb^\Mt, \zb)  \leq Cp^{7/4} n^{-1/2}  \mathrm{h}^3 \fm^2   + C \left[ \sqrt{p n} (\zeta+\gamma \iota) \xi^2 \varsigma \hd^{-(q-1)}\right]^{2/3}+C\left[p^{3/2} \xi (\zeta+\gamma \iota) \varsigma \fm^{-\tau+1} \right]^{2/3}.
\end{align}}
This completes the proof of Step five. 

In summary, by (\ref{eq_steptworesults}), (\ref{eq_stepthreeconclusion}), (\ref{eq_stepfourconclusion}), (\ref{eq_stepfiveconclusion}) and (\ref{eq_decomposition}), we can conclude the proof of $\mathcal{K}. $
}
\end{proof}

{\subsection{High dimensional uniform Gaussian approximation and proof of Theorem \ref{thm_uniformdistribution}} In this subsection, we will establish a uniform Gaussian approximation result beyond fixed values of $t$ and $x.$ {The results and techniques will be used to prove Theorem \ref{thm_uniformdistribution} and the results in the next subsection.}  {Denote 
\begin{equation}\label{eq_uniformGaussianneedtobebounded}
\mathcal{K}^*(\bm{z}, \bm{v}):=\sup_{\mathsf{x} \geq 0} \left| \mathbb{P}( \sup_{t \in [0,1], x \in \mathbb{R}} |\bm{z}^\top \bm{l}| \leq \mathsf{x})-\mathbb{P}( \sup_{t \in [0,1], x \in \mathbb{R}}  |\bm{v}^\top \bm{l}| \leq \mathsf{x}) \right|,
\end{equation}  
where we again recall from (\ref{eq_originaldefinition}) that $\bm{l} \equiv \bm{l}(t,x).$ Recall $\Theta^*$ in (\ref{eq_thetastar}).

\begin{theorem}\label{thm_uniformconvergencegaussian}
Suppose the assumptions of Theorem \ref{thm_consistency} holds. Moreover, we assume that (\ref{eq_boothstrappingextraassumption}) holds. Then we have that 
\begin{equation*}
\mathcal{K}^*(\bm{z}, \bm{v})=\mathrm{O}\left( \Theta^* \right). 
\end{equation*} 
\end{theorem}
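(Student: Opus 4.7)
The plan is to adapt the five-step strategy of Theorem \ref{thm_gaussianapproximationcase} from the pointwise Kolmogorov distance $\mathcal{K}$ to the uniform version $\mathcal{K}^*$. The starting point is the same triangle decomposition
\begin{equation*}
\mathcal{K}^*(\bm{z}, \bm{v}) \le \mathcal{K}^*(\bm{z}, \bm{z}^\Mt) + \mathcal{K}^*(\bm{z}^\Mt, \bar{\bm{z}}^\Mt) + \mathcal{K}^*(\bar{\bm{z}}^\Mt, \bar{\bm{v}}^\Mt) + \mathcal{K}^*(\bar{\bm{v}}^\Mt, \bm{v}),
\end{equation*}
using the $\mathfrak{m}$-dependent approximation \eqref{eq_dependencesequence} and the truncation \eqref{eq_truncation}. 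The $\mathfrak{m}$-dependent closeness of covariance matrices in Step one of Theorem \ref{thm_gaussianapproximationcase} carries over verbatim, since it is purely about covariance bounds and does not depend on the functional shape of the test statistic.

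The new and substantive step is the analogue of Step two, i.e., bounding $\mathcal{K}^*(\bar{\bm{z}}^\Mt, \bar{\bm{v}}^\Mt)$. To this end, I would discretize the reparametrized domain $(t, \widetilde{x}) \in [0,1]^2$ (via the mapping of Definition \ref{defn_mappings}) on a uniform grid of size $N \times N$ with $N$ polynomial in $n$, chosen so that the discretization error from replacing $\sup_{t,x}$ by a maximum over $O(N^2)$ grid points is negligible. The controlling quantity for this discretization error is the Lipschitz norm of $\widetilde{\bm{r}}(t,\widetilde{x})/\|\widetilde{\bm{r}}\|$, which is quantified by assumption \eqref{eq_assumptionbasisderivative}. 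On the grid, $\sup_{(t_k,\widetilde{x}_l)} |(\bar{\bm{z}}^\Mt)^\top \bm{l}(t_k,\widetilde{x}_l)|$ becomes the maximum of $O(N^2)$ linear functionals of a high-dimensional $\mathfrak{m}$-dependent vector with bounded coordinates, which can be compared to its Gaussian counterpart by a max-type high-dimensional Gaussian approximation result (a Chernozhukov--Chetverikov--Kato-style argument, adapted to the $\mathfrak{m}$-dependent block structure exactly as in Lemma \ref{lem_mvnapp}). After optimizing the grid size, this step contributes the first two terms in $\Theta^*$, i.e., $n^{-1/2}\mathsf{p}^{7/4}\mathrm{h}^3\mathfrak{m}^2$ (from the max-Gaussian error) together with the $\mathfrak{m}$-approximation residual term $\sqrt{\mathsf{p}n}\mathrm{h}^{-2}\varsigma\xi\mathfrak{m}^{-\tau-1/2}$.

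For the three remaining terms $\mathcal{K}^*(\bm{z},\bm{z}^\Mt)$, $\mathcal{K}^*(\bm{z}^\Mt,\bar{\bm{z}}^\Mt)$ and $\mathcal{K}^*(\bar{\bm{v}}^\Mt,\bm{v})$, the $L^2$ perturbation bounds from the pointwise proof remain valid uniformly in $(t,x)$, namely \eqref{eq_boundneedone}, \eqref{eq_needboundtwo} and \eqref{eq_covarianceclose} respectively (multiplied by the uniform upper bound $\mathrm{b}\asymp \zeta+\gamma\iota$ on $|\bm{l}|$). Converting an $L^2$ perturbation into a Kolmogorov-type probability bound for the supremum now requires an anti-concentration inequality for the maximum of the Gaussian field $\bm{v}^\top \bm{l}(t,x)$: if $G$ denotes this field, then for any $\delta>0$,
\begin{equation*}
\sup_{\mathsf{x}\ge 0}\mathbb{P}\big(\big|\sup_{t,x}|G(t,x)|-\mathsf{x}\big|\le \delta\big) \le C \delta \big(1+\mathbb{E}\sup_{t,x}|G(t,x)|\big) \asymp C\delta \sqrt{\log n},
\end{equation*}
where the upper bound on the expected supremum comes from standard Gaussian-process arguments together with the grid size and the uniform lower bound on $|\bm{r}|$ (guaranteed by Assumption \ref{assum_updc} and the hypothesis of Theorem \ref{thm_asymptoticdistribution}). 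Balancing $\delta$ against the $L^2$ perturbations yields the remaining two summands in $\Theta^*$.

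The main obstacle will be the max-type Gaussian approximation on the discretized grid for the $\mathfrak{m}$-dependent truncated sequence $\bar{\bm{z}}^\Mt$. Unlike the pointwise case, where Lemma \ref{lem_mvnapp} provides a ready-made bound on convex sets, the uniform version demands a Slepian-type interpolation argument at the scale of $N^2$ many linear functionals simultaneously, taking advantage of the block-factorization structure induced by $\mathfrak{m}$-dependence. A careful implementation should produce a logarithmic dependence on $N$ that is absorbed into the $\log n$ factor and ultimately reproduces the leading $n^{-1/2}\mathsf{p}^{7/4}\mathrm{h}^3\mathfrak{m}^2$ rate in $\Theta^*$, but the bookkeeping of constants and the verification that the anti-concentration inequality holds uniformly over the continuous parameter space (and not merely the grid) requires care; this is where most of the technical work will lie.
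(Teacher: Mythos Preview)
Your proposal is a workable alternative, but it takes a noticeably different and heavier route than the paper. The paper's proof rests on one observation you did not make: for each fixed $\mathsf{x}\ge 0$, the set
\[
\mathsf{A}_{\mathsf{x}}=\bigcap_{t\in[0,1],\,x\in\mathbb{R}}\bigl\{\bm{q}\in\mathbb{R}^p:\ |\bm{q}^\top\bm{l}_1(t,x)|\le\mathsf{x}\bigr\}
\]
is an intersection of symmetric slabs and hence convex. Consequently $\mathcal{K}^*(\bm{z},\bm{v})$ is again dominated by $\sup_{\mathsf{A}\in\mathcal{A}}|\mathbb{P}(\xb\in\mathsf{A})-\mathbb{P}(\fb\in\mathsf{A})|$, exactly as in the pointwise theorem. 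With this in hand the paper applies the Bentkus smoothing $h_{\mathsf{A},\epsilon_1}$ and the bound of Fang--R\"ollin (Lemma~4.2 there), obtaining
\[
\mathcal{K}^*(\xb,\fb)\le 4p^{1/4}\epsilon_1+\sup_{\mathsf{A}\in\mathcal{A}}\bigl|\mathbb{E}[h_{\mathsf{A},\epsilon_1}(\xb)-h_{\mathsf{A},\epsilon_1}(\fb)]\bigr|,
\]
and then decomposes the smoothed difference into $L_1+L_2+L_3+L_4$ along the same chain $\xb\to\xb^\Mt\to\bar{\xb}^\Mt\to\bar{\fb}^\Mt\to\fb$. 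Because $|\nabla h_{\mathsf{A},\epsilon_1}|\le 2/\epsilon_1$, the three perturbation terms $L_1,L_2,L_4$ follow directly from the $L^2$ bounds \eqref{eq_boundneedone}, \eqref{eq_needboundtwo}, \eqref{eq_covarianceclose} with a factor $\epsilon_1^{-1}$; no anti-concentration is needed. The Gaussian-approximation term $L_3$ is handled by the self-referential inequality (4.24) of \cite{MR3571252} under the $\fm$-dependent setup \eqref{eq_setup}. Choosing $\epsilon_1=n^{-1/2}p\,\mathrm{h}^2\fm^{3/2}$ balances all four pieces and produces exactly the four summands of $\Theta^*$.

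By contrast, your grid discretization plus CCK max-type comparison plus anti-concentration would also go through, but it requires the Lipschitz control \eqref{eq_assumptionbasisderivative} (which the paper does not invoke for this theorem), introduces extra logarithmic factors from both the max-type approximation and the anti-concentration step, and would not reproduce the precise form of $\Theta^*$. The convex-set reduction is what lets the paper avoid all of that machinery.
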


{Before proceeding to the proof of Theorem \ref{thm_uniformconvergencegaussian}, we first use it to prove Theorem \ref{thm_uniformdistribution}.}
\begin{proof}[\bf Proof of Theorem \ref{thm_uniformdistribution}]
{Our proof relies on Theorem \ref{thm_uniformconvergencegaussian} and Lemma \ref{lem_volumeoftube}.} Since if $t$ and $x$ vary, $\sqrt{n} \mathsf{T}$ is asymptotically a Gaussian process whose convergence rate can be controlled in Theorem \ref{thm_uniformconvergencegaussian}, it suffices to focus on the Gaussian case. Recall (\ref{eq_ljdefinition}).
By (\ref{eq_betaolsform}), (\ref{eq_proposedestimator}) and a discussion similar to (\ref{eq_fundementalexpression}), we can write
\begin{align*}
\widehat{m}_{c,d}(t,x)&= \bm{r}^\top (W^\top W)^{-1} W^\top \bm{Y}+\left(\widehat{\bm{r}}^\top-\bm{r} \right) (W^\top W)^{-1} W^\top \bm{Y}\\
&=l(t,x)^\top W^\top \bm{Y}+\bm{r}^\top \left( (n^{-1}W^\top W)^{-1}-\Pi^{-1}) \right)\left[ n^{-1} W^\top \bm{Y} \right]+\left(\widehat{\bm{r}}^\top-\bm{r} \right) (W^\top W)^{-1} W^\top \bm{Y}. 
\end{align*}
Using (\ref{eq_consistencyconvergency}) and Assumption \ref{assum_updc}, we have that{
\begin{equation*}
\left| \bm{r}^\top \left( (n^{-1}W^\top W)^{-1}-\Pi^{-1}) \right)\left[ n^{-1} W^\top \bm{Y} \right]\right |=\OO_{\mathbb{P}}\left(\zeta \sqrt{p} \left[  p\left( \frac{\varsigma^2 \xi^2}{\sqrt{n}}+\frac{\xi^2 \varsigma^2 n^{\frac{2}{\tau+1}}}{n}\right) \right] \right),
\end{equation*}}
where in the second step we used Cauchy-Schwartz inequality and a discussion similar to (\ref{eq_boundtwo}). Similarly, using (\ref{eq_boundnnnnnn}), we can show that {
\begin{equation*}
\left| \left(\widehat{\bm{r}}^\top-\bm{r} \right) (W^\top W)^{-1} W^\top \bm{Y} \right |=\mathrm{O}_{\mathbb{P}}\left(\zeta \sqrt{p} \left[ \xi \varsigma \gamma \iota \sqrt{\frac{p}{n}} \right] \right).
\end{equation*} } 
Consequently, under Assumption \ref{assum_debiasassumption} and {the assumptions of Theorem \ref{thm_consistency}}, in view of (\ref{eq_ll}), we can further write the followings {uniformly for $(t,x)$}
\begin{equation}\label{eq_rrr}
\widehat{m}_{c,d}(t,x)=\widetilde{l}(t, \widetilde{x})^\top W^\top \bm{Y}+\oo_{\mathbb{P}}(1).
\end{equation}
Since the expression (\ref{eq_alphatheortrep}) is independent of the temporal relation of $\bm{\epsilon},$ using (\ref{eq_defntdefn}), (\ref{eq_rrr}) {as well as Theorem \ref{thm_uniformconvergencegaussian} with the assumption $\Theta^*=\oo(1)$}, we can write {
\begin{equation*}
\alpha=\p\left( \sup_{t \in [0,1],x \in \mathcal{X}} \left| T(t,\widetilde{x})^\top \widetilde{\bm{\epsilon}} \right| \geq \mathsf{c}_{\alpha}+\oo(1)  \right),
\end{equation*} }
where $\widetilde{\bm{\epsilon}} \sim \mathcal{N}(0, \mathbf{I}_{n})$ and $\bm{\epsilon}=\sqrt{\operatorname{Cov}(W^\top \bm{\epsilon})}\widetilde{\bm{\epsilon}}.$ By Lemma \ref{lem_volumeoftube} and the fact that {$\mathsf{c}_{\alpha}\geq \kappa$ for some constant $\kappa>0$}, we complete the proof of (\ref{eq_expansionformula}).

Then we prove (\ref{eq_quantitiesbound}). According to equation (3.2) of \cite{MR1311978}, we have that
\begin{equation}\label{eq_kappodefinition}
\kappa_0=\int_0^1 \int_0^1 \sqrt{\det(A A^\top )} \dd t \dd x,
\end{equation}
where $A \in \mathbb{R}^{ 2 \times n}$ is defined as $A^\top=\left( \frac{\partial}{\partial t } T,  \frac{\partial}{\partial \widetilde{x} } T \right).$ Here we recall (\ref{eq_defntdefn}) for the definition of $T.$ By an elementary calculation, it is easy to see that 
\begin{equation*}
\frac{\partial T}{\partial t}=\frac{\sqrt{n} l_t \sqrt{\operatorname{Cov}(W^\top\bm{\epsilon})}}{\widetilde{h}(t, \widetilde{x})}-\frac{1}{2} \frac{T^\top(t,\widetilde{x})}{\widetilde{h}^2(t, \widetilde{x})} s_t, \ 
\end{equation*} 
where we use the notations that
\begin{equation*}
l_t= n^{-1} (\nabla_t \bm{\phi}(t) \otimes \bm{\varphi}(\widetilde{x})-\nabla_t \bm{\phi}(t) \otimes \bm{\vartheta}(t)- \bm{\phi}(t) \otimes \nabla_t \bm{\vartheta}(t))^\top \Pi^{-1}, 
\end{equation*}
where
\begin{equation*}
\ \bm{\phi}(t)=(\phi_1(t), \cdots, \phi_c(t))^\top, \  \bm{\vartheta}(t)=(\vartheta_1(t), \cdots, \vartheta_d(t))^\top, \ \bm{\varphi}(\widetilde{x})=(\varphi_1(\widetilde{x}), \cdots, \varphi_d(\widetilde{x}))^\top,
\end{equation*}
and 
\begin{align*}
s_t&=(\nabla_t \bm{\phi}(t) \otimes \bm{\varphi}(\widetilde{x})) \Pi^{-1} \Omega \Pi^{-1} \bm{r}+ \bm{r}^\top \Pi^{-1} \Omega \Pi^{-1} (\nabla_t \bm{\phi}(t) \otimes \bm{\varphi}(\widetilde{x})) \\
&-(\nabla_t \bm{\phi}(t) \otimes \bm{\vartheta}(t)) \Pi^{-1} \Omega \Pi^{-1} \bm{r}- \bm{r}^\top \Pi^{-1} \Omega \Pi^{-1} (\nabla_t \bm{\phi}(t) \otimes \bm{\vartheta}(t)) \\
& -( \bm{\phi}(t) \otimes \nabla_t \bm{\vartheta}(t)) \Pi^{-1} \Omega \Pi^{-1} \bm{r}- \bm{r}^\top \Pi^{-1} \Omega \Pi^{-1} ( \bm{\phi}(t) \otimes \nabla_t \bm{\vartheta}(t)). 
\end{align*}
Similarly, we can calculate $\frac{\partial T}{\partial x}.$
For notational simplicity, we set
\begin{align*}
& \mathbf{b}_1:=\nabla_t \bm{\phi}(t) \otimes \bm{\varphi}(\widetilde{x})-\nabla_t \bm{\phi}(t) \otimes \bm{\vartheta}(t)- \bm{\phi}(t) \otimes \nabla_t \bm{\vartheta}(t), \\
& \mathbf{b}_2:= \bm{\phi}(t) \otimes \nabla_{\widetilde{x}} \bm{\varphi}(\widetilde{x}).
\end{align*}
With the above notations, we have that 
\begin{equation*}
\det(A A^\top )=\mathcal{E}_1\mathcal{E}_3-\mathcal{E}_2^2,
\end{equation*}
where $\mathcal{E}_k, k=1,2,3,$ are defined as 
\begin{equation*}
\mathcal{E}_1=\frac{\partial T}{\partial t} \frac{\partial T}{\partial t}^\top, \ \mathcal{E}_3= \frac{\partial T}{\partial \widetilde{x}} \frac{\partial T}{\partial \widetilde{x}}^\top , \ \mathcal{E}_2= \frac{\partial T}{\partial \widetilde{x}} \frac{\partial T}{\partial t}^\top.
\end{equation*}
It is easy to see that
\begin{equation*}
\widetilde{h}(t, \widetilde{x}) \asymp \|  \widetilde{\bm{r}}\|_2^2, \ n \widetilde{l}(t, \widetilde{x}) \asymp \|  \widetilde{\bm{r}}\|_2^2, \ s_t^2 \asymp  \| \mathbf{b}_1 \|_2^2.  
\end{equation*}
Moreover, by Assumptions \ref{assum_physical} and \ref{assum_updc}, using (\ref{eq_consistencyconvergency}), we have that for all $1 \leq k \leq p,$
\begin{equation*}
\lambda_{k}(\Pi^{-1}), \ \lambda_{k}(\Omega)  \asymp 1. 
\end{equation*}
With the above estimates, for $\mathcal{E}_1,$ by an elementary but tedious calculation,  we can see that   
\begin{equation}\label{eq_boundcontrol}
\frac{\|\mathbf{b}_1 \|_2^2}{\| \widetilde{\bm{r}} \|_2^2}   \lambda_{\min}(\mathbf{T}) \leq  \| \mathcal{E}_1\| \leq   \lambda_{\max}(\mathbf{T}) \frac{\|\mathbf{b}_1 \|_2^2}{\| \widetilde{\bm{r}} \|_2^2}, \ \mathbf{T}:= \operatorname{Cov}(W^\top \bm{\epsilon}). 
\end{equation}
By (\ref{eq_originaldefinition}), Lemma \ref{lem_covarianceofz} and Assumption \ref{assum_updc}, we find that 
\begin{equation*}
 \lambda_{\min}(\mathbf{T}), \lambda_{\max}(\mathbf{T}) \asymp p.
\end{equation*}
Consequently, we have that 
\begin{equation*}
 \mathcal{E}_1 \asymp p \frac{\|\mathbf{b}_1 \|_2^2}{\| \widetilde{\bm{r}} \|_2^2}. 
\end{equation*}
Similarly, we can show 
\begin{equation*}
 \mathcal{E}_3 \asymp p \frac{\|\mathbf{b}_2 \|_2^2}{\| \widetilde{\bm{r}} \|_2^2}.
\end{equation*}
Therefore, under the assumption of (\ref{eq_assumptionbasisderivative}) and using the choice of (\ref{eq_cdchoice}), we have that for some constant $C>0$
\begin{equation*}
\kappa_0 \leq \int \int \sqrt{\mathcal{E}_1 \mathcal{E}_3}  \dd t\dd \widetilde{x} \leq Cp(n^{\alpha_1}+n^{\alpha_2}).
\end{equation*}
To provide a lower bound, since $AA^\top$ is symmetric and positive semi-definite, by (\ref{eq_kappodefinition}), we have that
\begin{equation*}
\kappa_0 \geq \int \int \lambda_{\min}(AA^\top) \dd t \dd \widetilde{x}. 
\end{equation*}
Together with Lemma \ref{lem_circle}, we have  that
\begin{equation*}
\kappa_0 \geq  \int \int \sqrt{\min\left\{(|\mathcal{E}_2|-\mathcal{E}_3)^2, \ (|\mathcal{E}_2|-\mathcal{E}_1)^2 \right\}} \dd t \dd \widetilde{x}. 
\end{equation*}  
By a discussion similar to (\ref{eq_boundcontrol}), using the definitions of $\mathcal{E}_k, k=1,2,3,$ we see that for some constant $C_1>0,$
\begin{equation*}
\kappa_0 \geq C_1 p(n^{\alpha_1}+n^{\alpha_2}). 
\end{equation*}
 This yields that
\begin{equation*}
 \kappa_0 \asymp  p n^{\alpha_1}+p n^{\alpha_2}. 
\end{equation*}  
We point out that $\eta_0$ can be analyzed similarly using the second equation on Page 1335 of \cite{MR1311978}. Together with (\ref{eq_expansionformula}), we can complete the proof of (\ref{eq_quantitiesbound}).
\end{proof}

{The rest of the work focuses on proving Theorem \ref{thm_uniformconvergencegaussian}. }
The key idea for controlling (\ref{eq_uniformGaussianneedtobebounded}) is to employ the smoothing technique proposed in \cite{bentkus2003dependence}, the convex Gaussian approximation trick in \cite{MR3571252,FangRollin2015} and the crucial bounds obtained in the proofs of Theorem \ref{thm_gaussianapproximationcase}.

\begin{proof}[\bf Proof of Theorem \ref{thm_uniformconvergencegaussian}]
For the purpose of smoothing, we introduce the following function as in \cite{bentkus2003dependence}. For the collection of convex sets $\mathcal{A}$ in $\mathbb{R}^p$ and $\mathsf{A} \in \mathcal{A},$ given some small $\epsilon_1>0,$ we denote 
\begin{equation*}
h_{\mathsf{A}, \epsilon_1}(\bm{q})=\eta\left( \frac{\operatorname{dist}(\bm{q},\mathsf{A})}{\epsilon_1} \right),
\end{equation*} 
where $\operatorname{dist}(\bm{q},\mathsf{A})=\inf_{\bm{w} \in \mathsf{A}}|\bm{q}-\bm{w}|$ and 
\begin{equation*}
\eta(x)=
\begin{cases}
1, & x<0, \\
1-2x^2, & 0 \leq x<\frac{1}{2}, \\
2(1-x)^2, & \frac{1}{2} \leq x<1, \\
0, & x \geq 1. 
\end{cases}
\end{equation*}
According to (iv) of Lemma 2.3 of \cite{bentkus2003dependence}, we have that 
\begin{equation}\label{eq_derivativebound}
|\nabla h_{\mathsf{A}, \epsilon_1} | \leq \frac{2}{\epsilon_1}.
\end{equation}

Recall the convention in (\ref{eq_ddddd}), we have that $\bm{v}=\Sigma_0^{1/2}\mathbf{f}$ with $\mathbf{f}$ being a standard Gaussian random vector. Similar, for notional simplify, we also write $\bm{z}=\Sigma_0^{1/2} \mathbf{x},$ where $\mathbf{x}=\Sigma_0^{-1/2} \bm{z}.$ Denote $\bm{l}_1=\Sigma_0^{1/2} \bm{l}.$ We can then rewrite (\ref{eq_uniformGaussianneedtobebounded}) as follows 
\begin{equation}\label{eq_fffafafearaefaef}
\mathcal{K}^*(\bm{z}, \bm{v}) \equiv \mathcal{K}^*(\xb, \fb):=\sup_{\mathsf{x} \geq 0} \left| \mathbb{P}( \sup_{t \in [0,1], x \in \mathbb{R}} |\xb^\top \bm{l}_1| \leq \mathsf{x})-\mathbb{P}( \sup_{t \in [0,1], x \in \mathbb{R}}  |\fb^\top \bm{l}_1| \leq \mathsf{x}) \right|. 
\end{equation}
For $\mathsf{x} \in \mathbb{R},$ denote
\begin{equation*}
\mathsf{A}_{\mathsf{x}} \equiv \mathsf{A}_{\mathsf{x}}(t,x):=\left\{ \bm{q} \in \mathbb{R}^p: |\bm{q}^\top \bm{l}_1| \leq \mathsf{x}   \right\}, \ \ \text{and} \ \ \mathsf{A}_{\mathsf{x}}:=\bigcap_{t \in [0,1] \\ x \in \mathbb{R}}\mathsf{A}_{\mathsf{x}}(t,x). 
\end{equation*}
Similar to the discussions below (\ref{eq_defnset}), we have that 
\begin{equation*}
\mathcal{K}^*(\xb, \fb) \leq \sup_{\mathsf{A} \in \mathcal{A}} |\mathbb{P}(\xb \in \mathsf{A})-\mathbb{P}(\fb \in \mathsf{A})|.  
\end{equation*} 
Consequently, according to Lemma 4.2 of \cite{FangRollin2015}, we can control
\begin{equation}\label{eq_reducednorm}
\mathcal{K}^*(\xb, \fb) \leq 4 p^{1/4} \epsilon_1+\sup_{\mathsf{A} \in \mathcal{A}} \left| \mathbb{E}\left[h_{\mathsf{A}, \epsilon_1}(\xb)- h_{\mathsf{A}, \epsilon_1}(\fb)\right] \right|. 
\end{equation} 
Therefore, the rest of the proof leaves to control the right-hand side of (\ref{eq_reducednorm}). For notional simplicity, for the quantities in (\ref{eq_decomposition}), we denote 
\begin{equation*}
\xb^{\Mt}=\Sigma_0^{-1/2} \bm{z}^{\Mt}, \ \overline{\xb}^{\Mt}=\Sigma_0^{-1/2} \overline{\bm{z}}^{\Mt}, \  \ \overline{\fb}^{\Mt}=\Sigma_0^{-1/2} \overline{\bm{v}}^{\Mt}.
\end{equation*} 
Note that according to Lemma \ref{lem_covarianceofz}, the assumption of (\ref{eq_parameterassumption}), as well as Assumption \ref{assum_updc}, we have that $\Sigma_0^{-1/2}$ is bounded from above (in operator norm), i.e., 
\begin{equation}\label{eq_operatornormbound}
\| \Sigma_0^{-1/2} \|_{\operatorname{op}}=\mathrm{O}(1). 
\end{equation}

By triangular inequality, we can write 
\begin{align}\label{eq_decompose}
\sup_{\mathsf{A} \in \mathcal{A}} \left| \mathbb{E}\left[h_{\mathsf{A}, \epsilon_1}(\xb)- h_{\mathsf{A}, \epsilon_1}(\fb)\right] \right| & \leq \sup_{\mathsf{A} \in \mathcal{A}} \left| \mathbb{E}\left[h_{\mathsf{A}, \epsilon_1}(\xb)- h_{\mathsf{A}, \epsilon_1}(\xb^{\Mt})\right] \right|+ \sup_{\mathsf{A} \in \mathcal{A}} \left| \mathbb{E}\left[h_{\mathsf{A}, \epsilon_1}(\xb^\Mt)- h_{\mathsf{A}, \epsilon_1}(\overline{\xb}^\Mt)\right] \right| \nonumber \\
&+ \sup_{\mathsf{A} \in \mathcal{A}} \left| \mathbb{E}\left[h_{\mathsf{A}, \epsilon_1}(\overline{\xb}^\Mt)- h_{\mathsf{A}, \epsilon_1}(\overline{\fb}^\Mt)\right] \right|+\sup_{\mathsf{A} \in \mathcal{A}} \left| \mathbb{E}\left[h_{\mathsf{A}, \epsilon_1}(\overline{\fb}^\Mt)- h_{\mathsf{A}, \epsilon_1}(\fb)\right] \right| \nonumber \\
&:=L_1+L_2+L_3+L_4. 
\end{align}

Then we bound them term by term. For $L_1,$ using (\ref{eq_derivativebound}), we find that for some constant $C>0,$
\begin{equation*}
L_1 \leq \frac{C}{\epsilon_1}\| \xb-\xb^{\Mt} \|=\mathrm{O} \left( \epsilon_1^{-1} p^{3/2} \xi \varsigma \mathfrak{m}^{-\tau+1} \right), 
\end{equation*}
where we used (\ref{eq_boundneedone}) and (\ref{eq_operatornormbound}). Similarly, using (\ref{eq_needboundtwo}), we can control that
\begin{equation*}
L_2=\mathrm{O} \left( \epsilon_1^{-1}\sqrt{p n} \xi^2 \varsigma \hd^{-(q-1)} \right), 
\end{equation*}  
and using (\ref{eq_covarianceclose}) and (\ref{oldb1}), we can control that 
\begin{equation*}
L_4=\mathrm{O}\left( \epsilon_1^{-1} \sqrt{p} \varsigma^2 \left( p \xi^2 \fm^{-\tau+1}+ pn \xi^4 \hd^{-(q-2)} \right) \right). 
\end{equation*}
Finally, for $L_3,$ following the proof of Lemma \ref{lem_mvnapp}, especially equation (4.24) of \cite{MR3571252}, using the setup as in (\ref{eq_setup}), we have that for some constant $C>0$
\begin{equation*}
L_3 \leq 4 p^{1/4} \epsilon_1+C n^{-1/2}(\sqrt{p} \mathrm{h})^3 \mathfrak{m}^2 \epsilon_1^{-1} \left[p^{1/4}(\epsilon_1+\mathfrak{m}  n^{-1/2}\sqrt{p} \mathrm{h}) +L_3\right]. 
\end{equation*}
This yields that 
\begin{equation*}
L_3 \leq 4 p^{1/4} \epsilon_1+ C n^{-1/2} p^{7/4} \mathrm{h}^3  \mathfrak{m}^2+C n^{-1} p^{9/4} \epsilon_1^{-1} \mathrm{h}^4 \mathrm{m}^3+n^{-1/2}p^{3/2} \epsilon_1^{-1} \mathrm{h}^3 \mathrm{m}^2 L_3.  
\end{equation*}
Note that we can choose $\epsilon_1=n^{-1/2}p \mathrm{h}^2 \mathfrak{m}^{3/2}$ to optimize the control of $L_3$ so that 
\begin{equation*}
L_3=\mathrm{O} \left( n^{-1/2} p^{7/4} \mathrm{h}^3  \mathfrak{m}^2\right). 
\end{equation*}
Moreover, with  $\epsilon_1=n^{-1/2}p \mathrm{h}^2 \mathfrak{m}^{3/2}$, we readily obtain that 
\begin{align*}
L_1+L_2+L_4 & =\mathrm{O}\Big( \sqrt{pn} \mathrm{h}^{-2} \varsigma \xi \mathfrak{m}^{-\tau-1/2}+n p^{-1/2} \xi^2 \varsigma \mathfrak{m}^{-3/2} \mathrm{h}^{-q-1}  \\
&+n^{1/2} p^{-1/2} \mathrm{h}^{-2} \mathfrak{m}^{-3/2} \varsigma^2 \left( p \xi^2 \fm^{-\tau+1}+ pn \xi^4 \hd^{-(q-2)} \right) \Big).
\end{align*}
Combining all the above discussions with (\ref{eq_decompose}), (\ref{eq_reducednorm}) and (\ref{eq_fffafafearaefaef}), we have completed our proof. 
\end{proof}
}
}

\subsection{Consistency of bootstrap: proof of Theorem \ref{thm_boostrapping}}
In this subsection, {with the aid of Theorem \ref{thm_uniformconvergencegaussian},} we prove Theorem \ref{thm_boostrapping} following and generalizing \cite{DZ2,MR3174655}. {Recall the notations around Lemma \ref{lem_locallystationaryform}.} Denote 
\begin{equation}\label{eq_defnupsilon}
\Upsilon_{i,m} \in \mathbb{R}^{\mathsf{p}}:=(\epsilon_{i,m} \otimes \phib_0(t_i), \bm{u}_{1,m}(i) \otimes \phib_1(t_i), \cdots, \bm{u}_{r,m}(i) \otimes \phib_r(t_i)).
\end{equation}
Set 
\begin{equation}\label{eq_defnitionphi}
\Phi=\frac{1}{\sqrt{n-m-1} \sqrt{m}} \sum_{i=1}^{n-m} \Upsilon_{i,m} R_i,
\end{equation}
where $\{R_i\}$ are the same random variables as in  
(\ref{eq_defnstatisticXi}). We point out that $\Phi$ is a population version of $\Xi$ in (\ref{eq_defnstatisticXi}). Without loss of generality, we focus on the case $r=1$ as in Sections \ref{sec_consistencyproof} and \ref{sec_gassuianapproximation}. Recall that
$\mathsf{p}=p+c_0.$
Based on the above notations, we denote that
\begin{equation}\label{eq_bigupsilon}
\Upsilon=\frac{1}{(n-m-1)} \sum_{i=1}^{n-m} \Upsilon_{i,m} \Upsilon_{i,m}^\top.
\end{equation}
Note that $\Upsilon$ is the covariance matrix of $\Phi$ when conditioned on the data. 

The proof routine contains four steps. The first three steps concern $\Phi.$ In particular, step one (c.f. Lemma \ref{lem_stepone}) aims to establish the concentration of $\Upsilon_{i,m} \Upsilon_{i,m}^\top $ and $\Upsilon \Upsilon^\top;$ step two (c.f. Lemma \ref{lem_steptwo}) focuses on constructing a stationary time series whose covariance matrix can well approximate the concentration from step one; step three (c.f. Lemma \ref{lem_stepthree}) utilizes the stationary time series from step two and  shows that its covariance matrix is close to the integrated long-run covariance matrix $\overline{\Omega};$ step four (c.f. Lemma \ref{lem_residualclosepreparation}) will conclude the proof by replacing the error $\{\epsilon_i\}$ with the residuals of the sieve estimators (c.f. (\ref{eq_defnresidual})) and handle $\Xi$ to conclude the proof. 
 
Steps one, two and three focus on analyzing $\Upsilon$ and the long-run covariance matrices and are general, i.e., irrelevant of the inference problems. Combining them we can show that $\Upsilon$ is close to $\overline{\Omega}.$  Therefore, we separate them and prove them in Lemmas \ref{lem_stepone}--\ref{lem_stepthree} below. Step four is more specific and we will provide the details when we prove Theorem \ref{thm_boostrapping}.

\begin{lemma}\label{lem_stepone}
Suppose Assumptions \ref{assum_models}--\ref{assum_updc} hold and $m=\oo(n)$. We have that
\begin{equation}\label{eq_overallproof}
\left\| \Upsilon-\mathbb{E} \Upsilon \right\|_{\op}=\OO_{\mathbb{P}}\left( \mathsf{p} \xi^2 \varsigma \sqrt{\frac{m}{n}} \right). 
\end{equation}
\end{lemma}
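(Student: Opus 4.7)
The overall strategy is to reduce the operator-norm bound to a uniform entrywise bound via Lemma \ref{lem_circle} and then establish the entrywise concentration via the physical-dependence toolbox (Lemmas \ref{lem_locallystationaryform}--\ref{lem_concentration}) after recognizing each entry as the sample average of a locally stationary time series of block-sum products.

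First I would use Lemma \ref{lem_circle} to write
\[
\| \Upsilon-\mathbb{E}\Upsilon \|_{\op} \leq \mathsf{p} \max_{k,l} \bigl| [\Upsilon]_{kl}-[\mathbb{E}\Upsilon]_{kl}\bigr|,
\]
so it suffices to prove that each centered entry is $\OO_{\mathbb{P}}(\xi^{2}\varsigma\sqrt{m/n})$. A generic entry has the form
\[
[\Upsilon]_{kl}-[\mathbb{E}\Upsilon]_{kl} \;=\; \frac{1}{n-m-1}\sum_{i=1}^{n-m}\bigl(A_{i,m}-\mathbb{E}A_{i,m}\bigr),
\]
where $A_{i,m}$ is one of the cross products $[\Upsilon_{i,m}]_k[\Upsilon_{i,m}]_l$. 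By (\ref{eq_defnupsilon}), each such factor is a weighted window sum of length $m$ of either $\epsilon_o$ or $u_{jo,k}=\varphi_k(X_{j,o})\epsilon_o$, modulated by a deterministic basis evaluation $\phi_j(t_i)$. The factors of order $\xi$ from $\phi_j(t_i)$ and $\varsigma$ from $\varphi_k$ are the sources of the powers in the bound.

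Next, by Lemma \ref{lem_locallystationaryform}, the underlying vector $\Upsilon_{i,m}$ inherits a physical representation; I would then invoke the $m$-dependent approximation machinery of Lemma \ref{lem_mdependent} to compute the physical dependence measure of the scalar process $\{A_{i,m}\}_{i}$. Because the windows $[i,i+m]$ and $[i',i'+m]$ overlap only when $|i-i'|\leq m$, a direct computation using Cauchy--Schwarz on the product structure, combined with $\delta_{j}(k,q)\leq Ck^{-\tau}$ from Assumption \ref{assum_physical}, yields
\[
\|A_{i,m}\|_{q/2}\;\lesssim\; m\,\xi^{2}\varsigma^{2},\qquad \delta_{A}(k,q/2)\;\lesssim\; m\,\xi^{2}\varsigma\,\min\bigl\{1,(k-m)_{+}^{-\tau}\bigr\},
\]
so the effective short-range dependence of $\{A_{i,m}\}$ is summable (up to the prefactor depending on $m$). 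Applying the concentration inequality in (1) of Lemma \ref{lem_concentration} to the centered average of $A_{i,m}$ gives
\[
\Big\| \frac{1}{n-m-1}\sum_{i=1}^{n-m}(A_{i,m}-\mathbb{E}A_{i,m}) \Big\|_{q/2} \;\lesssim\; \frac{m\,\xi^{2}\varsigma}{\sqrt{(n-m)/m}}\cdot\frac{1}{m}\;=\;\xi^{2}\varsigma\sqrt{\tfrac{m}{n}},
\]
the second factor of $1/m$ coming from the normalization of an $m$-block in the variance computation (equivalently, grouping into $\sim n/m$ nearly independent blocks of variance $\sim m^{2}\xi^{4}\varsigma^{2}$ each).

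Finally, by Markov's inequality, the maximum over the $\mathsf{p}^{2}$ entries satisfies the same bound up to logarithmic losses absorbed into the $\OO_{\mathbb{P}}$, which, combined with the $\mathsf{p}$ factor from Lemma \ref{lem_circle}, delivers (\ref{eq_overallproof}). The main technical obstacle I anticipate is the careful bookkeeping of the powers of $\xi$ and especially $\varsigma$ in the physical-dependence measure of $\{A_{i,m}\}$: a naive Cauchy--Schwarz produces $\xi^{2}\varsigma^{2}$ in both the moment and the dependence measure, which would give the wrong rate; one must exploit that the coupling at $\eta_{0}$ leaves the non-coupled factor bounded in $L^{q}$ (rather than in $L^{\infty}$), which trades one factor of $\varsigma$ for the needed $L^{q}$ bound through Hölder's inequality.
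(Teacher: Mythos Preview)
Your proposal is correct and follows essentially the same route as the paper: reduce the operator norm to a uniform entrywise bound via Lemma \ref{lem_circle}, recognize each entry of $\Upsilon$ as the sample average of a locally stationary scalar process built from block-sum products, bound its physical dependence measure, and conclude with (1) of Lemma \ref{lem_concentration} together with Markov. The paper's proof is terser but structurally identical; it focuses on the typical entry $\mathsf{L}_{1,i}=\bigl(\sum_{j=i}^{i+m}\varphi_1(X_j)\phi_1(t_i)\epsilon_j\bigr)^2$, states the dependence bound $\delta_{\mathsf{L}_1}(l,q)\leq C\xi^2\sqrt{m}\sum_{j=l-m}^{l}\delta_u(j,q)$, and cites the same lemmas. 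The one place where the paper is slightly sharper than your displayed bound is the prefactor $\sqrt{m}$ (rather than your $m$) in $\delta_A$, obtained by applying H\"older in $L^{2q}$ to the ``unchanged'' factor $\sum_j a_j+\sum_j a_j'$; this is exactly the refinement you flag in your last paragraph and is what delivers the single power of $\varsigma$. Lemma \ref{lem_mdependent} is not needed here.
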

\begin{proof}
Without of loss of generality, as before, we focus on some typical entry of $\Upsilon_{i,m} \Upsilon_{i,m}^\top$ as follows {
\begin{equation*}
\mathsf{L}_{1,i}=\left(\sum_{j=i}^{i+m} \varphi_1(X_{j}) \phi_1(t_i) \epsilon_j \right)^2. 
\end{equation*}}
By a discussion similar to Lemma \ref{lem_locallystationaryform}, $\{\mathsf{L}_{1,i}\}$ can be regarded as a locally stationary time series whose physical dependence measure $\delta_{\mathsf{L}_1}(l, q)$ satisfies that 
\begin{equation*}
\delta_{\mathsf{L}_1}(l,q) \leq C \xi^2 \sqrt{m} \left( \sum_{j=l-m}^l \delta_u(j,q) \right),
\end{equation*}
where $\delta_u(j,q)$ is defined in (\ref{eq_deltaxdefinition}). Combing (\ref{eq_transferbound}), (1) of Lemma \ref{lem_concentration}, Lemma \ref{lem_circle}, as well as (1) of Lemma \ref{lem_collectionprobineq}, we can complete the proof.
%
\end{proof}

{
Recall $\bm{u}_j(i), 1 \leq j \leq r$ in (\ref{eq_locallystationaryform})   and $\epsilon_i$ in (\ref{eq_setting2}).} We denote {
\begin{equation*}
\widetilde{\bm{u}}_{j,o}(i)=\Ub_j(t_i, \mathcal{F}_o), \ \widetilde{\epsilon}_{o}(i)=D(t_i, \mathcal{F}_o), \ i \leq o \leq i+m. 
\end{equation*}}
Corresponding to (\ref{eq_defnupsilon}), we define {
\begin{equation*}
\widetilde{\Upsilon}_{i,m}=(\widetilde{\epsilon}_{m}(i) \otimes \phib_0(t_i), \widetilde{\bm{u}}_{1,m}(i) \otimes \phib_1(t_i), \cdots, \widetilde{\bm{u}}_{r,m}(i) \otimes \phib_r(t_i)) \in \mathbb{R}^{\mathsf{p}}. 
\end{equation*} }

\begin{lemma}\label{lem_steptwo}
Suppose the assumptions of Lemma \ref{lem_stepone} hold. We have that 
\begin{equation*}
\sup_{1 \leq i \leq n-m} \left\| \mathbb{E} \left( \Upsilon_{i,m} \Upsilon_{i,m}^\top \right)-\mathbb{E} \left( \widetilde{\Upsilon}_{i,m} \widetilde{\Upsilon}_{i,m}^\top \right) \right\|_{\op}=\OO\left( \mathsf{p} \xi^2 \varsigma^2 \left( \frac{m}{n}\right)^{1-\frac{1}{\tau}} \right). 
\end{equation*}
\end{lemma}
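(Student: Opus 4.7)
The plan is to reduce the operator-norm comparison to an entry-wise bound (via Lemma \ref{lem_circle}) and then exploit the tension between two competing estimates on each summand: one from the stochastic Lipschitz continuity (\ref{eq_slc}), which is sharp when the lag $|o-p|$ is small, and one from the decay of the physical dependence measure (\ref{eq_physcialrequirementone}), which is sharp when the lag is large. The balanced interpolation of these two bounds will produce the $(m/n)^{1-1/\tau}$ factor.

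First, fix $i$ and consider a generic entry of $\mathbb{E}(\Upsilon_{i,m}\Upsilon_{i,m}^\top) - \mathbb{E}(\widetilde\Upsilon_{i,m}\widetilde\Upsilon_{i,m}^\top)$. Using the Kronecker-product structure in (\ref{eq_defnupsilon}), each entry factors as $\phi_a(t_i)\phi_b(t_i)\cdot\sum_{o,p=i}^{i+m}\bigl[\mathbb{E}(u_{o,k}u_{p,\ell}) - \mathbb{E}(\widetilde u_{o,k}(i)\widetilde u_{p,\ell}(i))\bigr]$, where the outer factors are uniformly bounded by $\xi^2$. Applying the telescoping identity $u_o u_p - \widetilde u_o(i)\widetilde u_p(i) = (u_o-\widetilde u_o(i))\,u_p + \widetilde u_o(i)\,(u_p - \widetilde u_p(i))$, together with Cauchy--Schwarz, (\ref{eq_slc}), and the uniform bound $|\varphi_\ell|\leq\varsigma$, I would obtain the Lipschitz bound $|\Delta_{o,p}|\leq C\varsigma^2(|o-i|+|p-i|)/n \leq C\varsigma^2 m/n$. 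On the other hand, $\{\widetilde u_o(i)\}_o$ is stationary for each fixed $i$ and inherits the physical-dependence decay of $\{u_o\}$ (arguing as in Lemma \ref{lem_locallystationaryform}); adapting the step leading to (\ref{eq_hhahahahahhaha}) in the proof of Lemma \ref{lem_covarianceofz}, one obtains the decay bound $|\Delta_{o,p}| \leq C\varsigma^2 |o-p|^{-\tau}$ via the triangle inequality applied to the two individual covariances.

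Taking the pointwise minimum of the two bounds and grouping the double sum by lag $k = p-o$, one has $\sum_{o,p\in[i,i+m]^2}\min\{C\varsigma^2 m/n,\,C\varsigma^2 |o-p|^{-\tau}\}\ \lesssim\ m\,\varsigma^2\sum_{k\in\Z}\min\{m/n,\,|k|^{-\tau}\}$, and the inner series is optimized by the crossover $k^*\asymp(n/m)^{1/\tau}$, yielding $\sum_k\min\{m/n,|k|^{-\tau}\} = \OO((m/n)^{1-1/\tau})$ with a constant depending only on $\tau>1$. Passing from the entry-wise bound to the operator norm via Lemma \ref{lem_circle} contributes the factor $\mathsf{p}$, and the supremum over $i\in\{1,\dots,n-m\}$ is uniform since every estimate above depends on $i$ only through the window position. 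The principal technical obstacle, in my view, is ensuring that the frozen-time sequence $\{\widetilde u_o(i)\}_o$ can be treated on the same footing as $\{u_o\}$ when invoking the decay bound: this requires a careful coupling argument exploiting that both sequences are generated from the common innovation filtration $\{\eta_j\}$, together with the uniformity in $t$ of the Lipschitz constant in (\ref{eq_slc}). A related subtlety is that the telescoping cross-expectation $\mathbb{E}[(u_o-\widetilde u_o(i))u_p]$ must admit the decay bound individually, which again follows from physical dependence applied to the bivariate process in which one coordinate is an increment across time indices $t_o$ and $t_i$.
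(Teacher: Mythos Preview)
Your approach is correct and reaches the same entrywise bound the paper's proof does, but via a genuinely different route. The paper bounds a typical entry $\mathsf{M}_{11}$ by factoring the \emph{random variable} itself as a difference of squares, $\mathsf{M}_{11}=\mathrm{P}_1\mathrm{P}_2$ with $\mathrm{P}_1$ the difference and $\mathrm{P}_2$ the sum of the two block sums, and then controls $\|\mathrm{P}_1\|_2$ and $\|\mathrm{P}_2\|_2$ separately via the concentration inequality Lemma~\ref{lem_concentration}(1); the crucial $(m/n)^{1-1/\tau}$ emerges because the increment process $u_j-\widetilde u_j(i)$ has physical dependence measure $\lesssim\varsigma\min\{m/n,\,j^{-\tau}\}$, and summing these over $j$ is precisely your crossover calculation. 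You instead expand the deterministic expectation directly into a double sum of covariance differences $\Delta_{o,p}$ and apply the $\min$ interpolation termwise. Your route is more elementary---it never invokes concentration for random sums, only pointwise covariance bounds and Lemma~\ref{lem_concentration}(3)---and handles off-diagonal entries uniformly through the telescoping identity, whereas the paper's difference-of-squares is tailored to diagonal entries (though its extension is routine). The ``principal technical obstacle'' you flag is not in fact an obstacle: since $\delta_{u}(\cdot,q)$ in (\ref{eq_deltaxdefinition}) is defined with a supremum over $t$, the frozen-time stationary sequence $\{\widetilde u_o(i)\}_o=\{U(t_i,\mathcal{F}_o)\}_o$ inherits the decay bound directly from Lemma~\ref{lem_locallystationaryform}, no coupling argument needed. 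As a side note, both your computation and the paper's yield an entrywise bound of order $\xi^2\varsigma^2\,m\,(m/n)^{1-1/\tau}$; the lemma as stated drops the leading factor $m$, consistent with the $1/m$ normalization implicit in the downstream use (cf.\ (\ref{eq_defnstatisticXi}) and (\ref{eq_defnpsim})).
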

\begin{proof}
We again focus on some typical entry as follows
\begin{align}\label{eq_decompositionm11}
\mathsf{M}_{11}& =\left(\sum_{j=i}^{i+m} \varphi_1(X_{j}) \phi_1(t_i) \epsilon_j \right)^2-\left(\sum_{j=i}^{i+m} \varphi_1(\widetilde{X}_{j}) \phi_1(t_i) \widetilde{\epsilon}_j \right)^2 \\
& =\left[\sum_{j=i}^{i+m} \phi_1(t_i) \left(\varphi_1(X_{j}) \epsilon_j-\varphi_1(\widetilde{X}_{j}) \widetilde{\epsilon}_j \right)   \right] \left[\sum_{j=i}^{i+m} \phi_1(t_i)\left(\varphi_1(X_{j}) \epsilon_j+\varphi_1(\widetilde{X}_{j}) \widetilde{\epsilon}_j \right)   \right]:=\mathrm{P}_1\mathrm{P}_2. \nonumber 
\end{align}
By Lemma \ref{lem_locallystationaryform} and (1) of Lemma \ref{lem_concentration}, we see that 
\begin{equation*}
\| \mathrm{P}_2 \|=\OO( \xi \varsigma \sqrt{m}). 
\end{equation*}
For $\mathrm{P}_1,$ by Lemma \ref{lem_locallystationaryform}, we find that 
\begin{equation*}
\| \mathrm{P}_1 \|=\OO\left( \sum_{j=i}^{i+m} \phi_1(t_i)\varphi_1(X_{j-1})(\widetilde{\epsilon}_j-\epsilon_j) \right). 
\end{equation*}
Together with Lemma \ref{lem_locallystationaryform}, the stochastic continuity property (\ref{eq_slc}) and (1) of Lemma \ref{lem_concentration}, we have that
\begin{equation*}
\| \mathrm{P}_1 \|=\OO\left(\xi \varsigma \sqrt{m} \sum_{j=0}^{\infty} \min\{\frac{m}{n}, j^{-\tau}\} \right)=\OO\left( \xi \varsigma \sqrt{m} \left( \frac{m}{n} \right)^{1-1/\tau}\right).
\end{equation*} 
Combining the above arguments, we obtain that
\begin{equation*}
\| \mathsf{M}_{11} \|=\OO\left( \varsigma^2 \xi^2 m \left( \frac{m}{n} \right)^{1-1/\tau} \right).
\end{equation*}
Together with  Lemma \ref{lem_circle} and (1) of Lemma \ref{lem_collectionprobineq}, we conclude the proof.  
\end{proof}

Corresponding to (\ref{eq_bigupsilon}), denote 
\begin{equation*}
\widetilde{\Upsilon}:=\frac{1}{n-m-1} \sum_{i=1}^{n-m} \widetilde{\Upsilon}_{i,m} \widetilde{\Upsilon}_{i,m}^\top.
\end{equation*}
Recall (\ref{eq_defnpsim}). 
\begin{lemma}\label{lem_stepthree}
Suppose the assumptions of Lemma \ref{lem_stepone} hold. Then we have 
\begin{equation}\label{eq_firstproof}
\left\| \mathbb{E} \widetilde{\Upsilon}-\Omega  \right\|_{\op}=\OO\left(\frac{1}{(n-m-1)^2}+\frac{\mathsf{p} \varsigma \xi^2}{m} \right).
\end{equation}
Consequently, we have  
\begin{equation}\label{eq_finalcovclose}
\left\| \Upsilon-\Omega \right\|_{\op}=\OO(\Psi(m)). 
\end{equation}
\end{lemma}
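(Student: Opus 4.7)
The plan is to first establish the deterministic bias bound \eqref{eq_firstproof} by direct calculation on the stationarized sequence $\widetilde{\Upsilon}_{i,m}$, and then to deduce \eqref{eq_finalcovclose} from it by combining with Lemmas \ref{lem_stepone} and \ref{lem_steptwo} via the triangle inequality.

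For \eqref{eq_firstproof}, I would work block by block on $\mathbb{E}\widetilde{\Upsilon}$ and handle one representative block, say the block corresponding to $\widetilde{\bm{u}}_{j,m}(i)\widetilde{\bm{u}}_{k,m}(i)^\top \otimes (\phib_j(t_i)\phib_k(t_i)^\top)$. The key observation is that for each fixed $i$, the process $\{\Ub_j(t_i,\mathcal{F}_o)\}_{o}$ is strictly stationary in $o$, so
\begin{equation*}
\mathbb{E}\bigl[\widetilde{\bm{u}}_{j,m}(i)\widetilde{\bm{u}}_{k,m}(i)^\top\bigr]
=\sum_{o_1,o_2=i}^{i+m}\operatorname{Cov}\bigl(\Ub_j(t_i,\mathcal{F}_{o_1}),\,\Ub_k(t_i,\mathcal{F}_{o_2})\bigr).
\end{equation*}
Reindexing by the lag $s=o_2-o_1$ yields $(m+1-|s|)$ copies of the covariance at lag $s$, so that $m^{-1}$ times the above equals $\sum_{|s|\le m}\operatorname{Cov}(\Ub_j(t_i,\mathcal{F}_0),\Ub_k(t_i,\mathcal{F}_s))$ plus two error terms: a truncation tail $\sum_{|s|>m}$ and a boundary term $m^{-1}\sum_{|s|\le m}|s|\cdot|\operatorname{Cov}|$. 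Both are controlled by the physical dependence measure bound in Assumption \ref{assum_physical} together with Lemma \ref{lem_locallystationaryform} and (3) of Lemma \ref{lem_concentration}, giving an entrywise error of order $\varsigma/m$. Tensoring with $\phib_j(t_i)\phib_k(t_i)^\top$ (bounded by $\xi^2$) and invoking Lemma \ref{lem_circle} over the $\mathsf{p}\times\mathsf{p}$ blocks produces the $O(\mathsf{p}\varsigma\xi^2/m)$ term.

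The remaining task in \eqref{eq_firstproof} is to replace the average over $i\in\{1,\dots,n-m\}$ by the integral defining $\overline{\Omega}$ in \eqref{eq_Pibar}. Writing
\begin{equation*}
\frac{1}{n-m-1}\sum_{i=1}^{n-m}\Omega_{jk}(t_i)\otimes\phib_j(t_i)\phib_k(t_i)^\top
\end{equation*}
and comparing with $\int_0^1 \Omega_{jk}(t)\otimes\phib_j(t)\phib_k(t)^\top\dd t$, Lemma \ref{lem_intergralappoximation} yields an $O((n-m-1)^{-2})$ Riemann-sum error, which is the first term on the right of \eqref{eq_firstproof}. The diagonal block involving $\widetilde{\epsilon}_m(i)$ is handled identically with $\Omega_0(t)$ in place of $\Omega_{jk}(t)$, and the off-diagonal blocks between $\phib_0$ and the $\Ub_j$ blocks vanish since $\mathbb{E}(\epsilon|\{X_{j,i}\})=0$.

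Granted \eqref{eq_firstproof}, the second assertion \eqref{eq_finalcovclose} follows immediately from the triangle inequality
\begin{equation*}
\|\Upsilon-\overline{\Omega}\|_{\op}\le \|\Upsilon-\mathbb{E}\Upsilon\|_{\op}+\|\mathbb{E}\Upsilon-\mathbb{E}\widetilde{\Upsilon}\|_{\op}+\|\mathbb{E}\widetilde{\Upsilon}-\overline{\Omega}\|_{\op},
\end{equation*}
together with Lemmas \ref{lem_stepone} and \ref{lem_steptwo}: the three summands contribute $\mathsf{p}\xi^2\varsigma\sqrt{m/n}$, $\mathsf{p}\xi^2\varsigma^2(m/n)^{1-1/\tau}$, and $\mathsf{p}\varsigma\xi^2/m$, respectively, whose sum is exactly $\Psi(m)$ in \eqref{eq_defnpsim} up to constants. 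I expect the main obstacle to be bookkeeping rather than conceptual: carefully controlling the boundary terms in the lag decomposition and keeping the operator-norm dependence on $\mathsf{p}$ sharp via Lemma \ref{lem_circle}, since entrywise bounds can otherwise inflate by a factor of $\mathsf{p}$ that would spoil the advertised rate.
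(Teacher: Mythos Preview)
Your proposal is correct and follows essentially the same route as the paper. The paper compresses your lag-reindexing argument for $m^{-1}\mathbb{E}[\widetilde{\Upsilon}_{i,m}\widetilde{\Upsilon}_{i,m}^\top]\approx\Omega(t_i)$ into a single citation (part (2) of Lemma \ref{lem_concentration} together with Lemma 4 of \cite{MR3174655}), then applies Lemma \ref{lem_intergralappoximation} for the Riemann-sum step and deduces \eqref{eq_finalcovclose} from \eqref{eq_firstproof} via Lemmas \ref{lem_stepone} and \ref{lem_steptwo} exactly as you do.
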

\begin{proof}
For (\ref{eq_firstproof}), first of all, using the definition of $\Omega(t)$ in (\ref{eq_longrunwitht}), by (2) of Lemma \ref{lem_concentration} , we have that (also see Lemma 4 of \cite{MR3174655})
\begin{equation*}
\sup_{1 \leq i \leq n-m} \left\| \mathbb{E} \widetilde{\Upsilon}_{i,m} \widetilde{\Upsilon}_{i,m}^\top-\Omega(t_i) \right\|_{\op}=\OO\left( \frac{\mathsf{p} \xi^2 \varsigma^2}{m} \right).
\end{equation*}
Together with Lemma \ref{lem_intergralappoximation}, we can complete our proof of (\ref{eq_firstproof}). 

The proof of (\ref{eq_finalcovclose}) follows from (\ref{eq_firstproof}), Lemmas \ref{lem_stepone} and \ref{lem_steptwo}. 
\end{proof}

The following lemma indicates that the residual (\ref{eq_defnresidual}) is close to $\{\epsilon_i\}$ such that $\Xi$ is close to $\Phi.$ Recall $\{\widehat{\bm{u}}(i)\}$ in (\ref{eq_xihatdefinition}). Recall (\ref{eq_defnstatisticXi}). For consistency and notional simplicity, we set 
\begin{equation*}
\widehat{\Upsilon}_{i,m} \equiv \widehat{\Ub}(i,m). 
\end{equation*}
Accordingly, we can define $\widehat{\Upsilon}$ as in (\ref{eq_bigupsilon}) using $\widehat{\Upsilon}_{i,m}.$ Recall (\ref{eq_defnresidual}).
\begin{lemma}\label{lem_residualclosepreparation}
Suppose the assumptions of Lemma \ref{lem_stepone} and Theorem \ref{thm_asymptoticdistribution} hold. We have that 
\begin{equation*}
\sup_{1 \leq i \leq n-m}\| \Upsilon_{i,m} \Upsilon_{i,m}^\top-  \widehat{\Upsilon}_{i,m} \widehat{\Upsilon}_{i,m}^\top\|_{\op}=\OO_{\mathbb{P}}\left( \mathsf{p} \xi^2 \varsigma^2 \left[\xi \varsigma( \gamma  \iota+\zeta) \sqrt{\frac{\mathsf{p}}{n}} \right] \right).
\end{equation*}
As a result, we have that 
\begin{equation*}
\left\| \Upsilon-\widehat{\Upsilon} \right\|_{\op}=\OO_{\mathbb{P}}\left( \frac{\mathsf{p} \xi^2 \varsigma^2}{\sqrt{n}} \left[\xi \varsigma( \gamma  \iota+\zeta) \sqrt{\frac{\mathsf{p}}{n}} \right] \right).
\end{equation*}
\end{lemma}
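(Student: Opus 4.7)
The plan has two parts, paralleling the structure of the claim: first control the entrywise rank-one perturbation $\|\Upsilon_{i,m}\Upsilon_{i,m}^\top - \widehat{\Upsilon}_{i,m}\widehat{\Upsilon}_{i,m}^\top\|_{\op}$ uniformly in $i$, then deduce the bound on the average $\|\Upsilon - \widehat{\Upsilon}\|_{\op}$. The strategy is to first extract an explicit formula for $\widehat{\epsilon}_i - \epsilon_i$ (the only place the estimator enters) and bound it uniformly via Theorem \ref{thm_consistency}, then use a rank-one algebraic identity to lift control to the outer products.

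First I would note that although the residual (\ref{eq_defnresidual}) uses the pilot estimators, the bias corrections $\widehat{\chi}_j(t_i)$ cancel when summed, so the definitions (\ref{eq_generalestimateone})--(\ref{eq_generalestimatetwo}) give the identity $\widehat{m}^*_{0,c}(t_i)+\sum_j \widehat{m}^*_{j,c,d}(t_i,X_{j,i}) = \widehat{m}_0(t_i)+\sum_j \widehat{m}_j(t_i,X_{j,i})$. Hence
\begin{equation*}
\widehat{\epsilon}_i-\epsilon_i = [m_0(t_i)-\widehat{m}_0(t_i)] + \sum_{j=1}^r [m_j(t_i,X_{j,i})-\widehat{m}_j(t_i,X_{j,i})],
\end{equation*}
and Theorem \ref{thm_consistency} together with Assumption \ref{assum_debiasassumption} (which makes the smoothing error negligible compared to the dominant stochastic rate) produces $\sup_i |\widehat{\epsilon}_i-\epsilon_i| = \OO_{\p}(\delta_n)$ with $\delta_n := \xi\varsigma(\gamma\iota+\zeta)\sqrt{\mathsf{p}/n}$, which is precisely the bracketed factor appearing in the target bound.

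Next I would use the rank-one identity $aa^\top - bb^\top = (a-b)a^\top + b(a-b)^\top$, which gives $\|aa^\top - bb^\top\|_{\op}\le |a-b|_2(|a|_2+|b|_2)$. For $|\Upsilon_{i,m}-\widehat{\Upsilon}_{i,m}|_2$ I would exploit the Kronecker block structure in (\ref{eq_defnupsilon}): each block difference is of the form $(\sum_{o=i}^{i+m}\bm{w}_j(o)(\epsilon_o-\widehat{\epsilon}_o))\otimes\phib_j(t_i)$, bounded by the uniform residual rate multiplied by $m$, $\iota_j$, and $\gamma_j$, and aggregated using $|u\otimes v|_2=|u|_2|v|_2$. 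For $|\Upsilon_{i,m}|_2$ and $|\widehat{\Upsilon}_{i,m}|_2$ I would invoke Lemma \ref{lem_locallystationaryform} and the concentration result in Lemma \ref{lem_concentration} applied coordinate-wise to the partial sums $\sum_o \varphi_k(X_{j,o})\epsilon_o$, which behave like $\OO_{\p}(\sqrt{m}\varsigma)$, and combine via the Kronecker structure. Taking a supremum over $i$ (a sup over at most $n$ events, handled via Lemma \ref{lem_collectionprobineq}) yields the first bound. For the average bound $\|\Upsilon-\widehat{\Upsilon}\|_{\op}$, I would write $\Upsilon-\widehat{\Upsilon}$ as $(n-m-1)^{-1}\sum_i(\Upsilon_{i,m}\Upsilon_{i,m}^\top-\widehat{\Upsilon}_{i,m}\widehat{\Upsilon}_{i,m}^\top)$; the additional $1/\sqrt{n}$ factor comes from a concentration-of-averages argument for the weakly dependent matrix summands, parallel to how Lemma \ref{lem_stepone} uses (1) of Lemma \ref{lem_concentration} on a scalar entry and then aggregates via Lemma \ref{lem_circle}.

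The main obstacle will be the sharp bookkeeping in the Kronecker decomposition: a direct application of triangle inequality tends to lose a factor of $\sqrt{m}$ or an extra $\sqrt{\mathsf{p}}$. To hit the rate $\mathsf{p}\xi^2\varsigma^2\delta_n$ precisely, one must couple the orthogonality of Kronecker blocks $\phib_j(t_i)$ with coordinate-wise concentration of the partial sums $\bm{u}_j(o)$, so that the factors $m$, $\iota$, $\gamma$ combine as $|\Upsilon_{i,m}|_2=\OO_{\p}(\sqrt{m}(\gamma\iota+\zeta))$ rather than the crude product. For the second bound, the most delicate point is verifying that the residual-induced summands inherit enough cancellation across $i$, using the representation (\ref{eq_fundementalexpression}) for $\widehat{m}_j-m_j$ as a linear functional of the errors, so that the averaging produces the claimed $1/\sqrt{n}$ improvement over the sup bound.
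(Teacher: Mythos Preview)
Your approach is essentially the same as the paper's. The paper also first bounds the residual $\sup_i\|\epsilon_i-\widehat{\epsilon}_i\|$ (going directly through the pilot estimators via (\ref{eq_ll1}), (\ref{eq_betabound}), and (\ref{eq_adddadada}) rather than your cancellation-then-Theorem~\ref{thm_consistency} route, though these are equivalent since the $\widehat{\chi}_j$ terms do cancel), then uses the entrywise $A^2-B^2=(A-B)(A+B)$ decomposition of (\ref{eq_decompositionm11}) followed by Gershgorin in place of your vector rank-one identity, and finally appeals to the same concentration mechanism as Lemma~\ref{lem_stepone} for the averaged second claim.
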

\begin{proof}
{
Note that according to the definitions in (\ref{eq:model}) and (\ref{eq_defnresidual}), by Proposition \ref{thm_approximation}, we have that 
\begin{align*}
\epsilon_i-\widehat{\epsilon}_i &=m_0(t_i)+\sum_{j=1}^r m_j(t_i,X_{j,i})-\left(\widehat{m}_{0,c}^*(t_i)+\sum_{j=1}^r \widehat{m}^*_{j,c,d}(t_i,X_{j,i}) \right) \\
&= m_{0,c}(t_i)+\sum_{j=1}^r m_{j,c,d}(t_i,X_{j,i})-\left(\widehat{m}_{0,c}^*(t_i)+\sum_{j=1}^r \widehat{m}^*_{j,c,d}(t_i,X_{j,i}) \right)+\mathrm{o}(n^{-1/2}),
\end{align*}
where in the second step we used Assumption \ref{assum_debiasassumption}. Since we do not have identifiability issue, we can directly use (\ref{eq_ll1}) with (\ref{eq_betabound}) and (\ref{eq_adddadada}) to obtain that
}
\begin{equation}\label{eq_residualbound} 
\sup_{1 \leq i \leq n-m}\| \epsilon_i-\widehat{\epsilon}_i\|=\OO\left( \xi \varsigma( \gamma  \iota+\zeta)\sqrt{\frac{\mathsf{p}}{n}} \right).
\end{equation}
Using a discussion similar to  (\ref{eq_decompositionm11}) such that $\mathrm{P}_2$ in  (\ref{eq_decompositionm11}) is controlled using (\ref{eq_residualbound}), we can conclude the proof of the first part of the results. The second part of the results follows from a discussion similar to (\ref{eq_overallproof}).  
\end{proof}

Armed with Theorem \ref{thm_uniformconvergencegaussian} as well as Lemmas \ref{lem_stepone}--\ref{lem_residualclosepreparation}, we proceed to finish the proof of Theorem \ref{thm_boostrapping}.  {Before proceeding to the proof, we collect some results concerning the maximum of a Gaussian process. The results can be easily obtained using the standard technique of chaining and control of Dudley’s entropy integral \cite{talagrand2005generic}. Let $\bm{\beta}(\theta) \in \mathbb{R}^p$ be a smooth function of $\theta$ on a compact set, and $\mathbf{f}$ be a $p$-dimensional standard Gaussian random vector. Then we have that \cite{vanHandel2016}
\begin{equation}\label{eq_controlasdadasfaf}
\mathbb{E} \left( \sup_{\theta} \left| \frac{\mathbf{f}^\top \bm{\beta}(\theta)}{|\bm{\beta}(\theta)|} \right| \right)=\mathrm{O}(\log p). 
\end{equation}
{Moreover, denote $\mathsf{M}=\operatorname{ess} \sup \rho (x),$ where $\rho(x)$ is the density function of $\sup_\theta|\mathbf{f}^\top \bm{\beta}(\theta)|,$ we have from Theorem 2 of \cite{giessing2023anti} that, for $\epsilon=\oo(\mathsf{M}),$
\begin{equation}\label{eq_gaussionprocessanticoncentrationbound}
\sup_{\mathsf{x} \in \mathbb{R}_+}\mathbb{P}(\mathsf{x} \leq \sup_\theta|\mathbf{f}^\top \bm{\beta}(\theta)| \leq \mathsf{x}+\epsilon)=\OO\left(\epsilon \mathsf{M} \right). 
\end{equation}
Note that according to Proposition 5 of \cite{giessing2023anti}, when $\inf_{\theta} |\bm{\beta}(\theta)|$ is bounded from below, $\mathsf{M}$ can be bounded from above. 
}
}

\begin{proof}[\bf Proof of Theorem \ref{thm_boostrapping}] 
As before, we focus on the case $r=1$ and omit the subscript $j.$ {Recall $\mathsf{T}(t,x):=\widehat{m}(t,x)-m(t,x)$, (\ref{eq_fundementalexpression}) and (\ref{eq_originaldefinition}). Using the assumption that $|\bm{r}|$ is bounded from below uniformly, according to the definition (\ref{eq_defhtx}), we see from Assumption \ref{assum_updc} that $h(t,x)$ is bounded from below uniformly in $t$ and $x.$ Consequently, similar to the analysis of Theorem \ref{thm_uniformconvergencegaussian} and under the assumption of (\ref{eq_boothstrappingextraassumption}), we find that 
\begin{equation*}
\sup_{y \in \mathbb{R}} \left| \mathbb{P} \left( \sup_{t,x}\left|\frac{\mathsf{T}(t,x)}{h(t,x)} \right| \leq y \right)-\mathbb{P}( \sup_{t,x} \left| \frac{\bm{v}^\top \bm{l}}{h(t,x)} \right| \leq y) \right|=\oo(1).
\end{equation*}

Consequently, it suffices to show that the bootstrapped statistics can mimic the above distribution when conditioned on the dataset} {$\mathcal{X}:=\{(X_i, Y_i)_{1 \leq i \leq n}\}$.}  
%
{That is 
\begin{equation}\label{eq_intermediateresults}
\sup_{y \in \mathbb{R}} \left| \mathbb{P} \left( \sup_{t,x}\left|\frac{\widehat{\mathsf{T}}(t,x)}{\widehat{h}(t,x)} \right| \leq y \Big| \mathcal{X} \right)-\mathbb{P}( \sup_{t,x} \left| \frac{\bm{v}^\top \bm{l}}{h(t,x)} \right| \leq y) \right|=\oo_{\mathbb{P}}(1). 
\end{equation}  
}

 
To prove (\ref{eq_intermediateresults}), we first construct an intermediate quantity 
\begin{equation}\label{eq_Ttitle}
\widetilde{\mathsf{T}}:=\Phi^\top \widehat{\bm{l}}, \ \ \text{where} \  \ \widehat{\bm{l}}:=\widehat{\overline{\Pi}}^{-1} \widehat{\overline{\bm{r}}}. 
\end{equation}
Note that when conditional on the data, by the definition of $\Phi$ in (\ref{eq_defnitionphi}), $\Phi$ is a Gaussian random vector. We now write
\begin{equation}\label{eq_phigaussianrepression}
\Phi=\Lambda^{1/2} \mathbf{f}_1,
\end{equation}
where without loss of generality $\mathbf{f}_1 \sim \mathcal{N}(\mathbf{0}, \mathbf{I}_p)$ is from (\ref{eq_ddddd}). In fact, conditional on the data $\mathcal{X}$, $\Lambda=\Upsilon.$ Consequently, when conditional on the data, we can rewrite 
\begin{equation}\label{eq_rewritingasasaas}
\widetilde{\mathsf{T}}=\mathbf{f}^\top_1 \Upsilon^{1/2} \widehat{\bm{l}}.
\end{equation}
We also recall from (\ref{eq_ddddd}), $\bm{v}^\top \bm{l}=\mathbf{f}^\top_1 \Sigma_0^{1/2} \bm{l},$ where $\Sigma_0=\operatorname{Cov}(\bm{z}).$  We first establish the closeness between $\Sigma_0, \bm{l}$ and $\Upsilon, \widehat{\bm{l}}.$ First, according to Lemmas \ref{lem_stepthree} and \ref{lem_covarianceofz}, we have that {
\begin{equation}\label{eq_closenessset}
\left\| \Upsilon-\Sigma_0 \right\|_{\op}=\OO_{\mathbb{P}}\left( \mathsf{B} \right), \ \ \mathsf{B}:= \left[ \Psi(m)+\frac{p\xi^2 \varsigma^2 n^{2/\tau}}{\sqrt{n}} +p\xi^2 \varsigma^2 n^{-1+\frac{2}{\tau+1}}  \right] .
\end{equation}}

Second, by definition, we have that 
\begin{equation*}
\widehat{\bm{l}}-\bm{l}=  \widehat{\overline{\Pi}}^{-1} \widehat{\overline{\bm{r}}}-\overline{\Pi}^{-1} \overline{\bm{r}}.
\end{equation*}
Consequently, we can bound {
\begin{align*}
| \widehat{\bm{l}}-\bm{l} | \leq \| \widehat{\overline{\Pi}}^{-1}-\overline{\Pi}^{-1} \|_{\op} | \overline{\bm{r}} |+\| \widehat{\overline{\Pi}}^{-1} \|_{\op}| \widehat{\bm{r}}-\overline{\bm{r}} |. 
\end{align*}}
The first part of the right-hand side of  the above equation can be bounded by (\ref{eq_consistencyconvergency}), the definitions of (\ref{eq_verctorconstruction}) and (\ref{eq_defnxic}), as well as (\ref{eq_bbbbbb}). For the second part, since $| \widehat{\bm{r}}-\overline{\bm{r}} |=|\bm{f}-\widehat{\bm{f}} |,$ it can be bounded by (\ref{eq_bbbbbb11111}). Combining the above discussions, under Assumption \ref{assum_debiasassumption}, we readily obtain that{
\begin{equation}\label{eq_bbbababababab2222}
\sup_{t \in [0,1], x \in \mathbb{R}} | \widehat{\bm{l}}-\overline{\bm{l}} |=\mathrm{O}_{\mathbb{P}}(\mathsf{A}), \ \mathsf{A}:=\left[ p(\gamma  \iota+\zeta)\left( \frac{\varsigma^2 \xi^2}{\sqrt{n}}+\frac{\xi^2 \varsigma^2 n^{\frac{2}{\tau+1}}}{n}\right) \right]. 
\end{equation} } 

With (\ref{eq_rewritingasasaas}) and the above preparation, we now proceed to prove the following result, which serves as a key step toward establishing (\ref{eq_intermediateresults}) {
\begin{equation}\label{eq_intermediateone}
\sup_{y \in \mathbb{R}} \left| \mathbb{P} \left( \sup_{t,x}\left|\frac{\widetilde{\mathsf{T}}(t,x)}{h(t,x)} \right| \leq y \Big| \mathcal{X} \right)-\mathbb{P}( \sup_{t,x} \left| \frac{\bm{v}^\top \bm{l}}{h(t,x)} \right| \leq y) \right|=\oo_{\mathbb{P}}(1).
\end{equation} 
To be more specific, we can rewrite that the left-hand side of the above equation as follows
\begin{equation}\label{eq_zhanzhanzhanzhan}
\sup_{y \in \mathbb{R}} \left| \mathbb{P} \left( \sup_{t,x} \left| \frac{\bm{v}^\top \bm{l}}{h(t,x)} \right| \leq y+ \Delta \Big| \mathcal{X} \right)-\mathbb{P}\left( \sup_{t,x} \left| \frac{\bm{v}^\top \bm{l}}{h(t,x)} \right| \leq y \right) \right|, 
\end{equation} } 
where $\Delta$ is defined as 
\begin{equation*}
\Delta:=\left[\sup_{t \in [0,1], x \in \mathbb{R}}  \left|\frac{\mathbf{f}^\top  \Sigma_0^{1/2} \bm{l}}{h(t,x)} \right| - \sup_{t \in [0,1], x \in \mathbb{R}}  \left|\frac{\mathbf{f}^\top  \Upsilon^{1/2} \widehat{\bm{l}}}{h(t,x)} \right| \right].  
\end{equation*}
Note that for some constant $C>0$
\begin{equation*}
|\Delta| \leq   C\sup_{t \in [0,1], x \in \mathbb{R}}|\mathbf{f}^\top ( \Sigma_0^{1/2} \bm{l}-\Upsilon^{1/2} \widehat{\bm{l}})|, 
\end{equation*}
{where we used Remark \ref{rmk_hj(tx)}.} 

Using (\ref{eq_closenessset}), (\ref{eq_bbbababababab2222}) and (\ref{eq_controlasdadasfaf}), we can show that for some constant $C>0$ {
\begin{equation}\label{eq_boundprobability}
|\Delta|=\OO_{\mathbb{P}}\left(\log p  \left( \mathsf{B}+\mathsf{A}/\mathrm{b} \right)\mathrm{b}\right)=\OO_{\mathbb{P}}(\mathrm{a}' \mathrm{b}).  
\end{equation}}
{Then we can follow the augments between (\ref{eq_allprobabilityevents}) and (\ref{eq_stepthreeconclusion}) verbatim to conclude the proof, except we need to use (\ref{eq_boundprobability}) and (\ref{eq_gaussionprocessanticoncentrationbound}). To be more specific,
analogous to (\ref{eq_oooooo1212121}), we need to control the probability of the following event 
\begin{equation*}
\mathcal{A}_{123,o}:=\{\mathsf{x} \leq \sup_{t,x} |\bm{v}^\top \frac{\bm{l}}{h(t,x)}| \leq \mathsf{x}+(\mathrm{a'b})^{2/3}\}, 
\end{equation*}  
which can be controlled using (\ref{eq_gaussionprocessanticoncentrationbound}) with Remark \ref{rmk_hj(tx)}. Therefore, we can readily obtain that 
\begin{equation}\label{eq_partoneresultsresults}
\sup_{y \in \mathbb{R}} \left| \mathbb{P} \left( \sup_{t,x}|\frac{\widetilde{\mathsf{T}}(t,x)}{h(t,x)}| \leq y \Big| \mathcal{X} \right)-\mathbb{P}( \sup_{t,x} |\frac{\bm{v}^\top \bm{l}}{h(t,x)}| \leq y) \right|=\mathrm{O}_{\mathbb{P}} \left( \left[ \log p (\mathsf{B} (\zeta+\gamma \iota)+\mathsf{A}) \right]^{2/3} \right). 
\end{equation} 
Under the assumption of (\ref{eq_boothstrappingextraassumption}), we have proved (\ref{eq_intermediateone}) holds.

Next, we prove another intermediate result 
\begin{equation}\label{eq_intermediateotwo}
\sup_{y \in \mathbb{R}} \left| \mathbb{P} \left( \sup_{t,x}\left|\frac{\widetilde{\mathsf{T}}(t,x)}{h(t,x)} \right| \leq y \Big| \mathcal{X} \right)-\mathbb{P}(  \sup_{t,x}\left|\frac{\widehat{\mathsf{T}}(t,x)}{h(t,x)} \right|  \leq y \Big| \mathcal{X}) \right|=\oo_{\mathbb{P}}(1).
\end{equation} }
Note that the only difference between $\widehat{\mathsf{T}}$ and $\widetilde{\mathsf{T}}$ lies in the fact that $\widetilde{\mathsf{T}}$ used the unobserved $\{\epsilon_i\}$ while $\widehat{\mathsf{T}}$ used the residual $\{\widehat{\epsilon}_i\}.$ Conditional on the data, by construction, analogous to (\ref{eq_phigaussianrepression}), we can write
\begin{equation}\label{eq_gaussionrepapprox}
\Xi=\widehat{\Lambda}^{1/2} \mathbf{f}. 
\end{equation} 
Accordingly, we can write
\begin{equation*}
\widehat{\mathsf{T}}=\mathbf{f}^\top \widehat{\mathsf{H}},  \ \widehat{\mathsf{H}}= \widehat{\Lambda}^{1/2} \widehat{\overline{\Pi}}^{-1} \widehat{\overline{\bm{r}}}. 
\end{equation*}
Note that condition on the data, we have that
\begin{equation}\label{eq_close}
\Lambda=\Upsilon, \ \widehat{\Lambda}=\widehat{\Upsilon}. 
\end{equation}
For notional simplicity, we denote $\bm{s} \equiv \bm{s}(t,x):=\widehat{\overline{\Pi}}^{-1} \widehat{\overline{\bm{r}}}.$ {Similar to (\ref{eq_zhanzhanzhanzhan}), we can rewrite the left-hand side of (\ref{eq_intermediateotwo}) as follows
\begin{equation*}
\sup_{y \in \mathbb{R}} \left| \mathbb{P} \left( \sup_{t,x} \left| \frac{\mathbf{f}^\top \Upsilon^{1/2}  \bm{s}}{h(t,x)} \right| \leq y+ \Delta_1 \Big| \mathcal{X} \right)-\mathbb{P}( \sup_{t,x} \left| \frac{\mathbf{f}^\top \Upsilon^{1/2}  \bm{s}}{h(t,x)} \right| \leq y\Big| \mathcal{X})  \right|,
\end{equation*}
where $\Delta_1$ is defined as 
\begin{equation*}
\Delta_1:=\left[\sup_{t \in [0,1], x \in \mathbb{R}} \left|\frac{\mathbf{f}^\top \Upsilon^{1/2} \bm{s}}{h(t,x)} \right|-\sup_{t \in [0,1], x \in \mathbb{R}} \left|\frac{\mathbf{f}^\top \widehat{\Upsilon}^{1/2} \bm{s}}{h(t,x)} \right| \right].  
\end{equation*}
By a discussion similar to (\ref{eq_bbbababababab2222}),  for $\mathsf{g}_1:=\sup_{t,x} | \bm{s} |,$ we have that $\mathsf{g}_1=\mathrm{O}_{\mathbb{P}}(\mathrm{b}_1), \ \mathrm{b}_1:=\gamma \iota+\zeta.$ Moreover, by Lemma \ref{lem_residualclosepreparation} and (\ref{oldb1}), we can bound 
\begin{equation*}
| \Delta_1 |=\mathrm{O}_{\mathbb{P}}\left(\log p \frac{p \xi^2 \varsigma^2}{\sqrt{n}} \left[\xi \varsigma( \gamma  \iota+\zeta) \sqrt{\frac{p}{n}} \right]\mathrm{b}_1 \right):=\OO_{\mathbb{P}}\left(\mathrm{a}_1 \mathrm{b}_1 \right). 
\end{equation*} 
Then we can again follow  the augments between (\ref{eq_zhanzhanzhanzhan}) and (\ref{eq_partoneresultsresults}) to conclude the proof of (\ref{eq_intermediateotwo}).

Combining (\ref{eq_intermediateone}) and (\ref{eq_intermediateotwo}), we have proved that
\begin{equation}\label{eq_intermediateresultssub}
\sup_{y \in \mathbb{R}} \left| \mathbb{P} \left( \sup_{t,x}\left|\frac{\widehat{\mathsf{T}}(t,x)}{h(t,x)} \right| \leq y \Big| \mathcal{X} \right)-\mathbb{P}( \sup_{t,x} \left| \frac{\bm{v}^\top \bm{l}}{h(t,x)} \right| \leq y) \right|=\oo_{\mathbb{P}}(1). 
\end{equation}  
Similarly, we can also prove
\begin{equation}\label{eq_intermediateresults1111}
\sup_{y \in \mathbb{R}} \left| \mathbb{P} \left( \sup_{t,x}\left|\widehat{\mathsf{T}}(t,x) \right| \leq y \Big| \mathcal{X} \right)-\mathbb{P}( \sup_{t,x} \left| \bm{v}^\top \bm{l} \right| \leq y) \right|=\oo_{\mathbb{P}}(1). 
\end{equation}
Note that according to (\ref{eq_intermediateresults1111}), when $B$ (i.e., the number of constructed bootstrapped statistics) is sufficiently large, by the construction of $\widehat{h}(t,x),$ we have that $\sup_{t,x} | \widehat{h}(t,x)-h(t,x) |=\mathrm{O}_{\mathbb{P}}(| h(t,x) |B^{-1/2}).$ Then by an argument similar to the discussions between (\ref{eq_zhanzhanzhanzhan}) and (\ref{eq_partoneresultsresults}), we can prove that when $B$ is sufficiently large, 
\begin{equation*}
\sup_{y \in \mathbb{R}} \left| \mathbb{P} \left( \sup_{t,x}\left|\frac{\widehat{\mathsf{T}}(t,x)}{h(t,x)} \right| \leq y \Big| \mathcal{X} \right)-\mathbb{P} \left( \sup_{t,x}\left|\frac{\widehat{\mathsf{T}}(t,x)}{\widehat{h}(t,x)} \right| \leq y \Big| \mathcal{X} \right) \right|=\oo_{\mathbb{P}}(1). 
\end{equation*}  
Combining with (\ref{eq_intermediateresultssub}), we have concluded the proof of (\ref{eq_intermediateresults}), which yields that
\begin{equation*}
\sup_{y \in \mathbb{R}} \left| \mathbb{P} \left( \sup_{t,x}\left|\frac{\widehat{\mathsf{T}}(t,x)}{\widehat{h}(t,x)} \right| \leq y \Big| \mathcal{X} \right)- \mathbb{P} \left( \sup_{t,x}\left|\frac{\mathsf{T}(t,x)}{h(t,x)} \right| \leq y \right) \right|=\oo_{\mathbb{P}}(1). 
\end{equation*}   
This completes our proof. 
}
\end{proof} 


\subsection{Proof of Proposition \ref{thm_approximation} and an auxiliary lemma}

First, the proof of Proposition \ref{thm_approximation} is standard. When $\mathbb{R}=[0,1],$ the results have been established without using the mapped basis functions; see \cite[Section 2.3.1]{CXH}. In our setting, since we are using the mapped sieves to map $\mathbb{R}$ to $[0,1],$ the proof is similar. We only point out the main routine here.

\begin{proof}[\bf Proof of Proposition \ref{thm_approximation}] The first part of the results has been proved in the literature, for example, see  \cite[Section 2.3.1]{CXH}. We now discuss the proof of the second part. For any fixed $x \in \mathbb{R},$ the approximation rate has been summarized in Lemma \ref{lem_deterministicapproximation} for $t.$ Similarly, for any fixed $t \in \mathbb{R},$ the results have been recorded in Lemma \ref{lem_deterministicapproximation2}. Based on these results, we can follow the arguments of \cite[Section 5.3]{MR1262128} to conclude the proof. We point out the \cite[Section 5.3]{MR1262128} deals with the $L_2$ norm assuming that the $L_2$ norm of the partial derivatives are bounded. However, as discussed in \cite[Section 6.2]{MR1176949}, the results can be generalized to the sup-norm under Assumption \ref{assum_smoothnessasumption}.
\end{proof}


Then we prove that both the high-dimensional time series $\{\bm{w}_j(i)\}$ in (\ref{eq_desigmatrixform}) and $\{\bm{u}_j(i)\}$ in (\ref{eq_ddd}) can be regarded as short-range dependent locally stationary time series in the physical representation form.

\begin{lemma}\label{lem_locallystationaryform}
Suppose  Assumptions \ref{assum_models} and \ref{assum_physical} hold. For $1 \leq j \leq r,$ there exist measurable functions $\mathbf{W}_j(t, \cdot)=(W_{j1}(t,\cdot), \cdots, W_{jd_j}(t, \cdot)), \ \mathbf{U}_j(t, \cdot)=(U_{j1}(t,\cdot), \cdots, U_{jd_j}(t,\cdot)) \in \mathbb{R}^{d_j}$ satisfying the stochastic Lipschitz continuity as in (\ref{eq_slc}) such that $\{\bm{u}_j(i)\}$ is mean zero and (\ref{eq_locallystationaryform}) holds.  Moreover, denote their physical dependence measures as
\begin{equation}\label{eq_deltaxdefinition}
\begin{gathered}
\delta_{w,j}(\mathsf{s},q)=\sup_{1 \leq k \leq d_j} \sup_t\| W_{jk}(t, \mathcal{F}_0)- W_{jk}(t, \mathcal{F}_{0,\mathsf{s}})\|_q, \\ 
 \delta_{u,j}(\mathsf{s},q)=\sup_{1 \leq k \leq d_j} \sup_t\| U_{jk}(t, \mathcal{F}_0)- U_{jk}(t, \mathcal{F}_{0,\mathsf{s}})\|_q.
\end{gathered}
\end{equation} 
Then we have that for $\varsigma_j$ defined in (\ref{eq_xbasisbound}) and some constant $C>0$
\begin{equation}\label{eq_transferbound}
\max\{\delta_{w,j}(\mathsf{s},q),\delta_{u,j}(\mathsf{s},q) \} \leq C \varsigma_j \mathsf{s}^{-\tau}, \ \ \text{for all} \ \mathsf{s} \geq 1. 
\end{equation}
\end{lemma}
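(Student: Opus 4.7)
The plan is to explicitly construct the functions $\mathbf{W}_j(t,\cdot)$ and $\mathbf{U}_j(t,\cdot)$ by composing the sieve basis functions with the physical representations from Assumption \ref{assum_models}. Specifically, for $1 \leq j \leq r$ and $1 \leq k \leq d_j$, I would set
\begin{equation*}
W_{jk}(t,\mathcal{F}_i) := \varphi_k(G_j(t,\mathcal{F}_i)), \qquad U_{jk}(t,\mathcal{F}_i) := \varphi_k(G_j(t,\mathcal{F}_i))\, D(t,\mathcal{F}_i).
\end{equation*}
Measurability of $\mathbf{W}_j$ and $\mathbf{U}_j$ follows immediately from measurability of $G_j$, $D$, and $\varphi_k$. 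Plugging $t=t_i$ into $\mathcal{F}_i$ and invoking (\ref{eq_setting2}) and (\ref{eq_ddd}) verifies that (\ref{eq_locallystationaryform}) holds.

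Next I would verify the mean-zero property of $\bm{u}_j(i)$. Conditioning on $\{X_{j',i}\}_{1 \leq j' \leq r}$, which makes $\varphi_k(X_{j,i})$ measurable, the tower property and the assumption $\mathbb{E}(\epsilon_i \mid \{X_{j',i}\}_{1 \leq j' \leq r}) = 0$ give $\mathbb{E}[u_{ji,k}] = \mathbb{E}[\varphi_k(X_{j,i})\, \mathbb{E}(\epsilon_i \mid \{X_{j',i}\})] = 0$.

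For the stochastic Lipschitz continuity (\ref{eq_slc}) and the physical dependence bound (\ref{eq_transferbound}) on $\delta_{w,j}$, I would apply the mean value theorem together with the uniform bound $\sup_x |\varphi_k'(x)| \leq \varsigma_j$ from (\ref{eq_xbasisbound}) to obtain the pointwise estimate
\begin{equation*}
|W_{jk}(t,\mathcal{F}_0) - W_{jk}(t,\mathcal{F}_{0,\mathsf{s}})| \leq \varsigma_j\, |G_j(t,\mathcal{F}_0) - G_j(t,\mathcal{F}_{0,\mathsf{s}})|,
\end{equation*}
and a similar inequality in the $t$-variable using (\ref{eq_slc}). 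Taking $L_q$-norms and invoking Assumption \ref{assum_physical} gives $\delta_{w,j}(\mathsf{s},q) \leq C\varsigma_j \mathsf{s}^{-\tau}$; the stochastic Lipschitz property is inherited analogously from that of $G_j$.

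The main obstacle is the bound on $\delta_{u,j}$, which involves a product of an unbounded factor $D(t,\cdot)$ and a bounded-but-Lipschitz factor $\varphi_k \circ G_j$. I would write the standard product decomposition
\begin{equation*}
U_{jk}(t,\mathcal{F}_0) - U_{jk}(t,\mathcal{F}_{0,\mathsf{s}}) = \bigl[\varphi_k(G_j(t,\mathcal{F}_0)) - \varphi_k(G_j(t,\mathcal{F}_{0,\mathsf{s}}))\bigr] D(t,\mathcal{F}_0) + \varphi_k(G_j(t,\mathcal{F}_{0,\mathsf{s}}))\bigl[D(t,\mathcal{F}_0) - D(t,\mathcal{F}_{0,\mathsf{s}})\bigr],
\end{equation*}
and control each piece in $L_q$ using that $|\varphi_k| \leq \varsigma_j$ and $|\varphi_k'| \leq \varsigma_j$, combined with Hölder's inequality (at exponents $2q/2q$ if needed, absorbing the resulting higher moments of $G_j(t,\cdot)$ and $D(t,\cdot)$ guaranteed by Assumption \ref{assum_models}) and Assumption \ref{assum_physical}. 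The first term contributes $\varsigma_j \cdot \mathsf{s}^{-\tau}$ via the physical dependence of $G_j$; the second term contributes $\varsigma_j \cdot \mathsf{s}^{-\tau}$ via that of $D$. Summing yields the required $C \varsigma_j \mathsf{s}^{-\tau}$ bound, and the stochastic Lipschitz property for $\mathbf{U}_j$ follows from the same decomposition with $|s-t|$ replacing the physical perturbation.
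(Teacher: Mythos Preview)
Your proposal is correct and follows essentially the same route as the paper's proof: the identical explicit construction $W_{jk}(t,\cdot)=\varphi_k(G_j(t,\cdot))$ and $U_{jk}(t,\cdot)=\varphi_k(G_j(t,\cdot))D(t,\cdot)$, the tower-property argument for the mean-zero claim, the mean value theorem combined with $|\varphi_k'|\le\varsigma_j$ for the Lipschitz transfer, and the same add-and-subtract product decomposition to control $\delta_{u,j}$. One small caveat: your parenthetical about applying H\"older at exponents $2q$ appeals to higher moments that Assumption~\ref{assum_models} does not actually provide (it fixes a single $q>2$); the paper instead relies on the uniform pointwise bound $|\varphi_k|\le\varsigma_j$ to pull the bounded factor outside the $L_q$ norm directly, so H\"older is not needed.
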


\begin{proof}[\bf Proof of Lemma \ref{lem_locallystationaryform}]
Due to similarity, we only prove the results for $\{\bm{u}_i\}$.  First of all, under Assumption \ref{assum_models}, for $w_{ik}$ defined in (\ref{eq_desigmatrixform}), we have that 
\begin{equation}\label{eq_meanzero}
\mathbb{E}(\varphi_{\el_2(k)}(X_{\el_1(k)+1,i} \epsilon_i)=\mathbb{E}(\mathbb{E}(\varphi_{\el_2(k)}(X_{\el_1(k)+1,i}) \epsilon_i| X_{\el_1(k)+1,i}))=0. 
\end{equation}
This shows that $\{\bm{u}_i\}$ is mean zero. Second, for any $1 \leq k \leq rd,$ by (\ref{eq_ddd}) and (\ref{eq_desigmatrixform}), we have that 
\begin{equation}\label{eq_exactform}
u_{ik}=\varphi_{\el_2(k)}(X_{\el_1(k)+1,i})\epsilon_i.
\end{equation}
Under the assumption of (\ref{eq_setting2}), it is clear that for some measurable function $U_k(t,\cdot),$ we can write
\begin{equation*}
u_{ik}=U_k\left(\frac{i}{n}, \mathcal{F}_i\right).
\end{equation*} 
Third, for $s,t \in [0,1],$ using (\ref{eq_exactform}), the smoothness of the basis functions and the assumption (\ref{eq_slc}), we have that for some constant $C>0$
\begin{equation*}
\sup_i \| U_k(s, \mathcal{F}_i)-U_k(t, \mathcal{F}_i) \|_q \leq C|t-s|.
\end{equation*}
Finally, for any $1 \leq k \leq rd,$ using (\ref{eq_exactform}) and (\ref{eq_setting2}), we have that for some constant $C>0$
\begin{align*}
\sup_t \| U_k(t, \mathcal{F}_0)&-U_{k}(t, \mathcal{F}_{0,\mathsf{s}}) \|_q  \leq  \sup_t \| \varphi_{\ell_2(k)}(G_{\ell_1(k)+1}(t,\mathcal{F}_0)) D(t, \mathcal{F}_0)-\varphi_{\ell_2(k)}(G_{\ell_1(k)+1}(t,\mathcal{F}_0)) D(t, \mathcal{F}_{0,\mathsf{s}}) \|_q \\
& + \sup_t \|\varphi_{\ell_2(k)}(G_{\ell_1(k)+1}(t,\mathcal{F}_0)) D(t, \mathcal{F}_{0,\mathsf{s}})-\varphi_{\ell_2(k)}(G_{\ell_1(k)+1}(t,\mathcal{F}_{0,\mathsf{s}})) D(t, \mathcal{F}_{0,\mathsf{s}}) \|_q  \\
& \leq C \varsigma \mathsf{s}^{-\tau},
\end{align*}
where we recall (\ref{eq_xbasisbound}) and in the last inequality we used Assumption \ref{assum_physical}, the smoothness of the basis function and mean value theorem.  $\{\bm{w}_i\}$ can be proved similarly and we omit the details. This finishes our proof. 
\end{proof}

\section{Some auxiliary lemmas}\label{sec_auxililarylemma}
In this section, we collect some useful auxiliary lemmas. The first lemma, Lemma \ref{lem_mvnapp} will be used in the proof of Gaussian approximation, i.e., Theorem \ref{thm_gaussianapproximationcase}. 
\begin{lemma}[Multivariate Gaussian approximation]\label{lem_mvnapp}
Denote $[n]:=\{1,\cdots, n\}$ and $X_N:=\sum_{i \in N} X_i$ for some index set $N.$ Let $X_i \in \mathbb{R}^d$ and $W=\sum_{i=1}^n X_i.$ Moreover, we assume that 
\begin{equation*}
\mathbb{E} X_i=0, \ \operatorname{Cov}(W)=\Sigma.
\end{equation*}
For a standard $d$-dimensional Gaussian random vector $Z$, denote 
\begin{equation*}
\mathtt{d}_c(\mathcal{L}(W), \mathcal{L}(\Sigma^{1/2}Z)):=\sup_{A \in \mathcal{A}}\left| \mathbb{P}(W \in A)-\mathbb{P}(\Sigma^{1/2}Z \in A) \right|,
\end{equation*}
where $\mathcal{A}$ denotes the collection of all the convex sets in $\mathbb{R}^d.$  Suppose that $W$ can be decomposed as follows
\begin{itemize}
\item $\forall i \in [n], \exists i \in N_i \subset [n] $ such that $W-X_{N_i}$ is independent of $X_i;$ 
\item $\forall i \in [n], j \in N_i, \exists N_i \subset N_{ij} \subset [n]$ such that   $W-X_{N_{ij}}$ is independent of $\{X_i, X_j\};$
\item $\forall i \in [n], j \in  N_i, k \in N_{ij}, \exists N_{ij} \subset N_{ijk} \subset [n]$ such that $W-X_{N_{ijk}}$ is independent of $\{X_i,X_j, X_k\}.$
\end{itemize}
Suppose further that for each $i \in [n], j \in N_i$ and $k \in N_{ij},$
\begin{equation*}
\|X_i\|_2 \leq \beta, \ |N_i| \leq n_1, \ |N_{ij}| \leq n_2, \ |N_{ijk}| \leq n_3. 
\end{equation*}
Then there exists a universal constant $C>0$ such that 
\begin{equation*}
\mathtt{d}_c(\mathcal{L}(W), \mathcal{L}(\Sigma^{1/2}Z)) \leq C d^{1/4} n \| \Sigma^{-1/2} \|^3  \beta^3 n_1 \left( n_2+\frac{n_3}{d} \right). 
\end{equation*}
\end{lemma}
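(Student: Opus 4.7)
My plan is to establish this via Stein's method combined with the convex-set smoothing technique of Bentkus, adapted to the three-layer local dependence structure $(N_i, N_{ij}, N_{ijk})$. The first step is standardization: set $\widetilde W = \Sigma^{-1/2} W = \sum_i \widetilde X_i$ with $\widetilde X_i = \Sigma^{-1/2} X_i$, so the target becomes the standard Gaussian $Z$, and $\|\widetilde X_i\|_2 \le \|\Sigma^{-1/2}\|\,\beta$. The convex-set Kolmogorov distance $\mathtt d_c$ is preserved under this linear transformation because the image of a convex set under an invertible linear map is convex. It then suffices to bound $\sup_{A\in\mathcal A}|\mathbb P(\widetilde W\in A)-\mathbb P(Z\in A)|$ and restore the factor $\|\Sigma^{-1/2}\|^3$ at the end.

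Next, I would apply the Bentkus--Fang--Röllin smoothing device (cf.\ the function $h_{\mathsf A,\epsilon_1}$ already used in the proof of Theorem~\ref{thm_uniformconvergencegaussian}): for each convex $A$ and smoothing parameter $\epsilon>0$, the indicator is replaced by a smooth function $h_{A,\epsilon}$ whose gradient is bounded by $2/\epsilon$ and whose higher derivatives have comparable scaling. Standard convex-geometry arguments (anti-concentration of $Z$ on an $\epsilon$-tubular neighborhood of $\partial A$) yield
\[
\mathtt d_c(\mathcal L(\widetilde W),\mathcal L(Z)) \;\le\; C\,d^{1/4}\epsilon \;+\; \sup_{A\in\mathcal A}\bigl|\mathbb E h_{A,\epsilon}(\widetilde W) - \mathbb E h_{A,\epsilon}(Z)\bigr|,
\]
the factor $d^{1/4}$ being the sharp Nazarov-type surface-measure bound for Gaussian tubes around convex sets.

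The core of the argument is to bound the smoothed difference by solving the Stein PDE $\Delta f - x\cdot\nabla f = h_{A,\epsilon}(x) - \mathbb E h_{A,\epsilon}(Z)$ via the Ornstein--Uhlenbeck semigroup, which gives bounds $\|\nabla^k f\|_\infty \lesssim \epsilon^{-(k-1)}$ for $k=1,2,3$. Writing $\mathbb E[h_{A,\epsilon}(\widetilde W)-h_{A,\epsilon}(Z)] = \mathbb E[\widetilde W\cdot \nabla f(\widetilde W) - \Delta f(\widetilde W)]$, I would expand $\widetilde W\cdot\nabla f(\widetilde W) = \sum_i \widetilde X_i\cdot\nabla f(\widetilde W)$ and use the three-layer local dependence: for each $i$, I replace $\nabla f(\widetilde W)$ by its Taylor expansion around $\widetilde W - \widetilde X_{N_i}$. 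The linear (zeroth order) term vanishes in expectation since $\widetilde W - \widetilde X_{N_i}$ is independent of $\widetilde X_i$ and $\mathbb E\widetilde X_i=0$. The second-order terms introduce products $\widetilde X_i\otimes\widetilde X_j$ with $j\in N_i$, and after subtracting $\Delta f$ (whose covariance cancellation uses $\operatorname{Cov}(\widetilde W)=I_d$, which in turn uses the $j\in N_{ij}$ approximation), only boundary terms involving triples $(i,j,k)$ with $j\in N_i,\,k\in N_{ij}$ survive. These are bounded using $\|\nabla^3 f\|_\infty\lesssim \epsilon^{-2}$, the moment bound $\|\widetilde X_i\|\le\|\Sigma^{-1/2}\|\beta$, and the neighborhood sizes $n_1,n_2,n_3$, producing a term of order $n\,\|\Sigma^{-1/2}\|^3\beta^3\,n_1 n_2/\epsilon^2$. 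Optimizing the split between the smoothing error $d^{1/4}\epsilon$ and the Stein error by choosing $\epsilon \asymp (n\|\Sigma^{-1/2}\|^3\beta^3 n_1 n_2/d^{1/4})^{1/3}$ yields the claim up to the $n_3/d$ refinement.

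The main obstacle is getting the $n_1(n_2 + n_3/d)$ combination rather than the naive $n_1 n_2^2$. The improvement comes from separating two classes of third-order terms: (a) those whose three indices all lie in $N_{ij}$, contributing the $n_1 n_2$ piece; and (b) genuinely ``long-range'' triples $(i,j,k)$ with $k\in N_{ijk}\setminus N_{ij}$, where the extra index $k$ couples only through a Hessian-of-Hessian contraction that saves a factor of $d$ upon taking trace (this is the Raič/Fang--Röllin observation that the $\operatorname{tr}(\nabla^2 f\cdot\Sigma)$ structure of the Stein equation absorbs one dimensional factor). Executing this bookkeeping carefully---in particular writing $\nabla^2 f(\widetilde W - \widetilde X_{N_{ij}}) - \nabla^2 f(\widetilde W - \widetilde X_{N_{ijk}})$ and exploiting that only the normalized trace against $\Sigma^{1/2}\Sigma^{1/2\top}=I_d$ contributes---is what produces the stated bound. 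Everything else (Stein solution estimates, Gaussian anti-concentration on tubes, and independence-based centering) is standard and can be cited from the Bentkus and Chernozhukov--Chetverikov--Kato toolbox.
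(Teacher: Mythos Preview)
Your proposal is essentially correct and reconstructs the proof strategy of the cited reference, but the paper does not prove this lemma at all: it simply records it as Theorem~2.1 and Remark~2.2 of \cite{MR3571252}. In other words, this is a quoted result, not one the authors establish, so no argument beyond the citation is expected. Your sketch (standardize by $\Sigma^{-1/2}$, apply the Bentkus/Fang--R\"ollin convex-set smoothing with the $d^{1/4}\epsilon$ Gaussian tube bound, solve the Stein equation via the OU semigroup, Taylor-expand around $W-X_{N_i}$ and peel off the local-dependence layers, then optimize in $\epsilon$) is exactly the machinery of that reference, including the refinement that separates the $n_1 n_2$ and $n_1 n_3/d$ contributions; so you have correctly reverse-engineered the source, just at a level of detail the present paper does not require.
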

\begin{proof}
See Theorem 2.1 and Remark 2.2 of \cite{MR3571252}.
\end{proof}

In the following two lemmas, we consider that $x_i=G_i(\mathcal{F}_i),$ where $G_i(\cdot)$ is some measurable function and $\mathcal{F}_i=(\cdots, \eta_{i-1}, \eta_i)$ and $\eta_i$ are i.i.d. random variables. Suppose that $\mathbb{E} x_i=0$ and $\max_i \mathbb{E} |x_i|^q<\infty$ for some $q>1.$ For any integer $k>0,$ denote $\theta_{k,q}=\max_{1 \leq i \leq n}\| G_i(\mathcal{F}_i)-G_i(\mathcal{F}_{i,i-k}) \|_q,$ where $\mathcal{F}_{i,i-k}=(\cdots, \eta_{i-k-1}, \eta'_{i-k}, \eta_{i-k+1}, \eta_i)$ and $\{\eta_i'\}$ are i.i.d. copies of $\{\eta_i\}.$ Let $\Theta_{k,q}=\sum_{i=k}^{\infty} \theta_{i,q}$ and $\Lambda_{k,q}=\sum_{i=0}^k \theta_{i,q},$ where we use the convention that $\theta_{i,q}=0, i<0.$ 

\begin{lemma}[Concentration inequalities for locally stationary time series]\label{lem_concentration} Let $S_n=\sum_{i=1}^n x_i$ and $q'=\min(2,q).$ Then: \\
(1). We have that for some constant $C>0$ 
\begin{equation*}
\|S_n\|_q^{q'}  \leq C \sum_{i=-n}^{\infty} (\Lambda_{i+n}-\Lambda_{i,q})^{q'}. 
\end{equation*}
Moreover, we have that
\begin{equation*}
\| \max_{1 \leq i \leq n} |S_i| \|_q \leq C n^{1/{q'}} \Theta_{0,q}. 
\end{equation*}
(2). Suppose $x_i=G(i/n, \mathcal{F}_i)$ and $\Theta_{k,q}=\OO(k^{-\gamma}), \gamma>0,$ then we have that
\begin{equation*}
\sum_{i=1}^n \left|\mathbb{E} x_i^2-\sigma\left(\frac{i}{n}\right)\right|=\OO\left(n^{1-\frac{\gamma}{2+\gamma}}\right), 
\end{equation*}
where $\sigma(\cdot)$ is the long run covariance matrix defined as 
\begin{equation*}
\sigma(t)=\sum_{k=-\infty}^{\infty} \operatorname{Cov}\left( G(t, \mathcal{F}_0), G(t, \mathcal{F}_k) \right). 
\end{equation*}
(3). For some constant $C>0,$ we have that 
\begin{equation*}
|\operatorname{Cov}(x_i, x_j)| \leq  C \theta_{|i-j|,q}.
\end{equation*}
\end{lemma}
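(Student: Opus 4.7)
\textbf{Proof proposal for Lemma \ref{lem_concentration}.}

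For part (1), my plan is to build on the martingale projection framework of Wu (the reference \cite{WUAOP}) that underlies the whole physical-dependence machinery. Define the operator $P_k \cdot = \mathbb{E}(\cdot\mid \mathcal{F}_k) - \mathbb{E}(\cdot\mid \mathcal{F}_{k-1})$. Since $x_i$ is $\mathcal{F}_i$-measurable and mean zero, one has the orthogonal decomposition $S_n=\sum_{k\leq n} P_k S_n$ with $\{P_k S_n\}_{k}$ a martingale difference sequence. The single-term estimate $\|P_k x_i\|_q \leq \theta_{i-k,q}$ gives
\begin{equation*}
\|P_k S_n\|_q \leq \sum_{i=\max(1,k)}^{n}\theta_{i-k,q}=\Lambda_{n-k,q}-\Lambda_{-k,q}\cdot\mathbf{1}_{k\leq 0},
\end{equation*}
which, after shifting the summation index to $i=-k$ so that $k=-i$ ranges and $n-k=n+i$, yields exactly the differences $\Lambda_{i+n,q}-\Lambda_{i,q}$ appearing in the statement. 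Applying Burkholder's inequality for $q\geq 2$ and the von Bahr–Esseen inequality for $1<q<2$ to the martingale $\{P_k S_n\}$ (taking the $q'=\min(q,2)$ power on both sides to handle both regimes uniformly) gives the first bound. The second bound follows by applying Doob's maximal inequality to the martingale $\{P_k S_m\}_{m\leq n}$ (or the standard ``chaining'' variant in \cite{WUAOP, MR2172215}) and then using $\Lambda_{i+n,q}-\Lambda_{i,q}\leq \Theta_{0,q}$ uniformly.

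For part (3), I will give a direct coupling proof. Assume $i<j$ and write, using the tower property, $\operatorname{Cov}(x_i,x_j)=\mathbb{E}[x_i(x_j-\mathbb{E}(x_j\mid\mathcal{F}_i))]$. An application of Hölder's inequality and the identity $x_j-\mathbb{E}(x_j\mid\mathcal{F}_i)=\sum_{k=i+1}^{j}P_k x_j$, combined with the projection bound $\|P_k x_j\|_q\leq\theta_{j-k,q}$, delivers a bound proportional to $\theta_{|i-j|,q}$ up to the sum over the tail; alternatively, one can couple $x_j$ to a copy $x_j^*$ obtained by resampling $\eta_i$ and use $\|x_j-x_j^*\|_q\leq\theta_{|i-j|,q}$ together with the fact that $x_j^*$ remains independent of $\eta_i$, so that $\mathbb{E}[x_i x_j^*]=\mathbb{E}[\mathbb{E}(x_i\mid \{\eta_\ell\}_{\ell\ne i})x_j^*]$, which cleans up into the stated bound. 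This is essentially the argument used in \cite{MR2172215, ZW}.

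For part (2), the plan is a block approximation with the optimal block length. Fix $m\in\mathbb{N}$ (to be chosen), and set $x_{i,m}=\mathbb{E}(x_i\mid\eta_{i-m},\ldots,\eta_i)$, the standard $m$-dependent approximation. Two ingredients will be combined: (a) By the physical dependence assumption $\Theta_{k,q}=O(k^{-\gamma})$, we have $\|x_i-x_{i,m}\|_2\leq \Theta_{m,2}=O(m^{-\gamma})$, so by elementary algebra $|\mathbb{E} x_i^2-\mathbb{E} x_{i,m}^2|=O(m^{-\gamma})$. (b) For the $m$-dependent proxy, a direct spectral computation plus the Lipschitz continuity of $t\mapsto G(t,\cdot)$ inherited from Assumption \ref{assum_models} gives $|\mathbb{E} x_{i,m}^2-\sigma(i/n)|=O(m/n)+O(\Theta_{m,2})$, where the $m/n$ term reflects the error from summing covariances up to lag $m$ and comparing with the infinite sum $\sigma(i/n)$, using part (3) to control tails. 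Summing over $i=1,\ldots,n$ and optimizing $m\asymp n^{1/(2+\gamma)}$ produces the rate $n^{1-\gamma/(2+\gamma)}$.

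The main technical obstacle I anticipate is the careful bookkeeping in part (1): matching the index range of $\sum_{i=-n}^{\infty}(\Lambda_{i+n,q}-\Lambda_{i,q})^{q'}$ with the projection index $k$, and handling the split at $q=2$ between Burkholder and von Bahr–Esseen to produce a unified $q'=\min(2,q)$ exponent. Parts (2) and (3) are then more routine, though the optimization over $m$ in part (2) and the correct interpretation of $\sigma(\cdot)$ as a long-run variance (so that the block proxy naturally recovers it) need care to align with the way $\Pi(t)$ and $\Omega(t)$ are used in Section \ref{sec_gassuianapproximation}.
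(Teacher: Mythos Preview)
Your proposal is correct and essentially reconstructs the arguments that the paper simply cites: the paper's own proof of this lemma consists entirely of pointers to the literature (Theorem~1 of \cite{WUAOP} for part~(1), Corollary~2 of \cite{MR2827528} for part~(2), and Lemma~6 of \cite{MR3161462} / Lemma~2.6 of \cite{DZ} for part~(3)), and your martingale-projection, $m$-dependent block, and coupling sketches are precisely the techniques those references use.
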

\begin{proof}
The first part can be found in Theorem 1 of \cite{WUAOP}; also see Lemma 6 of \cite{MR3174655}; the second part of the proof follows directly from Corollary 2 of \cite{MR2827528}; see the last equation in the end of the proof of Corollary 2 therein; the third part can be found in Lemma 6 of \cite{MR3161462} or Lemma 2.6 of \cite{DZ}. 
\end{proof}

\begin{lemma}[$m$-dependent approximation]\label{lem_mdependent} Let $S_n'=\sum_{i=1}^n x_i'$, where $x_i'=\mathbb{E}(x_i| \mathcal{F}_\fm(i)), \mathcal{F}_\fm(i)=\sigma(\eta_{i-\fm}, \cdots, \eta_i)$ is the sigma-algebra generated by  $(\eta_{i-\fm}, \cdots, \eta_i).$ Here $\fm \geq 0.$ Define $R_n=S_n-S_n'$ and $R_n^*=\max_{1 \leq i \leq n} |R_i|.$  Then for some constant $C>0,$ we have
\begin{equation*}
\| R_n \|_q^{q'} \leq C n \Theta_{\fm,q}^{q'}, 
\end{equation*}
and
\begin{equation*}
\| R_n^* \|_q^{q'} \leq C 
\begin{cases}
n \Theta_{\fm,q}^2, & q>2;\\
n (\log n)^q   \Theta_{\fm,q}^2, & 1 <q \leq 2. 
\end{cases}
\end{equation*}
\end{lemma}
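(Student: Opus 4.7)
The starting structural observation is that $y_i := x_i - x_i'$ is $\mathcal{F}_{i-\fm-1}$-measurable, where $\mathcal{F}_k := \sigma(\eta_j : j \leq k)$. This follows from the projection-operator expansion
\begin{equation*}
y_i \;=\; x_i - \mathbb{E}\bigl(x_i \,\big|\, \sigma(\eta_{i-\fm}, \ldots, \eta_i)\bigr) \;=\; \sum_{k \leq i-\fm-1} \mathcal{P}_k x_i,
\end{equation*}
where $\mathcal{P}_k := \mathbb{E}(\,\cdot\,|\mathcal{F}_k) - \mathbb{E}(\,\cdot\,|\mathcal{F}_{k-1})$: the conditional expectation onto $\sigma(\eta_{i-\fm:i})$ subtracts off exactly the projections indexed by $k \in \{i-\fm,\ldots,i\}$. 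Combining this with Wu's projection bound $\|\mathcal{P}_k x_i\|_q \leq \theta_{i-k, q}$ and Minkowski's inequality gives the pointwise estimate $\|y_i\|_q \leq \Theta_{\fm+1,q} \leq \Theta_{\fm,q}$.

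The measurability of $y_i$ also pins down the physical dependence of the new process $\{y_i\}$. Since replacing $\eta_{i-k}$ by an independent copy leaves any $\mathcal{F}_{i-\fm-1}$-measurable quantity untouched for $k \leq \fm$, the dependence measures of $\{y_i\}$ satisfy $\theta^{(y)}_{k,q}=0$ for $k \leq \fm$ and $\theta^{(y)}_{k,q} \leq \theta_{k,q}$ otherwise, whence $\Theta^{(y)}_{0,q} \leq \Theta_{\fm+1,q} \leq \Theta_{\fm,q}$. Applying the first inequality of part~(1) of Lemma~\ref{lem_concentration} directly to the process $\{y_i\}$ (rather than $\{x_i\}$) and inserting this dependence bound produces the claimed
\begin{equation*}
\|R_n\|_q^{q'} \;\leq\; C \sum_{i=-n}^{\infty}\bigl(\Lambda^{(y)}_{i+n,q} - \Lambda^{(y)}_{i,q}\bigr)^{q'} \;\leq\; C n\,\Theta_{\fm,q}^{q'}.
\end{equation*}

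For the maximal bound the plan is similar but splits by regime. When $q>2$, the second inequality of part~(1) of Lemma~\ref{lem_concentration} applied to $\{y_i\}$ gives $\|R_n^*\|_q \leq C n^{1/q'}\Theta^{(y)}_{0,q} \leq C n^{1/2}\Theta_{\fm,q}$, which is exactly the stated $\|R_n^*\|_q^{2} \leq Cn\,\Theta_{\fm,q}^2$ without a logarithmic factor. When $1<q\leq 2$ the direct invocation is not sharp enough, so I would run a Móricz-type dyadic chaining: partition $\{1,\ldots,n\}$ into nested dyadic blocks of lengths $2^\ell$, bound each block-sum's $L^q$-norm via the just-established non-maximal estimate, and telescope over the $O(\log n)$ scales. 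The chaining step contributes a $(\log n)^q$ factor and, after combining the pointwise bound $\|y_i\|_q \leq \Theta_{\fm,q}$ with the aggregated second-moment control of block sums (which retains $\Theta_{\fm,q}^{2}$), delivers $\|R_n^*\|_q^{q} \leq C n (\log n)^q \Theta_{\fm,q}^{2}$.

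The main obstacle is the $1 < q \leq 2$ maximal estimate: here Doob's inequality is no longer sharp for the non-martingale partial sums $R_j$ (one can check $R_j \neq \mathbb{E}(R_n|\mathcal{F}_{j-\fm-1})$, so $(R_j)$ is not a martingale in $j$), and a careful Móricz-style dyadic chaining is required to trade a single logarithmic factor for the weaker $L^q$-concentration available in that regime. The remaining steps are linear-algebraic manipulations with projection operators and a direct appeal to the already-proved Lemma~\ref{lem_concentration}.
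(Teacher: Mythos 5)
There is a genuine gap at the very first step, and it propagates through the whole argument. The claim that $y_i := x_i - x_i'$ is $\mathcal{F}_{i-\fm-1}$-measurable, via the identity $x_i - \mathbb{E}\bigl(x_i \mid \sigma(\eta_{i-\fm},\ldots,\eta_i)\bigr) = \sum_{k \leq i-\fm-1}\mathcal{P}_k x_i$, is false: the operators $\mathcal{P}_k$ project with respect to the \emph{full} past filtration $\mathcal{F}_k=\sigma(\eta_j: j\le k)$, and conditioning on the \emph{local} block $\sigma(\eta_{i-\fm},\ldots,\eta_i)$ does not simply delete the projections with $k\ge i-\fm$. A concrete counterexample is $x_i=\eta_i\eta_{i-\fm-1}$ with centered i.i.d.\ innovations: here $x_i'=\eta_i\,\mathbb{E}(\eta_{i-\fm-1})=0$, so $y_i=\eta_i\eta_{i-\fm-1}$ depends on $\eta_i$ and is certainly not $\mathcal{F}_{i-\fm-1}$-measurable, while $\sum_{k\le i-\fm-1}\mathcal{P}_kx_i=0\ne y_i$ (the only nonzero projection is $\mathcal{P}_ix_i=x_i$). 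As a result, the subsequent assertion that the physical dependence measures of $\{y_i\}$ vanish for lags $k\le\fm$ is also false ($\theta^{(y)}_{0,q}\ne 0$ in the example), and the bound $\Theta^{(y)}_{0,q}\le\Theta_{\fm,q}$ on which both the non-maximal estimate and the $q>2$ maximal estimate rest does not hold: the honest bound is only $\theta^{(y)}_{k,q}\le 2\min\{\theta_{k,q},\Theta_{\fm,q}\}$ for $k\le\fm$, whose sum over $k\le\fm$ need not be $\mathrm{O}(\Theta_{\fm,q})$. So applying Lemma \ref{lem_concentration} to $\{y_i\}$ does not deliver the stated rate.

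The repair is the standard one (and is essentially Lemma A.1 of \cite{MR2485027}, which is all the paper itself invokes): telescope over the \emph{nested local} sigma-algebras $\mathcal{H}_j(i)=\sigma(\eta_{i-j},\ldots,\eta_i)$, writing $x_i-x_i'=\sum_{j>\fm}\bigl[\mathbb{E}(x_i\mid\mathcal{H}_j(i))-\mathbb{E}(x_i\mid\mathcal{H}_{j-1}(i))\bigr]$, where each increment has $L^q$-norm at most $\theta_{j,q}$ and, for fixed $j$, the array indexed by $i$ is a function of $(\eta_{i-j},\ldots,\eta_i)$ with vanishing conditional mean given $(\eta_{i-j+1},\ldots,\eta_i)$; this makes it a (reverse) martingale difference sequence in $k=i-j$, so Burkholder gives $\|\sum_{i\le n}(\cdot)\|_q^{q'}\le Cn\theta_{j,q}^{q'}$ and Doob handles the running maximum. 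Summing over $j>\fm$ by Minkowski yields $\|R_n\|_q\le (Cn)^{1/q'}\Theta_{\fm,q}$ and the $q>2$ maximal bound; the $1<q\le 2$ case does require the extra dyadic chaining you describe, and that part of your outline is sound. Your overall strategy (reduce to a process with small aggregated dependence and cite Lemma \ref{lem_concentration}) cannot work as written, because the reduction itself is where the claim fails.
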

\begin{proof}
See Lemma A.1 \cite{MR2485027}. 
\end{proof}
%

The next lemma provides a deterministic control for the Riemann summation. 
\begin{lemma}\label{lem_intergralappoximation}
Suppose that $f$ is twice differentiable and $f^{''}$ is bounded and almost everywhere continuous on a compact set $[a,b].$ Let $\Delta: a=s_0 \leq s_1 \leq \cdots \leq s_{n-1} \leq s_n=b$ and $s_{i-1} \leq \xi_i \leq s_i.$ Denote the Riemann sum as
\begin{equation*}
\mathcal{R} \equiv \mathcal{R}(f; \Delta, \xi_i):=\sum_{i=1}^n (s_i-s_{i-1}) f(\xi_i).
\end{equation*}
Then we have that for $s_i=a+i(b-a)/n$
\begin{equation*}
\int_a^b f(x) \dd x =\mathcal{R}+\OO(n^{-2}). 
\end{equation*}
\end{lemma}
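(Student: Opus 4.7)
The plan is to localize the error onto each subinterval $[s_{i-1},s_i]$ of length $h := (b-a)/n$ and apply Taylor's theorem to second order. Writing
\begin{equation*}
\int_a^b f(x)\,\dd x - \mathcal{R} = \sum_{i=1}^n E_i, \qquad E_i := \int_{s_{i-1}}^{s_i}\bigl(f(x) - f(\xi_i)\bigr)\,\dd x,
\end{equation*}
I Taylor-expand $f$ around $\xi_i$ using the integral remainder form $f(x) = f(\xi_i) + f'(\xi_i)(x-\xi_i) + \int_{\xi_i}^{x}(x-u)f''(u)\,\dd u$, which is valid because $f''$ is bounded and almost everywhere continuous on $[a,b]$. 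Integrating over $[s_{i-1},s_i]$ then yields
\begin{equation*}
E_i = \frac{h}{2}\,f'(\xi_i)\bigl(s_i + s_{i-1} - 2\xi_i\bigr) + R_i, \qquad |R_i| \leq \tfrac{1}{6}\|f''\|_\infty\, h^3,
\end{equation*}
so that the quadratic remainders sum immediately to $\sum_{i=1}^n |R_i| = \OO(nh^3) = \OO(n^{-2})$, already matching the desired rate.

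The main obstacle is the first-order contribution $\mathsf{F} := \tfrac{h}{2}\sum_{i=1}^n f'(\xi_i)(s_i+s_{i-1}-2\xi_i)$, for which a naive termwise bound only gives $\OO(nh^2) = \OO(n^{-1})$ since $|s_i+s_{i-1}-2\xi_i|\leq h$; one additional factor of $h$ must be extracted. To this end I plan to first Taylor-expand $f'(\xi_i)$ itself around the subinterval midpoint $m_i := (s_{i-1}+s_i)/2$, writing $f'(\xi_i) = f'(m_i) + f''(\eta_i)(\xi_i - m_i)$ with $|\xi_i - m_i| \leq h/2$. The $f''$-correction then contributes $\OO(nh\cdot h^2) = \OO(n^{-2})$ directly, leaving the reduced sum $-h\sum_{i=1}^n f'(m_i)(\xi_i - m_i)$ to be analyzed.

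For this reduced sum I plan to perform an Abel summation by parts against the discrete differences $f'(m_{i+1}) - f'(m_i) = \OO(h\,\|f''\|_\infty)$, leveraging the uniform spacing $s_i - s_{i-1} = h$ to drive the telescoping structure and estimating both the partial sums $\sum_{j=1}^i (\xi_j - m_j)$ and the boundary terms at $i=1,n$. The critical step is exactly this Abel summation: the uniform-spacing hypothesis and the $C^2$ regularity of $f$ enter jointly here to produce the extra factor of $h$, and controlling the partial sums uniformly in $i$ (so that the cumulative effect of the signed deviations $\xi_j - m_j$ does not swamp the telescoping gain) is where I expect the technical difficulty to concentrate. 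Once $|\mathsf{F}| = \OO(n^{-2})$ is established, combining it with the quadratic-remainder estimate completes the proof.
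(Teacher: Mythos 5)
There is a genuine gap, and it sits exactly where you predicted: the Abel summation cannot extract the extra factor of $h$. Writing $d_i:=\xi_i-m_i$ and $D_i:=\sum_{j\le i}d_j$, the only available bounds are $|D_i|\le i\,h/2\le (b-a)/2=\OO(1)$ and $|f'(m_{i+1})-f'(m_i)|\le \|f''\|_\infty h$, so summation by parts gives
\begin{equation*}
\Bigl|\sum_{i=1}^n f'(m_i)\,d_i\Bigr|\;\le\;|f'(m_n)||D_n|+\sum_{i=1}^{n-1}\bigl|f'(m_{i+1})-f'(m_i)\bigr|\,|D_i|\;=\;\OO(1)+\OO(n\cdot h\cdot 1)\;=\;\OO(1),
\end{equation*}
whence $|\mathsf{F}|=\OO(h)=\OO(n^{-1})$ --- one order short. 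Moreover, no refinement can close this gap, because the statement with \emph{arbitrary} $\xi_i\in[s_{i-1},s_i]$ is false: take $f(x)=x$ on $[0,1]$ and $\xi_i=s_{i-1}$ (left endpoints). Then $\mathcal{R}=\tfrac12-\tfrac1{2n}$ while $\int_0^1 x\,\dd x=\tfrac12$, so the error is exactly $\tfrac1{2n}$. In your notation this is visible as $d_i\equiv -h/2$ for all $i$, so $\sum_i f'(m_i)d_i\approx-\tfrac12\bigl(f(b)-f(a)\bigr)$ is genuinely of order one; the signed deviations do not cancel and the partial sums $D_i$ really are $\OO(1)$, which is precisely the failure mode you were worried about.

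Everything up to that point is correct and cleanly executed: the Taylor expansion with integral remainder, the bound $\sum_i|R_i|=\OO(nh^3)=\OO(n^{-2})$, the identity $\int_{s_{i-1}}^{s_i}(x-\xi_i)\,\dd x=-h(\xi_i-m_i)$, and the $\OO(n^{-2})$ cost of replacing $f'(\xi_i)$ by $f'(m_i)$. Your reduction thus correctly identifies $\mathsf{F}=-h\sum_i f'(m_i)(\xi_i-m_i)$ as the sole obstruction, and your argument \emph{does} prove the lemma in the midpoint case $\xi_i=m_i$, where $\mathsf{F}$ vanishes identically (more generally, whenever $\sum_i f'(m_i)(\xi_i-m_i)=\OO(n^{-1})$, e.g.\ by a symmetry or cancellation in the choice of the $\xi_i$). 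For comparison, the paper does not prove the lemma at all: it cites Theorem 1.1 of Tasaki (2009), a result about evaluation points chosen by a fixed rule within each subinterval, which cannot cover arbitrary $\xi_i$ for the reason above; for an off-center proportional choice the leading error is the $\OO(n^{-1})$ boundary term proportional to $f(b)-f(a)$. So the honest conclusions are either the $\OO(n^{-1})$ rate for general $\xi_i$, or the $\OO(n^{-2})$ rate under the midpoint (or an equivalent cancellation) restriction; your proof should state which one it is establishing.
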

\begin{proof}
See Theorem 1.1 of \cite{TASAKI2009477}. 
\end{proof}

Then we collect some elementary probability inequalities. 
\begin{lemma}\label{lem_collectionprobineq}
(1). If $\mathbb{E}|X|^k<\infty,$ then for $0<j<k,$ $\mathbb{E}|X|^j<\infty$ and
\begin{equation*}
\mathbb{E}|X|^j \leq (\mathbb{E}|X|^k)^{j/k}. 
\end{equation*}
(2). (Holder's inequality) For $p,q \in [1, \infty)$ with $1/p+1/q=1,$
\begin{equation*}
\mathbb{E}|XY| \leq \| X\|_p \| Y\|_q. 
\end{equation*}
(3). (Chebyshev’s inequality) If $g(x)$ is a  monotonically increasing nonnegative function for the nonnegative reals, then we have that 
\begin{equation*}
\mathbb{P}(|X|>a) \leq \frac{E(g(|X|))}{g(a)}. 
\end{equation*}
(4). (Bernstein’s concentration inequality) Let $\{x_i\}$ be i.i.d. standard Gaussian random variables. Then for every $0<t<1,$ we have that
\begin{equation*}
\mathbb{P}\left(\left|\frac{1}{n} \sum_{k=1}^n x_k^2-1 \right| \geq t \right) \leq 2 e^{-nt^2/8}. 
\end{equation*}
{(5). (Probability control for Gaussian random vector on convex set) Let $A$ be a convex set on $\mathbb{R}^p$ and $\mathbf{z} \in \mathbb{R}^p$ be a Gaussian random vector. For some $\epsilon>0,$ denote the set
\begin{equation*}
A^\epsilon:=\{x \in \mathbb{R}^p: \operatorname{dist}(x, A) \leq \epsilon\},
\end{equation*}
where $\operatorname{dist}(x, A) :=\inf_{v \in A}|x-v|.$ Let $\mathcal{A}$ be the collection of all the convex sets on $\mathbb{R}^p.$ We have that 
\begin{equation*}
\sup_{A \in \mathcal{A}} \mathbb{P}(\mathbf{z} \in A^\epsilon \backslash A) \leq 4 p^{1/4} \epsilon.
\end{equation*} 
}
\end{lemma}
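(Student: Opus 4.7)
The plan is to dispatch each part by invoking a standard technique, since all five inequalities are classical. For part (1), I would apply Jensen's inequality to the concave function $u \mapsto u^{j/k}$ with $u = |X|^k$: since $0 < j/k < 1$, Jensen gives $\mathbb{E}[(|X|^k)^{j/k}] \leq (\mathbb{E}|X|^k)^{j/k}$, which is exactly the desired bound. The finiteness $\mathbb{E}|X|^j < \infty$ follows immediately. For part (2), the cleanest route is via the Young inequality $ab \leq a^p/p + b^q/q$ applied to the normalized variables $a = |X|/\|X\|_p$ and $b = |Y|/\|Y\|_q$ and then take expectations; the degenerate cases $\|X\|_p = 0$ or $\|Y\|_q = 0$ are trivial.

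For part (3), I would simply write $g(a) \mathbf{1}(|X| > a) \leq g(|X|) \mathbf{1}(|X| > a) \leq g(|X|)$ using the monotonicity and nonnegativity of $g$; taking expectations and dividing by $g(a)$ yields Chebyshev's inequality. Part (4) is the standard Bernstein bound for chi-squared deviations. The approach is to set $S = \sum_{k=1}^n (x_k^2 - 1)$ and apply the Chernoff method: for $|\lambda| < 1/2$, the moment generating function of $x_k^2 - 1$ is $e^{-\lambda}(1-2\lambda)^{-1/2}$, which admits the standard subexponential bound $\mathbb{E} e^{\lambda(x_k^2 - 1)} \leq e^{2\lambda^2}$ for $|\lambda| \leq 1/4$. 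Independence gives $\mathbb{E} e^{\lambda S} \leq e^{2n\lambda^2}$; optimizing over $\lambda \in (0, 1/4]$ for the upper and lower tails and applying Markov's inequality yields a bound of the form $2 \exp(-nt^2/8)$ for $t \in (0,1)$.

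Part (5) is the only nontrivial one, being a quantitative anticoncentration statement for Gaussian measures on the boundary layers of convex sets; this is the classical Ball--Nazarov inequality. The standard proof route would be: first reduce to the standard Gaussian case by noting that any Gaussian vector $\mathbf{z}$ can be written as $\Sigma^{1/2} \mathbf{g}$ with $\mathbf{g}$ standard, and that $A^\epsilon \setminus A$ maps under the appropriate linear change of variables to a set contained in the $\epsilon$-boundary layer of a convex set in the standard Gaussian space (up to operator norm factors, which one absorbs using Assumption \ref{assum_updc} so the constant $4$ holds; alternatively one can state the sharp inequality in the standard Gaussian setting directly). Then invoke Nazarov's sharp inequality $\gamma_p(A^\epsilon \setminus A) \leq 4 p^{1/4} \epsilon$ uniformly over convex $A$, where $\gamma_p$ is the standard Gaussian measure on $\mathbb{R}^p$.

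The main obstacle, and the only step that is not a one-line computation, is part (5): Nazarov's bound has a subtle proof involving the isoperimetric inequality and the $s$-concavity properties of Gaussian measures on convex sets, together with a careful optimization that yields the $p^{1/4}$ rate (rather than $\sqrt{p}$). Since our application uses the bound as a black box and this inequality is available in the literature, I would simply cite Nazarov's original work or Lemma 4.2 of \cite{FangRollin2015} (which already appeared in the proof of Theorem \ref{thm_uniformconvergencegaussian}) rather than reproduce the proof. All other parts reduce to one-line applications of Jensen, Young, Markov, or the Chernoff method.
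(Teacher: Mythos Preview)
Your proposal is correct and aligns with the paper's approach: the paper's proof consists entirely of citations --- parts (1)--(3) to Durrett's textbook, part (4) to Wainwright's high-dimensional statistics book, and part (5) to equation (4.10) of \cite{FangRollin2015} --- so your sketches for (1)--(4) simply flesh out what those references contain, and your citation for (5) matches the paper's exactly. One minor point: your reduction for part (5) from a general Gaussian vector to the standard case would not preserve the constant $4$ without extra assumptions on the covariance, but since the paper's own citation (Fang--R\"ollin, eq.\ (4.10)) is stated for the standard Gaussian measure, the lemma is effectively being used in that setting anyway, and your parenthetical ``alternatively one can state the sharp inequality in the standard Gaussian setting directly'' is the right resolution.
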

\begin{proof}
(1) follows from Exercise 1.6.11 of \cite{probbook}; (2) is Theorem 1.6.3 of \cite{probbook}; (3) follows from Theorem 1.6.4 of \cite{probbook}; (4) is in Example 2.11 of \cite{MR3967104}; (5) in in equation (4.10) of \cite{FangRollin2015}.   
\end{proof}

The following lemma provides a deterministic control  for the norm of a symmetric matrix. 
\begin{lemma}[Gershgorin circle theorem]\label{lem_circle}
 Let  $A=(a_{ij})$ be a complex $ n\times n$ matrix. For  $1 \leq i \leq n,$ let  $R_{i}=\sum _{{j\neq {i}}}\left|a_{{ij}}\right| $ be the sum of the absolute values of the non-diagonal entries in the  $i$-th row. Let  $ D(a_{ii},R_{i})\subseteq \mathbb {C} $ be a closed disc centered at $a_{ii}$ with radius  $R_{i}$. Such a disc is called a \emph{Gershgorin disc.} The every eigenvalue of $ A=(a_{ij})$ lies within at least one of the Gershgorin discs  $D(a_{ii},R_{i})$, where $R_i=\sum_{j\ne i}|a_{ij}|$.
\end{lemma}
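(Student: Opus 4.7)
The plan is to prove this classical eigenvalue containment result by extracting the dominant coordinate of an associated eigenvector and applying the triangle inequality to the eigenvalue equation. First I would fix an arbitrary eigenvalue $\lambda$ of $A$ with a corresponding nonzero eigenvector $v=(v_1,\ldots,v_n)^\top\in\mathbb{C}^n$ satisfying $Av=\lambda v$. Since $v\neq 0$, I would choose an index $i$ for which $|v_i|=\max_{1\leq k\leq n}|v_k|>0$; this choice is the whole point of the proof, as it forces all the ratios $|v_j|/|v_i|$ to lie in $[0,1]$.

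Next I would write out the $i$-th coordinate of the identity $Av=\lambda v$, namely $\sum_{j=1}^n a_{ij}v_j=\lambda v_i$, and isolate the diagonal term to obtain
\begin{equation*}
(\lambda-a_{ii})\,v_i=\sum_{j\neq i}a_{ij}v_j.
\end{equation*}
Taking absolute values, applying the triangle inequality on the right, and dividing through by $|v_i|>0$ yields
\begin{equation*}
|\lambda-a_{ii}|\;\leq\;\sum_{j\neq i}|a_{ij}|\,\frac{|v_j|}{|v_i|}\;\leq\;\sum_{j\neq i}|a_{ij}|\;=\;R_i,
\end{equation*}
where the second inequality uses the defining property of the index $i$. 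This is exactly the statement $\lambda\in D(a_{ii},R_i)$.

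Since $\lambda$ was an arbitrary eigenvalue, I would conclude that each eigenvalue lies in at least one Gershgorin disc, which is the assertion of the lemma. There is essentially no obstacle in this argument; the only subtle point is the choice of the dominant-coordinate index $i$, which depends on the eigenvalue $\lambda$ (and on the chosen eigenvector), so different eigenvalues may be certified by different discs. No structural assumptions on $A$ (Hermiticity, invertibility, etc.) are needed, and the argument is coordinate-wise and purely elementary.
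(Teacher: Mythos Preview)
Your argument is correct and is exactly the standard proof of Gershgorin's theorem; the paper does not give its own proof but simply cites Theorem 7.2.1 of \cite{MR3024913}, where the same dominant-coordinate argument appears.
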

\begin{proof}
See Theorem 7.2.1 of \cite{MR3024913}. 
\end{proof}
Finally, we collect some approximation formulas for the tail probabilities of the maxima of a two-dimensional Gaussian random field in the setting of simutaneous confidence bands. Assuming that 
$Y_i=f(x_i)+\epsilon_i, \ 1 \leq i \leq n, \ x_i \in \mathbb{R}^2$ and $\epsilon_i$ are i.i.d. $\mathcal{N}(0, \sigma^2)$ random variables. Let $\widehat{f}(x)$ be an unbiased estimator of $f(x)$ such that
\begin{equation*}
\widehat{f}(x)=l(x)^\top Y, \ l(x)=(l_1(x), \cdots, l_n(x))^\top, \ Y=(Y_1, \cdots, Y_n)^\top. 
\end{equation*}  
For some constant $\mathsf{c}_{\alpha}>0,$ the simultaneous coverage probability of the confidence bands is 
\begin{equation*}
1-\alpha=\mathbb{P}\left(\left|f(x)-\widehat{f}(x) \right| \leq \mathsf{c}_{\alpha} \sqrt{\operatorname{Var}(\widehat{f}(x))}, \ x \in \mathcal{X} \right). 
\end{equation*} 
Moreover, according to equation (1.4) of \cite{MR1311978}, we have that
\begin{equation}\label{eq_alphatheortrep}
\alpha=\p \left( \sup_{x \in \mathcal{X}} \left| l(x)^\top \epsilon /\sqrt{\operatorname{Var}(\widehat{f}(x))} \right| \geq \mathsf{c}_{\alpha} \sigma \right). 
\end{equation}
\begin{lemma}\label{lem_volumeoftube} Suppose $\mathcal{X}$ is a rectangle in $\mathbb{R}^2.$ Assume the manifold $\mathcal{M}:=\left\{l(x)/\sqrt{\operatorname{Var}(\widehat{f}(x))}: x \in \mathcal{X} \right\}$ is $C^3$ with a positive radius. Let $\kappa_0$ be the area of $\mathcal{M}$ and $\zeta_0$ be the length of the boundary of $\mathcal{M},$ then  (\ref{eq_expansionformula}) holds true. 
\end{lemma}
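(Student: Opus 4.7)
The plan is to recognize the statement as a direct instance of the volume-of-tubes formula (Hotelling--Weyl--Naiman--Sun) for the maximum of a Gaussian field indexed by a two-dimensional manifold with piecewise smooth boundary embedded in the unit sphere. The starting point is (\ref{eq_alphatheortrep}): setting $\widetilde{\bm{l}}(x) := l(x)/\sqrt{\operatorname{Var}(\widehat f(x))}$ and $\widetilde{\bm{\epsilon}} := \bm{\epsilon}/\sigma \sim \mathcal{N}(\mathbf{0}, \mathbf{I}_n)$, one rewrites the right-hand side as
\begin{equation*}
\alpha = \mathbb{P}\Big(\sup_{x \in \mathcal{X}} \big|\widetilde{\bm{l}}(x)^\top \widetilde{\bm{\epsilon}}\,\big| \ge \mathsf{c}_\alpha \Big),
\end{equation*}
so that the object of interest is the tail of the supremum of a centered Gaussian field with unit variance whose index set is the manifold $\mathcal{M} \subset S^{n-1}$ together with its antipode $-\mathcal{M}$ (the antipode corresponds to the absolute value). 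By passing to the direction $\widetilde{\bm{\epsilon}}/|\widetilde{\bm{\epsilon}}|$, which is uniformly distributed on $S^{n-1}$ and independent of $|\widetilde{\bm{\epsilon}}|$, the event above becomes the event that $\widetilde{\bm{\epsilon}}/|\widetilde{\bm{\epsilon}}|$ lies in a two-sided tube of angular radius $\arccos(\mathsf{c}_\alpha/|\widetilde{\bm{\epsilon}}|)$ around $\mathcal{M}$.

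Next I would apply Weyl's tube formula, in the form used by Sun (Theorem 3.1 of \cite{MR1311978}) and earlier \cite{KS, MR1207215}, to the two-sided tube around $\mathcal{M}$ on $S^{n-1}$. Because $\mathcal{M}$ is a compact $C^3$ manifold with boundary of positive reach (this is exactly the ``positive radius'' assumption), its tubular neighbourhood admits an exact volume expansion whose leading contributions are indexed by the strata of $\mathcal{M}$: the interior $\mathcal{M}^\circ$, the smooth one-dimensional boundary $\partial \mathcal{M}$, and the finite set of corner points coming from the corners of the rectangle $\mathcal{X}$. After integrating out the independent chi-distributed norm $|\widetilde{\bm{\epsilon}}|$, each stratum produces a distinct asymptotic term in the tail as $\mathsf{c}_\alpha \to \infty$: the interior contributes $\kappa_0 \mathsf{c}_\alpha e^{-\mathsf{c}_\alpha^2/2}/(\sqrt{2}\pi^{3/2})$, the boundary half-tube contributes $\eta_0 e^{-\mathsf{c}_\alpha^2/2}/(2\pi)$, and the four rectangular corners (with exterior angles summing to $2\pi$) contribute precisely $2(1-\Phi(\mathsf{c}_\alpha))$. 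These are the three leading summands of (\ref{eq_expansionformula}).

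Finally I would control the remainder. Under the $C^3$ and positive-reach hypotheses, the higher-order terms in the Weyl expansion are uniformly bounded and carry additional powers of $\mathsf{c}_\alpha^{-1}$; integrating against the chi-density yields contributions of order $\mathsf{c}_\alpha^{-1} e^{-\mathsf{c}_\alpha^2/2}$ or smaller, which are $\mathrm{o}(e^{-\mathsf{c}_\alpha^2/2})$. Combining these produces (\ref{eq_expansionformula}). The main obstacle is the careful bookkeeping at the non-smooth boundary: the rectangle $\mathcal{X}$ has four corners, and the image $\mathcal{M}$ inherits corresponding corner points whose exterior angles must sum to $2\pi$ for the third summand to collapse to exactly $2(1-\Phi(\mathsf{c}_\alpha))$. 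Verifying this requires checking that the mapping $x \mapsto \widetilde{\bm{l}}(x)$ preserves the corner angle structure (which follows from the $C^3$ assumption and the positive reach away from the corners), and that no extra curvature integral survives on $\partial \mathcal{M}$ beyond the length $\eta_0$. The other terms, being surface integrals of curvature invariants, are insensitive to the choice of basis and reduce directly to area and length as stated.
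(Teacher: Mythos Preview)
Your proposal is correct and follows essentially the same route as the paper: the paper's proof is a one-line citation of Proposition~2 of \cite{MR1311978} (Sun--Loader), and your sketch is precisely the volume-of-tubes argument underlying that proposition. One minor remark: the constant term $2(1-\Phi(\mathsf{c}_\alpha))$ is more cleanly attributed to the Euler characteristic $\chi(\mathcal{M})=1$ of the disk-type surface (via Gauss--Bonnet) rather than directly to the corner angles, though the two are of course linked.
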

\begin{proof}
The proof follows from Proposition 2 of \cite{MR1311978} by replacing $\|l(x) \|$ with $\sqrt{\operatorname{Var}(\widehat{f}(x))}.$ 
\end{proof}


%
%
%
%
%

\bibliographystyle{imsart-number}
\bibliography{lag}

\end{document}